\newtheorem{theorem}{Theorem}[section]
\newtheorem{lemma}[theorem]{Lemma}
\newtheorem{proposition}[theorem]{Proposition}
\newtheorem{example}[theorem]{Example}
\theoremstyle{definition}
\newtheorem{definition}[theorem]{Definition}
\theoremstyle{remark}
\newtheorem{remark}[theorem]{Remark}
\tikzset{cross/.style={cross out, draw=black, minimum size=5*(#1-\pgflinewidth), inner sep=0pt, outer sep=0pt},
cross/.default={3pt}}
\newcommand{\HH}{\mathbb{H}}
\newcommand{\R}{\mathbb{R}}
\newcommand{\C}{\mathbb{C}}
\newcommand{\Z}{\mathbb{Z}}
\newcommand{\N}{\mathbb{N}}
\newcommand{\ttm}{\tilde{m}}
\newcommand{\ttn}{\tilde{n}}
\newcommand{\Hbulk}{H_{\rm bulk}}
\newcommand{\Hk}{H_\sharp(k)}
\newcommand{\HTB}{H}
\newcommand{\kp}{k}
\newcommand{\nA}{n^A_{\rm min}}
\newcommand{\nB}{n^B_{\rm min}}
\newcommand{\none}{{\nu_1}}
\newcommand{\nthree}{{\nu_3}}
\newcommand{\hsigma}{{\hat{\sigma}}}
\newcommand{\ac}{{\arccos}}
\newcommand{\nbase}{{n_{\rm base}}}
\newcommand{\pcount}{{\mathfrak{p}(\kpar)}}
\newcommand{\qcount}{{\mathfrak{q}(\kpar)}}
\newcommand{\be}{{\bf e}}
\newcommand{\bv}{{\bf v}}
\newcommand{\tm}{{\tilde{m}}}
\newcommand{\tn}{{\tilde{n}}}
\newcommand{\kpar}{{k}}
\newcommand{\kperp}{{k_{\perp}}}
\newcommand{\D}{\partial}
\newcommand{\nit}{\noindent}
\newcommand{\nn}{\nonumber}
\newcommand\ocirc[1]{\ensurestackMath{\stackon[0.8pt]{#1}{\mkern1mu\scriptstyle\circ}}}
\begin{document}                        


\title{Discrete honeycombs, rational edges and edge states}

\author{Charles L. Fefferman}{Princeton University, Princeton, NJ, USA}
\author{Sonia Fliss}{POEMS, CNRS, Inria, ENSTA Paris, Institut Polytechnique de Paris, Palaiseau, France}
\author{Michael I. Weinstein}{Columbia University, New York, NY, USA}





\begin{abstract}
  Consider the tight binding model of graphene, sharply terminated along an edge ${\bf l}$ parallel to a direction of translational symmetry of the underlying period lattice. We classify such edges ${\bf l}$ into those of "zigzag type" and those of "armchair type", generalizing the classical zigzag and armchair edges. We prove that zero energy/flat band edge states arise for edges of zigzag type, but never for those of armchair type. We exhibit explicit formulas for flat band edge states when they exist. We produce strong evidence for the existence of dispersive (non flat) edge state curves of nonzero energy for most ${\bf l}$.
\end{abstract}

\maketitle   



 \tableofcontents


\section{Introduction}\label{intro}

 Graphene is a two-dimensional material, consisting of a single atomic layer of carbon atoms centered on a honeycomb lattice, $\HH$, and which extends to the macroscale. It exhibits remarkable electronic  properties, related to the energy spectrum around the Fermi (Dirac) energy, which is well-described by the tight-binding  Hamiltonian \cite{Wallace:47,geim2007rise,RMP-Graphene:09,Katsnelson:12}, a discrete Hamiltonian which acts on $l^2(\HH)$.
This tight-binding model has a band structure consisting of two dispersion surfaces
 which conically touch at {\it Dirac points}; see \cite{FW:12,FW:14}. These spectral characteristics play an important role in the novel conductivity  properties of bulk graphene, and its behavior as a topological insulator in the presence of a magnetic field.
  The relationship between the underlying continuum single electron Schroedinger equation for graphene and the tight-binding limit is investigated in detail in \cite{FLW-CPAM:17}.
For a general discussion of tight binding models and their relationship to the underlying continuum PDEs see, for example, \cite{Dimassi-Sjoestrand:99,Helffer-Sjoestrand:84,SW:21}.

A phenomenon of great interest in Materials Science, and in particular for graphene, is the propagation of energy along a line-defect or {\it edge}. Some references to the extensive literature on edge states are provided in Section \ref{sec:related}.  In this article, an edge is taken to be a sharp termination of the honeycomb lattice along a straight line, ${\bf l}$.
We let  $\HH_\sharp$ denote the set of nodes of $\HH$ that lie in a closed half space on one side  of  ${\bf l}$; a graphene  half-space interfaced with a vacuum. We introduce the (nearest neighbor) edge Hamiltonian $H_\sharp$, which acts on vectors in $l^2(\HH_\sharp)$; the set of square summable vectors, $(\psi_\omega)_{\omega\in\HH}$  with the property that $\psi_\omega=0$ for $\omega\in\HH\setminus\HH_\sharp$; see \eqref{Hshp} for a precise definition.

 Consider the case where the line ${\bf l}$ is in {a} direction of translation invariance of $\HH_\sharp$, {\it i.e.} in the direction of a triangular lattice vector. We call such an interface a {\it rational edge}. In this case, $H_\sharp$ is translation invariant parallel to {\bf l} and we let $\kpar\in\R/2\pi\Z$ denote the 
 associated {\it parallel quasimomentum}. $H_\sharp$ can be decomposed into the independent action of  Hamiltonians $ H_{\sharp,\kpar}$ in $ l^2_\kpar$ ($0\le\kpar<2\pi$), the space of vectors $\psi=(\psi_\omega)_{\omega\in\HH_\sharp}$,  which (i) under translation by a minimal period vector along ${\bf l}$ give $e^{ik}\psi$, and (ii)  decay to zero  as the distance of $\omega$ to ${\bf l}$ tends (within the bulk) to infinity. See equation \eqref{Hk-def} for the definition of an operator $H_\sharp(k)$ which is trivially equivalent to $ H_{\sharp,\kpar}$.
 For $\kpar\in[0,2\pi]$, 
\begin{align*}
&\textrm{we say that $(\psi,E)$ is a $\kpar-$ {\it pseudo-periodic edge state} if $H_{\sharp,\kpar}\psi = E\psi$ with $\psi\in l^2_\kpar$.}
\end{align*}
Thus, $E$ is a $\kpar-$ pseudo-periodic edge state eigenvalue if $E$ is in the $l^2_\kpar$ point spectrum of $H_{\sharp,\kpar}$. This paper studies the dependence of the spectrum $H_{\sharp,\kpar}$
 on the edge and on $\kpar$. 

If there are $\kpar-$ pseudo-periodic states with energy $E(\kpar)$ for all $\kpar$ in some subinterval of $[0,2\pi]$, then a continuous superposition 
of these edge states is an {\it edge wave-packet}, which is localized along and transverse to the edge, and whose large time evolution  is determined by the properties of the edge state energy curve $\kpar\mapsto E(\kpar)$.  It is therefore of interest
 to determine the subsets of $[0,2\pi]$ for which edge states exist, and the properties of the corresponding edge state curves. This question has been previously investigated for the best known edge-orientations: the classical {\it zigzag edge} (ordinary and bearded) and the {\it armchair edge}; see Figure \ref{fig:classical_edges} and references cited in Section \ref{sec:related}. 

  In particular, for the classical zigzag edge $H_{\sharp,\kpar}$ is known to have  $E=0$ energy  edge states:\\
 (a) for all $\kpar\in(2\pi/3,4\pi/3)$ in the case of  ordinary zigzag edges, and\\
  (b) for all $\kpar\in[0,2\pi]\setminus[2\pi/3,4\pi/3]$ in the case of bearded zigzag edges.\\
In the zigzag-edge case,  the Hamiltonian $H_\sharp$ is said to have a {\it flat band} of edge states.   In contrast, in the classical armchair case, $H_{\sharp,\kpar}$ does not support zero energy edge states for any $\kpar\in[0,2\pi]$. 
 Furthermore, neither the classical zigzag edge nor the classical armchair edge supports \underline{non-zero} energy edge states; see \cite{Dresselhaus-etal:96,Graf-Porta:13,FW:20};
  see Figure \ref{fig:edge_states_classical}.   
   The classical zigzag edge state flat band spectra are given by well-known calculations presented, for example, in \cite{Dresselhaus-etal:96,Graf-Porta:13,FW:20}; see also Remark \ref{classical-es}.
 
\nit {\it  In this paper we present results on the spectrum of $H_\sharp$ acting in $l^2(\HH_\sharp)$ for arbitrary rational edges. In particular, we give a complete analysis of the existence and non-existence of zero energy / flat band edge states for arbitrary rational edges. We also present strong evidence that for general zigzag-type and armchair-type edges,  there are non-zero energy (dispersive) edge states curves; {see also \cite{Jaskolski:11}}.
}

 \begin{remark}[Flat bands]\label{flatband}
Consider a wave-packet constructed via continuous superposition of edge states within a zero energy flat band. This wave-packet  will not transport because the group velocity, $E^\prime(\kpar)$, vanishes and it will {neither spread nor decay to zero} because the curvature of the dispersion relation, $E^{\prime\prime}(\kpar)$, vanishes. That spatial concentration without dispersion or transport leads, in the condensed matter physics setting, to enhancement of electron-electron interactions. Additionally, a flat or nearly flat band implies a very high density of states,  which has implications for light-matter interactions \cite{CT:04}.
  \end{remark}
 
  \subsection{Summary of results}\label{sec:outline-summ}
     Let the equilateral triangular lattice be given by $\Lambda=\Z\ocirc{\bv}_1\oplus\Z\ocirc{\bv}_2$ with 
$\ocirc{\bv}_1 = \begin{pmatrix} \frac{\sqrt3}{2} { } & \frac12\end{pmatrix}^\top$ and 
$\ocirc{\bv}_2 = \begin{pmatrix} \frac{\sqrt3}{2} { } & -\frac12\end{pmatrix}^\top$.   
  {  The honeycomb lattice, {\bf $\HH$}, is the union of two interpenetrating translates of  $\Lambda$. In Figure \ref{fig:notation}
   the two triangular sublattices are represented as $A-$ sites (blue) and $B-$ sites (red).
     A {\it rational edge} is specified by a line, ${\bf l}$, in $\R^2$ in the direction 
   \[ \bv_1=a_{11}\ocirc{\bv}_1
    + a_{12}\ocirc{\bv}_2,\]
     where $a_{11}$ and $a_{12}$ are relatively prime integers. We consider the terminated structure $\HH_\sharp$ consisting
     of all vertices in $\HH$ which are in a closed half-plane on one side of ${\bf l}$; see Figure \ref{fig:notation_edges31}. 
     The row of $A-$ sites in $\HH_\sharp$ which is closest to ${\bf l}$ is the set of {\it frontier $A-$ sites} and the row of $B-$ sites in $\HH_\sharp$ which is closest to ${\bf l}$ is the set of {\it frontier $B-$ sites}.  We denote by  $D_A$ and $D_B$ the distance from any frontier $A-$ site, respectively $B-$ site, to
      the line  {\bf l}.    }
     %
     %

    \begin{enumerate}  

 \item     {
  {\it Zigzag-type and Armchair-type edges, Section \ref{sec:rat-edge}:} There are two general classes of edges: zigzag-type (ZZ) and armchair-type (AC). Armchair-type edges are those for which $D_A=D_B$.  Zigzag-type edges are those for which $D_A\ne D_B$ .
These geometric conditions correspond to the following arithmetic conditions: an edge is of AC-type if and only if  $a_{11}-a_{12}\equiv 0\ {\rm mod}\ 3$ and an edge is of ZZ-type if and only if $a_{11}-a_{12}\equiv \pm1\ {\rm mod}\ 3$; see
Definition \ref{def:zz-ac}  in Section \ref{sec:rat-edge} and Proposition \ref{DADB} {for the proof of this correspondence}.
 }

{ Further, for any zigzag edge we have one of the following two cases:
 \begin{align*}
 &\textrm{{\it Balanced} zigzag edge:}\quad  |D_A-D_B| =  \frac13\frac{\sqrt 3}{2}|\bv_1|^{-1},\\
 &\textrm{{\it Unbalanced} zigzag edge:}\quad  |D_A-D_B|= \frac23\frac{\sqrt 3}{2}|\bv_1|^{-1};
 \end{align*}
see Definition \ref{def:ZZAC} and Proposition \ref{DADB}.
The classical armchair, and classical zigzag edges balanced (aka ordinary) and unbalanced (aka bearded) are displayed in Figure \ref{fig:classical_edges}.
} 
\item {{\it Existence and non-existence of zero energy / flat band edge states, Section \ref{sec:0energy}:}\ 
Our main result is:} 

\begin{theorem} \label{main-intro}
{
Assume $\kpar\in[0,2\pi]$. 
\begin{enumerate}
\item For any armchair edge ($D_A=D_B$), there are no zero-energy edge states.
\item All zigzag edges ($D_A\ne D_B$) support a ``flat band'' of zero-energy edge states for $\kpar$ varying in a proper quasi-momentum subset of $(0,2\pi)$. These states are supported exclusively on the $A-$ sites  
 of $\HH_\sharp$  ($A-$site ES) or on $B-sites$ of $\HH_\sharp$ ($B-$site ES). A complete classification is given in the following table:
\end{enumerate}
}
\begin{table}[htbp]
\begin{center}
\begin{tabular}{ |c|c|c|c|c| } 
\hline
 {\ } & {\footnotesize  $D_A<D_B$}&  
 {\footnotesize  $D_B<D_A$}\\
\hline
{\footnotesize Balanced}  & {\footnotesize A-site ES for $\kp\in(\frac{2\pi}{3},\frac{4\pi}{3})$} & {\footnotesize B-site ES for $\kp\in(\frac{2\pi}{3},\frac{4\pi}{3})$} \\ 
\hline
{\footnotesize Unbalanced} &  {\footnotesize {A-site ES for $\kp\in[0,2\pi]\setminus[\frac{2\pi}{3},\frac{4\pi}{3}]$}} & 
{\footnotesize {B-site ES for $\kp\in[0,2\pi]\setminus[\frac{2\pi}{3},\frac{4\pi}{3}]$}}\\
\hline
\end{tabular}
\caption{{\small{$0-$energy / flat band edge states for all rational zigzag edge geometries}}}\label{tab:table-intro}
\end{center}\end{table}
\end{theorem}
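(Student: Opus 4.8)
The plan is to reduce the whole question to a scalar difference equation transverse to the edge and to locate the quasimomenta at which that equation admits a decaying solution compatible with the boundary conditions at ${\bf l}$. First one uses the bipartite (chiral) structure of the nearest‑neighbour Hamiltonian. Splitting $\psi=(\psi^{A},\psi^{B})$ according to the two sublattices of $\HH_\sharp$, the operator $H_\sharp$ is off‑diagonal: it is built from one hopping operator $\mathcal{E}$ taking $B$‑site data to $A$‑site data, together with its adjoint $\mathcal{E}^{*}$. Hence $H_{\sharp,\kpar}\psi=0$ if and only if $\mathcal{E}\psi^{B}=0$ and $\mathcal{E}^{*}\psi^{A}=0$, so the zero‑energy eigenspace splits as an orthogonal sum of states supported entirely on $A$‑sites and states supported entirely on $B$‑sites, as asserted. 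Moreover $\HH$ carries an isometry interchanging the two sublattices and sending any rational edge to the one obtained by swapping $D_A$ and $D_B$; it exchanges $A$‑site and $B$‑site zero modes. It therefore suffices to analyse $A$‑site zero modes: for each fixed $\kpar\in[0,2\pi]$, to decide whether there is a nonzero $\psi\in l^2_\kpar$ supported on $A$‑sites with $\sum_{v\sim w,\,v\in\HH_\sharp}\psi(v)=0$ for every $B$‑site $w\in\HH_\sharp$. The $B$‑site assertions, hence the two columns of Table \ref{tab:table-intro}, follow by the symmetry, and the armchair statement is the case $D_A=D_B$.

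Next one passes to a one‑dimensional recursion. Using $\bv_1=a_{11}\ocirc{\bv}_1+a_{12}\ocirc{\bv}_2$ and a complementary lattice vector, index the $A$‑sites of $\HH_\sharp$ by a parallel integer $s$ and a transverse integer $t$, so that the half‑plane is $\{t\ge t^{A}_{\min}\}$, the bulk is $t\to+\infty$, and each transverse layer contains exactly one $A$‑site per edge‑period. After the Floquet transform $\psi(s,t)=e^{i\kpar s}\phi(t)$, the equation $\mathcal{E}^{*}\psi^{A}=0$ becomes, for $B$‑sites deep in the bulk, a constant‑coefficient recursion
\[
\phi(t)+e^{ip\kpar}\,\phi(t+a_{12})+e^{iq\kpar}\,\phi(t-a_{11})=0,
\]
valid for all $t$ with $t-a_{11}\ge t^{A}_{\min}$, where $p,q$ are integers with $pa_{11}+qa_{12}=1$, together with finitely many \emph{truncated} equations coming from the $B$‑sites near ${\bf l}$ having only one or two $A$‑neighbours in $\HH_\sharp$. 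The number and shape of these boundary equations are governed by $D_A$ and $D_B$ --- hence, through Proposition \ref{DADB}, by the residue of $a_{11}-a_{12}$ modulo $3$ and by the position of the cut --- and in particular a frontier $B$‑site of a fully exposed sublattice is joined to the bulk by a single bond. An $l^2_\kpar$ edge state is exactly a solution of the bulk recursion that decays as $t\to+\infty$ and satisfies the boundary equations.

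Third, substituting $\phi(t)=z^{t}$ yields the characteristic polynomial $P_\kpar(z)=e^{ip\kpar}z^{a_{11}+a_{12}}+z^{a_{11}}+e^{iq\kpar}$ (of degree $N\ge 1$ depending on $a_{11},a_{12}$, equal to $a_{11}+a_{12}$ when both are positive), whose leading and constant coefficients have modulus $1$, so the product of its roots has modulus $1$. For all but finitely many $\kpar$ no root lies on the unit circle; then the decaying solutions of the bulk recursion form a space of dimension $\nu(\kpar):=\#\{z:P_\kpar(z)=0,\ |z|<1\}$, and the edge‑state space has dimension $\nu(\kpar)$ minus the rank of the boundary equations. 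One computes $\nu(\kpar)$ by the argument principle (the winding number of $\theta\mapsto P_\kpar(e^{i\theta})$ about $0$), using that $P_\kpar(z)=0$ is the bulk graphene dispersion relation $1+e^{ik_{1}}+e^{ik_{2}}=0$ with the $\bv_1$‑component of $(k_1,k_2)$ frozen to $\kpar$ and the transverse component complexified ($z$ encoding it): the winding is locally constant in $\kpar$ and can change only where $P_\kpar$ has a root of modulus $1$, equivalently where the bulk dispersion vanishes at a real quasimomentum, i.e.\ at the edge‑parallel projections of the Dirac points $\bK,\bKp$ --- which are $\{2\pi/3,4\pi/3\}$ for a zigzag edge and $\{0\}$ for an armchair edge. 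It then remains to intersect the $\nu(\kpar)$‑dimensional decaying space with the boundary equations. For an armchair edge ($D_A=D_B$) imposing the frontier equation on the (unique) decaying bulk solution forces a contradiction for every $\kpar$ --- the general form of the classical identity $1+e^{i\kpar}z_{0}=0$, which would place $z_0$ on the unit circle --- so the edge‑state space is $\{0\}$ for all $\kpar$, proving the armchair statement. For a zigzag edge, $P_\kpar$ factors and the analysis leaves a one‑dimensional compatible decaying solution exactly for $\kpar$ in the quasimomentum set of Table \ref{tab:table-intro}: the interval $(2\pi/3,4\pi/3)$ for a balanced edge and its complement for an unbalanced one, with $A$‑site (resp.\ $B$‑site) states according to whether $D_A<D_B$ (resp.\ $D_B<D_A$); the flat‑band eigenfunction is obtained by pulling the single compatible root of $P_\kpar$ back to $\HH_\sharp$, which also yields the explicit formulas.

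The step I expect to be the main obstacle is the last one, carried out uniformly over \emph{all} rational zigzag edges at once: showing that $\nu(\kpar)$ minus the rank of the truncated boundary equations equals $1$ on exactly the asserted arc and $0$ on its complement --- that is, that the winding‑number count and the combinatorics of the frontier $B$‑sites (which depend on the residue of $a_{11}-a_{12}$ modulo $3$ and on the cut) conspire to leave precisely one admissible decaying mode, with the correct sublattice and on the correct side. One must also treat the borderline quasimomenta, where $P_\kpar$ has a root on the unit circle (the endpoints $2\pi/3$, $4\pi/3$, and $\kpar=0$ in the unbalanced case), and verify that no $l^2$ state survives there, so that in every case the flat band occupies exactly the open quasimomentum arc recorded in Table \ref{tab:table-intro}.
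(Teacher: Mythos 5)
Your skeleton matches the paper's: chiral decoupling of the $E=0$ problem into pure $A$-site and pure $B$-site problems, reduction to a transverse constant-coefficient recursion whose characteristic trinomial has unimodular coefficients, and a comparison of the dimension of the decaying solution space $\nu(\kpar)$ with the boundary constraints near ${\bf l}$. But the decisive step is missing, and you flag it yourself: you never compute $\nu(\kpar)$, nor the rank of the truncated boundary system. The argument-principle observation only shows that $\nu(\kpar)$ is locally constant and can jump at the Dirac projections $\{2\pi/3,4\pi/3\}$ (zigzag) or $\{0,2\pi\}$ (armchair); this at best yields the paper's soft remark that a zigzag flat band lives on one of the two arcs, but it cannot decide \emph{which} arc, \emph{which} sublattice, or the entries of the table, and it gives no baseline value of $\nu$ on either arc. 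The paper's whole Section \ref{sec:HEPs} exists precisely to evaluate this count exactly: parametrizing roots of the honeycomb edge polynomial by pairs $(\hsigma,\ell)$, proving the Monotonicity Lemma \ref{mono-lem} for $M(\rho)=\alpha_1(\rho^\kappa,\rho)-\kappa\alpha_2(\rho^\kappa,\rho)$, and obtaining $\pcount=-n_1-s_2\mathds{1}_{\kp\in(\frac{2\pi}{3},\frac{4\pi}{3})}$ (Proposition \ref{prop:p-value}), plus simplicity of roots (Lemma \ref{lem:simple_roots}) so that the boundary system is a nonsingular Vandermonde system of the right size. Without an equivalent of that computation your ``the analysis leaves a one-dimensional compatible decaying solution exactly for $\kpar$ in the quasimomentum set of Table \ref{tab:table-intro}'' is an assertion of the theorem, not a proof of it; the unproved claim that ``$P_\kpar$ factors'' for zigzag edges is also false in general.

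Two further specific points would fail as written. For a general armchair-type edge the decaying space is \emph{not} one-dimensional: it has dimension $-n_1$, which is as large as you like (e.g.\ for $(a_{11},a_{12})$ with $a_{11}-a_{12}\equiv 0 \bmod 3$ and $a_{11}$ large), and non-existence comes from the square homogeneous Vandermonde system \eqref{0outside-B}, \eqref{0outside-A} having only the trivial solution — not from a single frontier equation contradicting a unique decaying root as in the classical armchair case you are modeling on. And the exceptional quasimomenta ($2\pi/3$, $4\pi/3$ for zigzag; $0$, $2\pi$ for armchair), where a root sits on the unit circle, need a separate argument (the paper's Theorem \ref{th:exceptional}, a perturbation/limiting count), which you acknowledge but do not supply. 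Finally, your sublattice-swap symmetry is a legitimate shortcut (the paper uses the equivalent algebraic relation $P_-(\zeta,\kp)=\overline{P_+(1/\bar\zeta,\kp)}$ to deduce $\qcount$ from $\pcount$), but it only halves the work; it does not remove the need for the exact root count on each arc.
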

\medskip  

 Figures  \ref{fig:edge_states_classical} and \ref{fig:edge_states1} display $l^2_\kpar$ spectra of $H_{\sharp,\kpar}$ vs. $\kpar$ for several choices of rational edges.
In each panel,  the intersection of the vertical slice, corresponding to a fixed $k$, with the blue regions is the $l^2_\kpar$ spectrum of $H_{\sharp,\kpar}$.
{ The center and right panels of Figures \ref{fig:edge_states_classical} and \ref{fig:edge_states1}   display edge spectra for different choices of zigzag-type edge. 
Each of these panels shows a zero energy flat band over a proper subset of $[0,2\pi]$. 
 Part (a) of Theorem \ref{main-intro} states that for the AC-type edges, there are no zero energy edge states. The left panels of Figures \ref{fig:edge_states_classical} and  \ref{fig:edge_states1} show the edge spectra of  armchair edges. }
\item {\it Representation formulae for zero energy / flat band edge states (Section \ref{formulae}):} For zigzag type edges, and for $\kpar$ varying in the relevant subintervals of $[0,2\pi]$, we  present Fourier and rational function representation formulae for the zero energy / flat band edge states. {See Theorem \ref{0en-reps}.}
\item {\it Dispersive edge states (Section \ref{numerics}):} Through careful numerical computations of eigenvalue problems  for $H_{\sharp,\kpar}$,  we present strong evidence for the existence of non-zero energy dispersive edge state curves. 

In particular, our numerical investigations  strongly suggest:\\
 (i) Except for the classical zigzag and armchair edges, there exist dispersive (non-flat) edge state curves, which bifurcate from zero energy at  $\kpar={2\pi}/{3}$ or $4\pi/3$  for zigzag-type
edges and from zero energy at  $\kpar=0$ or $2\pi$ for the armchair-type case. This phenomenon is displayed in Figure  \ref{fig:edge_states1}.\\
 (ii) For a sequence of edges defined by $\bv_1=a_{11}\ocirc{\bv}_1 + \ocirc{\bv}_2$ ($a_{12}=1$), we find that as $a_{11}$ increases, the number of curves bifurcating from these points increases; see Figures \ref{fig:edge_states_armchair} and \ref{fig:edge_states_zigzag}. However for the sequence of edges defined by $\bv_1=a_{11}\ocirc{\bv}_1 + a_{12}\ocirc{\bv}_2$ where $(a_{11},a_{12})$ are consecutive Fibonacci numbers, we find no evidence for an increasing number of such curves; see Figure \ref{fig:edge_states_Fib}.~\\
  (iii) { Among the edges investigated, the classical armchair edge and its rotations appear to be the only edges for which there are no edge states at all. All others investigated appear to have some edge states: either zero-energy flat bands or dispersive non-zero energy curves; see Figure~\ref{fig:edge_states_classical}.}
\end{enumerate}

     \begin{figure}[htbp]
    	\begin{center}
 		\begin{tikzpicture}
			
 			\begin{scope}[shift={(-7.0,0)},scale=0.7,transform shape]	
 				\node at (0,0) {\includegraphics[height=5cm]{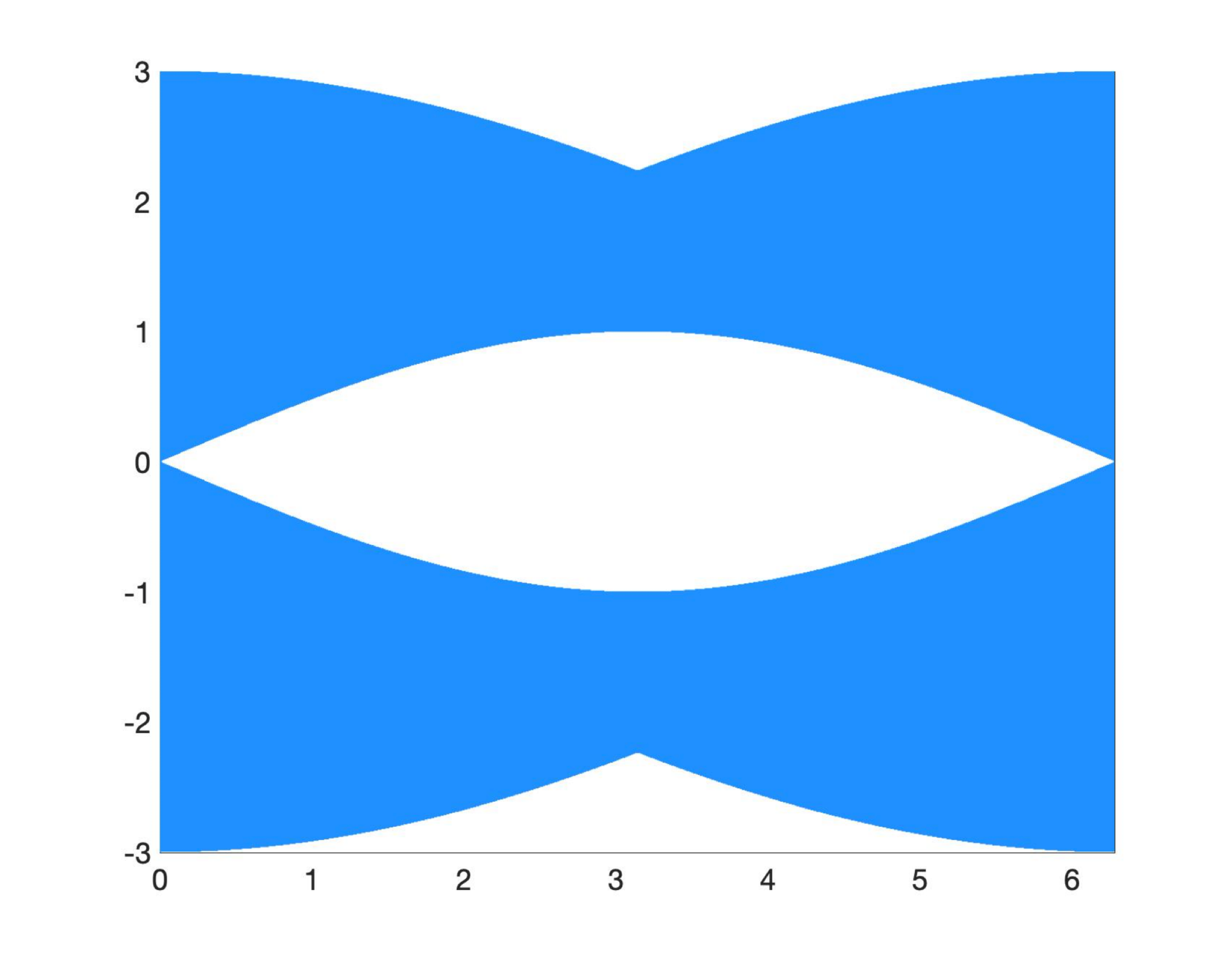}};
 				\draw (0.,2.5) node{$a_{11}=1,\;a_{12}=1$};
				\draw (-3,0) node{$E$};
				\draw (0,-2.5) node{$k$};
 			\end{scope}

 			\begin{scope}[shift={(-3,0)},scale=0.7,transform shape]	
 				\node at (0,0) {\includegraphics[height=5cm]{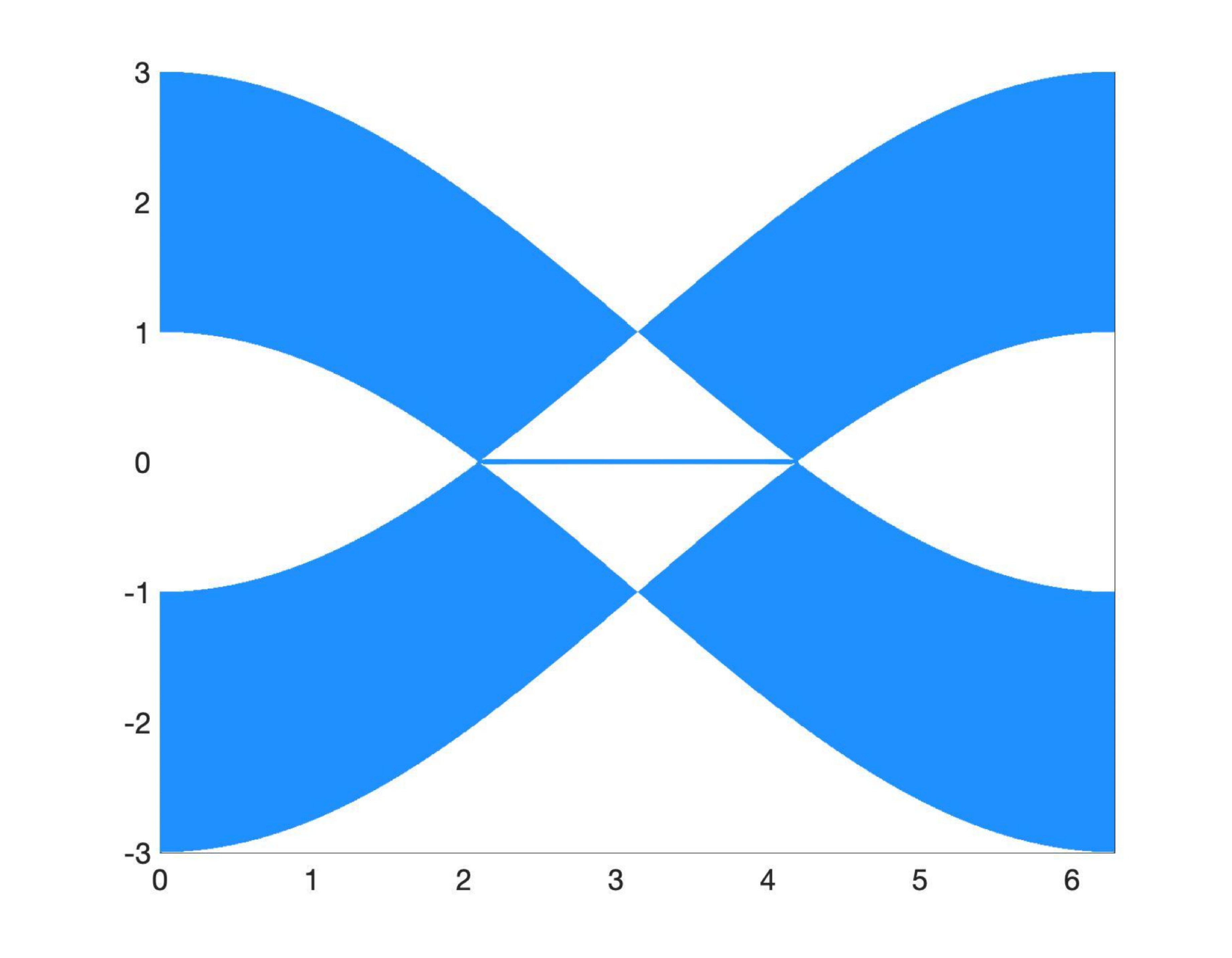}};
 				\draw (0.,2.5) node{$a_{11}=1,\;a_{12}=-1$ (balanced)};
				\draw (0,-2.5) node{$k$};
 			\end{scope}
			
 			\begin{scope}[shift={(1.0,0)},scale=0.7,transform shape]	
 				\node at (0,0) {\includegraphics[height=5cm]{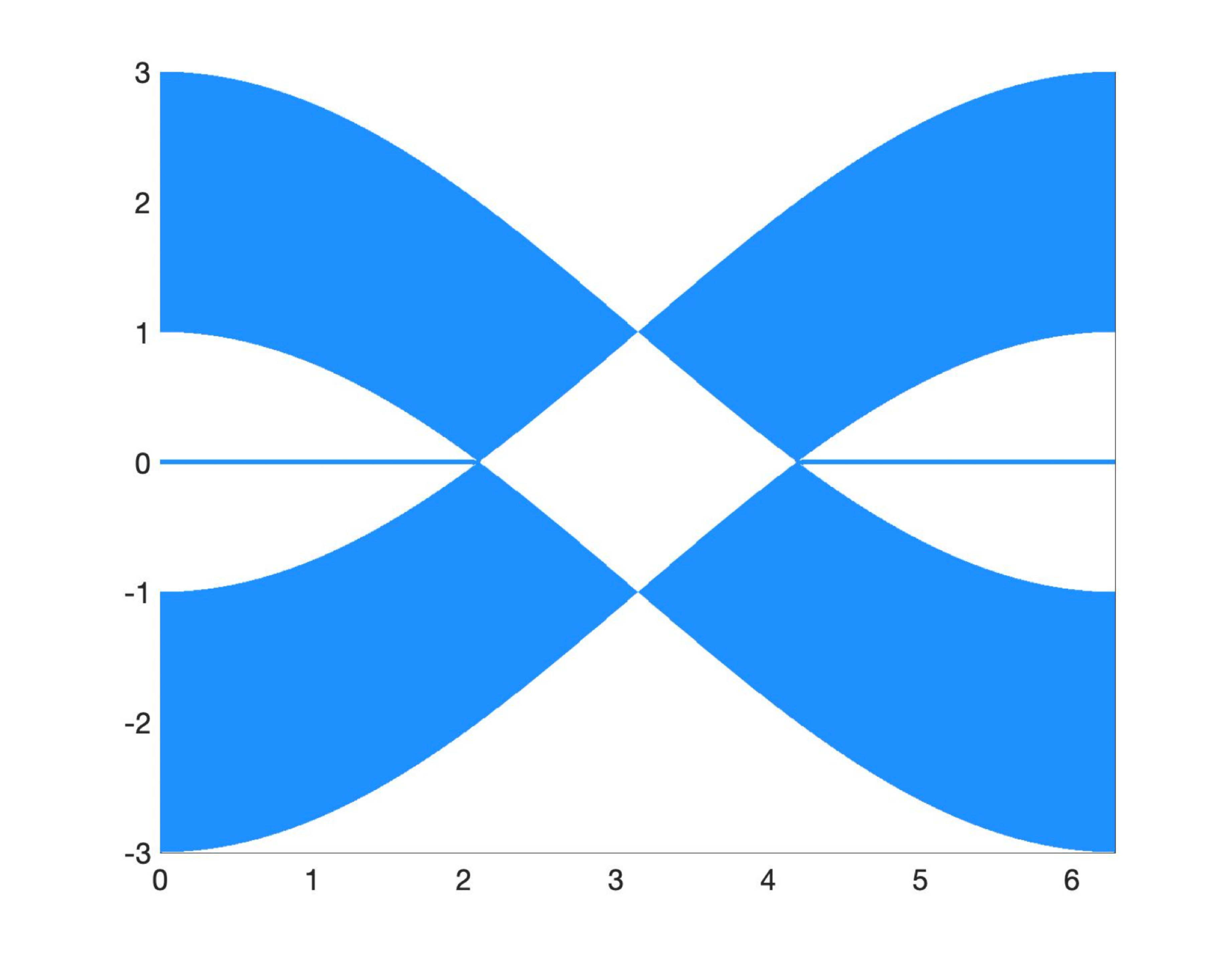}};
 				\draw (0.,2.5) node{$a_{11}=1,\;a_{12}=-1$ (unbalanced)};
				\draw (0,-2.5) node{$k$};
 			\end{scope}
			
 		\end{tikzpicture}
 		\caption{$l^2_\kpar$ spectrum of $H_{\sharp,\kpar}$ versus $\kpar$ for several choices of edges: (i) $(a_{11},a_{12})=(1,1)$, the classical armchair edge, (ii)
		$(a_{11},a_{12})=(1,-1)$, the classical zigzag balanced (aka ordinary) edge, (iii) $(a_{11},a_{12})=(1,-1)$, the classical zigzag unbalanced (aka bearded) edge.}
 		\label{fig:edge_states_classical}
 	\end{center}
 \end{figure} 
 
     \begin{figure}[htbp]
    	\begin{center}
 		\begin{tikzpicture}
			
 			\begin{scope}[shift={(-7.0,-1)},scale=0.7,transform shape]	
 				\node at (0,0) {\includegraphics[height=5cm]{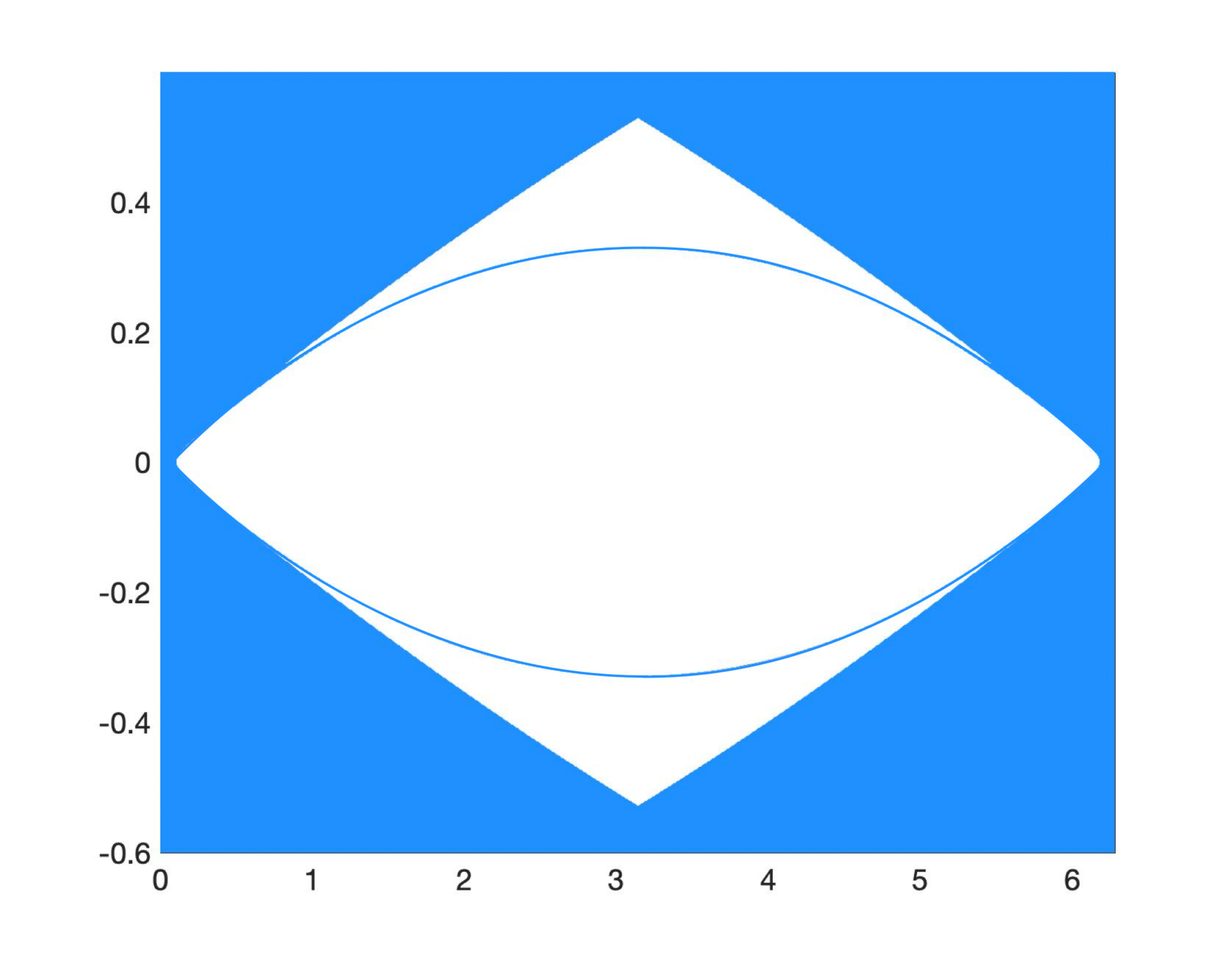}};
 				\draw (0.,2.5) node{$a_{11}=4,\;a_{12}=1$};
				\draw (-3,0) node{$E$};
				\draw (0,-2.5) node{$k$};
 			\end{scope}

 			\begin{scope}[shift={(-3,-1)},scale=0.7,transform shape]	
 				\node at (0,0) {\includegraphics[height=5cm]{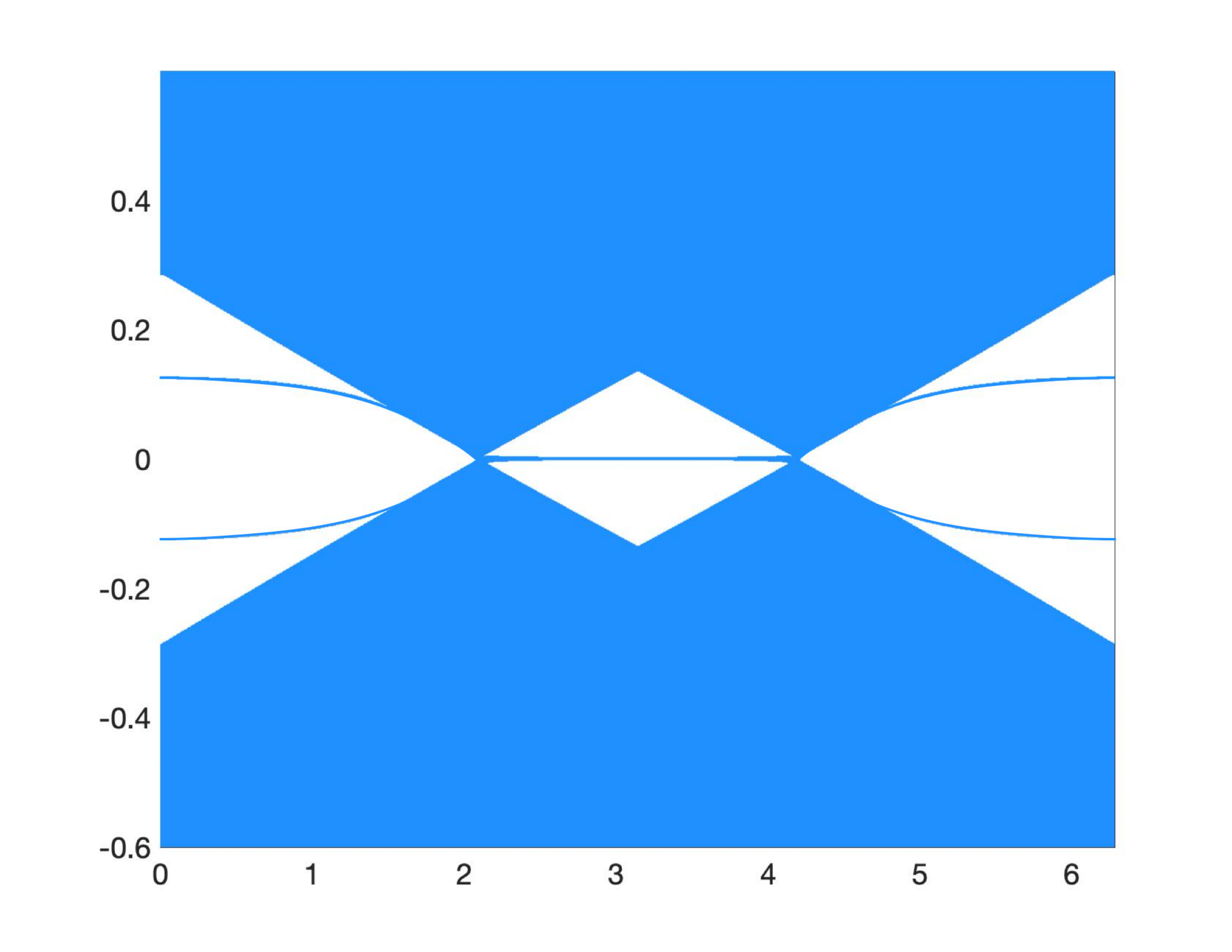}};
 				\draw (0.,2.5) node{$a_{11}=6,\;a_{12}=1$ (balanced)};
				\draw (0,-2.5) node{$k$};
 			\end{scope}
			
 			\begin{scope}[shift={(1,-1)},scale=0.7,transform shape]	
 				\node at (0,0) {\includegraphics[height=5cm]{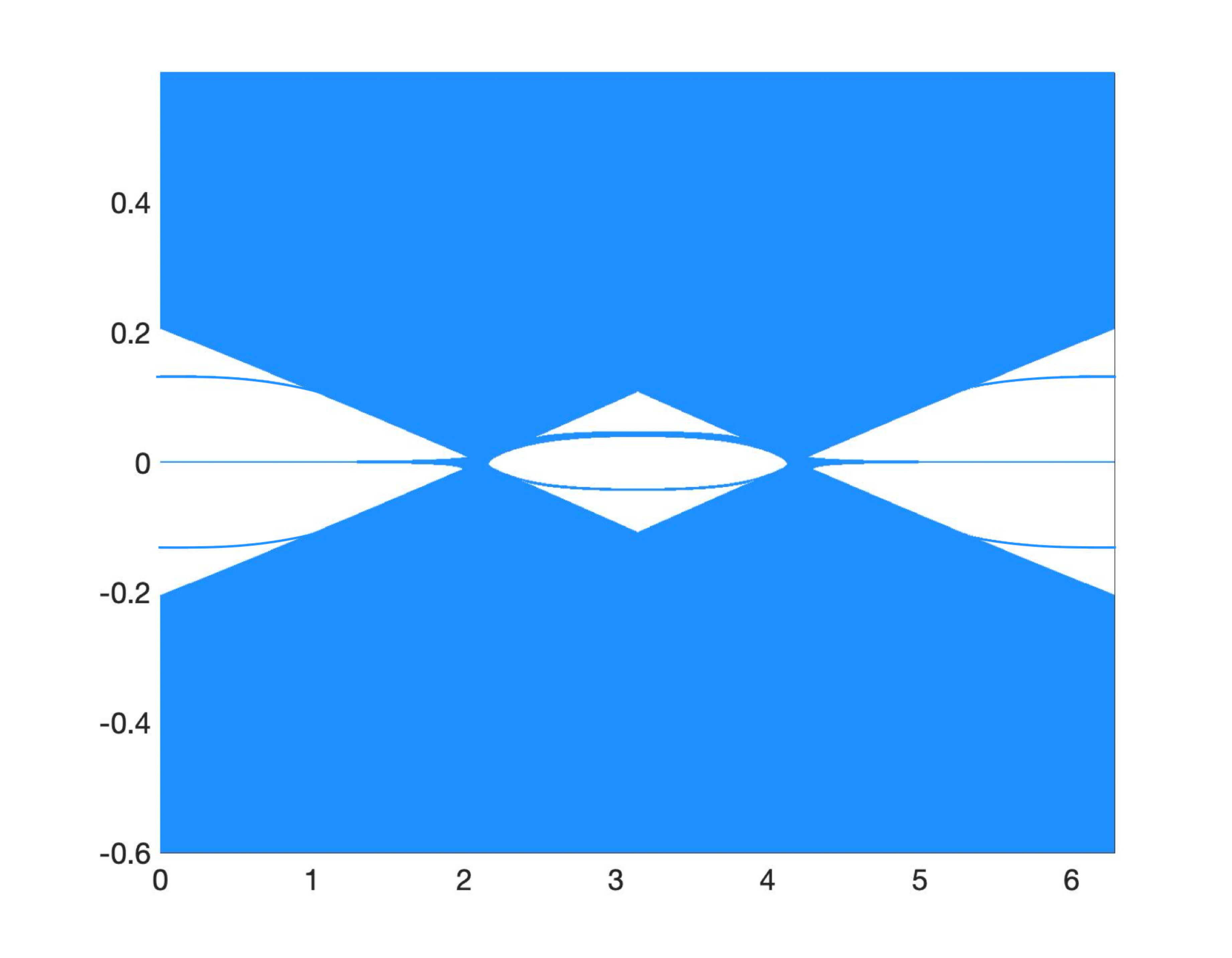}};
 				\draw (0.,2.5) node{$a_{11}=8,\;a_{12}=1$ (unbalanced)};
				\draw (0,-2.5) node{$k$};
 			\end{scope}
 		\end{tikzpicture}
 		\caption{$l^2_\kpar$ spectrum of $H_{\sharp,\kpar}$ versus $\kpar$ for several choices of edges: (i) $(a_{11},a_{12})=(4,1)$, an AC-type edge, (ii) $(a_{11},a_{12})=(6,1)$, a ZZ-type {balanced} edge and (iii) $(a_{11},a_{12})=(8,1)$, a ZZ-type {unbalanced} edge. }
 		\label{fig:edge_states1}
 	\end{center}
 \end{figure}

\subsection{Relation to previous work and some open questions} \label{sec:related}
The tight binding model on a honeycomb lattice plays a central role in the modeling of graphene and related materials; see, for example, 
\cite{RMP-Graphene:09,Katsnelson:12}. It was first recognized in 
\cite{Dresselhaus-etal:96,Fujita-etal:96}
that the existence of edge states depends on the shape of the edge.
The tight binding edge Hamiltonian has most commonly
been studied for the classical zigzag and armchair edges; see Figure  \ref{fig:edge_states_classical}. In \cite{Graf-Porta:13}  it is proved that the classical armchair edge supports no edge modes (zero or nonzero energy). 
Our rigorous analytical results on edge states for general rational edges, outlined in Section \ref{sec:outline-summ}, appear to be new.

 There are studies in the physics literature of rational edges \cite{Akhmerov-Beenakker:08,delplace2011zak,Jaskolski:11}. The definitions of edges used in these works differ. Let us now describe these classes of edges, and contrast them with the class of edges studied in this article. Recall that our edges are boundaries of structures $\HH_\sharp\subset\HH$, comprised of all honeycomb vertices in a closed half-space determined by a line parallel to $\bv_1$, where $\bv_1$ is any vector in the triangular lattice $\Lambda$. Here, we shall refer to such edges as {\it half-space termination edges}.

The notion of {\it minimal edge} was  introduced in \cite{Akhmerov-Beenakker:08}. 
Minimal edges have the following properties: 
\begin{itemize}
\item the structure is periodic with period vector
 $\bv_1=a_{11}\ocirc{\bv}_1+a_{12}\ocirc{\bv}_2$, where $a_{11}, a_{12}>0$, 
 \item  no site of $\HH_\sharp$ has two nearest neighbors
in $\HH\setminus\HH_\sharp$, 
\item no site of $\HH\setminus\HH_\sharp$ has two nearest neighbors
in $\HH_\sharp$, 
\item within a period, there are precisely $a_{11}+a_{12}$ frontier sites, i.e. sites of $\HH_\sharp$ with neighbors in $\HH\setminus\HH_\sharp$.
\end{itemize}
It is suggested in \cite{Akhmerov-Beenakker:08} that such minimal edge structures are energetically preferred.
 In general, a minimal edge need not be of the half-space termination type studied here.

 The class of {\it modified edges}, arising from the periodic attachment of atoms and bonds to minimal edge  atoms at frontier sites of $\HH_\sharp$, is studied in \cite{Jaskolski:11}.  The edges studied here may be either minimal or modified. 
 
  In \cite{delplace2011zak}, edges which arise from a periodic pattern of displacements of a selected dimer (pair of nearest neighbor sites) are studied, with period  vector $\bv_1=a_{11}\ocirc{\bv}_1+a_{12}\ocirc{\bv}_2$.  In the case where  $a_{11}, a_{12}>0$, this class of edges is asserted to be precisely the class  of minimal edges, as defined in  \cite{Akhmerov-Beenakker:08}.
 There is overlap between our class of half-space termination edges and those
discussed in \cite{delplace2011zak}, but neither class includes the other.
  
We now compare our results with those of  \cite{Akhmerov-Beenakker:08,delplace2011zak,Jaskolski:11}.
 The main goal of \cite{Akhmerov-Beenakker:08} is to derive continuum boundary conditions for  an effective Dirac operator, associated with a minimal rational edge.
Toward this goal, they consider the tight binding model for parallel quasimomentum $\kpar\approx0$.
The article \cite{delplace2011zak} postulates a {\it bulk-edge correspondence}: for a fixed edge, the dimension of the subspace of zero energy edge states is equal to the winding number of the Zak phase along a one-dimensional Brillouin zone determined by the edge orientation. 
The authors of \cite{delplace2011zak}  apply this approach to obtain an expression, derived previously in \cite{Akhmerov-Beenakker:08}, for the density of edge states.
%
%
 The reader should note that the results of \cite{delplace2011zak} are displayed in terms of a scaled (edge-dependent) parallel quasi-momentum range, while the range of parallel quasimomenta in the present article is fixed to be $[0,2\pi]$. 
There appears to be agreement between our rigorous results and 
the  results in \cite{delplace2011zak}  for those edges in the overlap of our studies. 
To our knowledge, no previous articles
 rigorously address, for a general class of rational edges, the questions of: which parallel quasimomentum ranges support zero energy edge states; when they exist, whether they are supported on $A-$ or $B-$ sublattice sites;
  or explicit formulas for zero energy edge states when they exist. 

Numerical studies in \cite{Jaskolski:11} indicate that a flat band, for a minimal structure, can give rise to non-zero energy edge state curves when additional sites and bonds are attached to form a modified structure.
  Our numerical investigations give strong evidence  that non-zero energy edge state curves arise in minimal structures themselves.

Many natural open problems arise.  
 (a) Are there states which are bounded and oscillatory parallel to an irrational edge
 and which decay into the bulk? Related to this question is the article \cite{Gontier:21}, which demonstrates that the edge spectrum for a rationally terminated continuum
periodic Schroedinger operator (with Dirichlet boundary conditions) has a band-gap spectrum, while for an irrational termination the gaps are filled with ``edge spectrum''.
 (b) Can one realize an edge state for irrational termination as the limit of a sequence of edge state wave-packets
  (superpositions of edge states) of rationally terminated structures?
  (c)  Do all edge state curves emerge from and terminate in a band crossing?
 (d) Explain the following numerical observations: Along certain sequences of rational edges, for which $|a_{11}|+|a_{12}|$ increases, there is an increasing number of dispersive edge state curves  which bifurcate from the band crossings. However along other sequences we do not see evidence of this effect. (e) For an irrational edge, understand the long term dynamics of a state initially concentrated near the edge.  
 (f) Investigate analogous questions in other tight-binding models, such as the Harper model for an electron on a two-dimensional lattice in the presence of a  constant perpendicular  magnetic field, {\it e.g.} \cite{AMS:90,OA:01,AJ:09,Avila-etal:13} or models of multilayer structures, such as twisted bilayer graphene,  {\it e.g.} \cite{BM:11,TKV:18,BEWZ:20,WL:21}. Parallel questions for  quantum graph models \cite{Kuchment-Post:07,BK:13} would also be of interest.

\subsection{Structure of the paper}
 In Section \ref{sec:setup} we present a mathematical framework for studying the edge state eigenvalue problem for an arbitrary rational edge. \\
  In Section \ref{gen_spectrum} we discuss general properties of the spectrum 
 of the edge Hamiltonian, $H_{\sharp,k}$, e.g. essential spectrum and symmetry properties.\\
 { In Section \ref{sec:0energy} we prove (modulo technical results established later) our main results on the existence of zero-energy / flat band edge states:
   Theorems \ref{th:zigzag}, \ref{th:armchair} and Theorem \ref{main-intro} as their consequence.}
   {
   The zero energy edge state eigenvalue problem in the bulk is a system of decoupled $A-$ site and $B-$ site difference equations with complex $\kpar-$ dependent coefficients. The construction of zero energy edge states requires us to understand the complex roots of two  {\it honeycomb edge polynomials} (associated with $A-$ site edge states and $B-$ site edge states), which are related by a symmetry.}
    We prove that, depending on the parallel quasi-momentum $\kpar$, the number of roots in the open unit disc is either  (a) \underline{less or equal to} or (b) \underline{one more than}   the number of linear homogeneous algebraic  boundary conditions required for a general linear combination of decaying solutions of the difference equation to be an edge state.  In case (b), the algebraic system has one free parameter, which generates a one-dimensional   edge state eigenspace for the relevant value of $\kpar$. This computation is made in Section \ref{sec:HEPs}. 
   \\
    In Section \ref{formulae} we obtain both Fourier and rational function representations of the zero energy flat band edge states, in all cases where they exist. \\
Section \ref{numerics} provides an analytical framework for studying the edge state eigenvalue problem for arbitrary rational edges and general energies, $E$. In contrast to the case of zero energy edge states, the edge state eigenvalue problem for general energies does not decouple into separate difference equations on $A-$ and $B-$ sites. Hence, the relevant polynomial associated with the coupled system of $A-$ and $B-$ site difference equations is twice the degree of the polynomials arising for the uncoupled problem, and is less accessible to direct analysis.
 We present the results of numerical investigations showing non-zero energy dispersive edge states and zero energy / flat bands for representative non-classical zigzag-type edges and non-zero energy dispersive edge states for representative non-classical armchair-type edges.  \\
Finally, the appendices contain detailed computations used in the body of the paper.
  \bigskip
  
  \subsection{Notation}\label{notation}
  \begin{enumerate}
  \item $\Lambda$ equilateral triangular lattice;
  \item $\HH$ honeycomb lattice; $\HH_\sharp$ terminated honeycomb lattice;
  \item $H_{\rm bulk}$ and $H_\sharp$ bulk and edge nearest neighbor tight-binding operators;
  \item $\mathcal{B}(X)$, the space of bounded linear operators on a Banach space $X$;
  \item Pauli matrices: \begin{equation} \sigma_1=\begin{pmatrix} 0&1\\1&0\end{pmatrix},\quad  \sigma_2=\begin{pmatrix} 0&-i\\ i&0\end{pmatrix},\quad
  \sigma_3=\begin{pmatrix} 1&0\\ 0&-1\end{pmatrix}\label{pauli123}\end{equation}
  \item $\mathds{1}_S$ denotes the characteristic function of the condition $S$ .
  \end{enumerate}
\section{Mathematical framework}\label{sec:setup}

\subsection{The honeycomb structure}\label{sec:Honeycomb}
We introduce the {\it equilateral triangular lattice} 
\[\Lambda=\Z\ocirc{\bv}_1\oplus\Z\ocirc{\bv}_2,\] with 
\begin{equation}
\ocirc{\bv}_1 = \begin{pmatrix} \frac{\sqrt3}{2}\\ { } \\ \frac12\end{pmatrix}\quad {\rm and}\quad 
\ocirc{\bv}_2 = \begin{pmatrix} \frac{\sqrt3}{2}\\ { }\\ -\frac12\end{pmatrix};
\label{tri-lattice}\end{equation}
see Figure \ref{fig:notation}.
\begin{figure}[htbp]
	\begin{center}
	 \begin{tikzpicture}
	 \foreach \n in {-3,-2,-1,0,1,2} 
 	\foreach \m in {-2,-1,0,1,2,3} {
	 \draw ({(\n+\m)*sqrt(3)/2},{(\n-\m)/2}) node{\color{blue} \scalebox{0.8}{\textbullet}};
	 \draw ({1/sqrt(3)+(\n+\m)*sqrt(3)/2},{(\n-\m)/2}) node{\color{red} \scalebox{0.8}{\textbullet}};
	 }
	 \foreach \n in {-2,-1,0,1,2} {
	 \draw ({(4+\n)*sqrt(3)/2},{(\n-4)/2}) node{\color{blue} \scalebox{0.8}{\textbullet}};
	 \draw ({1/sqrt(3)+(\n+4)*sqrt(3)/2},{(\n-4)/2}) node{\color{red} \scalebox{0.8}{\textbullet}};
	 }
	 \foreach \n in {-3,-2,-1,0,1} {
	 \draw ({(-3+\n)*sqrt(3)/2},{(\n+3)/2}) node{\color{blue} \scalebox{0.8}{\textbullet}};
	 \draw ({1/sqrt(3)+(\n-3)*sqrt(3)/2},{(\n+3)/2}) node{\color{red} \scalebox{0.8}{\textbullet}};
	 }
  	\foreach \m in {-2,-1,0,1,2} {
 	 \draw ({(-4+\m)*sqrt(3)/2},{(-4-\m)/2}) node{\color{blue} \scalebox{0.8}{\textbullet}};
 	 \draw ({1/sqrt(3)+(-4+\m)*sqrt(3)/2},{(-4-\m)/2}) node{\color{red} \scalebox{0.8}{\textbullet}};
 	 }
   	\foreach \m in {-1,0,1,2,3} {
  	 \draw ({(3+\m)*sqrt(3)/2},{(3-\m)/2}) node{\color{blue} \scalebox{0.8}{\textbullet}};
  	 \draw ({1/sqrt(3)+(3+\m)*sqrt(3)/2},{(3-\m)/2}) node{\color{red} \scalebox{0.8}{\textbullet}};
  	 }
	 \draw[ultra thick,->](0,0)--({sqrt(3)/2},0.5) node[yshift=1.5,xshift=-2,left] {$\ocirc{\bv}_1$};
	 \draw[ultra thick,->](0,0)--({sqrt(3)/2},-0.5) node[yshift=-0.5,xshift=-2,left] {$\ocirc{\bv}_2$};
	 \node at (-0.3,0) {\color{blue}$\ocirc{\bv}_A $};\node at (1,0) {\color{red}$\ocirc{\bv}_B $};
	 \draw[ultra thick,->]({-3*sqrt(3)/2},-1/2)--({1/sqrt(3)-3*sqrt(3)/2},-1/2) node[yshift=-4,right] {$\be^1$};
	 \draw[ultra thick,->]({-3*sqrt(3)/2},-1/2)--({-1/sqrt(3)/2-3*sqrt(3)/2},0) node[yshift=-2,right] {$\be^2$};
	 \draw[ultra thick,->]({-3*sqrt(3)/2},-1/2)--({-1/sqrt(3)/2-3*sqrt(3)/2},-1) node[yshift=-2,right] {$\be^3$};
	 \end{tikzpicture}
	 \caption{{Bulk honeycomb structure, $\HH$, comprised of $A-$ site (blue) and $B-$ site (red) triangular sublattices.} }
	 \label{fig:notation}
 	\end{center}
\end{figure}
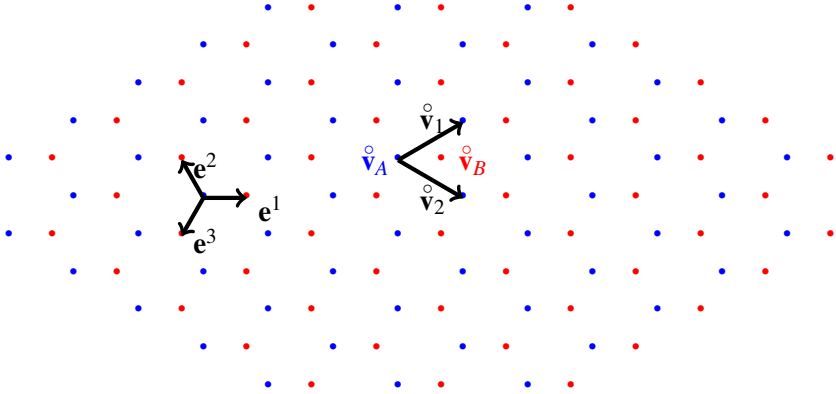
The {\it honeycomb structure} is the set  $\HH= \HH_A\cup \HH_B\subset\R^2$, where
\begin{align*}
\HH_A &= \ocirc{\bv}_A + \Lambda\quad {\rm and}\quad \HH_B = \ocirc{\bv}_B + \Lambda\label{OmegaAB},
\end{align*}
with 
\begin{equation*}
\ocirc{\bv}_A = 0 \quad {\rm and}\quad \ocirc{\bv}_B = \frac13\left(\ocirc{\bv}_1+\ocirc{\bv}_2\right)
\label{vAvB-def}
\end{equation*}
The points of $\HH_A$ and $\HH_B$ are called {\it A-points} and {\it B-points}, respectively. We write $\omega, \tilde\omega, \omega^\prime$ {\it etc.} to denote points of the honeycomb structure.

A given $A-$point $\omega$ has as its three nearest neighbors  in $\HH$ the three $B-$points
\begin{equation}
 \omega+\be^\nu,\quad \nu = 1,2,3\quad [\textrm{nearest neighbor B-points to $\omega\in\HH_A$]},
 \label{nn-Apts} 
 \end{equation}
where
\begin{equation}
e^1 = \frac13\left(\ocirc{\bv}_1+\ocirc{\bv}_2\right),\quad e^2 = \frac13\left(\ocirc{\bv}_1-2\ocirc{\bv}_2\right),\quad 
e^3 = \frac13\left(-2\ocirc{\bv}_1+\ocirc{\bv}_2\right).
\label{e-nu}\end{equation}

Similarly, a given 
$B-$point $\omega$ has as its three nearest neighbors in $\HH$ the three $A-$points
\begin{equation}
 \omega-\be^\nu,\quad \nu = 1,2,3\ \quad [\textrm{nearest neighbor A-points to $\omega\in\HH_B$]};
 \label{nn-Bpts}
 \end{equation}
see Figure \ref{fig:notation}.

\subsection{The bulk Hamiltonian}\label{sec:BulkHam}
 We introduce the {\it bulk Hamiltonian}, $\Hbulk$, acting on `` wave functions '' 
\[ \psi = \left(\psi_\omega\right)_{\omega\in\HH} \in l^2(\HH).\]
It is defined by the formulas
\begin{subequations}
\begin{align}
\left(\Hbulk\psi\right)_\omega = \sum_{\nu=1}^3 \psi_{\omega+\be^\nu}\quad {\rm if}\quad \omega\in\HH_A
\label{Hbulk_A}\\ 
\left(\Hbulk\psi\right)_\omega = \sum_{\nu=1}^3 \psi_{\omega-\be^\nu}\quad {\rm if}\quad \omega\in\HH_B
\label{Hbulk_B}  
\end{align}
\end{subequations}
Thus, $\left(\Hbulk\psi\right)_\omega$ is equal to the sum of $\psi(\omega^\prime)$ over the three nearest neighbors $\omega^\prime$ of $\omega$ in the honeycomb $\HH$.

\subsection{The sharply terminated honeycomb, $\HH_\sharp$, and the Hamiltonian $H_\sharp$}\label{sec:sharp_term}
Let ${\bf l}$ denote a line in $\R^2$, and $\HH_\sharp\subset\HH$ be the subset of points of $\HH$ that lie in {the closed half-space of $\R^2$ lying on one side of ${\bf l}$}. We refer to the line ${\bf l}$ as {\it the edge}. The Hamiltonian for the sharply terminated honeycomb will act on wave functions $\left(\psi_\omega\right)_{\omega\in\HH}$, such that $\psi_\omega=0$  for all
 $\omega\notin\HH_\sharp$. We shall abuse notation and denote the space of all such vectors $l^2(\HH_\sharp)$.
 
 The Hamiltonian, $H_\sharp$, acting on vectors $\left(\psi_\omega\right)_{\omega\in\HH}\in l^2(\HH_\sharp)$ is defined by the formulas
 \begin{subequations}
 \begin{align}
\left( H_\sharp\psi\right)_\omega & = \sum_{\nu=1}^3 \psi_{\omega+\be^\nu}\quad
 \textrm{if $\omega\in\HH_\sharp\cap\HH_A$} 
\label{Hshp8} 
\\   \left( H_\sharp\psi\right)_\omega & = 0\quad
 \textrm{if $\omega\in\HH_A\setminus\HH_\sharp$} 
\label{Hshp9} 
\\   \left( H_\sharp\psi\right)_\omega & = \sum_{\nu=1}^3 \psi_{\omega-\be^\nu}\quad
 \textrm{if $\omega\in\HH_\sharp\cap\HH_B$} 
 \label{Hshp10}
 \\  
 \left( H_\sharp\psi\right)_\omega & = 0\quad \textrm{if $\omega\in\HH_B\setminus\HH_\sharp$} 
 \label{Hshp11}
 \end{align}
 \label{Hshp}
 \end{subequations}
 {The Hamiltonian $H_\sharp$ in  \eqref{Hshp} arises as the limit of the $1-$ electron model in 
  the strong binding regime; see, for example, \cite{Dimassi-Sjoestrand:99,Helffer-Sjoestrand:84,SW:21}.
 The operator $H_\sharp$ is bounded and self-adjoint  on $l^2(\HH_\sharp)$;
  see Section \ref{sec:fibers} and relation \eqref{Hconj}.}
 
 \subsection{Rational edges}\label{sec:rat-edge}
 We  call the edge ${\bf l}$ a {\it rational edge} if it is parallel to a non-zero vector 
$a_{11}\ocirc{\bv}_1 + a_{12}\ocirc{\bv}_2$  in the triangular lattice {where $a_{11}$ and $a_{12}$ are two integers}. This paper confines itself to the case of rational edges.
 
The above integers $a_{11}$ and $a_{12}$ may be taken to be relatively prime, in which case there exist
  integers $a_{21}$, $a_{22}$ (not unique) such that
  \[ \det\begin{pmatrix} a_{11} & a_{12}\\ a_{21} & a_{22}\end{pmatrix}\ = \ \pm 1.\]
We next introduce
\begin{subequations}
 \begin{align}
 \bv_1 &= a_{11} \ocirc{\bv}_1 + a_{12} \ocirc{\bv}_2,\label{v1}\\
 \bv_2 &= a_{21} \ocirc{\bv}_1 + a_{22} \ocirc{\bv}_2.\label{v2}
 \end{align}
 \label{vjs}
 \end{subequations}

 The vector $\bv_1$ is parallel to the edge ${\bf l}$, and the vector $\bv_2$ is transverse to  ${\bf l}$. After possibly changing $(a_{21},a_{22})$ to $(-a_{21},-a_{22})$ we can always take $\bv_2$ to point into the terminated bulk, $\HH_\sharp$. Then, after possibly changing $(a_{11},a_{12})$ to $(-a_{11},-a_{12})$ we can achieve the conditions
 \begin{subequations}
\begin{align}
 &\det\begin{pmatrix} a_{11} & a_{12}\\ a_{21} & a_{22}\end{pmatrix}\ = \  +1,
 \label{det1}\\
& \textrm{$\bv_1$ is parallel to the edge, ${\bf l}$,} \label{v1-par}\\
& \textrm{$\bv_2$ points into $\HH_\sharp$}. \label{v2-trans}
 \end{align}\label{Acond}
 \end{subequations}
 The integer vector $(a_{11},a_{12})$ is uniquely specified by conditions \eqref{vjs}, \eqref{Acond} for a fixed 
$\HH_\sharp$, while the vector $(a_{21},a_{22})$  is uniquely specified modulo translates by integer multiples of $(a_{11},a_{12})$. Our results are independent of the ambiguity in the choice of $(a_{21},a_{22})${; see Appendix \ref{app:v_2} for more details}. 

\noindent {\bf Convention:} {\it From now on we fix $(a_{ij})$ as in \eqref{vjs}, \eqref{Acond}.}
By \eqref{det1}, we have 
\begin{equation}\label{eq:change_basis}
    \begin{array}{l}
    \ocirc{\bv}_1 = a_{22} \bv_1 - a_{12} \bv_2,\\
    \ocirc{\bv}_2 = -a_{21} \bv_1 + a_{11} \bv_2.
    \end{array}
\end{equation}
Hence, 
\begin{equation}
\textrm{the set $\{\bv_1,\bv_2\}$ is a basis for the triangular lattice; $\Lambda=\Z\bv_1+\Z\bv_2 = \Z\ocirc{\bv}_1+\Z\ocirc{\bv}_2$.}\label{basis}\end{equation}

Because $\HH_\sharp$ consists of all points of $\HH$ that lie on one side of the line ${\bf l}$, we know from \eqref{vjs}, \eqref{Acond}  that 
\begin{equation}
\HH_\sharp\ =\HH\cap \Big\{ x_1\bv_1+x_2\bv_2;\;\quad x_1,x_2\in\R;\quad x_2\ge\beta \Big\}, 
\label{HHsharp} 
\end{equation}
for some real number $\beta$, {with the line {\bf l} (the edge)  given by:
 \begin{equation}
  \{x_1\bv_1+x_2\bv_2\ :\ x_2=\beta,\  x_1\in\R\}.
  \label{theedge}
  \end{equation}
 }

In view of \eqref{basis} we prepare to re-express the honeycomb structure, $\HH$, the terminated honeycomb structure, $\HH_\sharp$, and the Hamiltonians $\Hbulk$ and $H_\sharp$ in terms of the lattice basis $\{\bv_1,\bv_2\}$. 
The detailed calculations are presented in Appendix \ref{sec:vs_circ_to_vs}.

Let us partition the plane $\R^2$ into the parallelograms defined for all $m,n\in\Z$ by
\begin{equation}
\Gamma(m,n) = \Big\{ \bv=x_1\bv_1+x_2\bv_2\ :\ x_1\in\left(m-\frac12,m+\frac12\right],\ x_2\in\left(n-\frac12,n+\frac12\right] \Big\}.
\label{Gamma-mn}
\end{equation}
We show in Appendix \ref{vAvB-app} that each parallelogram $\Gamma(m,n)$ contains the single $A-$point  $\bv_A+m\bv_1+n\bv_2$ and the single $B-$point 
 $\bv_B+m\bv_1+n\bv_2$, {where $\bv_A, \bv_B\in\HH $ lie in the parallelogram $\Gamma(0,0)$. }
{
Introduce $k_1, s_1, k_2, s_2$ such that
 \begin{subequations}
 \begin{align}
 a_{22}-a_{21} & = 3k_1 + s_1\quad {\rm with}\quad k_1\in\Z,\quad s_1\in\{-1,0,1\},\\
 a_{11}-a_{12} & = 3k_2 + s_2\quad {\rm with}\quad k_2\in\Z,\quad s_2\in\{-1,0,1\}.
 \end{align}
 \label{k1k2s1s2}
 \end{subequations}
 The points $\bv_A$ and $\bv_B$ are given by
 \begin{subequations}
 \begin{align}
 \bv_A&=0\in \Gamma(0,0), \label{vA-1}\\ 
 \bv_B&=\frac13\left( s_1\bv_1 + s_2\bv_2 \right)\in\Gamma(0,0).\label{vB-1} 
 \end{align}
 \label{vAvB-1} 
 \end{subequations}
 }
{See Figures \ref{fig:theedges31} and \ref{fig:notation_edges31} for an example ($a_{11}=3,a_{12}=1$).}

Given an $A-$point in $\Gamma(m,n)$: $\bv_A+m\bv_1+n\bv_2 = \bv_B+m\bv_1+n\bv_2+(\bv_A-\bv_B)$, by \eqref{nn-Apts}, its $3$ nearest
 neighbor $B-$points are $\bv_B+m\bv_1+n\bv_2+(\bv_A-\bv_B) + \be^\nu$, $\nu=1,2,3$. And, given any $B-$point
in $\Gamma(m,n)$: $\bv_B+m\bv_1+n\bv_2 = \bv_A+m\bv_1+n\bv_2-(\bv_A-\bv_B)$, by \eqref{nn-Bpts}, its $3$ nearest
 neighbor $A-$points are $\bv_A+m\bv_1+n\bv_2-(\bv_A-\bv_B) - \be^\nu$, $\nu=1,2,3$. 
To represent these nearest neighbor points with respect to the basis $\{\bv_1,\bv_2\}$, we  introduce the integers $\ttm_\nu$ and $\ttn_\nu$ such that 
\begin{equation}
e^\nu+ (\bv_A-\bv_B) = \ttm_\nu \bv_1 +\ttn_\nu\bv_2\quad \textrm{for}\quad \nu=1,2,3.
\label{tntm}\end{equation}
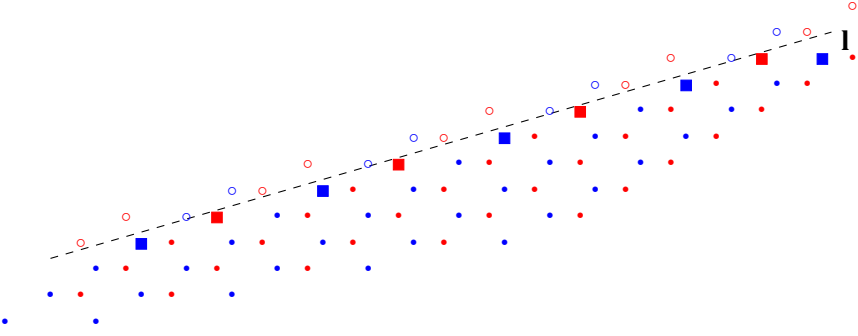
\begin{figure}[htbp]
	\begin{center}
	 	 \begin{tikzpicture}[scale=0.7]
			 \draw[-,dashed]({-2*sqrt(3)*1.5},{-0.85*1.5})--({4*sqrt(3)*1.4},{2.15*1.4}) node[yshift=-3,right] {${\bf l}$};
 	 	 \foreach \n in {-1} 
 	  	\foreach \m in {1,2,3} {
 	 	 \draw ({(2*\n)*sqrt(3)-\m*sqrt(3)/2},{\n-(\m)/2}) node{\color{blue} \scalebox{0.7}{\textbullet}};
 		 \draw ({5/6*sqrt(3)+(2*\n)*sqrt(3)-\m*sqrt(3)/2},{\n-(\m)/2+1/2}) node{\color{red} \scalebox{0.7}{\textbullet}};
 	 	 }	
  	 	 \foreach \n in {-1} 
  	  	\foreach \m in {0} {
		 \node at ({(2*\n)*sqrt(3)-\m*sqrt(3)/2},{\n-(\m)/2}) {\color{blue}\pgfuseplotmark{square*}} ; 
		  \node at ({5/6*sqrt(3)+(2*\n)*sqrt(3)-\m*sqrt(3)/2},{\n-(\m)/2+1/2}) {\color{red}\pgfuseplotmark{square*}};
		 }
		 \foreach \n in {0} 
	  	\foreach \m in {1,2,3,4,5} {
	 	 \draw ({(2*\n)*sqrt(3)-\m*sqrt(3)/2},{\n-(\m)/2}) node{\color{blue} \scalebox{0.7}{\textbullet}};
		 \draw ({5/6*sqrt(3)+(2*\n)*sqrt(3)-\m*sqrt(3)/2},{\n-(\m)/2+1/2}) node{\color{red} \scalebox{0.7}{\textbullet}};
	 	 }
 		 \foreach \n in {0} 
 	  	\foreach \m in {0} {
 	 	 \node at ({(2*\n)*sqrt(3)-\m*sqrt(3)/2},{\n-(\m)/2}) {\color{blue}\pgfuseplotmark{square*}} ;
 		 \node at ({5/6*sqrt(3)+(2*\n)*sqrt(3)-\m*sqrt(3)/2},{\n-(\m)/2+1/2}) {\color{red}\pgfuseplotmark{square*}} ;
 	 	 }
 		 \foreach \n in {1} 
 	  	\foreach \m in {1,2,3,4,5,6} {
 	 	 \draw ({(2*\n)*sqrt(3)-\m*sqrt(3)/2},{\n-(\m)/2}) node{\color{blue} \scalebox{0.7}{\textbullet}};
 		 \draw ({5/6*sqrt(3)+(2*\n)*sqrt(3)-\m*sqrt(3)/2},{\n-(\m)/2+1/2}) node{\color{red} \scalebox{0.7}{\textbullet}};
 	 	 }
  		 \foreach \n in {1} 
  	  	\foreach \m in {0} {
  	 	 \node at ({(2*\n)*sqrt(3)-\m*sqrt(3)/2},{\n-(\m)/2}) {\color{blue}\pgfuseplotmark{square*}} ;
  		 \node at ({5/6*sqrt(3)+(2*\n)*sqrt(3)-\m*sqrt(3)/2},{\n-(\m)/2+1/2}) {\color{red}\pgfuseplotmark{square*}} ;
  	 	 }
 	 	 \foreach \n in {2} 
 	  	\foreach \m in {1,2,3,4,5,6,7} {
 	 	 \draw ({(2*\n)*sqrt(3)-\m*sqrt(3)/2},{\n-(\m)/2}) node{\color{blue} \scalebox{0.7}{\textbullet}};
 		 \draw ({5/6*sqrt(3)+(2*\n)*sqrt(3)-\m*sqrt(3)/2},{\n-(\m)/2+1/2}) node{\color{red} \scalebox{0.7}{\textbullet}};
 	 	 }
  		 \foreach \n in {2} 
  	  	\foreach \m in {0} {
  	 	 \node at ({(2*\n)*sqrt(3)-\m*sqrt(3)/2},{\n-(\m)/2}) {\color{blue}\pgfuseplotmark{square*}} ;
  		 \node at ({5/6*sqrt(3)+(2*\n)*sqrt(3)-\m*sqrt(3)/2},{\n-(\m)/2+1/2}) {\color{red}\pgfuseplotmark{square*}} ;
  	 	 }
  	 	 \foreach \n in {3} 
  	  	\foreach \m in {2,3,4,5,6,7,8} {
  	 	 \draw ({(2*\n)*sqrt(3)-\m*sqrt(3)/2},{\n-(\m)/2}) node{\color{blue} \scalebox{0.7}{\textbullet}};
  		 \draw ({5/6*sqrt(3)+(2*\n)*sqrt(3)-\m*sqrt(3)/2},{\n-(\m)/2+1/2}) node{\color{red} \scalebox{0.7}{\textbullet}};
  	 	 }
   		 \foreach \n in {3} 
   	  	\foreach \m in {1} {
   	 	 \node at ({(2*\n)*sqrt(3)-\m*sqrt(3)/2},{\n-(\m)/2}) {\color{blue}\pgfuseplotmark{square*}} ;
   		 }
  	 	 \foreach \n in {-2} 
  	  	\foreach \m in {-1,-2} {
  		 \draw ({5/6*sqrt(3)+(2*\n)*sqrt(3)-\m*sqrt(3)/2},{\n-(\m)/2+1/2}) node{\color{red} \scalebox{0.7}{$\circ$}};
  	 	 }
 	 	 \foreach \n in {-1} 
 	  	\foreach \m in {-1,-2} {
 	 	 \draw ({(2*\n)*sqrt(3)-\m*sqrt(3)/2},{\n-(\m)/2}) node{\color{blue} \scalebox{0.7}{$\circ$}};
 		 \draw ({5/6*sqrt(3)+(2*\n)*sqrt(3)-\m*sqrt(3)/2},{\n-(\m)/2+1/2}) node{\color{red} \scalebox{0.7}{$\circ$}};
 	 	 }
  	 	 \foreach \n in {0} 
  	  	\foreach \m in {-1,-2} {
  	 	 \draw ({(2*\n)*sqrt(3)-\m*sqrt(3)/2},{\n-(\m)/2}) node{\color{blue} \scalebox{0.7}{$\circ$}};
  		 \draw ({5/6*sqrt(3)+(2*\n)*sqrt(3)-\m*sqrt(3)/2},{\n-(\m)/2+1/2}) node{\color{red} \scalebox{0.7}{$\circ$}};
  	 	 }
  	 	 \foreach \n in {1} 
  	  	\foreach \m in {-1,-2} {
  	 	 \draw ({(2*\n)*sqrt(3)-\m*sqrt(3)/2},{\n-(\m)/2}) node{\color{blue} \scalebox{0.7}{$\circ$}};
  		 \draw ({5/6*sqrt(3)+(2*\n)*sqrt(3)-\m*sqrt(3)/2},{\n-(\m)/2+1/2}) node{\color{red}\scalebox{0.7}{$\circ$}};
  	 	 }
   	 	 \foreach \n in {2} 
   	  	\foreach \m in {-1,-2} {
   	 	 \draw ({(2*\n)*sqrt(3)-\m*sqrt(3)/2},{\n-(\m)/2}) node{\color{blue} \scalebox{0.7}{$\circ$}};
   		 \draw ({5/6*sqrt(3)+(2*\n)*sqrt(3)-\m*sqrt(3)/2},{\n-(\m)/2+1/2}) node{\color{red}\scalebox{0.7}{$\circ$}};
   	 	 }
	 	 \end{tikzpicture}
	 \caption{The edge defined by $(a_{11},a_{12})=(3,1)$. 
	 {$A-$ and $B-$ sublattice vertices of $\HH_\sharp$ are indicated by colored circles and colored squares. Colored squares indicate the frontier sites, as defined in the introduction. Empty circles indicate vertices outside $\HH_\sharp$, which are nearest neighbors to vertices in $\HH_\sharp$.  }  }
	 \label{fig:theedges31}
 	\end{center}
\end{figure}

\begin{figure}[htbp]
	\begin{center}
	 	 \begin{tikzpicture}[scale=0.7]
			 \draw[-,dashed]({-2*sqrt(3)*1.5},{-0.85*1.5})--({4*sqrt(3)*1.4},{2.15*1.4}) node[yshift=-3,right] {${\bf l}$};

			 \draw (0,0) node[yshift=7,xshift=5,left]{$\bv_A$};  \draw ({5/6*sqrt(3)},{1/2}) node[yshift=7,xshift=5,left]{$\bv_B$};
			   \draw[gray!20,fill=gray!20] ({-3/4*sqrt(3)+(2)*sqrt(3)-3*sqrt(3)/2},-1/4+1-3/2)--({-5/4*sqrt(3)+(2)*sqrt(3)-3*sqrt(3)/2},-3/4+1-3/2)--({3/4*sqrt(3)+(2)*sqrt(3)-3*sqrt(3)/2},1/4+1-3/2)--({5/4*sqrt(3)+(2)*sqrt(3)-3*sqrt(3)/2},3/4+1-3/2)--({-3/4*sqrt(3)+(2)*sqrt(3)-3*sqrt(3)/2},-1/4+1-3/2) node[yshift=-8,xshift=14,right,black] {$\Gamma(1,3)$};
		  \draw[ultra thick,->] ({-3/4*sqrt(3)+3*sqrt(3)/4},-1/4+1/4)--({-5/4*sqrt(3)+3*sqrt(3)/4},-3/4+1/4)  node[xshift=2,left] {$\bv_2$};
		  \draw[ultra thick,->] ({-3/4*sqrt(3)+3*sqrt(3)/4},-1/4+1/4)-- ({5/4*sqrt(3)++3*sqrt(3)/4},3/4+1/4)  node[yshift=-10,xshift=5,left] {$\bv_1$};
		 \draw[ultra thick,->]({sqrt(3)*5/2},0.5)--({sqrt(3)*6/2},1) node[yshift=1.5,xshift=-2,left] {$\ocirc{\bv}_1$};
		 \draw[ultra thick,->]({sqrt(3)*5/2},0.5)--({sqrt(3)*6/2},0) node[yshift=-0.5,xshift=-2,left] {$\ocirc{\bv}_2$};
 	 	 \foreach \n in {-1} 
 	  	\foreach \m in {0,1,2,3} {
 	 	 \draw ({(2*\n)*sqrt(3)-\m*sqrt(3)/2},{\n-(\m)/2}) node{\color{blue} \scalebox{0.7}{\textbullet}};
 		 \draw ({5/6*sqrt(3)+(2*\n)*sqrt(3)-\m*sqrt(3)/2},{\n-(\m)/2+1/2}) node{\color{red} \scalebox{0.7}{\textbullet}};
 	 	 }	 	 
		 \foreach \n in {0} 
	  	\foreach \m in {0,1,2,3,4,5} {
	 	 \draw ({(2*\n)*sqrt(3)-\m*sqrt(3)/2},{\n-(\m)/2}) node{\color{blue} \scalebox{0.7}{\textbullet}};
		 \draw ({5/6*sqrt(3)+(2*\n)*sqrt(3)-\m*sqrt(3)/2},{\n-(\m)/2+1/2}) node{\color{red} \scalebox{0.7}{\textbullet}};
	 	 }
 		 \foreach \n in {1} 
 	  	\foreach \m in {0,1,2,3,4,5,6} {
 	 	 \draw ({(2*\n)*sqrt(3)-\m*sqrt(3)/2},{\n-(\m)/2}) node{\color{blue} \scalebox{0.7}{\textbullet}};
 		 \draw ({5/6*sqrt(3)+(2*\n)*sqrt(3)-\m*sqrt(3)/2},{\n-(\m)/2+1/2}) node{\color{red} \scalebox{0.7}{\textbullet}};
 	 	 }
 	 	 \foreach \n in {2} 
 	  	\foreach \m in {0,1,2,3,4,5,6,7} {
 	 	 \draw ({(2*\n)*sqrt(3)-\m*sqrt(3)/2},{\n-(\m)/2}) node{\color{blue} \scalebox{0.7}{\textbullet}};
 		 \draw ({5/6*sqrt(3)+(2*\n)*sqrt(3)-\m*sqrt(3)/2},{\n-(\m)/2+1/2}) node{\color{red} \scalebox{0.7}{\textbullet}};
 	 	 }
  	 	 \foreach \n in {3} 
  	  	\foreach \m in {2,3,4,5,6,7,8} {
  	 	 \draw ({(2*\n)*sqrt(3)-\m*sqrt(3)/2},{\n-(\m)/2}) node{\color{blue} \scalebox{0.7}{\textbullet}};
  		 \draw ({5/6*sqrt(3)+(2*\n)*sqrt(3)-\m*sqrt(3)/2},{\n-(\m)/2+1/2}) node{\color{red} \scalebox{0.7}{\textbullet}};
  	 	 }
 	 	 \end{tikzpicture}
	 \caption{Notation for the edge defined by $(a_{11},a_{12})=(3,1)$: the vectors $\bv_1$ and $\bv_2$, the points $\bv_A$ and $\bv_B$, and the parallelogram $\Gamma(1,3)$.}
	 \label{fig:notation_edges31}
 	\end{center}
\end{figure}
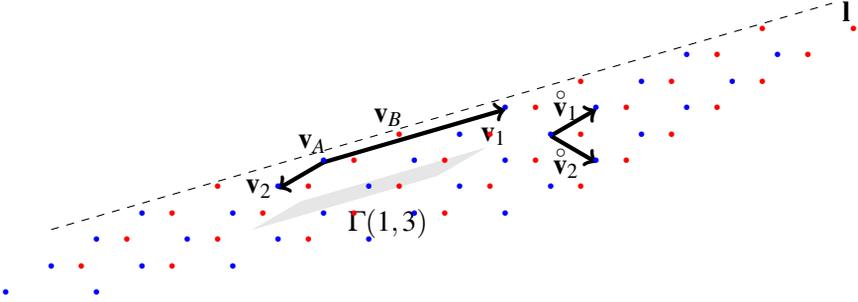

In Appendix \ref{vAvB-mn-app} we show that
\begin{subequations}
\begin{align}
\ttm_1&=k_1,\quad \ttn_1=k_2\\
\ttm_2&=k_1+a_{21},\quad \ttn_2=k_2-a_{11}\\
\ttm_3&=k_1-a_{22},\quad \ttn_3=k_2+a_{12},
\end{align}
\label{ttn-ttm}
\end{subequations}
with $(a_{ij})$ as in \eqref{vjs}, \eqref{Acond}, and $k_1$, $k_2$ as in \eqref{k1k2s1s2}.
~\\\\
We conclude that
\begin{align}
&\textrm{the three nearest neighbors to the $A-$point in $\Gamma(m,n)$:
 $\bv_A+m\bv_1+n\bv_2$}\nn\\
& \textrm{ are the three $B-$ points}\ 
\left(\bv_B+m\bv_1+n\bv_2\right) + \ttm_\nu\bv_1 + \ttn_\nu \bv_2\quad (\nu=1,2,3)\ , 
\label{nn-Bpts1}\end{align}
while 
\begin{align}
&\textrm{the three nearest neighbors to the $B-$point  in $\Gamma(m,n)$:
 $\bv_B+m\bv_1+n\bv_2$}\nn\\
& \textrm{ are the three $A-$ points}\ 
\left(\bv_A+m\bv_1+n\bv_2\right) - \left( \ttm_\nu\bv_1 + \ttn_\nu \bv_2 \right)\quad (\nu=1,2,3)\ . 
\label{nn-Apts1}\end{align}
\begin{definition}[Zigzag-type and armchair-type edges]\label{def:zz-ac}
Let $s_2\in\{-1,0,1\}$ be defined as $s_2:= a_{11}-a_{12}\ \text{mod}\ 3$; see \eqref{k1k2s1s2}. An edge is of
 {\it zigzag-type} if $s_2=\pm1$, and of {\it armchair-type} if $ s_2=0$ .
 \end{definition}
 
 {This definition agrees with the geometrical definition of zigzag and armchair edges given in the introduction; see Proposition \ref{DADB} below. }
\begin{example}[Classical Zigzag and Armchair Edges]\label{classical}
 The classical zigzag edges, displayed in panels (a) and (b) of  Figure \ref{fig:classical_edges}, are in the direction 
$\bv_1=\ocirc{\bv}_1-\ocirc{\bv}_2$. Thus,  $(a_{11},a_{12})=(1,-1)$ and therefore $s_2=-1$. {We take $\bv_2=\ocirc{\bv}_1$}. 
 Equivalent edges are obtained by (counterclockwise) rotation by $2\pi/3$, giving $(a_{11},a_{12})=(-1,0)$ ( $s_2=1$), and by $4\pi/3$ giving $(a_{11},a_{12})=(0,1)$ ( $s_2=1$). The  classical armchair edge displayed in panel (c) is in the direction $\bv_1=\ocirc{\bv}_1+\ocirc{\bv}_2 $, for which $(a_{11},a_{12})=(1,1)$, and hence $s_2=0$.{ We take $\bv_2=\ocirc{\bv}_2$}.  Equivalent edges, obtained via  counterclockwise rotations by $2\pi/3$ and $4\pi/3$, give configuration parameters  $(a_{11},a_{12})=(2,-1)$ and $(a_{11},a_{12})=(-1,2)$, respectively.  See, for example,  \cite{delplace2011zak,Graf-Porta:13,mong2011edge,Dresselhaus-etal:96} for  previous studies of classical zigzag edges. 
\end{example}

\begin{figure}[htbp]
	\begin{center}
		\begin{subfigure}{0.45\textwidth}
	 \begin{tikzpicture}
		 \draw[ultra thick,->]({sqrt(3)*3/2},-0.5)--({sqrt(3)*4/2},0) node[yshift=1.5,xshift=-2,left] {$\ocirc{\bv}_1$};
		 \draw[ultra thick,->]({sqrt(3)*3/2},-0.5)--({sqrt(3)*4/2},-1) node[yshift=-0.5,xshift=-2,left] {$\ocirc{\bv}_2$};
		 \draw[-,dashed](0.3,-2)--(0.3,2) node[yshift=-3,right] {${\bf l}$};
		 \draw[ultra thick,->] ({1/sqrt(3)},0)--({1/sqrt(3)},1)  node[yshift=-10,xshift=2,left] {$\bv_1$};
		 \draw[ultra thick,->] ({1/sqrt(3)},0)--({1/sqrt(3)+sqrt(3)/2},1/2)  node[yshift=7,xshift=5,left] {$\bv_2$}; 
		 \draw[gray!20,fill=gray!20] ({sqrt(3)/4},-5/4)--({3*sqrt(3)/4},-3/4)--({3*sqrt(3)/4},1/4)--({sqrt(3)/4},-1/4)--({sqrt(3)/4},-5/4) node [black] {$\Gamma(0,0)$};
 	 \foreach \m in {-1,0,1,2} 
  	\foreach \n in {0,1,2} {
 	 \draw ({1/sqrt(3)+(2*\n)*sqrt(3)/2},{(-2*\m)/2}) node{\color{red} \scalebox{0.7}{\textbullet}};
	 \draw ({1/sqrt(3)+(2*\n+1)*sqrt(3)/2},{(-2*\m+1)/2}) node{\color{red} \scalebox{0.7}{\textbullet}};
	 \draw ({(2*\n+2)*sqrt(3)/2},{(-2*\m)/2}) node{\color{blue} \scalebox{0.7}{\textbullet}};
	 \draw ({(2*\n+1)*sqrt(3)/2},{(-2*\m+1)/2}) node{\color{blue} \scalebox{0.7}{\textbullet}};
 	 }
  	 \foreach \m in {-1,0,1,2} 
   	\foreach \n in {-1} {
 	 \draw ({(2*\n+2)*sqrt(3)/2},{(-2*\m)/2}) node{\color{blue} \scalebox{0.7}{$\circ$}};
  	 }
	 \end{tikzpicture} \caption{Balanced ZZ-edge /
		 $D_A>D_B$:  $(a_{11},a_{12})=(1,-1), (a_{21},a_{22})=(1,0)$}\end{subfigure}\hspace{1cm}
 		\begin{subfigure}{0.45\textwidth}
 	 \begin{tikzpicture}
		  \draw[gray!20,fill=gray!20] ({sqrt(3)/4},-5/4)--({3*sqrt(3)/4},-3/4)--({3*sqrt(3)/4},1/4)--({sqrt(3)/4},-1/4)--({sqrt(3)/4},-5/4) node [black] {$\Gamma(0,0)$};
 		 \draw[-,dashed](0.7,-2)--(0.7,2) node[yshift=-3,right] {${\bf l}$};
		 \draw[ultra thick,->] ({1/sqrt(3)+sqrt(3)/2},-1/2)--({1/sqrt(3)+sqrt(3)/2},1/2)  node[yshift=-10,xshift=2,left] {$\bv_1$};
		 \draw[ultra thick,->] ({1/sqrt(3)+sqrt(3)/2},-1/2)--({1/sqrt(3)+sqrt(3)},0)  node[yshift=-10,xshift=5,left] {$\bv_2$};
  	 \foreach \m in {-1,0,1,2} 
   	\foreach \n in {0,1,2} {
  	 \draw ({1/sqrt(3)+(2*(\n+1))*sqrt(3)/2},{(-2*\m)/2}) node{\color{red} \scalebox{0.7}{\textbullet}};
 	 \draw ({1/sqrt(3)+(2*\n+1)*sqrt(3)/2},{(-2*\m+1)/2}) node{\color{red} \scalebox{0.7}{\textbullet}};
 	 \draw ({(2*\n+2)*sqrt(3)/2},{(-2*\m)/2}) node{\color{blue} \scalebox{0.7}{\textbullet}};
 	 \draw ({(2*\n+1)*sqrt(3)/2},{(-2*\m+1)/2}) node{\color{blue} \scalebox{0.7}{\textbullet}};
  	 }
   	 \foreach \m in {-1,0,1,2} 
    	\foreach \n in {-1} {
   	 \draw ({1/sqrt(3)+(2*(\n+1))*sqrt(3)/2},{(-2*\m)/2}) node{\color{red} \scalebox{0.7}{$\circ$}};
   	 }
 	 \end{tikzpicture}\caption{Unbalanced ZZ-edge / $D_A<D_B$: 
		$(a_{11},a_{12})=(1,-1), (a_{21},a_{22})=(1,0)$}\end{subfigure}\vspace{0.5cm}
	 		\begin{subfigure}{0.7\textwidth}
	 	\begin{center}\begin{tikzpicture}
			 \draw[gray!20,fill=gray!20] ({-sqrt(3)/4},3/4)--({3*sqrt(3)/4},3/4)--({5*sqrt(3)/4},1/4)--({sqrt(3)/4},1/4)--({-sqrt(3)/4},3/4) node [black] {$\Gamma(0,0)$};
			 \draw[-,dashed](-0.5,0.7)--(5,0.7) node[yshift=-3,right] {${\bf l}$};
		 \draw[ultra thick,->] ({1/sqrt(3)},0)--({1/sqrt(3)+sqrt(3)},0)  node[yshift=-10,xshift=2,left] {$\bv_1$};
		 \draw[ultra thick,->] ({1/sqrt(3)},0)--({1/sqrt(3)+sqrt(3)/2},-1/2)  node[yshift=-10,xshift=5,left] {$\bv_2$};	  
	 	 \foreach \n in {0,1,2} 
	  	\foreach \m in {0,1,2,3} {
	 	 \draw ({1/sqrt(3)+(2*\n)*sqrt(3)/2},{(-2*\m)/2}) node{\color{red} \scalebox{0.7}{\textbullet}};
		 \draw ({1/sqrt(3)+(2*\n+1)*sqrt(3)/2},{(-2*\m+1)/2}) node{\color{red} \scalebox{0.7}{\textbullet}};
		 \draw ({(2*\n)*sqrt(3)/2},{(-2*\m)/2}) node{\color{blue} \scalebox{0.7}{\textbullet}};
		 \draw ({(2*\n+1)*sqrt(3)/2},{(-2*\m+1)/2}) node{\color{blue} \scalebox{0.7}{\textbullet}};
	 	 }
 	 	 \foreach \n in {0,1,2} 
 	  	\foreach \m in {-1} {
 	 	 \draw ({1/sqrt(3)+(2*\n)*sqrt(3)/2},{(-2*\m)/2})  node{\color{red} \scalebox{0.7}{$\circ$}};
 		 \draw ({(2*\n)*sqrt(3)/2},{(-2*\m)/2}) node{\color{blue} \scalebox{0.7}{$\circ$}};
 	 	 }
	 	 \end{tikzpicture}\end{center}\caption{AC-edge / $D_A=D_B$: $(a_{11},a_{12})=(1,1), (a_{21},a_{22})=(0,1)$}\end{subfigure}
 \caption{{$\HH_\sharp$ for the classical ZZ and AC edges, balanced and unbalanced; see Definition \ref{def:ZZAC}. $A-$ sites are in blue and $B-$ sites are in red. The edge, ${\bf l}$,  is indicated with a dashed line. The parallelogram $\Gamma(0,0)$ is shaded gray; see Remark \ref{p-gram}.\ }
	 }
	 \label{fig:classical_edges}
 	\end{center}
\end{figure}
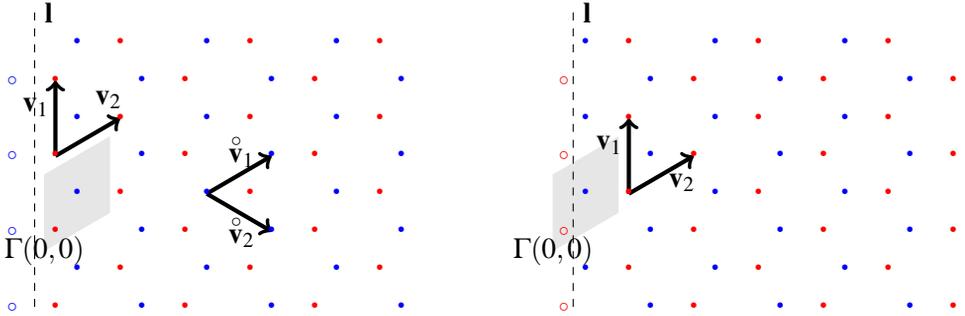
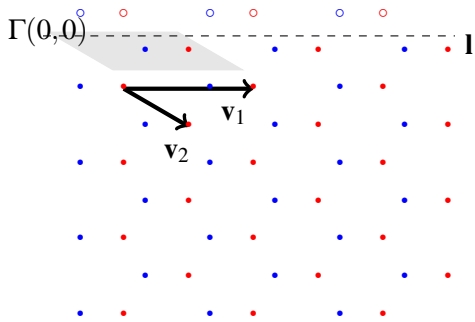

\nit{\bf Convention:} {\it We find it very convenient to order the pairs $(\ttm_\nu,\ttn_\nu),\ \nu=1,2,3$ according
to {increasing} $\ttn_\nu$.  Of course this makes sense only if $\ttn_1$, $\ttn_2$ and $\ttn_3$ are all distinct. 
In Appendix \ref{distinct-n} 
 we verify that $\ttn_\nu$ are indeed all distinct except for the case of the classical zigzag edges discussed in 
  Example  \ref{classical}. We exclude the classical zigzag case in the coming general analysis of rational edges and will analyze the classical zigzag edge case separately.
\begin{equation}
\begin{aligned}
&\textrm{We denote by $(n_1,n_2,n_3)$ the permutation of $(\ttn_1,\ttn_2,\ttn_3)$ for which:}\ \  n_1<n_2< n_3;\\
&\textrm{we denote by $(m_1,m_2,m_3)$ the same permutation applied to $(\ttm_1,\ttm_2,\ttm_3)$.}
\end{aligned}
 \label{perm-n}
\end{equation} 
In  Appendix \ref{a-few-ineq} it is proved that 
\begin{equation}\label{eq:n_nu_ineq} n_1<0<n_3.\end{equation}
}

\subsection{Bulk and edge Hamiltonians with respect to the basis $\{\bv_1,\bv_2\}$}\label{H_wrtv1v2}
The above observations and calculations allow us to rewrite the bulk Hamiltonian $\Hbulk$ and the edge Hamiltonian, $H_\sharp$,  in terms of the lattice basis
 $\{\bv_1,\bv_2\}$. 
 
 \subsubsection{$\Hbulk$ in the basis $\{\bv_1,\bv_2\}$} \label{Hbk_wrtv1v2}
  We write the wave functions $\left(\psi_\omega\right)_{\omega\in\HH}$, with respect to the basis $\{\bv_1,\bv_2\}$
   in the form 
   \begin{equation}
    \psi=\left(\psi^A(m,n),\psi^B(m,n)\right)_{m,n\in\Z}\ \in\ l^2(\Z\times\Z;\C^2),
 \label{psi-mn}   \end{equation} 
 where
   \begin{align*}
   \psi^A(m,n)&= \psi_\omega\quad \textrm{for}\quad \omega=\bv_A+m\bv_1+n\bv_2\in \HH_A,\quad {\rm and}\nn\\
   \psi^B(m,n)&= \psi_\omega\quad \textrm{for}\quad \omega=\bv_B+m\bv_1+n\bv_2\in \HH_B.\nn
  \end{align*} 
  
  Thanks to the formulas \eqref{nn-Bpts1} and \eqref{nn-Apts1}  for nearest neighbor points, the action of our bulk Hamiltonian \eqref{Hbulk_A}, \eqref{Hbulk_B} on the vector \eqref{psi-mn} is given, for $m, n\in\Z$, by
  \begin{subequations}\label{Hpsi}
  \begin{align}
  \left(\Hbulk\psi\right)^A(m,n) &= \sum_{\nu=1}^3\psi^B(m+m_\nu,n+n_\nu),\label{HpsiA-mn}\\ 
   \left(\Hbulk\psi\right)^B(m,n) &= \sum_{\nu=1}^3\psi^A(m-m_\nu,n-n_\nu) .\label{HpsiB-mn}   
   \end{align}
   \end{subequations}
   
\subsubsection{$H_\sharp$ in the basis $\{\bv_1,\bv_2\}$} \label{Hbk_wrtv1v2_2}
 We want to carry out an analogous reformulation for the Hamiltonian $H_\sharp$,  \eqref{Hshp}, associated with the
 sharply terminated honeycomb structure, $\HH_\sharp$. To do so, we recall \eqref{HHsharp}, \eqref{vAvB-1} and we deduce, using \eqref{HHsharp}, that
\begin{align*}
&\textrm{the $A-$point  $\bv_A+m\bv_1+n\bv_2 \in\HH_\sharp$    if and only if $n\ge\nA$, and }\\
 &\textrm{the $B-$point  $\bv_B+m\bv_1+n\bv_2\in\HH_\sharp$   if and only if $n\ge\nB$ ,}
 \end{align*}
 where
  \begin{subequations}
  \label{nAnB}
  \begin{align}
  \nA &= \textrm{the least integer $\ge \beta$}\label{nA-def}\\ 
  \nB &= \textrm{the least integer $\ge \beta-\frac13 s_2$}.\label{nB-def} 
\end{align}
\end{subequations}
Note from \eqref{nA-def}, \eqref{nB-def} that 
  \begin{subequations}
  \begin{align}
&  \textrm{$\nA=\nB$ if $s_2=0$}\\
& \textrm{$\nA=\nB$ or $\nA=\nB+1$ if  $s_2=1$}\\
 & \textrm{$\nA=\nB$ or $\nA=\nB-1$ if  $s_2=-1$}.
 \end{align}
 \label{nAnBs2} 
 \end{subequations}
 \begin{definition}\label{def:ZZAC}
 We will say that our terminated structure is {\it balanced} if $\nA=\nB$.  Otherwise, we say that the terminated structure is {\it unbalanced}; see Figure \ref{fig:classical_edges} for some examples.
 \end{definition}
 
 \begin{remark}\label{p-gram}
	 Recall that each parallelogram $\Gamma(m,n)$ contains one $A-$point and one $B-$point. If $\HH_\sharp$ is balanced then in each $\Gamma(m,n)$, either both of those points or neither will lie in $\HH_\sharp$. An unbalanced $\HH_\sharp$ gives rise to $\Gamma(m,n)$ containing an $A-$point and not a $B-$ point ($\nA<\nB$) or a $B-$point and not an $A-$ point ($\nA>\nB$); see Figure \ref{fig:classical_edges}.
	 \end{remark}
 Note, by Definition \ref{def:ZZAC} and \eqref{nAnBs2} that zigzag-type edges can be either balanced or unbalanced
 but that armchair edges are all of balanced type. 
 
 Recall that $D_A$ (resp. $D_B$) denotes  the distance from the edge ${\bf l}$ to the closest $A-$ points (resp. $B-$ points) of the terminated structure $\HH_\sharp$ (also called frontier points in the introduction and in Figure \ref{fig:theedges31}).
   {
  Thanks to \eqref{nAnBs2}, the  following proposition shows that the notions of armchair edge, balanced zigzag edge
  and unbalanced zigzag edge given in Definitions \ref{def:zz-ac} and \ref{def:ZZAC} agree with the corresponding definitions of those concepts in the introduction.
  \begin{proposition}\label{DADB}
  With $D_A$ and $D_B$ as defined  in the introduction, we have
   \begin{equation}
     D_B-D_A = \frac{\sqrt3}{2} |\bv_1|^{-1} \left( \frac13 s_2 + \nB-\nA\right).
     \label{***3}\end{equation}
  \end{proposition}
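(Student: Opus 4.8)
The plan is to compute $D_A$ and $D_B$ directly from the description of the edge in \eqref{theedge} and the coordinates of the frontier points, and then subtract. First I would recall that the edge ${\bf l}$ is the line $\{x_1\bv_1+x_2\bv_2 : x_2=\beta\}$ and that $\bv_2$ points into $\HH_\sharp$. The distance of a point $\bw = y_1\bv_1+y_2\bv_2$ to ${\bf l}$ is therefore the distance from $\bw$ to its orthogonal projection onto ${\bf l}$. Since translation along ${\bf l}$ does not change this distance, only the ``height'' $y_2-\beta$ matters, but one must convert this coordinate-height into a genuine Euclidean distance. The key geometric fact I would establish is that the Euclidean distance associated with a unit increment of the $x_2$-coordinate (i.e.\ the distance between adjacent lattice lines parallel to ${\bf l}$) equals $\tfrac{\sqrt3}{2}|\bv_1|^{-1}$; this follows because $|\bv_1\wedge\bv_2| = \operatorname{area}(\Gamma(0,0)) = |\det(a_{ij})|\cdot\operatorname{area}(\text{fundamental cell of }\Lambda) = 1\cdot\tfrac{\sqrt3}{2}$ by \eqref{det1} and the explicit form \eqref{tri-lattice} of $\ocirc\bv_1,\ocirc\bv_2$, and the distance between consecutive lines $x_2=\text{const}$ is $|\bv_1\wedge\bv_2|/|\bv_1|$.

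Next I would identify the frontier $A$- and $B$-rows. By the computation recorded just before Definition \ref{def:ZZAC}, the $A$-point $\bv_A + m\bv_1+n\bv_2$ lies in $\HH_\sharp$ iff $n\ge\nA$, and $\bv_B+m\bv_1+n\bv_2$ lies in $\HH_\sharp$ iff $n\ge\nB$, where $\nA,\nB$ are given by \eqref{nAnB}. Hence the frontier $A$-sites are those with $n=\nA$, and since $\bv_A=0$ (so these points have $x_2$-coordinate exactly $\nA$), their coordinate-height above the edge is $\nA-\beta$. The frontier $B$-sites are those with $n=\nB$; by \eqref{vB-1}, $\bv_B=\tfrac13(s_1\bv_1+s_2\bv_2)$, so a frontier $B$-site has $x_2$-coordinate $\nB+\tfrac13 s_2$, giving coordinate-height $\nB+\tfrac13 s_2-\beta$ above the edge. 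Multiplying coordinate-heights by the conversion factor $\tfrac{\sqrt3}{2}|\bv_1|^{-1}$ yields
\[
D_A = \tfrac{\sqrt3}{2}|\bv_1|^{-1}(\nA-\beta),\qquad
D_B = \tfrac{\sqrt3}{2}|\bv_1|^{-1}\bigl(\nB+\tfrac13 s_2-\beta\bigr),
\]
and subtracting gives exactly \eqref{***3}. A small point to verify is that $\nA-\beta\ge0$ and $\nB+\tfrac13 s_2-\beta\ge0$, so that these coordinate-heights are indeed the (nonnegative) distances to ${\bf l}$ with the correct sign; this is immediate from \eqref{nA-def}–\eqref{nB-def} since $\nA$ is the least integer $\ge\beta$ and $\nB$ is the least integer $\ge\beta-\tfrac13 s_2$.

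I expect the main obstacle to be purely bookkeeping rather than conceptual: one must be careful that the ``$x_2$-coordinate'' of a honeycomb vertex is read off correctly (in particular that $\bv_A$ and $\bv_B$ have $x_2$-coordinates $0$ and $\tfrac13 s_2$ respectively, using \eqref{vA-1}, \eqref{vB-1}), and that the distance between neighboring lines parallel to ${\bf l}$ is computed from the area of $\Gamma(0,0)$ rather than naively as $|\bv_2|$ (the two differ because $\bv_2$ is not orthogonal to $\bv_1$). Once the conversion factor $\tfrac{\sqrt3}{2}|\bv_1|^{-1}$ is pinned down via $|\bv_1\wedge\bv_2|=\tfrac{\sqrt3}{2}$, the rest is a one-line subtraction. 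The combination with \eqref{nAnBs2} then gives the qualitative dictionary (armchair $\Leftrightarrow$ $D_A=D_B$, balanced zigzag $\Leftrightarrow$ $|D_A-D_B|=\tfrac13\tfrac{\sqrt3}{2}|\bv_1|^{-1}$, unbalanced zigzag $\Leftrightarrow$ $|D_A-D_B|=\tfrac23\tfrac{\sqrt3}{2}|\bv_1|^{-1}$) asserted in the introduction.
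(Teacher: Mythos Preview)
Your proposal is correct and follows essentially the same approach as the paper: compute the signed distance from a point $x_1\bv_1+x_2\bv_2$ to ${\bf l}$ as $(x_2-\beta)$ times a conversion factor, identify this factor as $|\bv_1\wedge\bv_2|/|\bv_1|=\tfrac{\sqrt3}{2}|\bv_1|^{-1}$ via the determinant condition \eqref{det1}, and read off the $x_2$-coordinates of the frontier rows from \eqref{vAvB-1} and \eqref{nAnB}. The paper phrases the conversion factor as $\mu=\be^\perp\cdot\bv_2$ for a unit normal $\be^\perp$, but this is the same quantity computed the same way.
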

  }
\begin{proof}
 { Recall that the line ${\bf l}$ is given by
    \[ \{x_1\bv_1+x_2\bv_2\ :\ x_2=\beta, x_1\in\R\}\quad \textrm{for some $\beta\in\R$}.\]
    We introduce the unit vector $\be^\perp$ which is perpendicular to $\bv_1$ and pointing into the bulk $\HH_\sharp$.
    Let
     \begin{equation*} 
     \mu=\be^\perp\cdot \bv_2. \label{**1}
    \end{equation*}
    Note that $\mu>0$ since both $\bv_2$ and $\be^\perp$ both point into the bulk.    } 
  {
  The (signed) distance from a point $\omega=x_1\bv_1+x_2\bv_2$ to the edge {\bf l}
     is equal to $(x_2-\beta)\mu$.}
{  Recall that the $A-$ points (respectively, $B-$ points) of 
 $\HH_\sharp$ are  the points
     $\bv_A+m\bv_1+n\bv_2$ (respectively, $\bv_B+m\bv_1+n\bv_2$) with $m, n\in\Z$
      and $n\ge\nA$ (respectively, $n\ge\nB$). Here, $\bv_A=(0,0)$ and 
       $\bv_B=\frac13(s_1\bv_1+s_2\bv_2)$; see Section \ref{H_wrtv1v2} and
       \eqref{vAvB-1}. It follows that 
     $D_A=(\nA-\beta)\mu$, while 
     $D_B=(\frac13 s_2+\nB -\beta)\mu$. Hence,
     \begin{equation}
     D_B-D_A = \left( \frac13 s_2 + \nB-\nA\right)\mu.
     \label{**2}\end{equation}
     We next compute $\mu$ using \eqref{tri-lattice}, \eqref{vjs} and \eqref{det1}. Writing $\wedge$ for the wedge product of two vectors,
      we have $|\bv_1\wedge\bv_2|=|\ocirc{\bv}_1\wedge \ocirc{\bv}_2|=\frac{\sqrt3}{2}$. 
      Note now that $\bv_2=z\bv_1+\mu\be^\perp$ for some $z\in\R$.
       Therefore, since $\mu>0$
      \[ \frac{\sqrt3}{2}= |\bv_1\wedge\bv_2| = |\bv_1\wedge \mu\be^\perp| = 
       \mu |\bv_1|. \]
      Substituting into \eqref{**2} yields \eqref{***3}.
     This completes the proof of Proposition \ref{DADB}.
   }  
\end{proof}

 Now we know the nearest neighbors to any given point in $\HH$, and we know exactly which points 
  $\bv^A+m\bv_1+n\bv_2$ and $\bv^B+m\bv_1+n\bv_2$ lie in the terminated structure $\HH_\sharp$.
  Therefore, we can rewrite the Hamiltonian, $H_\sharp$, given by \eqref{Hshp} in terms of the $\{\bv_1,\bv_2\}$ basis.
  \begin{proposition} The Hamiltonian $H_\sharp$ acts on vectors: 
  \begin{equation}
  \psi(m,n)  = \left(\psi^A(m,n),\psi^B(m,n) \right)_{m,n\in \Z} \ \in\ l^2(\Z\times\Z;\C^2), \label{psi-mn1}
  \end{equation}
  subject to the restrictions
     \begin{equation}
   \begin{aligned}
   \psi^A(m,n) &= 0 \quad {\rm for}\quad n<\nA,\\
   \psi^B(m,n) &= 0 \quad {\rm for}\quad n<\nB.\\
   \end{aligned}
   \label{nltnAnB}
   \end{equation}
   The action of $H_\sharp$ on such vectors $\psi$ is given by 
   \begin{equation*}
   \begin{aligned}
   \left(H_\sharp\psi\right)^A(m,n) &= \sum_{\nu=1}^3 \psi^B(m+m_\nu,n+n_\nu)\quad {\rm if}\quad n\ge \nA, 
    \\ 
    \left(H_\sharp\psi\right)^A(m,n) &= 0 \quad {\rm if}\quad n< \nA, 
    \\  
     \left(H_\sharp\psi\right)^B(m,n) &= \sum_{\nu=1}^3 \psi^A(m-m_\nu,n-n_\nu)\quad {\rm if}\quad n\ge \nB, 
    \\ 
     \left(H_\sharp\psi\right)^B(m,n) &= 0 \quad {\rm if}\quad n< \nB. 
     \end{aligned}
     \label{Hsharp-ef}
   \end{equation*}
   $H_\sharp$ is a bounded self-adjoint operator acting on $\psi$ of the form \eqref{psi-mn1}, \eqref{nltnAnB}.
  \end{proposition}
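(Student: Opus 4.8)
The plan is to verify the proposition by a direct translation argument: everything follows from the change of basis already set up in Section \ref{sec:setup}, and the only genuine content is checking that the restriction conditions \eqref{nltnAnB} and the action formulas correctly encode the operator $H_\sharp$ defined abstractly in \eqref{Hshp}.

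First I would recall the dictionary. By \eqref{basis} the map $\psi_\omega \mapsto (\psi^A(m,n),\psi^B(m,n))$, with $\psi^A(m,n)=\psi_{\bv_A+m\bv_1+n\bv_2}$ and $\psi^B(m,n)=\psi_{\bv_B+m\bv_1+n\bv_2}$, is a unitary isomorphism from $l^2(\HH)$ onto $l^2(\Z\times\Z;\C^2)$, since each parallelogram $\Gamma(m,n)$ contains exactly one $A$-point and one $B$-point (Appendix \ref{vAvB-app}). Under this isomorphism, the membership characterization established just above the statement --- namely $\bv_A+m\bv_1+n\bv_2\in\HH_\sharp \iff n\ge\nA$ and $\bv_B+m\bv_1+n\bv_2\in\HH_\sharp \iff n\ge\nB$, with $\nA,\nB$ as in \eqref{nAnB} --- says precisely that the subspace $l^2(\HH_\sharp)$ (in the paper's abuse of notation) corresponds to the vectors satisfying \eqref{nltnAnB}. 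Second, I would invoke the nearest-neighbor formulas \eqref{nn-Bpts1} and \eqref{nn-Apts1}: the three neighbors of the $A$-point in $\Gamma(m,n)$ are the $B$-points indexed by $(m+m_\nu,n+n_\nu)$, $\nu=1,2,3$, and the three neighbors of the $B$-point in $\Gamma(m,n)$ are the $A$-points indexed by $(m-m_\nu,n-n_\nu)$. Substituting these into the defining formulas \eqref{Hshp8}--\eqref{Hshp11} gives exactly the four displayed action formulas: \eqref{Hshp8} becomes the first line (valid when the $A$-site lies in $\HH_\sharp$, i.e. $n\ge\nA$), \eqref{Hshp9} the second, \eqref{Hshp10} the third (valid when $n\ge\nB$), and \eqref{Hshp11} the fourth. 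One must note that in \eqref{Hshp8} the neighbor values $\psi^B(m+m_\nu,n+n_\nu)$ may reference $B$-sites outside $\HH_\sharp$; but by the convention \eqref{nltnAnB} those entries vanish, which is consistent with how $H_\sharp$ acts on $l^2(\HH_\sharp)$ in the original formulation, so no term is lost or spuriously added.

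For boundedness and self-adjointness: boundedness is immediate because $H_\sharp$ has at most three nonzero entries per row and per column, each of modulus one, so $\|H_\sharp\|\le 3$ (e.g. by Schur's test); alternatively, $H_\sharp = P_\sharp \Hbulk P_\sharp$ where $P_\sharp$ is the orthogonal projection of $l^2(\HH)$ onto $l^2(\HH_\sharp)$, and $\Hbulk$ is manifestly bounded. Self-adjointness follows from the same compression representation, since $\Hbulk$ is self-adjoint (each edge $\omega\leftrightarrow\omega'$ contributes symmetric off-diagonal entries $1$, as is clear from comparing \eqref{Hbulk_A} and \eqref{Hbulk_B}), and a compression $P A P$ of a bounded self-adjoint $A$ is bounded self-adjoint on the range of $P$. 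In the $\{\bv_1,\bv_2\}$ coordinates this is exactly the assertion that the $A\to B$ block ($\psi^A(m-m_\nu,n-n_\nu)$) and the $B\to A$ block ($\psi^B(m+m_\nu,n+n_\nu)$) are formal adjoints of one another, which one checks by a one-line index shift $\nu$-term by $\nu$-term.

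I do not expect any serious obstacle here; the proposition is essentially bookkeeping, and the only place requiring care is keeping the restriction \eqref{nltnAnB} consistent with which neighbor terms are allowed to be nonzero --- that is, making sure that in the formula for $\left(H_\sharp\psi\right)^A(m,n)$ with $n\ge\nA$ the summands with $n+n_\nu<\nB$ are automatically zero by \eqref{nltnAnB}, and symmetrically for the $B$-site formula. This is exactly the content of the clause ``$\psi^B(m,n)=0$ for $n<\nB$'' in \eqref{nltnAnB}, so there is nothing to prove beyond pointing it out. Hence the whole argument is: unitary change of basis $\Rightarrow$ the subspace and action formulas; compression of $\Hbulk$ $\Rightarrow$ boundedness and self-adjointness.
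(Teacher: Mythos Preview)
Your proposal is correct and matches the paper's approach: the paper presents this proposition as an immediate consequence of the preceding change-of-basis material (the nearest-neighbor formulas \eqref{nn-Bpts1}, \eqref{nn-Apts1} and the membership criterion \eqref{nAnB}) without giving a separate proof, and defers boundedness and self-adjointness to the compression identity \eqref{Hconj}, which is exactly your $P_\sharp H_{\rm bulk} P_\sharp$ argument written as $\mathfrak{i}^* H_{\rm bulk}\mathfrak{i}$. Your write-up is in fact more explicit than the paper's, particularly in flagging that neighbor terms landing outside $\HH_\sharp$ are automatically zero by \eqref{nltnAnB}.
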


  \subsubsection{Decomposition of $H_\sharp$ into fiber Hamiltonians,  $\Hk$}\label{sec:fibers}
  
  The Hamiltonian $H_\sharp$ is invariant under translations by the edge direction vector $\bv_1$. Consequently our Hilbert space of wave functions \eqref{psi-mn1}, \eqref{nltnAnB} decomposes into a direct integral over {\it parallel quasi-momenta}
 of spaces $l^2_{\kp}$, with $\kp\in \R/2\pi\Z$. Here, $l^2_{\kp}$ denotes the space of {\it $\kp$ pseudoperiodic} wave functions of the form
 \begin{align}\label{eq:k-QP}
 \psi^A(m,n) &= e^{i\kp m}\psi^A(n), \\ 
 \psi^B(m,n) &= e^{i\kp m}\psi^B(n),  
 \end{align}
 where
  \begin{equation}
 \psi= \left(\psi(n)\right)_{n\in\Z}= \left(\psi^A(n),\psi^B(n)\right)_{n\in\Z} \in l^2(\Z;\C^2) \label{psi-mn2}
 \end{equation}
satisfies
 \begin{equation} \psi^A(n) = 0 \quad {\rm for}\quad n<\nA\quad {\rm and}\quad  \psi^B(n) = 0 \quad {\rm for}\quad n<\nB.
\label{psiABout} \end{equation}
The Hamiltonian $H_\sharp$, accordingly decomposes as a direct integral  \cite{RS4} over $\kp\in\R/2\pi\Z$ of the Hamiltonian
 $\Hk$
 acting on wave functions \eqref{psi-mn2}, \eqref{psiABout}, given by the formulas
 \begin{equation}
   \begin{aligned}
   \left(\Hk\psi\right)^A(n) &= \sum_{\nu=1}^3 e^{i\kp m_\nu} \psi^B(n+n_\nu)\quad {\rm if}\quad n\ge \nA 
    \\ 
    \left(\Hk\psi\right)^A(n) &= 0 \quad {\rm if}\quad n< \nA 
    \\  
     \left(\Hk\psi\right)^B(n) &= \sum_{\nu=1}^3 e^{-i\kp m_\nu}\psi^A(n-n_\nu)\quad {\rm if}\quad n\ge \nB 
    \\ 
     \left(\Hk\psi\right)^B(n) &= 0 \quad {\rm if}\quad n< \nB. 
     \end{aligned}
     \label{Hk-def}
   \end{equation}
   Each $\Hk$ is a self-adjoint operator acting on the Hilbert space given by \eqref{psi-mn2}, \eqref{psiABout}. The operator $\Hk$ is equivalent  to the operator $H_{\sharp,k}$ mentioned in the introduction, by an obvious identification of the Hilbert spaces on which those operators act.

\section{Spectrum of $\Hk$}\label{gen_spectrum}
{
We want to understand, for each $\kp\in[0,2\pi]$, the spectrum of $\Hk$ acting in $l^2(\Z;\C^2)$ . We shall, in particular, investigate the existence of edge states which correspond to  eigenvalues of the following spectral problem for the operator $\Hk$:\\
 $E$ is an eigenvalue of $\Hk$ if there exists a non-trivial solution in $l^2(\Z;\C^2)$ of the equations
\begin{subequations}
\begin{align}
E\psi^A(n) &= \sum_{\nu=1}^3 e^{im_\nu k} \psi^B(n+n_\nu)\quad {\rm for}\quad n\ge\nA\label{psiAevp}\\ 
E\psi^B(n) &= \sum_{\nu=1}^3 e^{-im_\nu k} \psi^A(n-n_\nu)\quad {\rm for}\quad n\ge\nB\label{psiBevp}\\ 
\psi^A(n) &= 0 \quad {\rm for}\quad n<\nA \label{psiA-BC}\\ 
\psi^B(n) &= 0 \quad {\rm for}\quad n<\nB. \label{psiB-BC}
\end{align}
\label{es-evp}
\end{subequations}
\begin{definition}\label{def:es}
A nonzero $l^2(\Z;\C^2)$ solution of the eigenvalue problem \eqref{es-evp} is called an {\it edge state} for the quasimomentum $k\in[0,2\pi]$.
\end{definition}
}{
In the remainder of this section we  study general properties of the spectrum of $\Hk$ for $\kp\in[0,2\pi]$, in particular its essential spectrum. In Section \ref{sec:0energy} we carry out a comprehensive study of zero energy ($E=0$) edge states. In Section \ref{numerics} we present a numerical study of the edge state eigenvalue problem
 for general energies, $E$. 
}
\subsection{Essential spectrum of $\HTB_\sharp(k)$}\label{ess-spec}

For a fixed $\kp\in[0,2\pi]$, the essential spectrum of $\Hk$ can be computed from the spectrum of $\Hbulk$ defined in \eqref{Hpsi}. Since the bulk Hamiltonian, $\Hbulk$ is invariant under translations by the vector $\bv_1$ it  decomposes as a direct integral over $\kp\in\R/2\pi\Z$ of the Hamiltonian
 $\Hbulk(\kp)$, defined by
 \begin{equation*}
   \begin{aligned}
   \left(\Hbulk(k)\psi\right)^A(n) &= \sum_{\nu=1}^3 e^{i\kp m_\nu} \psi^B(n+n_\nu)\quad n\in\Z\\ 
     \left(\Hbulk(k)\psi\right)^B(n) &= \sum_{\nu=1}^3 e^{-i\kp m_\nu}\psi^A(n-n_\nu)\quad n\in\Z .
     \end{aligned}
     \label{Hbulk_k-def}
   \end{equation*}
One can show, by a Weyl sequence argument,
 that the essential spectrum of $\Hk$ is equal to that of $\Hbulk(\kp)$, and consists of those values of $E$ for which  there exists  a non-trivial solution $\psi$, of sub-exponential growth, of the following system of equations:
 \begin{align}
 \sum_{\nu=1}^3 e^{im_\nu\kpar}\psi^B(n+n_\nu)  &=  E\psi^A(n),\qquad n\in\Z, \label{HA-bulk}\\
 \sum_{\nu=1}^3 e^{-im_\nu\kpar}\psi^A(n-n_\nu)  &= E\psi^B(n),\qquad n\in\Z.\label{HB-bulk}
 \end{align}
This coincides with  \eqref{psiAevp}-\eqref{psiBevp} for  $n\gg\max\{\nA,\nB\}$. 
 
 Fix $E\in\R$ and $\kpar\in[0,2\pi]$.  It is natural to consider particular solutions of \eqref{HA-bulk}, \eqref{HB-bulk} of the form: $\psi(n)=\zeta^n\xi\ $, where $(\zeta,\xi)\in\C\times\C^2$ satisfies the  linear system \ 
\begin{equation} 
\left(\ \mathscr{P}_\kpar(\zeta) - E I\ \right)\begin{pmatrix}\xi^A\\ \xi^B\end{pmatrix} = \begin{pmatrix}0\\ 0\end{pmatrix},\quad \xi=\begin{pmatrix}\xi^A\\ \xi^B \end{pmatrix}.
\label{Tkpar}\end{equation}
Here,
 \begin{align}
 \mathscr{P}_\kpar(\zeta) := \begin{pmatrix} 
0 &P_+(\zeta,\kpar) \\
P_-(\zeta,\kpar)&0
\end{pmatrix}\quad\text{and}\quad P_\pm(\zeta,\kpar) := \sum_{j=1}^3 e^{\pm i\kpar m_j}\zeta^{\pm n_j}.\label{P+-}
 \end{align}

 When there is no ambiguity, we shall frequently suppress the dependence of $P_\pm$ on $\kpar$.
 Note that 
 \[ P_-(\zeta) = \overline{P_+(1/\overline\zeta)},\quad \text{and}\quad |\zeta|=1\ \ \textrm{implies}\ \ P_-(\zeta)=\overline{P_+(\zeta)}.\]
 
\begin{proposition} \label{ess_spec}
 For $k\in[0,2\pi]$, we have
$$ E\in {\rm spec}_{\rm ess}(\Hk)\quad\iff\quad\exists k_\perp\in[0,2\pi],\quad E^2=|h(k_\perp,\kpar)|^2$$
where $h(\kperp,\kpar) := \sum_{\nu=1}^3 e^{i\kpar m_\nu}\ e^{i\kperp n_\nu}$. Furthermore, if $E\in {\rm spec}_{\rm ess}(\Hk)$, then a non-trivial solution $\psi$ of (\ref{HA-bulk}-\ref{HB-bulk}) is given by $\psi(n)=e^{in\kperp }\xi $ where $\xi\in\mathbb{C}^2-\{0\}$ solves
\[
\Hbulk(\kperp,\kpar)\xi  = E\xi ,\]
where
\begin{equation} \Hbulk(\kperp,\kpar)\xi := \begin{pmatrix} 0 & h(\kperp,\kpar) \\ h^*(\kperp,\kpar) & 0 \end{pmatrix}\xi. 
\label{Hbulk-def}\end{equation}
Here, $z^*$ denotes the complex conjugate of $z$.
 \end{proposition}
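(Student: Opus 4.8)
The plan is to characterize the essential spectrum of $\Hk$ via the direct integral decomposition of the bulk operator $\Hbulk$ over the perpendicular quasimomentum $\kperp$, together with a Weyl-sequence argument identifying ${\rm spec}_{\rm ess}(\Hk)$ with ${\rm spec}(\Hbulk(k))$. First I would recall that $\Hbulk$ is invariant under translation by $\bv_1$, so $\Hbulk(k)$ acts on $l^2(\Z;\C^2)$ via \eqref{Hbulk_k-def}; and likewise $\Hbulk(k)$ is invariant under $n\mapsto n+1$, so it decomposes as a direct integral over $\kperp\in\R/2\pi\Z$ of the $2\times2$ matrices $\Hbulk(\kperp,\kpar)$ obtained by substituting the plane wave $\psi(n)=e^{in\kperp}\xi$ into \eqref{HA-bulk}--\eqref{HB-bulk}. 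Carrying out that substitution gives exactly $\Hbulk(\kperp,\kpar)\xi=E\xi$ with the off-diagonal entry $h(\kperp,\kpar)=\sum_\nu e^{i\kpar m_\nu}e^{i\kperp n_\nu}$, since $P_+(e^{i\kperp},\kpar)=h(\kperp,\kpar)$ and $P_-(e^{i\kperp},\kpar)=\overline{h(\kperp,\kpar)}=h^*(\kperp,\kpar)$. Hence
\[
{\rm spec}(\Hbulk)=\bigcup_{\kpar,\kperp}{\rm spec}\,\Hbulk(\kperp,\kpar),\qquad {\rm spec}(\Hbulk(\kpar))=\bigcup_{\kperp}{\rm spec}\,\Hbulk(\kperp,\kpar).
\]

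Next, the $2\times2$ matrix $\Hbulk(\kperp,\kpar)=\begin{pmatrix}0&h\\ h^*&0\end{pmatrix}$ is self-adjoint with eigenvalues $\pm|h(\kperp,\kpar)|$, so $E\in{\rm spec}(\Hbulk(\kpar))$ if and only if there exists $\kperp\in[0,2\pi]$ with $E^2=|h(\kperp,\kpar)|^2$; and when this holds, a corresponding eigenvector $\xi\in\C^2\setminus\{0\}$ of $\Hbulk(\kperp,\kpar)$ yields the claimed non-trivial bounded solution $\psi(n)=e^{in\kperp}\xi$ of \eqref{HA-bulk}--\eqref{HB-bulk}. It remains to transfer this from $\Hbulk(\kpar)$ to $\Hk$, i.e. to prove ${\rm spec}_{\rm ess}(\Hk)={\rm spec}(\Hbulk(\kpar))$. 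For the inclusion $\supseteq$, given $E\in{\rm spec}(\Hbulk(\kpar))$ pick the bounded plane-wave solution above, multiply by a sequence of normalized cutoffs $\chi_R(n)$ supported in $n\ge R$ and growing in $R$; because $\Hk$ agrees with $\Hbulk(\kpar)$ for $n\gg\max\{\nA,\nB\}$ and the commutator $[\Hk,\chi_R]$ is supported on a fixed-width shell whose $l^2$-mass is $o(\|\chi_R\psi\|)$, one gets a singular Weyl sequence, so $E\in{\rm spec}_{\rm ess}(\Hk)$. For $\subseteq$, note $\Hk-\Hbulk(\kpar)$ (after extending both to a common space) is a finite-rank, hence compact, perturbation — it modifies only finitely many rows near $n=\nA,\nB$ — so Weyl's theorem on invariance of essential spectrum gives ${\rm spec}_{\rm ess}(\Hk)={\rm spec}_{\rm ess}(\Hbulk(\kpar))={\rm spec}(\Hbulk(\kpar))$, the last equality because the purely a.c. direct-integral spectrum of $\Hbulk(\kpar)$ has no isolated eigenvalues of finite multiplicity.

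Assembling: $E\in{\rm spec}_{\rm ess}(\Hk)$ iff $E\in{\rm spec}(\Hbulk(\kpar))$ iff $\exists\,\kperp\in[0,2\pi]$ with $E^2=|h(\kperp,\kpar)|^2$, and the accompanying plane-wave solution is as stated. The main obstacle is the rigorous Weyl-sequence step: one must check that the off-edge cutoff does not destroy the plane wave's role as an approximate eigenfunction, i.e. that $\|(\Hk-E)\chi_R\psi\|/\|\chi_R\psi\|\to0$. Since the interaction range (the finite set $\{(m_\nu,n_\nu)\}$) is fixed and $\psi$ is a bona fide bounded solution of the bulk recursion, the error is concentrated on an $O(1)$-width boundary layer at the support edge of $\chi_R$, so its norm is $O(1)$ while $\|\chi_R\psi\|\to\infty$; this is routine but is the only place care is genuinely needed. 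Everything else is direct computation with the $2\times2$ symbol and the standard compact-perturbation argument.
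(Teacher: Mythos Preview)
Your approach is essentially the paper's: direct integral of $\Hbulk(\kpar)$ over $\kperp$ to get the $2\times2$ symbol $\Hbulk(\kperp,\kpar)$, then identify ${\rm spec}_{\rm ess}(\Hk)$ with ${\rm spec}(\Hbulk(\kpar))$ via a Weyl-type argument. The paper's proof is even terser than yours---it simply asserts the Weyl-sequence step before the proposition and only writes out the direct-integral part.

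There is, however, a real slip in your $\subseteq$ argument. The claim that $\Hk-\Hbulk(\kpar)$ is finite rank ``after extending both to a common space'' is false if the extension is the natural one $\tilde H_\sharp:=\Pi\Hbulk(\kpar)\Pi$ on $l^2(\Z;\C^2)$, where $\Pi$ is the projection onto $X_\sharp$. Writing $Q=1-\Pi$, one has
\[
\Hbulk(\kpar)-\tilde H_\sharp \;=\; \Pi\Hbulk(\kpar) Q + Q\Hbulk(\kpar)\Pi + Q\Hbulk(\kpar) Q,
\]
and the last term is the bulk operator restricted to the complementary half-space $X_\sharp^\perp$, which is infinite rank. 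So Weyl's theorem does not apply to this pair.

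The standard fix is Dirichlet decoupling: compare $\Hbulk(\kpar)$ not with $\tilde H_\sharp$ alone but with the direct sum $\tilde H_\sharp + Q\Hbulk(\kpar) Q=\Hk\oplus H_-$. Then the difference is $\Pi\Hbulk(\kpar) Q + Q\Hbulk(\kpar)\Pi$, which \emph{is} finite rank because $\Hbulk(\kpar)$ has finite hopping range $\max_\nu|n_\nu|$ and these cross terms live only on a boundary layer of that width. Weyl's theorem now gives
\[
{\rm spec}_{\rm ess}(\Hbulk(\kpar)) \;=\; {\rm spec}_{\rm ess}(\Hk)\cup{\rm spec}_{\rm ess}(H_-),
\]
hence ${\rm spec}_{\rm ess}(\Hk)\subseteq{\rm spec}(\Hbulk(\kpar))$, and your Weyl-sequence construction for $\supseteq$ then closes the argument.
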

\begin{proof}[Proof of Proposition \ref{ess_spec}] Since $\Hbulk(\kp)$  is invariant under translations by the vector $\bv_2$, it decomposes as a direct integral over $\kperp\in\R/2\pi\Z$ of the matrix $\Hbulk(\kperp,\kpar)$. If $\psi\in l^2(\Z)$, 
\begin{equation*} \label{eq:Hpar_TFB}
	[\Hbulk(\kpar)\psi](n) = \frac{1}{2\pi} \int_0^{2\pi} e^{i\kperp n} \Hbulk(\kperp,\kpar) \hat{\psi}(\kperp) d\kperp,
	\end{equation*}
where $\hat{\psi}(\kperp)\in \C^2$ is the discrete Fourier transform of $\psi$. The essential spectrum of $\Hbulk(\kp)$  is nothing but the union of the spectra of $\Hbulk(\kperp,\kpar)$ for $\kperp\in[0,2\pi]$.

It is easy to see that for each $k\in[0,2\pi]$, $\kperp\mapsto h(\kperp,\kpar)$ is a $2\pi-$ periodic continuous function. From Proposition \ref{ess_spec}, we deduce that for a fixed $\kpar\in[0,2\pi]$, the range of the maps 
\begin{equation*} \kperp\mapsto \pm|h(\kperp,\kpar)| \label{2bands}\end{equation*}
sweeps out the essential spectrum of $\Hbulk(\kp)$  and therefore that of $\Hk$. 
\end{proof} 
~\\\\
	Let us define $X_\text{bulk}:=l^2(\Z,\C^2)$ and $X_\sharp$ to be the subspace of elements of $X_\text{bulk}$ whose components $\psi^A(n)$ and  $\psi^B(n)$ vanish, respectively, for $n<\nA$ and $n<\nB$. Let $\mathfrak{i}$ denote the inclusion map from  $X_\sharp$ to $X_\text{bulk}$. Then, $H_\sharp(\kpar)$ and $\Hbulk(\kpar)$ are 
    related by
     \begin{equation} H_\sharp(\kpar)=\mathfrak{i}^*H_{\rm bulk}(\kpar)\mathfrak{i},\label{Hconj}
     \end{equation}
     thus making explicit the self-adjointness of $H_\sharp(\kpar)$ and showing that
  {
     \begin{equation} \| H_\sharp(\kpar)\|_{\mathcal{B}(X_\sharp)} =
      \| H_{\rm bulk}(\kpar)\|_{\mathcal{B}(X_{\rm bulk})}.\label{Hnorms}
      \end{equation} 
      } 

  {    Note that we can write $\Hbulk(\kperp,\kpar)$, defined in \eqref{Hbulk-def}, 
as
\[ \Hbulk(\kperp,\kpar) = \sigma_1\ \Re(h(\kperp,\kpar)) - \sigma_2\ \Im(h(\kperp,\kpar)) ,\]
where $\sigma_j$, $j=1,2,3$,  denote the Pauli matrices; see \eqref{pauli123}. Since  $\sigma_3$ anti-commutes with $\sigma_1$ and $\sigma_2$, we deduce that
$\sigma_3\Hbulk(\kperp,\kpar) = - \Hbulk(\kperp,\kpar)\sigma_3$. 
 Defining $\Sigma_3 $ as acting on a set of $A-$ and $B-$ amplitudes by transforming $\psi^A(n)$ to $\psi^A(n)$
 and $\psi^B(n)$ to $-\psi^B(n)$, we have that  $\Sigma_3$ commutes with 
$\mathfrak{i}^*$ and $\mathfrak{i}$. 
Hence,  $\Sigma_3\Hbulk(\kpar)=-\Hbulk(\kpar)\Sigma_3$ and via \eqref{Hconj}  $\Sigma_3H_\sharp(\kpar)=-H_\sharp(\kpar) \Sigma_3$. This implies the symmetry about 0-energy of the full spectrum of $\Hbulk(k)$ and relates the modes  associated with  any $\pm E$  in the spectrum of  $\Hbulk(k)$.
}

{The previous discussion together with  the observation
 $h(\kperp,\kpar)=h^*(2\pi-\kperp,2\pi-\kpar)$, for all $\kperp,\kpar\in[0,2\pi]$,  implies the following:
 \begin{proposition}[Symmetry of the spectrum of $H_\sharp(\kpar)$]\label{lem:symmetry}
For $\kpar\in[0,2\pi]$, we have
\[
	E\in {\rm spec}(\Hk)\quad\iff\quad-E\in {\rm spec}(\Hk)
\]
and
\[
	E\in {\rm spec}(\Hk)\quad\iff\quad E\in {\rm spec}(H_\sharp(2\pi-k)).
\]
\end{proposition}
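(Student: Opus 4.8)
The plan is to realize both equivalences as conjugations of $\Hk$ by explicit (anti)unitaries, using symmetry relations already recorded in the discussion preceding the proposition; there is essentially no new work, only packaging.

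For the first equivalence, $E\in{\rm spec}(\Hk)\iff -E\in{\rm spec}(\Hk)$, I would invoke the operator $\Sigma_3$ introduced above, which acts on a pair of amplitudes $(\psi^A(n),\psi^B(n))_{n\in\Z}$ by $\psi^A(n)\mapsto\psi^A(n)$, $\psi^B(n)\mapsto-\psi^B(n)$. It is a unitary involution of $X_\sharp$, and it was shown above that $\Sigma_3\Hk=-\Hk\Sigma_3$, i.e.\ $\Sigma_3\Hk\Sigma_3^{-1}=-\Hk$. Hence $\Hk$ and $-\Hk$ are unitarily equivalent, so ${\rm spec}(\Hk)=-\,{\rm spec}(\Hk)$; moreover $\Sigma_3$ maps the $E$-eigenspace of $\Hk$ isometrically onto its $(-E)$-eigenspace, so the point spectra correspond as well.

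For the second equivalence, $E\in{\rm spec}(\Hk)\iff E\in{\rm spec}(H_\sharp(2\pi-k))$, I would use complex conjugation rather than a unitary. Let $\mathcal{C}$ be the antilinear involution on $l^2(\Z;\C^2)$ given by $(\mathcal{C}\psi)(n)=\overline{\psi(n)}$; since the constraints \eqref{psiABout} cutting out $X_\sharp$ are real, $\mathcal{C}$ restricts to an antiunitary involution of $X_\sharp$. Conjugating the defining formulas \eqref{Hk-def} and using that each $m_\nu$ is an integer, so $\overline{e^{im_\nu k}}=e^{-im_\nu k}=e^{im_\nu(2\pi-k)}$, with the lattice shifts themselves unaffected by $\mathcal{C}$, a one-line check gives $\mathcal{C}\,\Hk\,\mathcal{C}^{-1}=H_\sharp(2\pi-k)$. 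This is the operator-level version of the Bloch-symbol identity $h(\kperp,\kpar)=h^*(2\pi-\kperp,2\pi-\kpar)$ noted just before the proposition, which --- via Proposition \ref{ess_spec} and the relabelling $\kperp\mapsto 2\pi-\kperp$ --- already delivers the equality of \emph{essential} spectra; the conjugation argument covers the full spectrum. Concretely, since $\Hk$ is self-adjoint its spectrum is real, so for $E\in\R$, $\mathcal{C}$ conjugates $\Hk-EI$ to $H_\sharp(2\pi-k)-\overline{E}I=H_\sharp(2\pi-k)-EI$, and as $\mathcal{C}$ is a bijective isometry one of these operators is boundedly invertible iff the other is.

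The only points that call for a word of care --- none of them a genuine obstacle --- are: (i) one must use an \emph{antiunitary} $\mathcal{C}$, not a unitary, for the $k\mapsto 2\pi-k$ symmetry, which is harmless precisely because $\Hk$ is self-adjoint and its spectrum therefore invariant under complex conjugation; and (ii) both $\Sigma_3$ and $\mathcal{C}$ must be checked to preserve the constraints \eqref{psiABout} defining $X_\sharp$, which is immediate since those constraints involve neither $k$ nor the complex structure.
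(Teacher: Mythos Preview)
Your proof is correct and follows the same approach as the paper: the $\Sigma_3$ anti-commutation for $E\leftrightarrow -E$ is exactly what the paper records just above the proposition, and your antiunitary $\mathcal{C}$ is the operator-level incarnation of the symbol identity $h(\kperp,\kpar)=h^*(2\pi-\kperp,2\pi-\kpar)$ the paper invokes. If anything, your treatment of the second equivalence is more complete, since the paper's stated observation about $h$ literally addresses only the essential spectrum via Proposition~\ref{ess_spec}, whereas your conjugation by $\mathcal{C}$ cleanly delivers equality of the full spectra.
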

}

\subsection{Bands and gaps in the spectrum of $\HTB_\sharp(\kpar)$ as $\kpar$ varies}

 Since for any $\kpar\in[0,2\pi]$, $\kperp\mapsto\pm|h(\kperp,\kpar)|$ sweeps out symmetric intervals of the essential spectrum of $\HTB_\sharp(\kpar)$, we have
 \[
 	{\rm spec}_{\rm ess}(\HTB_\sharp(\kpar))=\Big[-\max_{\kperp}|h(\kperp,\kpar)|,-\min_{\kperp}|h(\kperp,\kpar)|\Big]\;\bigcup\;\Big[\min_{\kperp}|h(\kperp,\kpar)|,\max_{\kperp}|h(\kperp,\kpar)|\Big].
 \]
 It is easy to see that $|h(\kperp,\kpar)|\leq 3$ for all $(\kperp,\kpar)$ and $|h(\kperp,\kpar)|= 3$ if and only if $(\kperp,\kpar)=(0\,\text{mod}\,2\pi,0\,\text{mod}\,2\pi)$. This implies that ${\rm spec}_{\rm ess}(\Hk)\subset[-3,3]$. It appears that the boundary curves of the essential spectrum are monotone away from high symmetry points but we have not proven this.
It is of interest to determine for which values of $\kpar\in[0,2\pi]$ is {the spectrum of the }Hamiltonian $\Hbulk(\kpar)$ {not} gapped near 0. By continuity this is equivalent to determining the values of $\kpar$ for which zero is in the essential spectrum of $\HTB_\sharp(\kpar)$. 
 
 \begin{proposition}\label{nogap}
 $E=0$ is in the spectrum of $\Hbulk(\kpar)$ (that is,  $\Hbulk(\kpar)$ is not gapped) if and only if $\kpar$ is given by
 \begin{align*}
(i)\ \ \hat{k} &=0\ \textrm{or}\ 2\pi,\quad \textrm{if}\quad  a_{11}-a_{12}=0\quad \textrm{mod}\ 3
\quad \textrm{(AC case)}\\
(ii)\ \ \hat{k}  &= \frac{2\pi}{3}\ \textrm{or}\ \frac{4\pi}{3},\quad \textrm{if}\quad  a_{11}-a_{12}=\pm 1\quad \textrm{mod}\ 3\quad \textrm{(ZZ case)}.
\end{align*}
 \end{proposition}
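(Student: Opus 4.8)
The plan is to reduce the assertion to locating the zeros of the function $h(\kperp,\kpar)=\sum_{\nu=1}^3 e^{i\kpar m_\nu}e^{i\kperp n_\nu}$ from Proposition~\ref{ess_spec}, and then to recognize those zeros as the Dirac points of the honeycomb, transported into the $\{\bv_1,\bv_2\}$ coordinates. By Proposition~\ref{ess_spec} and its proof, ${\rm spec}(\Hbulk(\kpar))=\bigcup_{\kperp\in[0,2\pi]}\{\pm|h(\kperp,\kpar)|\}$, so $E=0\in{\rm spec}(\Hbulk(\kpar))$ if and only if there exists $\kperp$ with $h(\kperp,\kpar)=0$. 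Everything thus comes down to: for which $\kpar$ does the curve $\kperp\mapsto h(\kperp,\kpar)$ pass through $0$?

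First I would rewrite $h$ intrinsically. Fix $\kpar,\kperp$ and choose $\bk\in\R^2$ with $\bk\cdot\bv_1=\kpar$ and $\bk\cdot\bv_2=\kperp$; concretely take $\bk=\kpar\bv_1^*+\kperp\bv_2^*$ in the dual basis $\bv_i^*\cdot\bv_j=\delta_{ij}$, so that the reciprocal lattice of $\Lambda$ is $\Lambda^*=2\pi\Z\bv_1^*+2\pi\Z\bv_2^*$. Since $(m_\nu,n_\nu)$ is a permutation of $(\ttm_\nu,\ttn_\nu)$, relation \eqref{tntm} gives $\kpar\,\ttm_\nu+\kperp\,\ttn_\nu=\bk\cdot(\ttm_\nu\bv_1+\ttn_\nu\bv_2)=\bk\cdot(e^\nu+\bv_A-\bv_B)$, hence
\[
h(\kperp,\kpar)=e^{i\bk\cdot(\bv_A-\bv_B)}\sum_{\nu=1}^3 e^{i\bk\cdot e^\nu}.
\]
The sum on the right is the standard honeycomb Bloch factor. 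A sum of three unit‑modulus complex numbers vanishes exactly when their arguments are $c,\,c+\tfrac{2\pi}{3},\,c+\tfrac{4\pi}{3}$ for some $c$; combined with $e^1-e^2=\ocirc{\bv}_2$ and $e^1-e^3=\ocirc{\bv}_1$ (from \eqref{e-nu}) and a short consistency check on the third difference, this shows that the sum vanishes if and only if $\bk\cdot\ocirc{\bv}_1\equiv\pm\tfrac{2\pi}{3}$ and $\bk\cdot\ocirc{\bv}_2\equiv\mp\tfrac{2\pi}{3}\pmod{2\pi}$, i.e. $\bk\equiv\pm\bK\pmod{\Lambda^*}$, where $\bK$ and $\bKp=-\bK$ are the two Dirac points (cf. \cite{FW:12,FW:14}).

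It then remains to decide when the line $\{\kpar\bv_1^*+\kperp\bv_2^*:\kperp\in\R\}$ meets $\pm\bK+\Lambda^*$. Projecting modulo $\R\bv_2^*+\Lambda^*$, the line meets $\bK+\Lambda^*$ iff $\kpar\equiv\bK\cdot\bv_1\pmod{2\pi}$ (and then a suitable $\kperp\equiv\bK\cdot\bv_2$ does the job), and likewise $-\bK$ contributes the condition $\kpar\equiv-\bK\cdot\bv_1\pmod{2\pi}$. Using $\bv_1=a_{11}\ocirc{\bv}_1+a_{12}\ocirc{\bv}_2$ together with $\bK\cdot\ocirc{\bv}_1=\tfrac{2\pi}{3}$, $\bK\cdot\ocirc{\bv}_2=-\tfrac{2\pi}{3}$ gives $\bK\cdot\bv_1=\tfrac{2\pi}{3}(a_{11}-a_{12})$. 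Hence $0\in{\rm spec}(\Hbulk(\kpar))$ if and only if $\kpar\equiv\pm\tfrac{2\pi}{3}(a_{11}-a_{12})\pmod{2\pi}$. Writing $a_{11}-a_{12}=3k_2+s_2$ as in \eqref{k1k2s1s2}, this congruence is $\kpar\equiv\pm\tfrac{2\pi s_2}{3}\pmod{2\pi}$, which for $\kpar\in[0,2\pi]$ yields $\kpar\in\{0,2\pi\}$ when $s_2=0$ (AC case) and $\kpar\in\{\tfrac{2\pi}{3},\tfrac{4\pi}{3}\}$ when $s_2=\pm1$ (ZZ case), as claimed.

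The only genuinely delicate points are bookkeeping ones: the characterization of the Dirac points via equally spaced phases (elementary, but worth stating cleanly so both implications are transparent), and the reciprocal‑lattice step showing that the existential quantifier over $\kperp$ collapses to a congruence on $\kpar$ alone, which works precisely because $\bv_2^*$ is independent of $\bv_1^*$. I would also note that, unlike the rest of the section, this argument uses only the unordered set $\{(\ttm_\nu,\ttn_\nu)\}_{\nu=1}^3$ and not the ordering convention \eqref{perm-n}, so the classical zigzag edges need not be set aside here.
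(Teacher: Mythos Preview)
Your proof is correct and takes a genuinely different route from the paper's. The paper works directly in the $(m_\nu,n_\nu)$ coordinates: it factors out $e^{i(m_1\kpar+n_1\kperp)}$, invokes the equilateral-triangle characterization of three unit complex numbers summing to zero, and then inverts the resulting $2\times2$ integer system for $(\kpar,\kperp)$ using $\det[\beta\ \gamma]=\pm1$; the answer for $\kpar$ then drops out from $\gamma_1+\gamma_2\equiv -s_2\pmod 3$. Your argument instead undoes the change of basis at the outset, recognizing $h(\kperp,\kpar)$ as a unimodular multiple of the standard honeycomb symbol $\sum_\nu e^{i\bk\cdot e^\nu}$, so that the zero set is immediately identified with the Dirac points $\pm\bK+\Lambda^*$, and the condition on $\kpar$ becomes the single pairing $\kpar\equiv\pm\bK\cdot\bv_1\pmod{2\pi}$.

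Your approach is cleaner conceptually and, as you note, does not rely on the ordering convention \eqref{perm-n}, so it applies uniformly to the classical zigzag edges. The paper's computation, on the other hand, produces explicit formulas \eqref{h_kpar}--\eqref{h_kperp} for both $\kpar$ and $\kperp$; these are reused downstream, notably in \eqref{eq:ess_spec_arm}--\eqref{eq:ess_spec_zig} to count the $\kperp$'s at the crossings and in the proof of the Wedge of the Edge Proposition~\ref{woe-prop}. If you adopt your argument in the paper's place, you would want to record $\kperp\equiv\pm\bK\cdot\bv_2\pmod{2\pi}$ as well, and observe that in the AC case the two sign choices give the same $\kpar$ but distinct $\kperp$, while in the ZZ case they give distinct $\kpar$.
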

 
 \begin{figure}[htbp]
\centering
\includegraphics[width=.8\textwidth]{./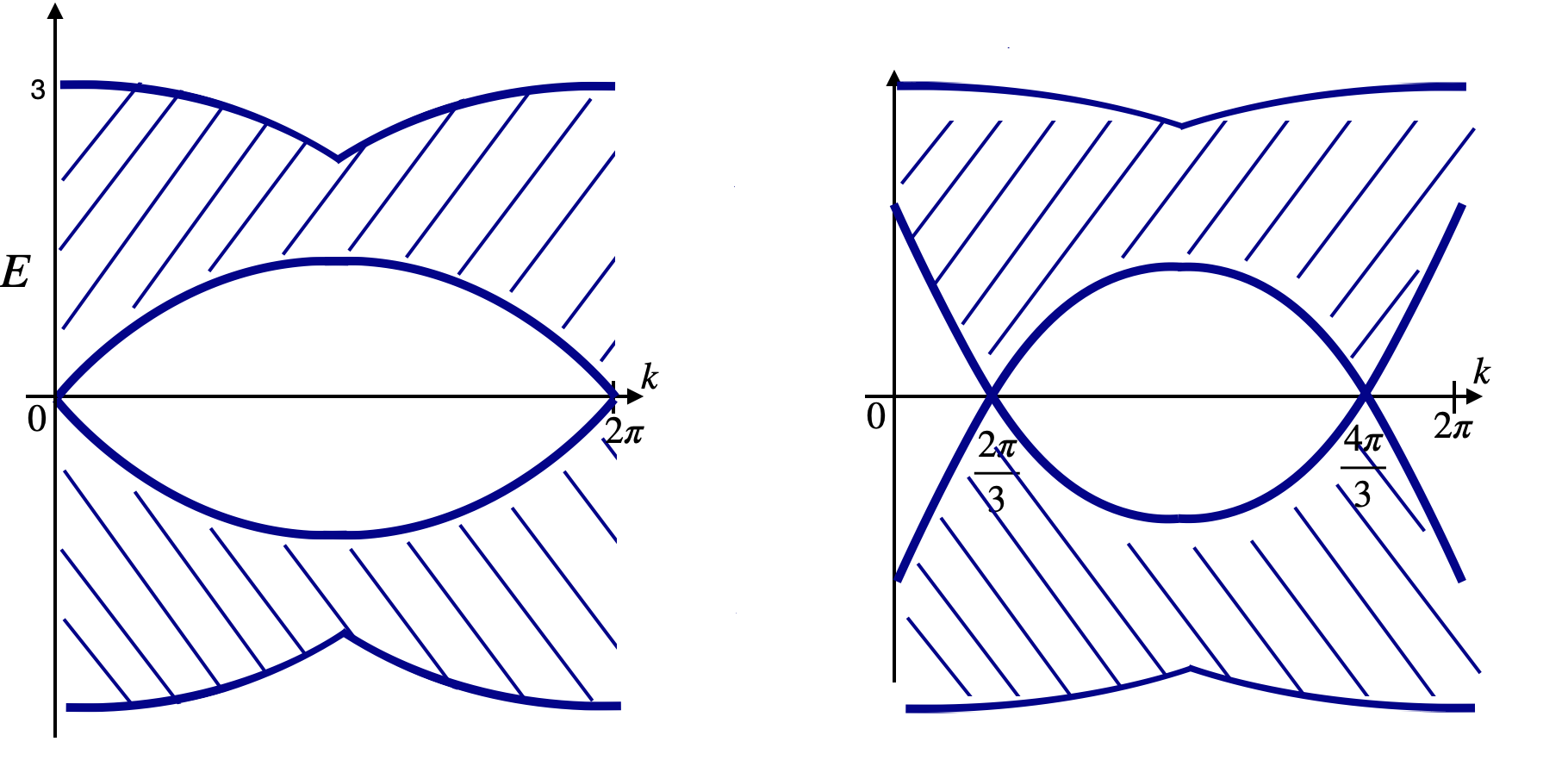}
\caption{Shaded regions indicate essential spectrum of $\HTB_{\rm bulk}(\kpar)$ and $H_\sharp(\kpar)$ for different types of rational edges;
armchair type edge (left) and zigzag type edge (right). 
}
 \label{fig:ess-spec}
\end{figure}
 \begin{figure}[htbp]
 	\begin{center}
 		\begin{tikzpicture}
			
 			\begin{scope}[shift={(-5.0,0)},scale=0.7,transform shape]	
 				\node at (0,0) {\includegraphics[height=5cm,width=6.5cm]{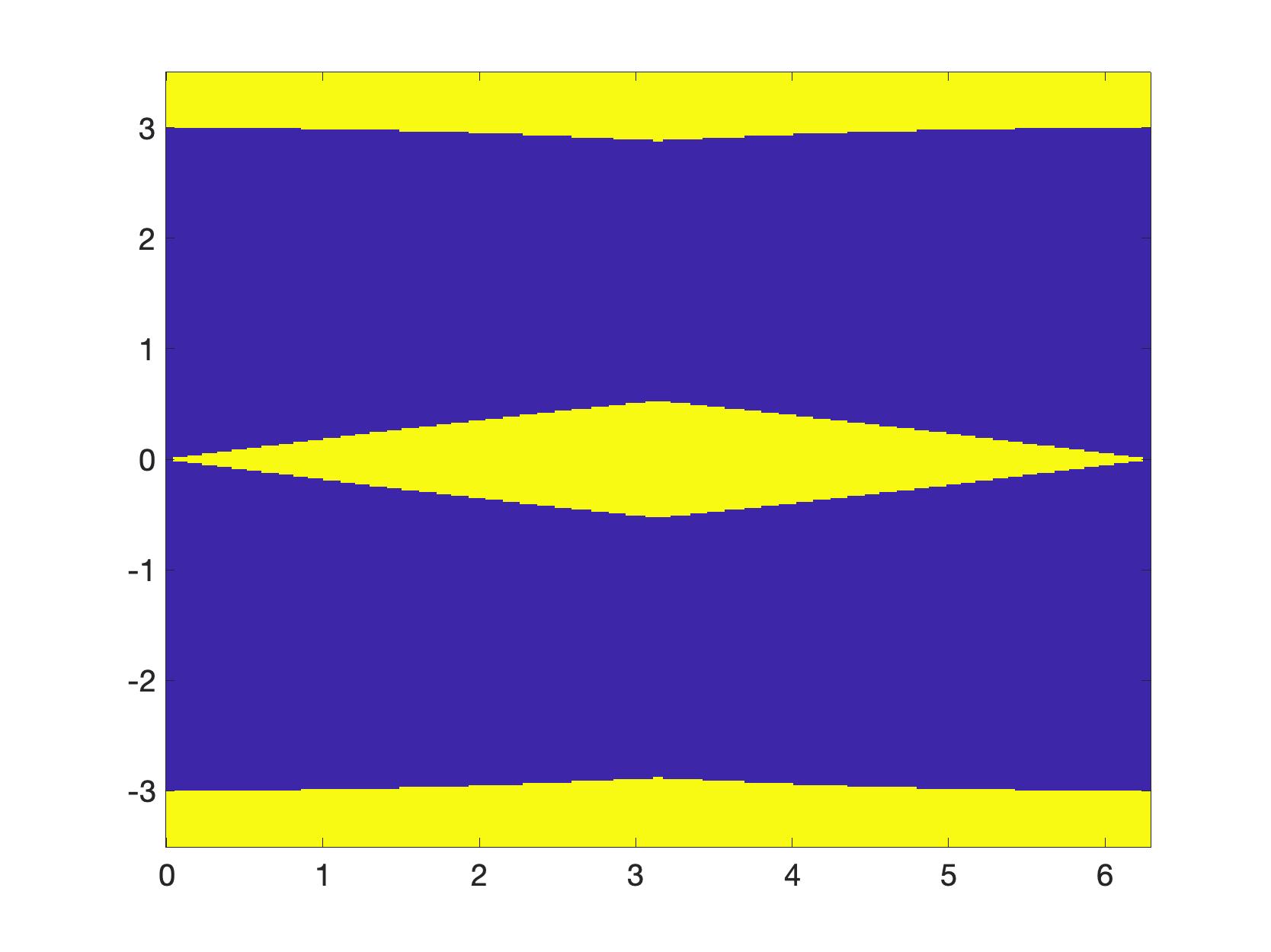}};
 				\draw (0.,2.5) node{$a_{11}=4,\;a_{12}=1$};
				\draw (-3,0) node{$E$};
				\draw (0,-2.5) node{$k$};
 			\end{scope}
			
 			\begin{scope}[shift={(-1,0)},scale=0.7,transform shape]	
 				\node at (0,0) {\includegraphics[height=5cm,width=6.5cm]{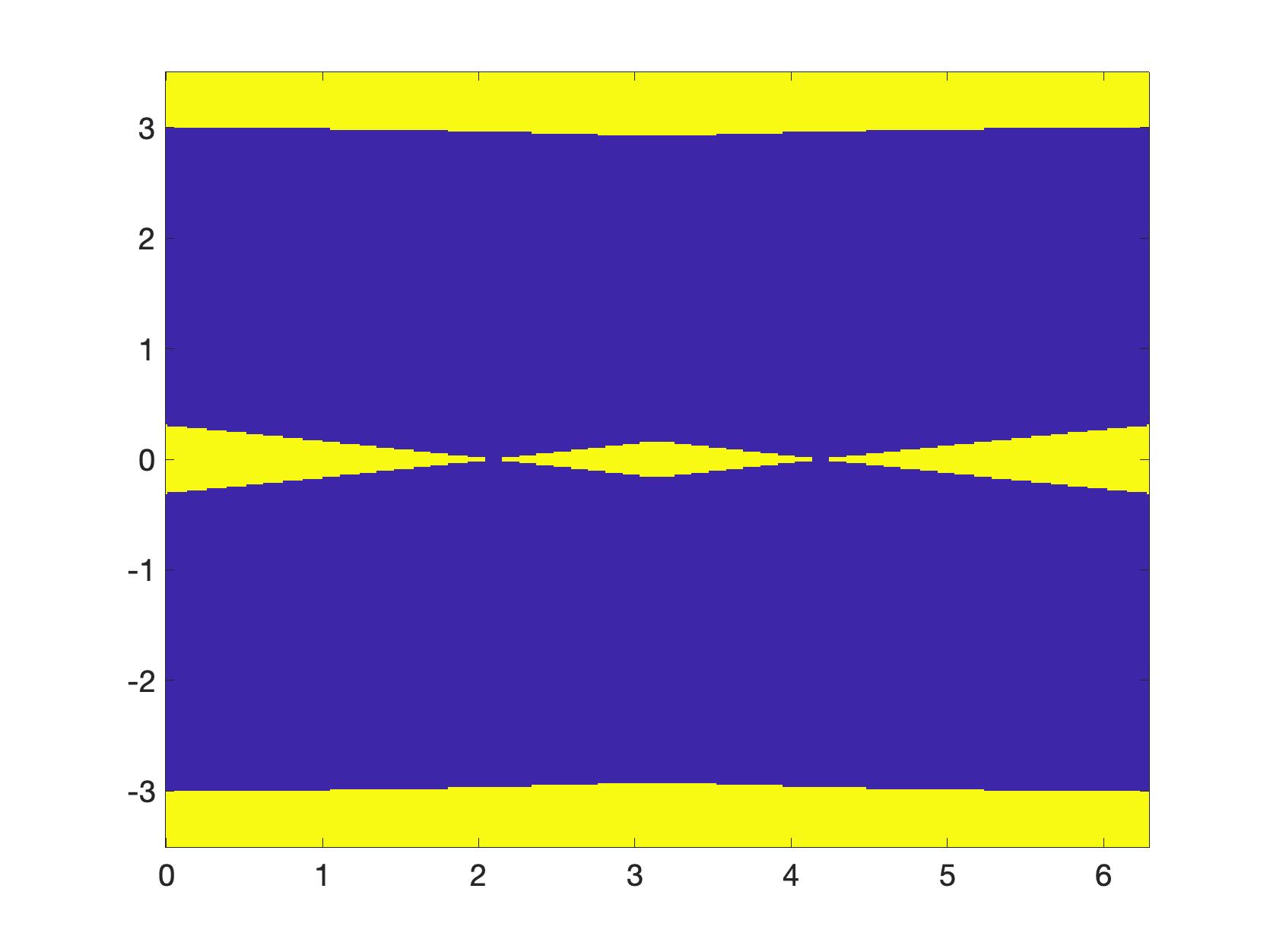}};
 				\draw (0.,2.5) node{$a_{11}=5,\;a_{12}=1$};
				\draw (0,-2.5) node{$k$};
 			\end{scope}
			
 			\begin{scope}[shift={(3.0,0)},scale=0.7,transform shape]	
 				\node at (0,0) {\includegraphics[height=5cm,width=6.5cm]{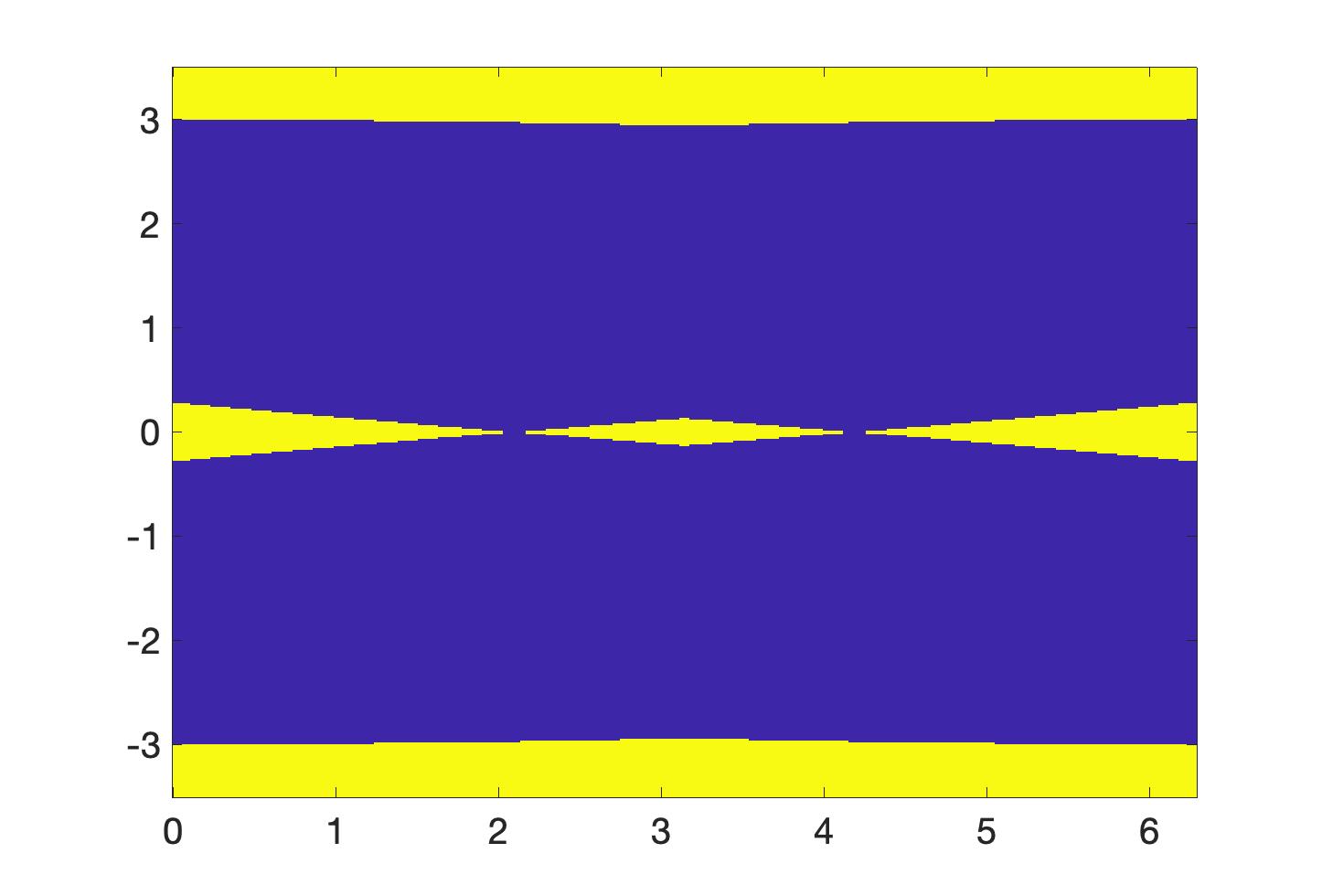}};
 				\draw (0.,2.5) node{$a_{11}=6,\;a_{12}=1$};
				\draw (0,-2.5) node{$k$};
 			\end{scope}
 		\end{tikzpicture}
 		\caption{For each $k\in[0,2\pi]$, we indicate in blue (resp. in yellow) the essential spectrum (resp. the resolvent set or the gaps) of $\Hbulk(k)$ in the interval $E\in(-3.5,3.5)$.}
 		\label{fig:essential_spectrum_large}
 	\end{center}
 \end{figure}
 \begin{figure}[htbp]
 	\begin{center}
 		\begin{tikzpicture}
			
 			\begin{scope}[shift={(-5.0,0)},scale=0.7,transform shape]	
 				\node at (0,0) {\includegraphics[height=5cm,width=6.5cm]{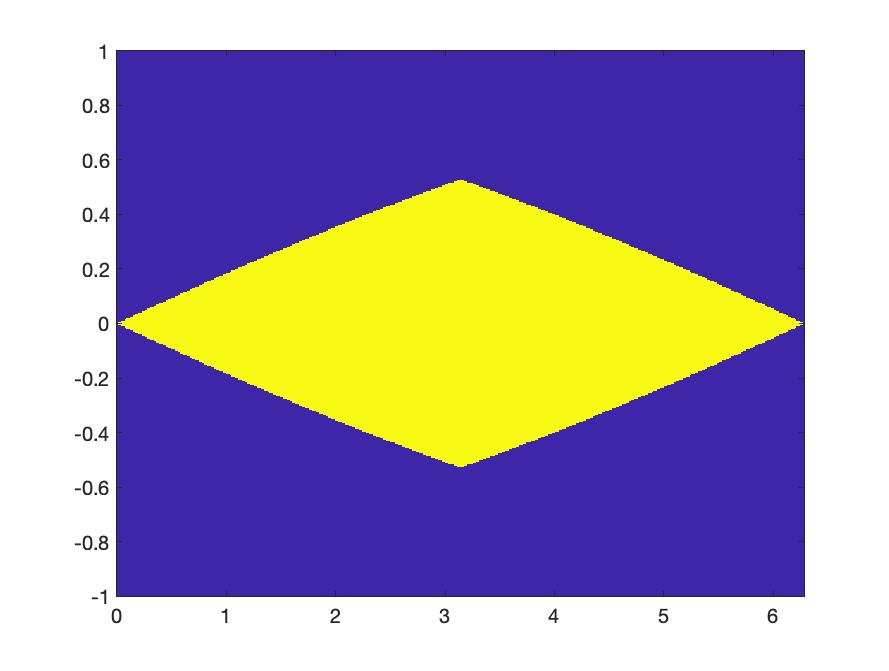}};
 				\draw (0.,2.5) node{$a_{11}=4,\;a_{12}=1$};
				\draw (-3,0) node{$E$};
				\draw (0,-2.5) node{$k$};
 			\end{scope}
			
 			\begin{scope}[shift={(-1,0)},scale=0.7,transform shape]	
 				\node at (0,0) {\includegraphics[height=5cm,width=6.5cm]{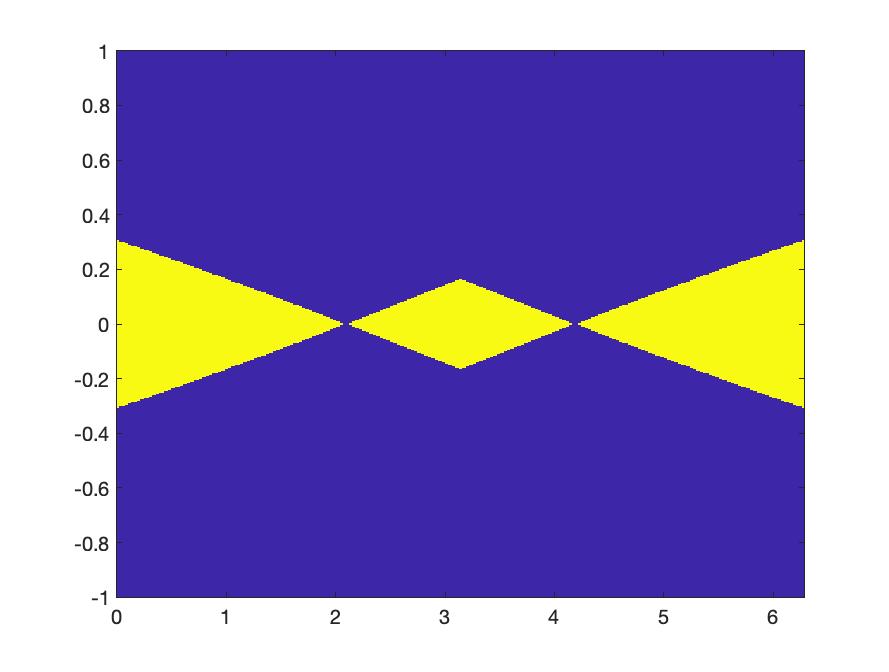}};
 				\draw (0.,2.5) node{$a_{11}=5,\;a_{12}=1$};
				\draw (0,-2.5) node{$k$};
 			\end{scope}
			
 			\begin{scope}[shift={(3.0,0)},scale=0.7,transform shape]	
 				\node at (0,0) {\includegraphics[height=5cm,width=6.5cm]{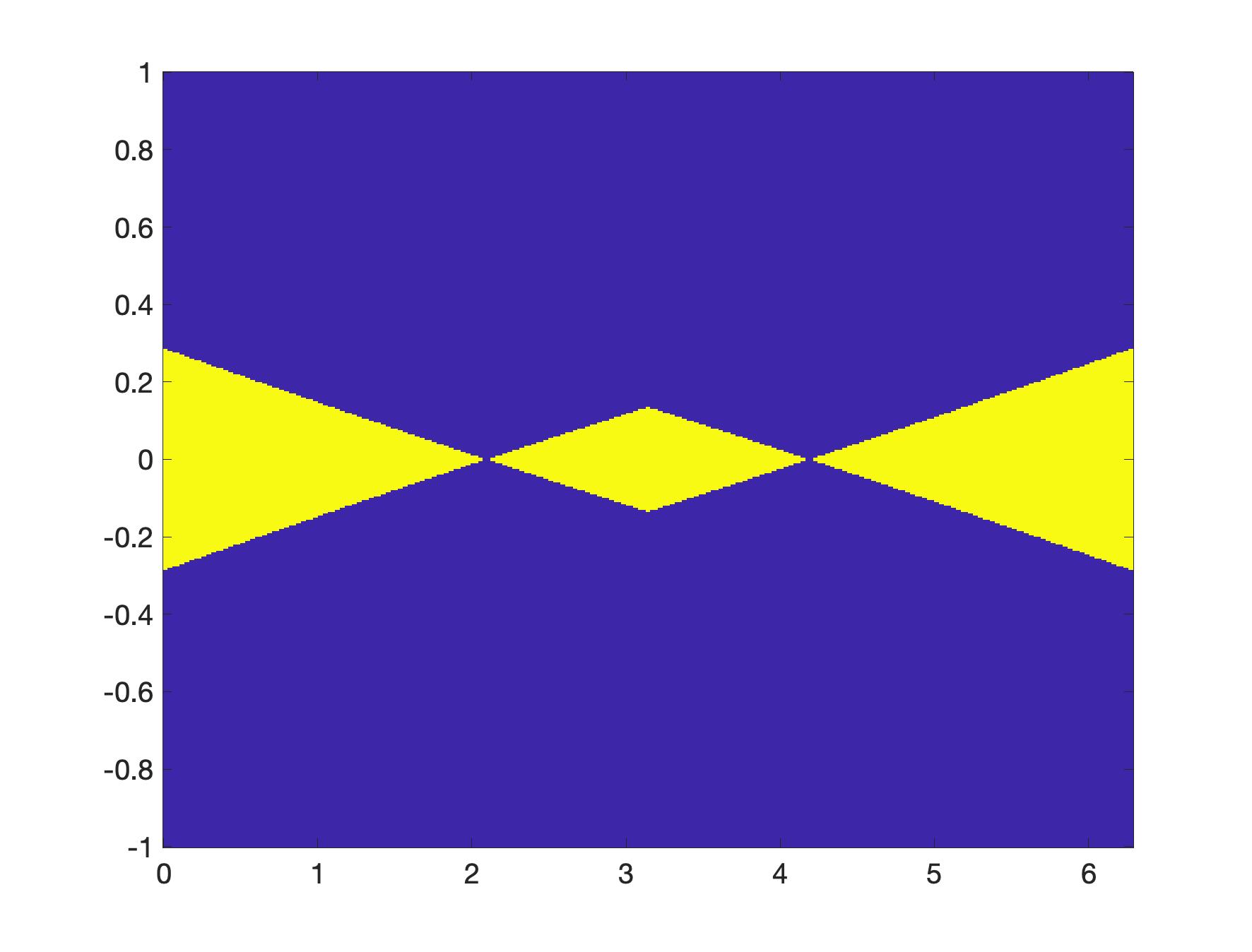}};
 				\draw (0.,2.5) node{$a_{11}=6,\;a_{12}=1$};
				\draw (0,-2.5) node{$k$};
 			\end{scope}
 		\end{tikzpicture}
 		\caption{For each $k\in[0,2\pi]$, we indicate in blue (resp. in yellow) the essential spectrum (resp. the resolvent set or the gaps) of $\Hbulk(k)$ in the interval $E\in(-1,1)$.}
 		\label{fig:essential_spectrum}
 	\end{center}
 \end{figure}
 \begin{remark}\label{ou_ptspec} By \eqref{Hnorms}, any discrete spectrum of 
	 $H_\sharp(\kpar)$ must lie: either in the bounded component of the complement (with respect to $\R$) of 
	  the essential spectrum (bounded {light} regions) or be embedded in the essential spectrum.
	  { We have not investigated the existence or non-existence of embedded eigenvalues for general rational edges. Note however that 
	  for the case of the classical (ordinary) zigzag edge, $E=+1$ and $E=-1$ are  eigenvalues of $H_\sharp(\pi)$ of infinite multiplicity that are embedded in the essential spectrum; see \cite[Theorem 2.2]{FW:20}.}
	 \end{remark}
\begin{proof}[Proof of Proposition \ref{nogap}] {By Proposition \ref{ess_spec},} we must determine those values of $\kpar\in[0,2\pi]$ for which 
there exists $\kperp$ with $h(\kpar,\kperp)=0$. We have
\begin{equation*}
h(\kpar,\kperp)=0 \;\iff\;  1 + e^{i(\beta_1\kpar+\gamma_1\kperp)} + e^{i(\beta_2\kpar +\gamma_2\kperp)} =0, \label{3pts}
\end{equation*}
where
\begin{align}
\beta_1=m_2-m_1,\quad \beta_2=m_3-m_1\nn\\
\gamma_1=n_2-n_1,\quad \gamma_2=n_3-n_1.\label{bg-def}\end{align}
We shall later make use of the additional notations:
 \[ \beta=\left(\begin{matrix}\beta_1\\ \beta_2\end{matrix}\right),\quad\text{and}\quad \gamma=\left(\begin{matrix}\gamma_1\\ \gamma_2\end{matrix}\right).\]
The centroid of the three points $1,\,e^{i(\beta_1\kpar+\gamma_1\kperp)},\,e^{i(\beta_2\kpar +\gamma_2\kperp)}$ on the unit circle is at the origin if and only if the three points are located at vertices of an equilateral triangle. Since one of the three points is fixed at $1$, there are two cases: $\hsigma=1$ or $\hsigma=-1$, with 
\begin{align*}
\beta_1\kpar + \gamma_1\kperp &=  \hsigma \frac{2\pi}{3}\quad \textrm{mod}\ 2\pi\\
\beta_2 \kpar + \gamma_2 \kperp &=  -\hsigma \frac{2\pi}{3}\quad \textrm{mod}\ 2\pi\ .
\end{align*}
Solving for $\kpar$ and $\kperp$, we obtain the two cases:
\begin{align}
{k}  &=  \frac{2\pi}{3} \hsigma \left( \gamma_1+\gamma_2\right)\det[\beta\ \gamma]\quad \textrm{mod}\ 2\pi
\label{h_kpar} \\
{k}_\perp  &=  -\frac{2\pi}{3} \hsigma \left(\beta_1+\beta_2\right)\det[\beta\ \gamma] \quad \textrm{mod}\ 2\pi,
 \label{h_kperp}\end{align}
where $\det[\beta\ \gamma]=\pm1;\quad \textrm{see  \eqref{det1-app}. }$
Finally, from \eqref{bg-def} we have 
 \begin{equation}\label{eq:sum_gamma} \gamma_1 + \gamma_2= (n_1+n_2+n_3)-3n_1 \underset{\textrm{\eqref{perm-n}} }{=}(\ttn_1+\ttn_2 +\ttn_3)-3n_1\underset{\textrm{\eqref{k1k2s1s2}} }{=}-s_2-3n_1.
 \end{equation}
 and similarly 
\begin{equation}\label{eq:sum_beta}
 	\beta_1+\beta_2=-s_1-3m_1
 \end{equation}
 Equations \eqref{det1} and \eqref{k1k2s1s2} imply that  $s_1$ and $s_2$ cannot be both zero. Hence, \eqref{eq:sum_gamma} and \eqref{eq:sum_beta} yield {
  \begin{equation}
  	\label{eq:ess_spec_arm}
	\begin{array}{l}
	\text{For}\; a_{11}-a_{12}=0\, \text{mod}\, 3\; (s_2=0)\\
	\text{$E=0$ is in the essential spectrum of 
 $\HTB_\sharp(\kpar)$ if and only if ${k} =0$  mod $2\pi$.}\\
 \displaystyle\text{In this case, there are two distinct } {k}_\perp \text{ arising from \eqref{h_kperp}, namely }{k}_\perp=\frac{2\pi}{3}\text{ or }\frac{4\pi}{3}.
 \end{array}
  \end{equation}
   \begin{equation}
   	\label{eq:ess_spec_zig}
   	\begin{array}{l}
   	\text{For}\; a_{11}-a_{12}=\pm 1\, \text{mod}\, 3\; (s_2=\pm 1)\\
   	\displaystyle\text{$E=0$ is in the essential spectrum of 
    $\HTB_\sharp(\kpar)$ if and only if } {\kp} =\frac{2\pi}3\text{ or }{\kp} =\frac{4\pi}3\;\text{mod}\; 2\pi.\\
    \displaystyle\text{Each of those $k$'s gives rise to a single $k_\perp$ via \eqref{h_kperp}}.
    \end{array}
   \end{equation}
  }

\end{proof}

\subsection{The wedge of the edge  }\label{woe}
An analytical theory of the dispersive edge states documented numerically in Section \ref{numerical} is work in progress.  The numerical simulations of Section \ref{numerical} show that in a neighborhood of a bifurcation point, $(\kpar,E)=(\hat{\kp},0)$, a dispersive edge state curve emanates into a locally wedge shaped region. It is of interest to understand the dependence of the wedge-opening on the edge-direction $\bv_1=a_{11}\ocirc{\bv}_1+a_{12}\ocirc{\bv}_2$. 
\begin{proposition}[Wedge of the Edge]\label{woe-prop} Assume that $(\kpar,E)=(\hat\kpar,0)$ is a crossing of bands
in the essential spectrum of $\HTB_\sharp(\kpar)$ . Hence, for the zigzag type edges ($s_2=\pm1$) $\hat\kpar\in\{2\pi/3,4\pi/3\}$ and for armchair type edges $\hat\kpar\in\{0,2\pi\}$.
For $\kappa$ small, let   $\kappa\mapsto\eta_\pm(\hat\kpar+\kappa)$ locally define  the upper and lower bounding curves of the essential spectrum of $\HTB_\sharp$ near $(\kpar,E)=(\hat\kpar,0)$. Then,  
 \begin{equation}
\quad \eta_\pm(\hat\kpar+\kappa) = \left( \pm\frac{\sqrt{3}}{2} \cdot \frac{1}{|\bv_1|} + o(1) \right) |\kappa|,\quad \textrm{as $\kappa\rightarrow 0$,}
  \label{m-wedge1}\end{equation}
  where $|\bv_1|=\left(a_{11}^2 + a_{11}a_{12}+ a_{12}^2\right)^{\frac12}$,
 see Figure \ref{fig:ess-spec_wedge}. 
\end{proposition}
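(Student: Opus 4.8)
The plan is to analyze the local behavior near $(\hat\kpar,0)$ of the bounding curves of ${\rm spec}_{\rm ess}(\HTB_\sharp)$, which by Proposition \ref{ess_spec} are the functions $\kpar\mapsto\pm\min_{\kperp}|h(\kperp,\kpar)|$, with $h(\kperp,\kpar)=1+e^{i(\beta_1\kpar+\gamma_1\kperp)}+e^{i(\beta_2\kpar+\gamma_2\kperp)}$ (after factoring out a unimodular phase, which does not affect $|h|$). By the analysis in the proof of Proposition \ref{nogap}, for $\kpar=\hat\kpar$ there is a (locally unique, in the relevant neighborhood) value $\hat\kperp$ at which $h(\hat\kperp,\hat\kpar)=0$; the three unit-modulus summands sit at the vertices of an equilateral triangle. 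First I would Taylor-expand $h$ to first order about $(\hat\kperp,\hat\kpar)$: writing $\kperp=\hat\kperp+\ell$, $\kpar=\hat\kpar+\kappa$, one gets $h=i\,a\,\kappa + i\,b\,\ell + O(|\kappa|^2+|\ell|^2)$ where $a = \beta_1 e^{i\theta_1}+\beta_2 e^{i\theta_2}$ and $b = \gamma_1 e^{i\theta_1}+\gamma_2 e^{i\theta_2}$ with $\theta_j$ the (equilateral-triangle) phases, so that $|h|^2 = |a\kappa+b\ell|^2 + O(|\kappa|^3)$ to leading order.

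Next I would minimize $|a\kappa+b\ell|$ over $\ell$ for fixed small $\kappa$. Since $b\neq 0$ (this needs checking — see below), the minimum over real $\ell$ of $|a\kappa+b\ell|$ is the distance from the point $-a\kappa$ to the real line $\{b\ell:\ell\in\R\}\subset\C$, namely $|\kappa|\cdot|a|\cdot|\sin\phi|$ where $\phi=\arg(a/b)$, i.e. $\min_\ell|a\kappa+b\ell| = |\kappa|\,\dfrac{|\,\mathrm{Im}(a\bar b)\,|}{|b|}$. Hence $\eta_\pm(\hat\kpar+\kappa) = \pm\,\dfrac{|\mathrm{Im}(a\bar b)|}{|b|}\,|\kappa| + o(|\kappa|)$, and it remains to identify the constant $\dfrac{|\mathrm{Im}(a\bar b)|}{|b|}$ with $\dfrac{\sqrt3}{2}\,|\bv_1|^{-1}$. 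This is the computational heart of the argument. One computes $\mathrm{Im}(a\bar b) = \sum_{j,k}\beta_j\gamma_k\,\mathrm{Im}(e^{i(\theta_j-\theta_k)})$; using $\theta_1-\theta_2=\pm2\pi/3$ (the equilateral condition) this collapses to $\pm\tfrac{\sqrt3}{2}(\beta_1\gamma_2-\beta_2\gamma_1) = \pm\tfrac{\sqrt3}{2}\det[\beta\ \gamma]=\pm\tfrac{\sqrt3}{2}$, by the determinant identity cited after \eqref{h_kperp}. Similarly $|b|^2 = \gamma_1^2+\gamma_2^2 + 2\gamma_1\gamma_2\cos(\theta_1-\theta_2) = \gamma_1^2 + \gamma_2^2 - \gamma_1\gamma_2$; and from \eqref{bg-def}, $\gamma_1=n_2-n_1$, $\gamma_2=n_3-n_1$, so $\gamma_1^2+\gamma_2^2-\gamma_1\gamma_2 = \big((n_1-n_2)^2+(n_2-n_3)^2+(n_3-n_1)^2\big)/2$, which I would identify via \eqref{ttn-ttm} (the explicit formulas for $\ttn_\nu$ in terms of $a_{ij}$, $k_2$) with $a_{11}^2+a_{11}a_{12}+a_{12}^2 = |\bv_1|^2$. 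Putting these together gives the constant $\sqrt3/(2|\bv_1|)$.

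The main obstacle I anticipate is twofold. First, one must verify the \emph{non-degeneracy} $b\neq0$, i.e. $\gamma_1^2+\gamma_2^2-\gamma_1\gamma_2\neq0$; since this quantity equals $|\bv_1|^2>0$ (by the identification above) it is automatic, but the argument is slightly circular unless one checks directly that $(\gamma_1,\gamma_2)\neq(0,0)$, which follows from $\ttn_1,\ttn_2,\ttn_3$ being distinct (the standing convention, excluding classical zigzag). Second, and more delicately, one must justify that the constrained minimum of $|h(\kperp,\kpar)|$ over \emph{all} $\kperp\in[0,2\pi]$ — not merely over $\kperp$ near $\hat\kperp$ — is attained near $\hat\kperp$ for $\kappa$ small, and that there is no competing branch (a different zero $\hat\kperp'$ of $h(\cdot,\hat\kpar)$) producing a smaller value; this is where the "locally" in the statement does its work, and it should follow from Proposition \ref{nogap}'s determination that $E=0$ lies in the essential spectrum only at the isolated points $\hat\kpar\in\{0,2\pi\}$ or $\{2\pi/3,4\pi/3\}$, together with a compactness/continuity argument showing $|h|$ is bounded below away from $\hat\kperp$. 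The remaining error-term bookkeeping (that the $O(|\kappa|^2)$, $O(|\ell|^2)$ corrections only perturb the leading coefficient by $o(1)$) is routine once the minimization is set up, since the minimizing $\ell=\ell(\kappa)$ is itself $O(|\kappa|)$.
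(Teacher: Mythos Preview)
Your proposal is correct and follows essentially the same route as the paper's proof in Appendix~\ref{app:woe-prop}: Taylor-expand near $(\hat\kpar,\hat k_\perp)$, minimize the resulting quadratic over $k_\perp$, and identify the constant using $\det[\beta\ \gamma]=\pm1$ together with the identity $(n_2-n_1)^2+(n_3-n_1)^2+(n_2-n_3)^2=2(a_{11}^2+a_{11}a_{12}+a_{12}^2)$. The only cosmetic difference is that you expand $h$ itself complex-linearly and package the minimum as $|\mathrm{Im}(a\bar b)|/|b|$, whereas the paper expands $|h|^2$ as a real quadratic form $\tfrac12\big[\sum_j(\beta_j\kappa+\gamma_j\kappa_\perp)^2+(\Delta\beta\,\kappa+\Delta\gamma\,\kappa_\perp)^2\big]$ and minimizes directly; these yield the same leading coefficient.
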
 

We prove Proposition \ref{woe-prop} in Appendix \ref{app:woe-prop}.
 \begin{figure}[htbp]
\centering
\includegraphics[width=.8\textwidth]{./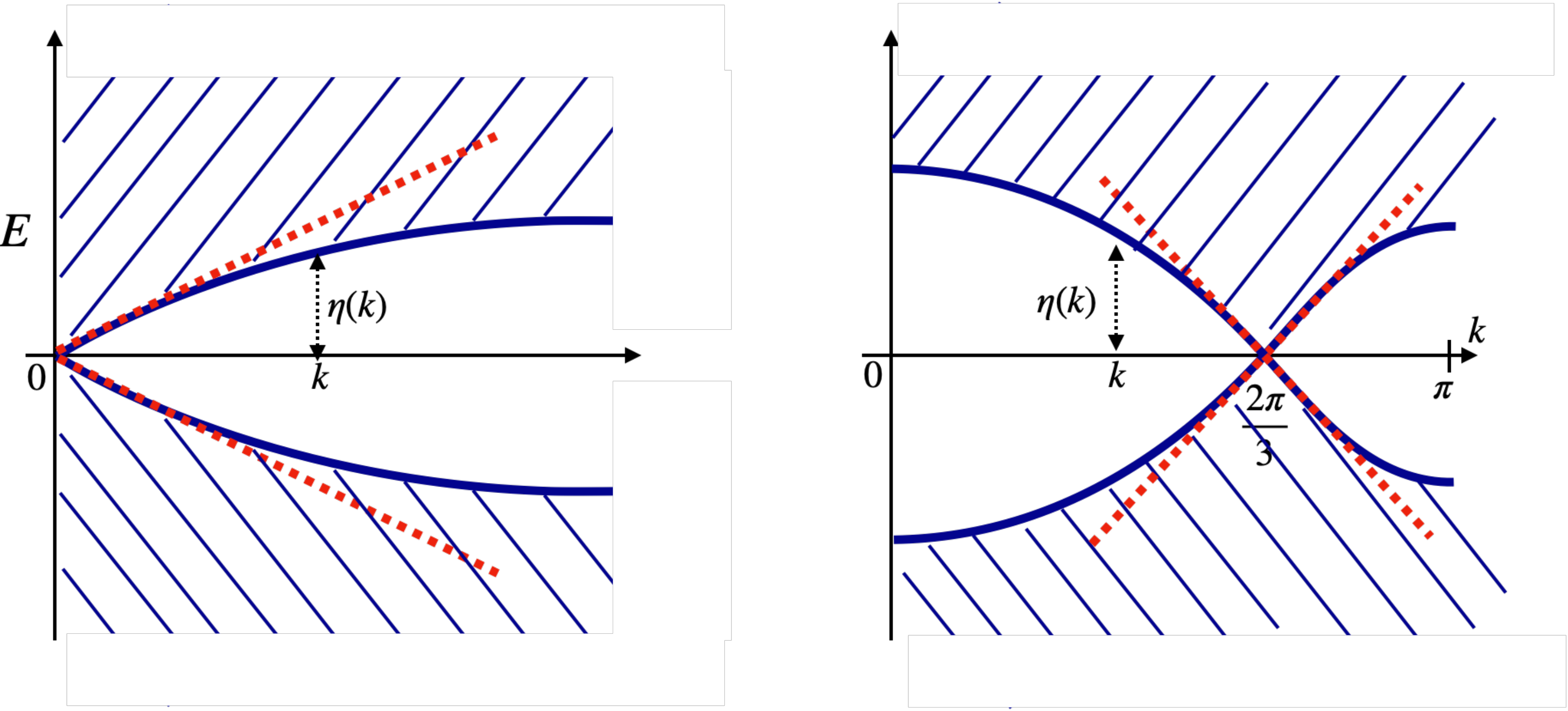}
\caption{Schematic for Edge of the Wedge Proposition \ref{woe-prop}. Blue regions are essential spectrum and red dotted lines have slopes of magnitude $\alpha=  \frac{\sqrt{3}}{2}|\bv_1|^{-1}=\frac{\sqrt{3}}{2} \left(a_{11}^2 + a_{11}a_{12}+ a_{12}^2\right)^{-1/2}$: 
armchair type edge (left) and zigzag type edge (right).}
\label{fig:ess-spec_wedge}
\end{figure}
\section{Zero energy / flat band edge states}\label{sec:0energy}

In this section we present a complete classification of zero energy edge states  ($E=0$) according to edge-type
 and range of parallel quasimomentum $\kp$.  {The main results of this section are Theorems \ref{th:zigzag}
  and \ref{th:armchair}, which are shown to imply Theorem \ref{main-intro}, stated in the introduction.}
 
From \eqref{es-evp} we note that the wave function amplitudes at the $A-$sites and $B-$sites are completely decoupled in the case of zero energy edge states. This leads to the decoupled eigenvalue problems: \medskip

\noindent{\bf $B-$site edge state eigenvalue problem:} Find $\psi^B\in l^2(\Z;\C^2)$ such that
\begin{subequations}
\begin{align}
 \sum_{\nu=1}^3 e^{i m_\nu k} \psi^B(n+n_\nu) &=0 \quad {\rm for}\quad n\ge\nA\label{psiAevp-0}\\ 
 \psi^B(n) &= 0 \quad {\rm for}\quad n<\nB %
 \end{align}
 \label{Bsite-states}
 \end{subequations}
 and the 
 
 \noindent{\bf $A-$site edge state eigenvalue problem:} Find $\psi^A\in l^2(\Z;\C^2)$ such that
 \begin{subequations}
 \begin{align}
\sum_{\nu=1}^3 e^{-i m_\nu k} \psi^A(n-n_\nu) &=0 \quad {\rm for}\quad n\ge\nB\label{psiBevp-0}\\ 
\psi^A(n) &= 0 \quad {\rm for}\quad n<\nA \ .%
\end{align}
 \label{Asite-states}
\end{subequations}

 We shall see, for zigzag edges, there are subintervals, $I$, of $\kp\in[0,2\pi]$ over which the zero energy edge state eigenvalue problem \eqref{Bsite-states} or \eqref{Asite-states} has non-trivial solutions. Since the dispersion curves $\kp\in I\mapsto E(\kp)\equiv0$ are constant, these intervals of zero energy eigenstates are called {\it flat bands} or {\it zero energy flat bands}. 

{Observe that} non-zero solutions of \eqref{Bsite-states}
 give rise to zero-energy edge states that live only at the $B-$sites ({\it i.e.} $\psi^A(n)=0$ for all $n$), while non-zero solutions of \eqref{Asite-states}
 give rise to zero-energy edge states that live only at the $A-$sites ({\it i.e.} $\psi^B(n)=0$ for all $n$).


{Any solution of \eqref{psiAevp-0} is connected to solutions $\zeta$ of the {\it indicial equation} 
\begin{equation}
P_+(\zeta,\kp) = 0;
\label{B-indicial}
\end{equation}
where $P_+$ is defined in \eqref{P+-}.}
Recalling the ordering $n_1<n_2<n_3$ we rewrite \eqref{B-indicial} as the polynomial equation:
\begin{equation}
p_+(\zeta,\kp)=0 \quad\text{where}\;p_+(\zeta,\kp):= 1 + e^{i(m_2-m_1)\kp} \zeta^{n_2-n_1} + e^{i(m_3-m_1)\kp} \zeta^{n_3-n_1} .
\label{B-poly} 
\end{equation}
{Similarly, any solution of  \eqref{psiBevp-0} is connected to solutions $\zeta$ of the indicial equation
  \begin{equation}
P_-(\zeta,\kp) = 0;
\label{A-indicial}
\end{equation}
where $P_-$ is defined in \eqref{P+-}, and the roots of the polynomial equation
\begin{equation}
p_-(\zeta,\kp)=0\quad\text{where}\;  p_-(\zeta,\kp):= 1 + e^{i(m_3-m_2)\kp} \zeta^{n_3-n_2} + e^{i(m_3-m_1)\kp} \zeta^{n_3-n_1} .
\label{A-poly} 
\end{equation}}
The polynomials $p_\pm(\zeta;\kp)$ both have degree $n_3-n_1$.
We next review the connection between the polynomial equations \eqref{B-poly}, \eqref{A-poly} and  the discrete boundary value problems  \eqref{Bsite-states}, \eqref{Asite-states}.
We focus initially on the properties of \eqref{B-poly} as a simple transformation enables us to map conclusions about
\eqref{B-poly} to conclusions about   \eqref{A-poly}.\medskip

By Proposition \ref{nogap}, $0$ is in the essential spectrum of $\HTB_\sharp(\kpar)$ for $k=0$ or $2\pi$ for the armchair-type edges and $k=2\pi/3$ or $4\pi/3$ for zigzag type edges. By Proposition \ref{ess_spec}, this means that, except for these values, equations \eqref{B-poly} and \eqref{A-poly} have no roots (simple or multiple) on the unit circle. 
{Moreover, in Lemma \ref{lem:simple_roots} below we shall prove that any roots of \eqref{B-poly} and \eqref{A-poly} are simple.}

\subsection{Solving for edge states that live on $B-$ sites}\label{sec:Bsite-es}
%
Introduce $\pcount$, the number of roots of \eqref{B-poly} located in the open unit disc in $\C$:
\begin{equation}\label{eq:number_roots}
 \pcount = \#\{ |\zeta|<1 : p_+(\zeta,\kp)=0 \} \quad \textrm{for $\kpar\in[0,2\pi]$}.\end{equation}
Let $\zeta_1,\dots,\zeta_\pcount$ denote the $\pcount$ roots  inside the unit circle, and let $\zeta_{\pcount+1},\dots, \zeta_{n_3-n_1}$ denote the roots that lie outside the unit circle. Then, since the roots are all distinct (see Lemma \ref{lem:simple_roots}), the general solution of the equation \eqref{psiAevp-0}, which can be rewritten equivalently as
\begin{equation}
\sum_{\nu=1}^3 e^{i m_\nu k} \psi^B(n+n_\nu-n_1) = 0 \quad {\rm for}\quad n\ge\nA+n_1\ ,
\label{psiB0-eqn}
\end{equation}
is of the form
\begin{subequations}
\begin{align}
& \psi^B(n) = \sum_{j=1}^{n_3-n_1}A_j\zeta_j^n\quad {\rm for}\quad n\ge \nA+n_1.
 \label{psiBlinco}\\
&\textrm{\quad with $\psi^B(n)$ arbitrary for $n<\nA+n_1$ .}
\end{align}
\label{psi-gen}
\end{subequations}
Here, $A_1,\dots, A_{n_3-n_1}$ are arbitrary coefficients. Note that for $n\gg\nB$, where the terms of \eqref{psiB0-eqn} only sample
 sites within the bulk, we have the expression for $\psi^B(n)$ in \eqref{psiBlinco}.  Since $n_1<n_2<n_3$, equation \eqref{psiB0-eqn} expresses $\psi^B(n)$  in terms of its values at sites with indices larger than $n$. Therefore, we can  use \eqref{psiB0-eqn}  to recur
  down from the bulk to obtain \eqref{psiB0-eqn}  over the full range $n\ge\nA+n_1$.
 %
%
 Since $|\zeta_j|<1$ for $j=1,\dots, \pcount$ and $|\zeta_j|>1$ for $j>\pcount$, it follows that the $l^2$ solutions of \eqref{psiB0-eqn} are given by
\begin{subequations}
\begin{align}
& \psi^B(n) = \sum_{j=1}^\pcount A_j\zeta_j^n\quad {\rm for}\quad n\ge \nA+n_1,\\
&\textrm{\quad $\left(\psi^B(n)\right)_{n<\nA+n_1}$ is an arbitrary vector in $l^2$.}
\end{align}
\label{psi-gen0}
\end{subequations}Note that since
 the $(n_3-n_1)$ degree polynomial in \eqref{B-poly} depends on $n_\nu$ and $k$, the number $\pcount$ depends also on these quantitites. 
In Appendix \ref{a-few-ineq} we check that $\nA+n_1\le \nB$. We conclude that the $l^2$ solutions of equations
 \eqref{Bsite-states} are precisely the vectors
 \begin{equation} 
 \psi^B(n) =
  \begin{cases} 
  \sum_{j=1}^\pcount A_j\zeta_j^n & {\rm for}\quad n\ge\nA+n_1\\
0 &  {\rm for}\quad  n<\nA+n_1
\end{cases}\ ,
\label{psiB-form}
\end{equation}
such that $\psi^B(n)=0$ for $n<\nB$. Hence, the coefficients $A_1,\dots, A_\pcount$ are subject to the constraints
\begin{equation}
 \sum_{j=1}^\pcount A_j\zeta_j^n = 0 \quad {\rm for}\quad \nA+n_1\le n<\nB. 
\label{0outside-B} \end{equation}
The existence of $B-$site edge states for a given value of $\kpar$, then reduces to a comparison of the number of free constants, $\pcount$,  and the number of independent equations in the linear homogeneous 
algebraic system \eqref{0outside-B}.

Our task is now to determine the number, $\pcount$,  of roots of \eqref{B-poly} in the open unit disk which varies:
   (i) with the type of edge   and (ii) as a function of the parallel quasi-momentum, $\kp$.
The following proposition, proved in Section \ref{sec:HEPs}, is the key. \\
 Recall that $s_2=a_{11}-a_{12}\ {\rm mod}\ 3$,  $s_2\in\{-1,0,1\}$.
  \begin{proposition}\label{prop:p-value}  Let $\pcount$ be given by \eqref{eq:number_roots}. Suppose that $\kpar\notin\{0,2\pi\}$ if $s_2=0$ and that $\kpar\notin\{2\pi/3,4\pi/3\}$ if $s_2=\pm1$. Then,
  \begin{equation}
 \pcount = -n_1 - s_2 \mathds{1}_{\kp\in(\frac{2\pi}3,\frac{4\pi}3)}\label{p-value}
  \end{equation} 
  \end{proposition}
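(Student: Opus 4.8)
The plan is to count the roots of $p_+(\zeta,\kp)$ inside the open unit disc by a continuity/degree argument in $\kp$, using the known location of roots at the special values of $\kp$ as anchor points. First I would recall that $p_+(\zeta,\kp) = 1 + e^{i(m_2-m_1)\kp}\zeta^{n_2-n_1} + e^{i(m_3-m_1)\kp}\zeta^{n_3-n_1}$ has degree $n_3-n_1$ and nonzero constant term, so it always has exactly $n_3-n_1$ roots in $\C^*$, none at $\zeta=0$. By Proposition \ref{ess_spec} and Proposition \ref{nogap}, for $\kp$ outside the excluded set the polynomial has no roots on $|\zeta|=1$; combined with Lemma \ref{lem:simple_roots} (all roots simple), the integer-valued function $\kp \mapsto \pcount$ is locally constant on each connected component of $[0,2\pi]\setminus(\text{excluded set})$. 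For $s_2=\pm1$ the excluded set is $\{2\pi/3,4\pi/3\}$, so there are two such components, $(2\pi/3,4\pi/3)$ and its complement; for $s_2=0$ the excluded set is $\{0,2\pi\}\equiv\{0\}$ on the circle, so there is a single component $(0,2\pi)$. Hence it suffices to compute $\pcount$ at one representative $\kp$ in each component.

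Next I would evaluate $\pcount$ at convenient base points. A natural choice is $\kp=\pi$ (which lies in the ``outer'' component for the zigzag case and in the lone component for the armchair case) and, for the zigzag case, some $\kp_0\in(2\pi/3,4\pi/3)$, e.g. $\kp=\pi$ again is inside $(2\pi/3,4\pi/3)$ — wait, $\pi\in(2\pi/3,4\pi/3)$, so for zigzag I instead want a point in the complement, e.g. $\kp$ near $0$. At such base points I would use the structure of $p_+$: the idea is that as one scales or specializes $\kp$, the three monomials $1$, $e^{i(m_2-m_1)\kp}\zeta^{n_2-n_1}$, $e^{i(m_3-m_1)\kp}\zeta^{n_3-n_1}$ have a clear dominance ordering, and a Rouché-type or Newton-polygon argument pins down how many roots sit inside $|\zeta|<1$. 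Concretely, writing $d_1 = n_2-n_1>0$ and $d_2 = n_3-n_1>0$, one expects roughly $-n_1$ roots inside the disc because $n_1<0<n_3$ (equation \eqref{eq:n_nu_ineq}) forces a specific split of the Newton polygon; the correction term $-s_2\mathds{1}_{\kp\in(2\pi/3,4\pi/3)}$ accounts for exactly one root crossing the unit circle as $\kp$ passes through $2\pi/3$ or $4\pi/3$ when $s_2=\pm1$. To make the crossing count precise, I would track, as $\kp$ increases through a boundary value $\hat\kp$, how many unit-modulus roots appear at that value and in which direction they cross — using the explicit description from the proof of Proposition \ref{nogap} (equations \eqref{h_kpar}–\eqref{h_kperp}), which identifies for each $\hat\kp$ the value(s) of $\kperp$ with $h=0$, equivalently the unit-circle root(s) $\zeta = e^{i\kperp}$ of $p_+$.

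The two facts I would need to nail down carefully are: (i) the base-point count $\pcount = -n_1$ at a generic $\kp$ (say $\kp=\pi$ for armchair, $\kp$ near $0^+$ for zigzag), which I expect follows from a Rouché argument after noting $\gcd$ conditions on $d_1,d_2$ and the arithmetic relations \eqref{ttn-ttm} among $m_\nu,n_\nu$; and (ii) the net change of $\pcount$ by exactly $\mp s_2$ (i.e. by one root, with the correct sign) as $\kp$ crosses $2\pi/3$ (and the opposite recrossing at $4\pi/3$, or consistency around the circle). For (ii) I would argue that at $\hat\kp\in\{2\pi/3,4\pi/3\}$ with $s_2=\pm1$, $p_+$ acquires exactly one simple root on $|\zeta|=1$ (the single $\kperp$ from \eqref{h_kperp} in the zigzag case), and that the sign of $\tfrac{d}{d\kp}|\zeta(\kp)|$ at the crossing — computable by implicit differentiation of $p_+(\zeta,\kp)=0$ — has the sign determined by $s_2$; the through-the-circle direction gives $\pcount$ on $(2\pi/3,4\pi/3)$ equal to $-n_1 - s_2$. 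The delicate accounting is checking that a single root crosses (not zero, not two), which is where the arithmetic structure — in particular that $\gamma_1+\gamma_2 = -s_2-3n_1$ from \eqref{eq:sum_gamma} and the distinctness of the $n_\nu$ — is essential, and this bookkeeping is the main obstacle.

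Finally I would assemble: on $(0,2\pi)$ in the armchair case $\pcount$ is constant and equals its value at $\kp=\pi$, namely $-n_1$ (and $\mathds{1}_{\kp\in(2\pi/3,4\pi/3)}$ is multiplied by $s_2=0$, so the formula reads $\pcount=-n_1$); on $[0,2\pi]\setminus[2\pi/3,4\pi/3]$ in the zigzag case $\pcount=-n_1$, while on $(2\pi/3,4\pi/3)$ it jumps by $-s_2$ to $\pcount=-n_1-s_2$; these are precisely \eqref{p-value}. The consistency check that going around the full circle returns $\pcount$ to its starting value (one root crosses outward at $2\pi/3$, one crosses back inward at $4\pi/3$, or vice versa depending on $\sgn s_2$) provides a useful sanity constraint on the sign computation in step (ii).
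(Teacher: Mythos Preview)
Your continuity-and-crossing framework is a reasonable alternative to the paper's method, and step (ii) --- tracking the single root crossing $|\zeta|=1$ as $\kp$ passes through $2\pi/3$ or $4\pi/3$ in the zigzag case --- is valid (the paper itself sketches exactly this Taylor-expansion argument in the remark containing \eqref{eq:condition_edgestate}, noting it establishes the jumps but not the absolute value of $\pcount$).

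The genuine gap is step (i), the base-point count $\pcount=-n_1$. You write that you ``expect'' this follows from a Rouch\'e argument, but no such argument is evident: at $\kp=0$ (zigzag) the polynomial is $1+\zeta^{d_1}+\zeta^{d_2}$ with all three monomials of modulus $1$ on $|\zeta|=1$, so there is no dominant term to compare against; Newton-polygon reasoning is also unhelpful since all coefficients have modulus $1$. The paper's proof (Section \ref{sec:HEPs}) gets this count by an entirely different and more elaborate route: it parametrizes roots in polar form $\zeta=\rho^{1/\gamma_2}e^{i\theta}$, uses the triangle/arc-cosine description of solutions to $1+\rho_1e^{i\phi_1}+\rho_2e^{i\phi_2}=0$ (Proposition \ref{hep-X-solve}), and then proves a Monotonicity Lemma (Lemma \ref{mono-lem}) showing that a certain function $M(\rho)$ is strictly decreasing on $[\rho_{\rm critical},1]$. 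This converts the root count into counting integer pairs $(\hat\sigma,\ell)$ for which $(\det[\beta\ \gamma]\hat\sigma\kp+2\pi\ell)/\gamma_2$ lands in an explicit interval, yielding the floor-function formula \eqref{p-value2}, from which \eqref{p-value4} and then \eqref{p-value} follow by the arithmetic identity $\gamma_1+\gamma_2=-s_2-3n_1$.

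In short: your crossing analysis only determines $\pcount$ up to an additive constant on each component, and pinning down that constant is precisely the hard part --- it is what the Monotonicity Lemma is for. Without an independent computation of the base value, your argument does not close.
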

 From Proposition \ref{prop:p-value} we deduce:
\begin{proposition}[Conditions for $B-$site edge states]\label{Bstate-conds1}
 Let $\kpar$ be as in Proposition \ref{prop:p-value}. 
 Consider the eigenvalue problem  \eqref{Bsite-states} governing edge states which are supported on the 
 $B-$ sites of $\HH_\sharp$. Then,
 \begin{align}
& \textrm{If $s_2\mathds{1}_{\kp\in(\frac{2\pi}3,\frac{4\pi}3)}\ge \nA-\nB $, then there are no nonzero $l^2$ solutions of \eqref{Bsite-states}.} 
 \label{no-Bstates1} \\
\nonumber\end{align} 
\begin{align}
& \textrm{If $s_2\mathds{1}_{\kp\in(\frac{2\pi}3,\frac{4\pi}3)}< \nA-\nB $, then the space of $l^2$ solutions of \eqref{Bsite-states},}\nonumber \\
 &\textrm{has dimension {$(\nA-\nB)-s_2\mathds{1}_{\kp\in(\frac{2\pi}3,\frac{4\pi}3)} $} 
 }.
\label{yes-Bstates1}
\end{align}
\end{proposition}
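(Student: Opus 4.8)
The plan is to convert Proposition \ref{Bstate-conds1} into a rank computation for a Vandermonde matrix and then feed in the root count supplied by Proposition \ref{prop:p-value}.

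First I would recall the normal form for the $l^2$ solutions established just above the proposition: for $\kpar$ as in Proposition \ref{prop:p-value}, a vector $\psi^B$ solves \eqref{Bsite-states} if and only if it has the form \eqref{psiB-form} for some $(A_1,\dots,A_{\pcount})\in\C^{\pcount}$, where $\zeta_1,\dots,\zeta_{\pcount}$ are the roots of $p_+(\cdot,\kpar)$ in the open unit disc, subject to the linear constraints \eqref{0outside-B}. The assignment $(A_1,\dots,A_{\pcount})\mapsto\psi^B$ given by \eqref{psiB-form} is injective: the $\zeta_j$ are pairwise distinct by Lemma \ref{lem:simple_roots} and nonzero since $p_+(0,\kpar)=1\ne0$, hence the sequences $n\mapsto\zeta_j^{\,n}$ are linearly independent, so $\psi^B\equiv0$ forces all $A_j=0$. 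Therefore the space of $l^2$ solutions of \eqref{Bsite-states} is linearly isomorphic to the kernel of the matrix $M=(\zeta_j^{\,n})$ with column index $j\in\{1,\dots,\pcount\}$ and row index $n$ ranging over the integers with $\nA+n_1\le n<\nB$. By Appendix \ref{a-few-ineq} we have $\nA+n_1\le\nB$, so the number of rows is $N:=\nB-\nA-n_1\ge0$ (the degenerate case $N=0$, where there are no constraints, is allowed and is handled uniformly below).

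Second I would compute $\mathrm{rank}\,M$. Dividing the $j$-th column of $M$ by $\zeta_j^{\,\nA+n_1}\ne0$ turns $M$ into a genuine Vandermonde matrix in the distinct nodes $\zeta_1,\dots,\zeta_{\pcount}$ with consecutive exponents $0,1,\dots,N-1$; such a matrix has maximal rank, so $\mathrm{rank}\,M=\min(\pcount,N)$. By the rank--nullity theorem, the dimension of the solution space of \eqref{Bsite-states} equals $\pcount-\min(\pcount,N)=\max(0,\pcount-N)$.

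Third, I would substitute $\pcount=-n_1-s_2\mathds{1}_{\kpar\in(2\pi/3,4\pi/3)}$ from Proposition \ref{prop:p-value}, which yields $\pcount-N=(\nA-\nB)-s_2\mathds{1}_{\kpar\in(2\pi/3,4\pi/3)}$. If $s_2\mathds{1}_{\kpar\in(2\pi/3,4\pi/3)}\ge\nA-\nB$, then $\pcount\le N$, so the solution space is $\{0\}$, giving \eqref{no-Bstates1}. If $s_2\mathds{1}_{\kpar\in(2\pi/3,4\pi/3)}<\nA-\nB$, then $\pcount>N$ and the solution space has dimension $\pcount-N=(\nA-\nB)-s_2\mathds{1}_{\kpar\in(2\pi/3,4\pi/3)}$, giving \eqref{yes-Bstates1}. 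This completes the argument. The substance of the statement is really carried by Proposition \ref{prop:p-value}; within the present proof the only point requiring care is the maximality of $\mathrm{rank}\,M$, for which distinctness of the roots (Lemma \ref{lem:simple_roots}) together with their non-vanishing ($p_+(0,\kpar)=1$) are exactly what is needed, so I do not anticipate a genuine obstacle.
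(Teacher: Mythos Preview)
Your proposal is correct and follows essentially the same approach as the paper: reduce to the kernel of a Vandermonde system via \eqref{psiB-form}--\eqref{0outside-B}, invoke maximality of the Vandermonde rank (distinct nonzero roots), and substitute the root count from Proposition \ref{prop:p-value}. You are simply more explicit than the paper about the injectivity of $(A_j)\mapsto\psi^B$ and about why the roots are nonzero ($p_+(0,\kpar)=1$), points the paper leaves implicit.
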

\begin{proof}[Proof of Proposition \ref{Bstate-conds1}]
The $l^2$ solutions of equations
 \eqref{psiAevp-0} are given by \eqref{psiB-form} where the coefficients $A_1,\dots, A_\pcount$ are subject to the constraints \eqref{0outside-B}.
Equations \eqref{0outside-B} is a system of $\nB-\nA-n_1$ equations in the $\pcount$ unknowns $A_1,\dots,A_\pcount$.
 Thanks to the non-vanishing of the relevant Vandermonde determinants, we deduce that
if $\pcount\le \nB-\nA-n_1$ which is equivalent to $s_2\mathds{1}_{\kp\in(\frac{2\pi}3,\frac{4\pi}3)}\ge \nA-\nB $ by Proposition \ref{prop:p-value}, then there are no nonzero $l^2$ solutions of \eqref{Bsite-states}
and
if $\pcount> \nB-\nA-n_1$ which is equivalent to $s_2\mathds{1}_{\kp\in(\frac{2\pi}3,\frac{4\pi}3)}< \nA-\nB $ by Proposition \ref{prop:p-value}, then the space of $l^2$ solutions of \eqref{Bsite-states},
 has dimension $\pcount-(\nB-\nA-n_1)=(\nA-\nB)-s_2\mathds{1}_{\kp\in(\frac{2\pi}3,\frac{4\pi}3)}$. 
\end{proof}
\subsection{Solving for edge states which live on $A-$sites}\label{sec:Asite-es}
A completely analogous discussion applies to the zero energy edge state eigenvalue problem for states which live
on the $A-$sites, \eqref{Asite-states}, and the associated polynomial \eqref{A-poly}.  Let
\begin{equation}\label{eq:number_roots_A}
\qcount= \#\{|\zeta|<1 : p_-(\zeta,\kp)=0\}\quad \textrm{for $\kpar\in[0,2\pi]$,} 
\end{equation}
and denote by $\zeta_1,\dots,\zeta_{_{\qcount}}$ the solutions of $p_-(\zeta,\kp)=0$ that are inside the open unit disc. 
Then, the $l^2$ solutions of equations
 \eqref{psiBevp-0} are precisely the vectors
 \begin{equation} 
 \psi^A(n) =
  \begin{cases} 
  \sum_{j=1}^\qcount A_j\zeta_j^n & {\rm for}\quad n\ge\nB-n_3\\
0 &  {\rm for}\quad  n<\nB-n_3
\end{cases}\ ,
\label{psiA-form}
\end{equation}
where the coefficients $A_1,\dots, A_\qcount$ are subject to the constraints
\begin{equation}
 \sum_{j=1}^\qcount  A_j\zeta_j^n = 0 \quad {\rm for}\quad \nB-n_3\le n<\nA, 
\label{0outside-A} \end{equation}
In Appendix \ref{a-few-ineq} we check that $\nB-n_3\le\nA$.

The number, $\qcount$, of roots of \eqref{A-poly} in the open unit disk, can be computed from $\pcount$, which was evaluated in Proposition \ref{prop:p-value}. Recall that $s_2=a_{11}-a_{12}\,\text{mod}\,3$,  $s_2\in\{-1,0,1\}$.
 \begin{proposition}\label{prop:q-value}
 Let $\qcount$ be given by \eqref{eq:number_roots_A}.	 Suppose that $\kpar\notin\{0,2\pi\}$ if $s_2=0$ and that $\kpar\notin\{2\pi/3,4\pi/3\}$ if $s_2=\pm1$. Then,
  \begin{equation}
  \qcount = n_3 + s_2 \mathds{1}_{\kp\in(\frac{2\pi}3,\frac{4\pi}3)},\quad k\in[0,2\pi].
 \label{q-value} \end{equation} 
  \end{proposition}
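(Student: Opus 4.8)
The plan is to deduce Proposition~\ref{prop:q-value} directly from Proposition~\ref{prop:p-value} by exploiting the reflection symmetry that ties $p_-(\zeta,\kp)$ to $p_+(\zeta,\kp)$. First I would record the two factorizations connecting the indicial functions $P_\pm$ to the polynomials $p_\pm$: using the ordering $n_1<n_2<n_3$ one checks by direct expansion that
\[
P_+(\zeta,\kp)=e^{i\kp m_1}\,\zeta^{n_1}\,p_+(\zeta,\kp),\qquad
P_-(\zeta,\kp)=e^{-i\kp m_3}\,\zeta^{-n_3}\,p_-(\zeta,\kp).
\]
Since the prefactors $\zeta^{n_1}$ and $\zeta^{-n_3}$ have neither zeros nor poles away from the origin, the zero sets of $P_\pm$ and of $p_\pm$ in $\C\setminus\{0\}$ coincide, with multiplicities.

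Next I would invoke the identity $P_-(\zeta,\kp)=\overline{P_+(1/\overline\zeta,\kp)}$ already noted after \eqref{P+-} (valid because $\kp$ is real and the $m_j,n_j$ are integers). Combined with the factorizations above, this yields, for every $\zeta\in\C\setminus\{0\}$,
\[
p_-(\zeta,\kp)=0\quad\Longleftrightarrow\quad p_+\!\left(1/\overline\zeta,\kp\right)=0,
\]
and this correspondence preserves multiplicities. Consequently the involution $\zeta\mapsto 1/\overline\zeta$ of $\C\setminus\{0\}$ maps the root set of $p_+(\cdot,\kp)$ bijectively onto that of $p_-(\cdot,\kp)$, interchanging the open unit disc $\{|\zeta|<1\}$ with its exterior $\{|\zeta|>1\}$ and fixing the unit circle.

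The count is then immediate. Both $p_+(\cdot,\kp)$ and $p_-(\cdot,\kp)$ have degree exactly $n_3-n_1$ (leading coefficient $e^{i(m_3-m_1)\kp}\neq 0$) and nonvanishing constant term $1$, so each has exactly $n_3-n_1$ roots in $\C\setminus\{0\}$, counted with multiplicity; moreover, for $\kp$ as in the hypothesis (i.e.\ excluding the band-crossing values of Proposition~\ref{nogap}), none of these roots lies on $|\zeta|=1$ by Proposition~\ref{ess_spec}, and in fact they are simple by Lemma~\ref{lem:simple_roots}. The bijection therefore gives $\qcount=(n_3-n_1)-\pcount$, and substituting $\pcount=-n_1-s_2\mathds{1}_{\kp\in(\frac{2\pi}3,\frac{4\pi}3)}$ from Proposition~\ref{prop:p-value} yields
\[
\qcount=(n_3-n_1)-\bigl(-n_1-s_2\mathds{1}_{\kp\in(\frac{2\pi}3,\frac{4\pi}3)}\bigr)=n_3+s_2\mathds{1}_{\kp\in(\frac{2\pi}3,\frac{4\pi}3)},
\]
which is exactly \eqref{q-value}. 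I do not expect a genuine obstacle here, since all the substance is already contained in Proposition~\ref{prop:p-value}; the only points requiring care are verifying the precise factorizations of $P_\pm$ in terms of $p_\pm$ and checking that $p_-$ indeed has full degree $n_3-n_1$ with nonzero constant term, so that its $n_3-n_1$ roots match up, one for one, with those of $p_+$ under $\zeta\mapsto 1/\overline\zeta$.
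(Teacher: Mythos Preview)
Your proof is correct and follows essentially the same route as the paper: use the symmetry $P_-(\zeta,\kp)=\overline{P_+(1/\overline\zeta,\kp)}$ to deduce that the roots of $p_-$ inside the disc correspond to roots of $p_+$ outside, then apply $\qcount=(n_3-n_1)-\pcount$ together with Proposition~\ref{prop:p-value}. You are simply more explicit than the paper about the factorizations relating $P_\pm$ to $p_\pm$ and about why $p_-$ has exactly $n_3-n_1$ nonzero roots, which is fine.
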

 \begin{proof}[Proof of Proposition \ref{prop:q-value}]
Suppose $k\notin\{0,2\pi\}$, for the case $s_2=0$, and $k\notin\{2\pi/3,4\pi/3\}$, for the case  $s_2=\pm1$. Then,  $P_+(\zeta,\kp)$ and $P_-(\zeta,\kp)$ have no zeros
  on the unit circle. By the relation $P_-(\zeta,\kp)=\overline{P_+(\bar\zeta^{-1},\kp)}$ and Proposition \ref{prop:p-value},  the number of zeros of 
 $P_-(\zeta,\kp)$ (and therefore of  $p_-(\zeta,\kp)$) outside the closed unit disc is equal to $\pcount = -n_1 - s_2 \mathds{1}_{\kp\in(\frac{2\pi}3,\frac{4\pi}3)}$. Since the polynomial $p_-(\zeta,\kp)$ has $n_3-n_1$ complex roots, 
  we have that the number of roots of $p_-(\zeta,\kp)$ inside the open unit disc is $\qcount=(n_3-n_1)-\pcount 
=n_3+s_2 \mathds{1}_{\kp\in(\frac{2\pi}3,\frac{4\pi}3)}$.
 \end{proof}

Equations \eqref{0outside-A} form a system of $\nA-\nB+n_3$ equations for the $\qcount$ unknowns $A_1,\dots, A_\qcount$. We conclude as in Proposition \ref{Bstate-conds1} that 
if $\qcount\le  \nA-\nB+n_3$, then there are no non-zero $l^2$ solutions of \eqref{Asite-states}, and if 
 $\qcount>  \nA-\nB+n_3$, then the space of $l^2$ solutions of \eqref{Asite-states} is $\qcount-(\nA-\nB+n_3)$ dimensional, which, by Proposition \ref{prop:q-value}, is equivalent to the following:  
 \begin{proposition}[Conditions for $A-$ site edge states]\label{Astate-conds1}
 Let $\kpar$ be as in Proposition \ref{prop:q-value}.
 Consider the eigenvalue problem  \eqref{Asite-states} governing edge states which are supported on the 
 $A-$ sites of $\HH_\sharp$. Then,
 \begin{align}
& \textrm{If $s_2\mathds{1}_{\kp\in(\frac{2\pi}3,\frac{4\pi}3)}\le \nA-\nB$, then there are no nonzero $l^2$ solutions of \eqref{Asite-states}.} 
 \label{no-Astates1} \\
\nonumber\end{align} 
\begin{align}
& \textrm{ If $s_2\mathds{1}_{\kp\in(\frac{2\pi}3,\frac{4\pi}3)}> \nA-\nB $, then the space of $l^2$ solutions of \eqref{Asite-states},}\nonumber \\
 &\textrm{has dimension {$s_2\mathds{1}_{\kp\in(\frac{2\pi}3,\frac{4\pi}3)} -(\nA-\nB)$.}}
\label{yes-Astates1}
\end{align}
\end{proposition}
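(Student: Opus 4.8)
The plan is to run the same argument used for $B$-site edge states in Proposition \ref{Bstate-conds1}, now for the $A$-site problem \eqref{Asite-states}, and then substitute the root count supplied by Proposition \ref{prop:q-value}. First I would recall, from the discussion around \eqref{psiA-form}, that under the hypotheses on $\kpar$ (so that $p_-(\zeta,\kp)$ has no root on the unit circle, by Propositions \ref{ess_spec} and \ref{nogap}, and all its roots are simple, by Lemma \ref{lem:simple_roots}) the $l^2$ solutions of the $A$-site difference equation \eqref{psiBevp-0} together with the decay condition $\psi^A(n)=0$ for $n<\nB-n_3$ are exactly the vectors $\psi^A(n)=\sum_{j=1}^{\qcount}A_j\zeta_j^n$ for $n\ge\nB-n_3$ (and $0$ below), where $\zeta_1,\dots,\zeta_{\qcount}$ are the roots of $p_-$ inside the open unit disc and $A_1,\dots,A_{\qcount}\in\C$ are free. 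As in the $B$-site case, one checks this by recurring inward from the bulk: since $n_3$ is the largest of $n_1<n_2<n_3$, equation \eqref{psiBevp-0} expresses $\psi^A$ at an index in terms of its values at strictly larger indices, pinning down the solution over the whole range once it is known to have the bulk form for $n\gg\nB$.

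Next I would impose the remaining boundary condition $\psi^A(n)=0$ for $\nB-n_3\le n<\nA$ — note $\nB-n_3\le\nA$ by the inequality checked in Appendix \ref{a-few-ineq}, so this is precisely the system \eqref{0outside-A}. It is a homogeneous linear system of $(\nA-\nB)+n_3$ equations in the $\qcount$ unknowns $A_j$, with coefficient matrix $(\zeta_j^n)$, a generalized Vandermonde matrix in the distinct nonzero numbers $\zeta_1,\dots,\zeta_{\qcount}$. Every maximal square submatrix of such a matrix has nonvanishing determinant, so the system has full rank $\min\{\qcount,\ (\nA-\nB)+n_3\}$ and hence a solution space of dimension $\max\{0,\ \qcount-(\nA-\nB)-n_3\}$: no nonzero solution when $\qcount\le(\nA-\nB)+n_3$, and dimension $\qcount-(\nA-\nB)-n_3$ otherwise.

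Finally I would substitute $\qcount=n_3+s_2\mathds{1}_{\kp\in(\frac{2\pi}3,\frac{4\pi}3)}$ from Proposition \ref{prop:q-value}. Then $\qcount\le(\nA-\nB)+n_3$ is equivalent to $s_2\mathds{1}_{\kp\in(\frac{2\pi}3,\frac{4\pi}3)}\le\nA-\nB$, which gives the no-edge-state alternative \eqref{no-Astates1}, and in the complementary range the dimension $\qcount-(\nA-\nB)-n_3$ equals $s_2\mathds{1}_{\kp\in(\frac{2\pi}3,\frac{4\pi}3)}-(\nA-\nB)$, which is \eqref{yes-Astates1}.

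The proof itself is then just bookkeeping plus Vandermonde non-degeneracy; the substantive content is borrowed. The genuine difficulty lies not in this proposition but upstream, in Proposition \ref{prop:q-value} and through it the root-counting Proposition \ref{prop:p-value} for the honeycomb edge polynomials (Section \ref{sec:HEPs}). The one local point requiring a little care is verifying that the solution set of \eqref{psiBevp-0} genuinely has the clean form \eqref{psiA-form} on the full index range, which rests on simplicity of the roots, the recurrence-from-the-bulk argument above, and the inequality $\nB-n_3\le\nA$.
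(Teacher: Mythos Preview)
Your proposal is correct and follows essentially the same approach as the paper: the paper's proof (given in the paragraph immediately preceding the proposition) likewise identifies \eqref{0outside-A} as a system of $\nA-\nB+n_3$ Vandermonde-type equations in $\qcount$ unknowns, invokes the argument of Proposition \ref{Bstate-conds1} to read off the dimension as $\max\{0,\qcount-(\nA-\nB+n_3)\}$, and then substitutes $\qcount=n_3+s_2\mathds{1}_{\kp\in(\frac{2\pi}3,\frac{4\pi}3)}$ from Proposition \ref{prop:q-value}. Your write-up supplies a bit more detail on the recursion and Vandermonde rank, but the logic is identical.
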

{So far we have not settled {the exceptional values of $k$ for $E=0$ ( $k=2\pi/3$ and $4\pi/3$ for zigzag-like edges and $k=0$ and $2\pi$ for armchair-like edges)}. Theorem \ref{th:exceptional} below shows that there are no edge states in those cases.}
\subsection{Conclusions on the zero energy edge state eigenvalue problem}
Proposition \ref{Bstate-conds1} and Proposition \ref{Astate-conds1} can be used to determine
precisely for which edge-types (balanced zigzag,  unbalanced zigzag, armchair) and for which ranges of parallel quasimomentum, $\kp\in[0,2\pi]$, zero energy (flat band) edge states, $\psi=(\psi^A,\psi^B)$, do / do not exist, and whether they are supported on $B-$sites; $\psi=(0,\psi^B)$, or on $A-$ sites, $\psi=(\psi^A,0)$.

{
We shall apply Proposition \ref{Bstate-conds1} and Proposition \ref{Astate-conds1} to deduce Theorems \ref{th:zigzag} and \ref{th:armchair} below. The results  are  summarized in the Table \ref{tab:table}.  At the end of this section we deduce Theorem \ref{main-intro} from 
Theorems \ref{th:zigzag} and  \ref{th:armchair}. }

\begin{table}[htbp]
\begin{center}
\begin{tabular}{ |c|c|c|c|c| } 
\hline
 {\ } & {\footnotesize armchair-type} & {\footnotesize  zigzag-type }&  
 {\footnotesize zigzag-type}\\
\hline
 {\ } & {\footnotesize  $a_{11}-a_{12}\equiv0\ {\rm mod}\ 3$} &
 {\footnotesize  $a_{11}-a_{12}\equiv+1\ {\rm mod}\ 3$} & {\footnotesize  $a_{11}-a_{12}\equiv-1\ {\rm mod}\ 3$}  
 \\
\hline
{\footnotesize balanced edge} &{\footnotesize No ES} & {\footnotesize A-site ES for $\kp\in I$} & {\footnotesize B-site ES for $\kp\in I $}   \\ 
\hline
{\footnotesize unbalanced edge} & {------ } & {\footnotesize {B-site ES for $\kp\in[0,2\pi]\setminus\overline{I}$}}
 & {\footnotesize {A-site ES for $\kp\in[0,2\pi]\setminus\overline{I}$}}  
\\
\hline
\end{tabular}
\caption{{\small{Complete description of all zero energy / flat band edge states; Theorems \ref{th:zigzag} and \ref{th:armchair}. {Note that $I=(\frac{2\pi}{3},\frac{4\pi}{3})$}}}}\label{tab:table}
\end{center}\end{table}

\nit With reference to this table, we make a few clarifying remarks. There are no unbalanced armchair (AC) edges.    The expressions  A-site ES and B-site ES refer to edge states (ES) supported exclusively on $A-$sites, respectively on $B-$sites. 
Whenever zero energy edge states exist, the corresponding $l^2_\kpar$ eigenspace has dimension equal to one. This table concerns only zero energy edge states. {Finally, note that this result is independent of the choice of $(a_{21},a_{22})$; see Section \ref{sec:rat-edge} for the possible choices.}

\begin{theorem}[Existence of zero-energy edge states for zigzag-type edges.]\label{th:zigzag}
	Suppose that $s_2=\pm 1$ where $s_2=a_{11}-a_{12}\;\text{mod}\ 3$.
	\begin{itemize}
	\item Suppose the terminated structure is balanced ($\nA=\nB$). Then, for $\kp\in(2\pi/3,4\pi/3)$ there is a one-dimensional space of zero-energy  edge states for $H_\sharp(\kp)$. These edge states are supported on the $B-$sites if $s_2=-1$, and on the $A-$sites if $s_2=+1$. There are no zero energy edge states for $\kp\in[0,2\pi]\setminus[2\pi/3,4\pi/3]$.
		\item Suppose the terminated structure is unbalanced ($\nA\neq\nB$). Then, for $\kp\in[0,2\pi]\setminus[2\pi/3,4\pi/3]$ there is a one-dimensional space of zero-energy edge states for $H_\sharp(\kp)$. 
		 These edge states are supported on the $A-$sites if $s_2=-1$, and on the $B-$sites if $s_2=+1$. There are no zero energy edge states for $\kp\in(2\pi/3,4\pi/3)$.
	\end{itemize}
\end{theorem}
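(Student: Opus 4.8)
The plan is to obtain Theorem~\ref{th:zigzag} as a direct bookkeeping consequence of Propositions~\ref{Bstate-conds1} and~\ref{Astate-conds1}. Since at $E=0$ the eigenvalue problem~\eqref{es-evp} decouples completely into the independent $B$-site problem~\eqref{Bsite-states} and $A$-site problem~\eqref{Asite-states}, the space of zero-energy edge states of $H_\sharp(\kp)$ is the direct sum of the $B$-site solution space and the $A$-site solution space, and it suffices to determine each. Both are described by Propositions~\ref{Bstate-conds1} and~\ref{Astate-conds1} in terms of the comparison between the integer $\nA-\nB$ and the quantity $s_2\chi$, where I write $\chi:=\mathds{1}_{\kp\in(\frac{2\pi}3,\frac{4\pi}3)}\in\{0,1\}$. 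The one structural input needed is \eqref{nAnBs2}: for a zigzag-type edge ($s_2=\pm1$) one has $\nA-\nB=0$ in the balanced case and $\nA-\nB=s_2$ in the unbalanced case.

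First I would treat the \textbf{balanced case} $\nA-\nB=0$. Fix $\kp\in[0,2\pi]$ with $\kp\notin\{2\pi/3,4\pi/3\}$, so Propositions~\ref{Bstate-conds1} and~\ref{Astate-conds1} apply. By \eqref{no-Bstates1}--\eqref{yes-Bstates1} the $B$-site space is trivial when $s_2\chi\ge0$ and has dimension $-s_2\chi$ when $s_2\chi<0$; by \eqref{no-Astates1}--\eqref{yes-Astates1} the $A$-site space is trivial when $s_2\chi\le0$ and has dimension $s_2\chi$ when $s_2\chi>0$. If $s_2=+1$ this gives a one-dimensional $A$-site space exactly for $\chi=1$, i.e.\ $\kp\in(\frac{2\pi}3,\frac{4\pi}3)$, and no $B$-site states; if $s_2=-1$ it gives a one-dimensional $B$-site space exactly for $\kp\in(\frac{2\pi}3,\frac{4\pi}3)$ and no $A$-site states. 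In both subcases, for $\kp\in[0,2\pi]\setminus[\frac{2\pi}3,\frac{4\pi}3]$ we have $\chi=0$, so both spaces are trivial, which is precisely the claimed non-existence; the two leftover quasimomenta $\kp\in\{2\pi/3,4\pi/3\}$ (where $0$ lies in the essential spectrum, Proposition~\ref{nogap}) are handled by Theorem~\ref{th:exceptional}, which rules out edge states there.

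The \textbf{unbalanced case} is the same computation with $\nA-\nB=s_2$. For $s_2=+1$ (so $\nA-\nB=1$): \eqref{no-Bstates1}--\eqref{yes-Bstates1} give a trivial $B$-site space when $\chi\ge1$, i.e.\ $\kp\in(\frac{2\pi}3,\frac{4\pi}3)$, and a one-dimensional space ($1-\chi=1$) otherwise, while \eqref{no-Astates1}--\eqref{yes-Astates1} give a trivial $A$-site space for all such $\kp$ since $\chi\le1$; hence one-dimensional $B$-site states exactly for $\kp\in[0,2\pi]\setminus[\frac{2\pi}3,\frac{4\pi}3]$. For $s_2=-1$ (so $\nA-\nB=-1$), symmetrically, the $B$-site space is always trivial and the $A$-site space is one-dimensional ($0-(-1)=1$) exactly for $\chi=0$, i.e.\ $\kp\in[0,2\pi]\setminus[\frac{2\pi}3,\frac{4\pi}3]$, and trivial for $\kp\in(\frac{2\pi}3,\frac{4\pi}3)$; again Theorem~\ref{th:exceptional} covers $\kp\in\{2\pi/3,4\pi/3\}$. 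Assembling these four rows gives Theorem~\ref{th:zigzag} (and reproduces the zigzag-type columns of Table~\ref{tab:table}).

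I do not expect a genuine obstacle in this argument, since the real work is packaged into Proposition~\ref{prop:p-value} (equivalently Proposition~\ref{prop:q-value}), whose proof is deferred to Section~\ref{sec:HEPs}: counting the roots of the honeycomb edge polynomial $p_+(\zeta,\kp)$ of \eqref{B-poly} inside the open unit disc, using simplicity of those roots (Lemma~\ref{lem:simple_roots}) and the non-vanishing of the relevant Vandermonde determinants already invoked in Propositions~\ref{Bstate-conds1} and~\ref{Astate-conds1}. The only delicate point at the level of Theorem~\ref{th:zigzag} itself is the consistent pairing of the sign of $s_2$ with the balanced/unbalanced dichotomy and with the sublattice ($A$ or $B$) supporting the state, together with routing the exceptional quasimomenta to Theorem~\ref{th:exceptional}.
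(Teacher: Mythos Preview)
Your proposal is correct and follows essentially the same approach as the paper: both derive Theorem~\ref{th:zigzag} by direct case-checking of Propositions~\ref{Bstate-conds1} and~\ref{Astate-conds1} against the four combinations of $s_2=\pm1$ and balanced/unbalanced, using \eqref{nAnBs2} to identify $\nA-\nB\in\{0,s_2\}$. Your invocation of Theorem~\ref{th:exceptional} for $\kp\in\{2\pi/3,4\pi/3\}$ is harmless but unnecessary, since the theorem as stated makes no claim at those endpoints.
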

\begin{proof}
	If the terminated structure is balanced, then $\nA=\nB$. If $s_2=-1$, Proposition \ref{Bstate-conds1}-\eqref{yes-Bstates1} implies that if $k\in (\frac{2\pi}{3},\frac{4\pi}{3})$, the space of $l^2$ solutions of \eqref{Bsite-states} (leading to zero energy / flat band edge states supported  on $B-$sites) has dimension 1. Proposition \ref{Bstate-conds1}-\eqref{no-Bstates1} implies that if $k\notin (\frac{2\pi}{3},\frac{4\pi}{3})$ there are no $l^2$ solutions of \eqref{Bsite-states} and Proposition \ref{Astate-conds1}-\eqref{no-Astates1}  implies that there are no $l^2$ solutions of \eqref{Asite-states} for any $k$.  If $s_2=1$, Proposition \ref{Astate-conds1}-\eqref{yes-Astates1} implies that if $k\in (\frac{2\pi}{3},\frac{4\pi}{3})$, the space of $l^2$ solutions of \eqref{Asite-states} (leading to zero energy / flat band edge states supported on $A-$sites) has dimension 1. Proposition \ref{Astate-conds1}-\eqref{no-Astates1} implies that if $k\notin (\frac{2\pi}{3},\frac{4\pi}{3})$ there are no $l^2$ solutions of \eqref{Asite-states} and Proposition \ref{Bstate-conds1}-\eqref{no-Bstates1}  implies that there are no $l^2$ solutions of \eqref{Bsite-states} for any $k$. \\
	If the terminated structure is unbalanced, then $\nA-\nB=s_2=\pm1$ (see \eqref{nAnBs2}). If $s_2=-1$, Proposition \ref{Astate-conds1}-\eqref{yes-Astates1} implies that if $k\notin (\frac{2\pi}{3},\frac{4\pi}{3})$, the space of $l^2$ solutions of \eqref{Asite-states} (leading to $0-$ energy flat band edge states supported on $A-$sites) has dimension 1. Proposition \ref{Astate-conds1}-\eqref{no-Astates1} implies that if $k\in (\frac{2\pi}{3},\frac{4\pi}{3})$ there are no $l^2$ solutions of \eqref{Asite-states} and Proposition \ref{Bstate-conds1}-\eqref{no-Bstates1}  implies that there are no $l^2$ solutions of \eqref{Bsite-states} for all $k$.  If $s_2=1$, Proposition \ref{Bstate-conds1}-\eqref{yes-Bstates1} implies that if $k\notin (\frac{2\pi}{3},\frac{4\pi}{3})$, the space of $l^2$ solutions of \eqref{Bsite-states} (leading to $0-$ energy flat band edge states supported on $B-$sites) has dimension 1. Proposition \ref{Bstate-conds1}-\eqref{no-Bstates1} implies that if $k\in (\frac{2\pi}{3},\frac{4\pi}{3})$ there are no $l^2$ solutions of \eqref{Bsite-states} and Proposition \ref{Astate-conds1}-\eqref{no-Astates1}  implies that there are no $l^2$ solutions of \eqref{Asite-states} for all $k$.
	\end{proof}

\begin{theorem}[Absence of zero-energy edge states for armchair-type edges.]\label{th:armchair}
	Suppose that $s_2=0$ where $s_2=a_{11}-a_{12}\;\text{mod}\ 3$. Then, for all $k\in (0,2\pi)$,  $0$ is not in the {point} spectrum of $H_\sharp(k)$.
\end{theorem}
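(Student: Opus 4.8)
The plan is to exploit the fact, recorded at the start of Section~\ref{sec:0energy}, that the $E=0$ eigenvalue problem \eqref{es-evp} decouples completely into the $A$-site problem \eqref{Asite-states} and the $B$-site problem \eqref{Bsite-states}: a zero-energy edge state is precisely a nonzero pair $\psi=(\psi^A,\psi^B)$ in which $\psi^A$ solves \eqref{Asite-states} and $\psi^B$ solves \eqref{Bsite-states}, independently of one another. Hence it suffices to prove that, for an armchair-type edge and any $\kpar\in(0,2\pi)$, \emph{both} subproblems admit only the trivial solution; then $\psi^A=\psi^B=0$, so $\psi\equiv 0$ and $0$ is not an eigenvalue of $H_\sharp(\kpar)$.

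First I would record the two structural facts special to the armchair case. By hypothesis $s_2=a_{11}-a_{12}\ \mathrm{mod}\ 3=0$, so by \eqref{nAnBs2} the terminated structure is balanced, i.e.\ $\nA=\nB$; in particular $\nA-\nB=0$. Moreover $s_2\,\mathds{1}_{\kpar\in(2\pi/3,4\pi/3)}=0$ for every $\kpar$. Also, since $s_2=0$, the exceptional quasimomenta for $E=0$ are $\kpar\in\{0,2\pi\}$ by Proposition~\ref{nogap}, and these are excluded by the hypothesis $\kpar\in(0,2\pi)$; consequently $P_\pm(\zeta,\kpar)$ have no roots on the unit circle and Propositions~\ref{Bstate-conds1} and \ref{Astate-conds1} are applicable.

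Now I would apply the two propositions. For $B$-site states, the dichotomy of Proposition~\ref{Bstate-conds1} turns on whether $s_2\,\mathds{1}_{\kpar\in(2\pi/3,4\pi/3)}\ge\nA-\nB$; here both sides are $0$, so \eqref{no-Bstates1} applies and \eqref{Bsite-states} has no nonzero $l^2$ solution. Symmetrically, for $A$-site states Proposition~\ref{Astate-conds1} turns on whether $s_2\,\mathds{1}_{\kpar\in(2\pi/3,4\pi/3)}\le\nA-\nB$; again both sides are $0$, so \eqref{no-Astates1} applies and \eqref{Asite-states} has no nonzero $l^2$ solution. Combining these with the decoupling observation above forces $\psi\equiv 0$, so $0$ is not in the point spectrum of $H_\sharp(\kpar)$ for any $\kpar\in(0,2\pi)$, which is exactly Theorem~\ref{th:armchair}.

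At this level the argument is pure bookkeeping; the substantive content has been absorbed into Proposition~\ref{prop:p-value} (equivalently Proposition~\ref{prop:q-value}), whose proof in Section~\ref{sec:HEPs} counts the roots of the honeycomb edge polynomials $p_\pm(\zeta,\kpar)$ inside the open unit disc. The one point worth flagging is that the armchair case lies exactly on the threshold $s_2\,\mathds{1}_{\kpar\in(2\pi/3,4\pi/3)}=\nA-\nB$ separating ``no edge states'' from ``a one-dimensional space of edge states'': it is therefore essential that the inequalities in \eqref{no-Bstates1} and \eqref{no-Astates1} are non-strict ($\ge$ and $\le$) whereas those in \eqref{yes-Bstates1} and \eqref{yes-Astates1} are strict. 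The endpoints $\kpar=0,2\pi$ are genuinely excluded from the present statement and are treated separately in Theorem~\ref{th:exceptional}.
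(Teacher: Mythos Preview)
Your proof is correct and follows essentially the same approach as the paper: use $s_2=0\Rightarrow\nA=\nB$ and then invoke the ``no-solutions'' clauses \eqref{no-Bstates1} and \eqref{no-Astates1} of Propositions~\ref{Bstate-conds1} and~\ref{Astate-conds1}. Your version is more explicit about why the threshold inequalities fall on the non-strict side and about the exclusion of $\kpar\in\{0,2\pi\}$, but the argument is the same.
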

\begin{proof}
	In that case, the terminated structure is always balanced, {\it i.e.} $\nA=\nB$. As $s_2=0$, Proposition \ref{Bstate-conds1}-\eqref{no-Bstates1} implies  there are no $l^2$ solutions of \eqref{Bsite-states} for all $k$ and Proposition \ref{Astate-conds1}-\eqref{no-Astates1} implies  there are no $l^2$ solutions of \eqref{Asite-states} for all $k$.
	\end{proof}
{We return to the case of the exceptional quasimomenta $k=2\pi/3$ and $4\pi/3$ for zigzag edges and $k=0$ and $2\pi$ for armchair edges.
\begin{theorem}\label{th:exceptional}
	There are no zero energy edge states at exceptional quasi-momenta.
	\end{theorem}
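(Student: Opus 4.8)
The plan is to reduce to a root-counting statement for the honeycomb edge polynomials $p_\pm(\cdot,k)$ at the exceptional quasimomenta, exactly as in Sections \ref{sec:Bsite-es}--\ref{sec:Asite-es}, but now in the presence of roots lying on the unit circle. Since at $E=0$ the eigenvalue problem \eqref{es-evp} decouples, a zero energy edge state exists iff the $B$-site problem \eqref{Bsite-states} or the $A$-site problem \eqref{Asite-states} has a nonzero $l^2$ solution, so it suffices to rule both out. By the spectral symmetry $E\in{\rm spec}(\Hk)\iff E\in{\rm spec}(H_\sharp(2\pi-k))$ of Proposition \ref{lem:symmetry}, we may restrict attention to $k=2\pi/3$ when $s_2=\pm1$ and to $k=0$ (equivalently $k=2\pi$) when $s_2=0$.

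For such an exceptional $k$, Proposition \ref{nogap} together with \eqref{eq:ess_spec_zig}--\eqref{eq:ess_spec_arm} tells us that $p_+(\cdot,k)$ has exactly one root on the unit circle in the zigzag case and exactly two in the armchair case, and by Lemma \ref{lem:simple_roots} all roots of $p_\pm(\cdot,k)$ are simple. Repeating the derivation of \eqref{psiB-form}--\eqref{0outside-B} verbatim — the recursion \eqref{psiB0-eqn} does not depend on $k$ — any $l^2$ solution of \eqref{Bsite-states} has the form $\psi^B(n)=\sum_j A_j\zeta_j^n$ for $n\ge\nA+n_1$ and $\psi^B(n)=0$ otherwise, where now the $\zeta_j$ run over the roots of $p_+(\cdot,k)$ lying \emph{strictly inside} the unit disk: the roots with $|\zeta_j|\ge 1$ (those on and outside the unit circle) must have vanishing coefficient, by linear independence of the sequences $(\zeta_j^n)_n$ and the fact that $|\zeta_j|\ge 1$ precludes square-summability. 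Writing $p_\star$ for the number of such interior roots, the non-vanishing of the relevant Vandermonde determinants shows that a nonzero solution exists iff $p_\star>\nB-\nA-n_1$; likewise, with $q_\star$ the number of roots of $p_-(\cdot,k)$ strictly inside the disk, a nonzero solution of \eqref{Asite-states} exists iff $q_\star>\nA-\nB+n_3$.

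The crux is to bound $p_\star$ and $q_\star$. Because the $n_3-n_1$ roots of $p_+(\cdot,k)$ depend continuously on $k$ (the leading coefficient $e^{i(m_3-m_1)k}$ never vanishes), each root lying strictly inside the open disk at the exceptional value remains strictly inside for $k$ in a punctured neighbourhood; hence $p_\star\le\pcount$ for all non-exceptional $k$ near the exceptional value, and similarly $q_\star\le\qcount$. Evaluating Proposition \ref{prop:p-value} and Proposition \ref{prop:q-value} on \emph{both} sides of the exceptional value — where $\mathds{1}_{\kp\in(\frac{2\pi}3,\frac{4\pi}3)}$ takes both values $0$ and $1$ in the zigzag case and the value $0$ in the armchair case — gives, in the zigzag case at $k=2\pi/3$, $p_\star\le\min(-n_1,-n_1-s_2)$ and $q_\star\le\min(n_3,n_3+s_2)$, and in the armchair case at $k=0$, $p_\star\le-n_1$ and $q_\star\le n_3$. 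Substituting these into the two Vandermonde criteria, and using from \eqref{nAnBs2} that $\nA=\nB$ for armchair and balanced zigzag edges while $\nA-\nB=s_2$ for unbalanced zigzag edges, one checks in each of the finitely many cases (armchair; balanced and unbalanced zigzag with $s_2=\pm1$) that the strict inequalities $p_\star>\nB-\nA-n_1$ and $q_\star>\nA-\nB+n_3$ both fail; several of the checks are tight, which is why the root counts must be pinned down exactly and not merely up to $O(1)$. This rules out both $A$-site and $B$-site zero energy edge states, proving the theorem for all rational edges covered by the analysis of Section \ref{sec:0energy}; the classical zigzag edges, excluded there because the $\ttn_\nu$ coincide, are handled by the explicit classical computation (cf.\ Remark \ref{classical-es} and \cite{Dresselhaus-etal:96,Graf-Porta:13,FW:20}), in which the zero energy flat bands occupy only the open intervals $(2\pi/3,4\pi/3)$ or $[0,2\pi]\setminus[2\pi/3,4\pi/3]$ and hence miss the exceptional quasimomenta.

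I expect the continuity/one-sided bound for $p_\star$ and $q_\star$ to be the only genuine obstacle: Proposition \ref{prop:p-value} is established in Section \ref{sec:HEPs} only away from the exceptional $k$, precisely because roots sit on the unit circle there, and the semicontinuity argument above is what recovers the interior root count at those points. The remaining ingredients — the decoupling at $E=0$, the Vandermonde bookkeeping, and the exclusion of the unit-circle (and exterior) roots in the $l^2$ analysis — are routine given the development in Sections \ref{sec:setup}--\ref{sec:0energy}.
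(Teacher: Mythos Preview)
Your proposal is correct and follows essentially the same approach as the paper's own proof: both argue by semicontinuity of the interior root counts $p_\star,q_\star$ of $p_\pm(\cdot,k)$ at the exceptional $k_0$, using continuity of the roots together with the values of $\pcount,\qcount$ from Propositions~\ref{prop:p-value} and~\ref{prop:q-value} at nearby non-exceptional $k$, and then feed these bounds into the Vandermonde bookkeeping of Sections~\ref{sec:Bsite-es}--\ref{sec:Asite-es}. Your write-up is in fact more explicit than the paper's in one respect: you take the minimum of $\pcount$ (respectively $\qcount$) over \emph{both} one-sided limits, obtaining $p_\star\le\min(-n_1,-n_1-s_2)$ and $q_\star\le\min(n_3,n_3+s_2)$ in the zigzag case, whereas the paper states only the looser $p_\star\le -n_1$, $q_\star\le n_3$ and absorbs the sharper case analysis needed for unbalanced zigzag edges into the phrase ``we omit details''. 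Your symmetry reduction to $k=2\pi/3$ and $k=0$, and your separate handling of the classical zigzag edges via Remark~\ref{classical-es}, are also fine.
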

	\begin{proof}
		Let $\triangle$ be the open unit disc and $\overline{\triangle}$ its closure.
		Recall that
		\[
			\pcount= \#\{\zeta \in \triangle : p_+(\zeta,\kp)=0\}=\#\{\zeta \in \C\setminus\overline\triangle : p_-(\zeta,\kp)=0\}
		\]
		and 	
		\[
			\qcount= \#\{\zeta \in \triangle : p_-(\zeta,\kp)=0\}=\#\{\zeta \in \C\setminus\overline\triangle : p_+(\zeta,\kp)=0\}.
		\]	
		For an exceptional quasi-momentum $k_0$, there is a single zero $\zeta_0$ of $p_+(\cdot,\kp_0)$ on the unit circle if the edge is of zigzag-type. If the edge is of armchair type, there are two such $\zeta_0$ (see 
	\eqref{eq:ess_spec_arm} and \eqref{eq:ess_spec_zig}). For $k$ slightly above or slightly below $k_0$, each such $\zeta_0$ perturbs either into $\triangle$ or into $\C\setminus\triangle$, thus increasing $\pcount$ or $\qcount$ by 1; hence $\pcount=\mathfrak{p}(\kpar_0)+\#\{\zeta_0\text{ perturbing into }\triangle\}$, and $\qcount=\mathfrak{q}(\kpar_0)+\#\{\zeta_0\text{ perturbing into }\C\setminus\overline\triangle\}$. On the other hand, we know $\pcount$ and $\qcount$ from Propositions \ref{prop:p-value} and \ref{prop:q-value}. From the above remarks it follows that $\mathfrak{p}(\kpar_0)\leq -n_1$ and $\mathfrak{q}(\kpar_0)\leq n_3$ (we omit details). Since the relevant Vandermonde determinants are nonzero, it follows in turn that the linear systems \eqref{0outside-B}  and \eqref{0outside-A}  have no nontrivial solutions, as in the proof of Theorem \ref{th:zigzag}. Thus, there are no zero energy edge states for the exceptional quasi-momenta.
		\end{proof}}
{	\begin{remark}
		One can prove without Proposition \ref{Bstate-conds1} and Proposition \ref{Astate-conds1} that every edge of zigzag type gives rise to flat band edge states either for all $\kp\in(2\pi/3,4\pi/3)$ or for all $\kp\in(0,2\pi)\setminus[2\pi/3,4\pi/3]$. To see this, recall the functions $\pcount,\,\qcount$ from \eqref{eq:number_roots}
 and \eqref{eq:number_roots_A}. For fixed $\kp$, \eqref{psiB-form} and \eqref{0outside-B} show that a flat band edge forms if $\pcount>\nB-\nA-n_1$; while \eqref{psiA-form} and \eqref{0outside-A}	show that a flat band edge forms if $\qcount>\nA-\nB+n_3$. Since $\pcount+\qcount=n_3-n_1$, we conclude that an edge state with a quasimomentum $k$ exists unless
\begin{equation}
	\label{eq:condition_edgestate}
	\pcount=\nB-\nA-n_1\;\text{and}\; \qcount=\nA-\nB+n_3
\end{equation}
For an edge of zigzag type, \eqref{eq:condition_edgestate} must fail for all  $\kp\in(2\pi/3,4\pi/3)$ or for all $\kp\in[0,2\pi]\setminus[2\pi/3,4\pi/3]$.  That's because the functions $\kp\mapsto \pcount$ and $\kp\mapsto \qcount$ both experience jumps $\pm 1$ at $\kp=2\pi/3$ and at $\kp=4\pi/3$, as a zero of $p_+(\zeta,k)$ crosses the unit circle. To see that the zero crosses the unit circle, one Taylor expands $p_+(\zeta,k)$ to first order in $(\zeta,k)$ about the zeros {$(e^{\imath \hat{k}_\perp},\hat{k})$ where $\hat{k}$ and ${\hat{k}_\perp}$ are given by \eqref{h_kpar} and \eqref{h_kperp}.} We omit details and we leave it to the reader to see how the argument breaks down for an edge of armchair type.
	\end{remark}}
{
\begin{remark}[Classical zigzag edge states]\label{classical-es}
Recall that in our general analysis we have excluded the  classical zigzag edges; see the discussion around 
  \eqref{perm-n}. The analysis of zero energy / flat band edge states for the classical zigzag edges, balanced and unbalanced, is given, for example, in \cite{Dresselhaus-etal:96,Graf-Porta:13,FW:20}.  Without loss of generality, we 
  consider the balanced case ($\nA=\nB=0$) and the unbalanced case ($\nA=0$, $\nB=1$). It is easily derived that:
  \begin{align*}
 & \textrm{\it (i)\ Balanced (aka ordinary) $0-$ energy zigzag edge states exist}\\
 &\qquad   \textrm{\it $\iff\ {\kpar\in(2\pi/3,4\pi/3)}$ with one-dimensional eigenspace of $H_\sharp(\kpar)$ spanned by }\\  
 &\quad   \psi_n^A= \left[-(1+e^{ik})\right]^n,\ \quad  n\ge0,\quad {\rm and}\quad  \psi_n^A=0,\quad  n\le-1,\\
&\quad  \psi_n^B=0,\ \quad \textrm{for all}\quad n\in\Z,\\
&{\textrm{\it For $k=\pi$, we take instead $\psi_0^A=1,\;\psi_n^A=0$ for $n\neq 0$, 
 and $\psi_n^B=0,  \quad \textrm{for all}\quad n\in\Z$.}}\\
  & \textrm{\it (ii)\ Unbalanced (aka bearded) $0-$ energy  zigzag edge states exist}\\
  &\qquad \textrm{\it  $\iff\ \kpar\in[0,2\pi]\setminus[2\pi/3,4\pi/3]$ with one-dimensional eigenspace of $H_\sharp(\kpar)$ spanned by }\\ 
  &\quad  \psi_n^A=0,\quad n\in\Z\\
  &\quad   \psi_n^B= \left(\frac{-1}{1+e^{-ik}}\right)^n\,\ \quad n\ge0,\quad \psi_n^B=0,\ \quad n\le-1.
  \end{align*}
  \end{remark} 
  }
  The results of this section, together with Remark \ref{classical-es}, cover all rational edges and our results are summarized in Table \ref{tab:table}. 
  
Thanks to Proposition \ref{DADB} and equations \eqref{nAnBs2},
  Theorems \ref{th:zigzag} and \ref{th:armchair} imply Theorem  \ref{main-intro} of the introduction.

\section{Honeycomb edge polynomials}\label{sec:HEPs}
Central to the flat band edge state classification of the previous section is Proposition \ref{prop:p-value} 
 which counts the number of roots  in the open unit disc, $\pcount$, of the polynomial $p_+(\zeta;\kp)$, defined in \eqref{B-poly}. Note that this result also allows us to deduce Proposition \ref{prop:q-value}  concerning $\qcount$, the number of roots of $p_-(\zeta;\kp)$, defined in \eqref{A-poly}, in the open unit disc. We make a general study of such polynomials in this section which, due to their origins, we refer to as {\it honeycomb edge polynomials} and apply our results to prove
   Proposition \ref{prop:p-value}.

We begin by counting, for any fixed $\kp\in[0,2\pi]$,  the roots, $\zeta$, of the polynomial equation
 \begin{equation}
  1 + e^{i\beta_1\kp}\zeta^{\gamma_1}+e^{i\beta_2\kp}\zeta^{\gamma_2}=0
 \label{hep-1} 
 \end{equation}
 in the open unit disc. Here, we assume that $\beta_1, \beta_2, \gamma_1, \gamma_2$ are integers 
such that
  \begin{equation}
  0<\gamma_1<\gamma_2
  \label{gamma12} 
  \end{equation}
  and
  \begin{equation}
  \det[\beta\;\gamma]= \pm1\quad\text{with}\;\beta=\left(\begin{matrix}\beta_1\\ \beta_2\end{matrix}\right)\;\text{and}\;\gamma=\left(\begin{matrix}\gamma_1\\ \gamma_2\end{matrix}\right)
  \label{det-sig} 
  \end{equation}
In Appendix \ref{sec:p_pm} we check that the polynomial equation $p_+(\zeta,\kp)=0$  is of the form \eqref{hep-1}, \eqref{gamma12}, \eqref{det-sig} where 
\begin{subequations}
\label{eq:beta_gamma}
\begin{align}
\beta_1&=m_2-m_1,\quad \beta_2=m_3-m_1,\\
\gamma_1&=n_2-n_1,\quad \gamma_2=n_3-n_1.\end{align}
\end{subequations}

     To analyze \eqref{hep-1}, we write our unknown $\zeta$ in polar form:
     \begin{equation*}
     \zeta := \rho^{\frac{1}{\gamma_2}}\ e^{i\theta},
     \label{zeta-polar} 
     \end{equation*}
     and set 
     \begin{equation}
     \kappa := \frac{\gamma_1}{\gamma_2} \in (0,1)\quad \textrm{(by \eqref{gamma12}).}
     \label{kappa-def} 
     \end{equation}
     Equation \eqref{hep-1} then becomes
     \begin{equation}
     1+ \rho^\kappa\ e^{i(\beta_1\kp+\gamma_1\theta)} + \rho\ e^{i(\beta_2\kp+\gamma_2\theta)} = 0.
     \label{hep-2}
     \end{equation}
  We now outline our strategy for calculating 
  \begin{equation}\label{eq:rootsnum}
  	p(k)=\#\{(\rho,\theta)\in(0,1)\times[0,2\pi)\; \text{solving }\;\eqref{hep-2}\}
  \end{equation}
  as follows
	  \begin{enumerate}
	  	\item For given $\rho_1$, $\rho_2>0$, we consider the equation
 \begin{equation}
 1+\rho_1\ e^{i\phi_1}+\rho_2\ e^{i\phi_2} = 0,
 \label{hep-X}
 \end{equation}
 and look for the possible solutions $\phi_1, \phi_2\in\R/2\pi\Z$. Proposition \ref{hep-X-solve} states that, under some constraints on $\rho_1$ and $\rho_2$, the solutions $\phi_1, \phi_2$ are given by
 \[
 	\phi_1 = \hsigma\alpha_1(\rho_1,\rho_2)+ 2\pi\ell,\quad \phi_2=\hsigma\alpha_2(\rho_1,\rho_2)+2\pi\ell',\quad\ell,\ell'\in\mathbb{Z},
 \]
where $\hsigma=\pm 1$ and the functions $\alpha_1,\alpha_2$ are given in \eqref{alpha12}.
\item \ We deduce in Proposition \ref{prop:rhotheta} that $\zeta=\rho^{\frac{1}{\gamma_2}}e^{i\theta}$ is a solution of  \eqref{hep-1} in the open unit disc if and only if $\rho$ and $\theta$ satisfy:\\
(a) $\rho\in[\rho_{\rm critical},1)$ with $\rho_{\rm critical}$ defined by \eqref{rhoc2} below,\\
(b) \begin{equation}\label{eq:eq_rho}
	\alpha_1(\rho^\kappa,\rho)-\kappa \alpha_2(\rho^\kappa,\rho) = \frac{1}{\gamma_2}\left( \hsigma k \det[\beta\;\gamma]\; + 2\pi \ell\right) 
\end{equation}
for some $(\hsigma,\ell)\in\{\pm 1\}\times\mathbb{Z}$., and \\
(c) $\theta$ is computed from $\rho,\kp,l,\hsigma$ as in \eqref{kth-solv2b} below.
\item Lemma \ref{mono-lem} states that the function
	\begin{equation}
	    \rho\mapsto M(\rho):=\alpha_1(\rho^\kappa,\rho)-\kappa \alpha_2(\rho^\kappa,\rho)
	    \label{M-def}
	    \end{equation}
is strictly decreasing in $[\rho_\text{critical},1]$. Hence, the number of roots of \eqref{hep-1} (equivalently, \eqref{hep-2}) inside the open unit disc is equal to the number of pairs $(\hsigma,\ell)\in\{\pm 1\}\times\mathbb{Z}$ for which the right hand side of \eqref{eq:eq_rho} lies in the interval $[M(1),M(\rho_\text{critical})]$. This enables us to deduce the number of  $\rho$'s in $[\rho_{\rm critical},1)$ which solve \eqref{eq:eq_rho}, and therefore to deduce the number of $\zeta$'s inside the open unit disc solving \eqref{hep-1}. Along the way, we must also verify that all roots obtained in this manner are indeed distinct.
	  \end{enumerate}

	  We now embark on this strategy. The following proposition presents, for fixed $\rho_1$ and $\rho_2$ satisfying explicit constraints, the solutions of
  \eqref{hep-X}. The proof, via elementary algebra and trigonometry, is omitted.
  \begin{proposition}\label{hep-X-solve}
  Equation \eqref{hep-X} has a solution if and only if  $|\rho_1-\rho_2|\le1$ and $\rho_1+\rho_2\ge1$. All solutions of \eqref{hep-X} are given by
 \begin{equation}
 \phi_1 = \hsigma\alpha_1(\rho_1,\rho_2)\;{\rm mod}\ 2\pi,\quad \phi_2=\hsigma\alpha_2(\rho_1,\rho_2)\; {\rm mod}\ 2\pi,
 \label{phi12}
 \end{equation}
for  $\hsigma=+1$ or $-1$, where
 \begin{equation}
 \alpha_1(\rho_1,\rho_2) \equiv \cos^{-1}\left(\frac{\rho_2^2-\rho_1^2-1}{2\rho_1}\right),\quad 
\alpha_2(\rho_1,\rho_2) \equiv -\cos^{-1}\left(\frac{\rho_1^2-\rho_2^2-1}{2\rho_2}\right).
\label{alpha12}
\end{equation}
 In \eqref{alpha12} a branch of $\cos^{-1}$  is chosen to take values in $[0,\pi]$.
\end{proposition}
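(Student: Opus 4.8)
The plan is to read \eqref{hep-X} as the statement that the three planar vectors $1$, $\rho_1 e^{i\phi_1}$ and $\rho_2 e^{i\phi_2}$ sum to zero, hence close up into a (possibly degenerate) triangle with side lengths $1,\rho_1,\rho_2$; such a closed triple exists exactly when the triangle inequalities $1\le\rho_1+\rho_2$, $\rho_1\le1+\rho_2$, $\rho_2\le1+\rho_1$ all hold, and the last two combine to $|\rho_1-\rho_2|\le1$. Rather than invoking geometry, I would derive everything from two modulus computations, which simultaneously produce the explicit formulas \eqref{phi12}, \eqref{alpha12}.

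First I would isolate $\rho_2 e^{i\phi_2}=-1-\rho_1 e^{i\phi_1}$ and equate squared moduli, obtaining $\rho_2^2=1+\rho_1^2+2\rho_1\cos\phi_1$, i.e.
\[
\cos\phi_1=\frac{\rho_2^2-\rho_1^2-1}{2\rho_1},
\]
and, symmetrically (isolating $\rho_1 e^{i\phi_1}$), $\cos\phi_2=\dfrac{\rho_1^2-\rho_2^2-1}{2\rho_2}$. A one-line manipulation shows the first right-hand side lies in $[-1,1]$ iff $|\rho_1-1|\le\rho_2\le\rho_1+1$, and the second iff $|\rho_2-1|\le\rho_1\le\rho_2+1$; together these two pairs of inequalities are equivalent to $|\rho_1-\rho_2|\le1$ and $\rho_1+\rho_2\ge1$, which is the solvability criterion. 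When it holds, any solution must have $\phi_1\in\{\pm\alpha_1(\rho_1,\rho_2)\}$ and $\phi_2\in\{\pm\alpha_2(\rho_1,\rho_2)\}$ modulo $2\pi$, with $\alpha_1,\alpha_2$ as in \eqref{alpha12} (recall $\alpha_1\in[0,\pi]$, $\alpha_2\in[-\pi,0]$). It remains to pin down the two sign choices and to show they are forced to agree. Substituting $\phi_1=\sigma\alpha_1$, $\phi_2=\sigma'\alpha_2$ into \eqref{hep-X} and using $\cos\alpha_1=\tfrac{\rho_2^2-\rho_1^2-1}{2\rho_1}$, $\cos\alpha_2=\tfrac{\rho_1^2-\rho_2^2-1}{2\rho_2}$, the real part of the left side equals $1+\tfrac12(\rho_2^2-\rho_1^2-1)+\tfrac12(\rho_1^2-\rho_2^2-1)=0$ for every choice of signs. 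For the imaginary part I would record the identity
\[
\rho_1^2\sin^2\alpha_1-\rho_2^2\sin^2\alpha_2=(\rho_1^2-\rho_2^2)-\tfrac14\big[(\rho_2^2-\rho_1^2-1)^2-(\rho_1^2-\rho_2^2-1)^2\big]=0,
\]
the bracket expanding to $4(\rho_1^2-\rho_2^2)$. Since $\sin\alpha_1\ge0\ge\sin\alpha_2$, this gives $\rho_1\sin\alpha_1=-\rho_2\sin\alpha_2\ge0$, so the imaginary part $\rho_1\sigma\sin\alpha_1+\rho_2\sigma'\sin\alpha_2=(\sigma-\sigma')\,\rho_1\sin\alpha_1$ vanishes exactly when $\sigma=\sigma'$ or when $\sin\alpha_1=0$.

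Finally I would dispose of the degenerate case $\sin\alpha_1=0$: then $\alpha_1\in\{0,\pi\}$ and $\alpha_2\in\{0,-\pi\}$, so $+\alpha_j\equiv-\alpha_j\ (\mathrm{mod}\ 2\pi)$ for $j=1,2$, and the single expression $\phi_j=\hsigma\alpha_j\ (\mathrm{mod}\ 2\pi)$, $\hsigma=\pm1$, captures the entire solution set. Conversely, taking $\hsigma=1$ (i.e. $\phi_1=\alpha_1$, $\phi_2=\alpha_2$) under the solvability hypotheses exhibits a genuine solution, so the criterion is also sufficient. The only step requiring care is the sign-coordination identity above; everything else is routine bookkeeping with $\arccos$.
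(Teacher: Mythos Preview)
Your argument is correct and complete. The paper actually omits the proof of this proposition, noting only that it follows from ``elementary algebra and trigonometry''; your modulus computations and the sign-coordination identity $\rho_1\sin\alpha_1=-\rho_2\sin\alpha_2$ are precisely the elementary steps being alluded to.
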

Note that the conditions imposed on $\rho_1$ and $\rho_2$ guarantee that the arguments of the arc cosines in \eqref{alpha12} lie 
  in the interval
 $[-1,1]$.
 
Applying Proposition \ref{hep-X-solve} to our equation \eqref{hep-X}, we deduce the following result.
\begin{proposition}\label{prop:rhotheta}
A complex $\zeta= \rho^{\frac{1}{\gamma_2}}\ e^{i\theta}$ is solution of \eqref{hep-1} in the open unit disc if and only if  
\begin{equation}\label{rhoc2}
 \rho\in[\rho_{\rm critical},1),\quad {\rm where}\;
 \rho_{\rm critical}\in(0,1),\quad \rho^\kappa_{\rm critical}+\rho_{\rm critical}=1
 \end{equation}
 and $(\rho,\theta)$ satisfy, for some $\ell\in\Z$:
  \begin{subequations}
  \begin{align}
&  \alpha_1(\rho^\kappa,\rho)-\kappa \alpha_2(\rho^\kappa,\rho) = \frac{1}{\gamma_2}\left( \det[\beta\;\gamma]\hsigma k + 2\pi \ell\right), \label{kth-solv2a}\\ 
&  \theta = \frac{\hsigma}{\gamma_2} \alpha_2(\rho^\kappa,\rho) - \frac{\beta_2}{\gamma_2} \left(\kp + 2\pi\det[\beta\;\gamma]\hsigma \ell \right) \quad {\rm mod}\ 2\pi\Z,
 \label{kth-solv2b} \end{align}
 \label{kth-solv2}
  \end{subequations}
with $\hsigma=+1$ or $-1$.
\end{proposition}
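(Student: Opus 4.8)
The plan is to apply Proposition~\ref{hep-X-solve} directly, taking $\rho_1=\rho^\kappa$, $\rho_2=\rho$, $\phi_1=\beta_1\kp+\gamma_1\theta$ and $\phi_2=\beta_2\kp+\gamma_2\theta$, and then to eliminate $\theta$ from the two phase congruences it produces. First I would note that $\zeta=0$ is never a solution of \eqref{hep-1}, so any solution in the open unit disc has $\rho=|\zeta|^{\gamma_2}\in(0,1)$, and in polar form \eqref{hep-1} is precisely \eqref{hep-2}, i.e.\ equation \eqref{hep-X} for the data just listed. Since $\kappa\in(0,1)$ and $\rho\in(0,1)$ one has $0<\rho<\rho^\kappa<1$, so $|\rho^\kappa-\rho|<1$ holds automatically, and the only nontrivial feasibility condition in Proposition~\ref{hep-X-solve} is $\rho^\kappa+\rho\ge1$. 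I would then observe that $\rho\mapsto\rho^\kappa+\rho$ is continuous and strictly increasing on $(0,1)$, with limit $0$ at $0$ and value $2$ at $1$, so there is a unique $\rho_{\rm critical}\in(0,1)$ with $\rho_{\rm critical}^\kappa+\rho_{\rm critical}=1$ and the feasibility condition becomes $\rho\in[\rho_{\rm critical},1)$; this establishes \eqref{rhoc2}.

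Next, for $\rho$ in this range, Proposition~\ref{hep-X-solve} says \eqref{hep-2} holds iff there are a sign $\hsigma\in\{\pm1\}$ and integers $\ell_1,\ell_2$ with $\beta_1\kp+\gamma_1\theta=\hsigma\alpha_1(\rho^\kappa,\rho)+2\pi\ell_1$ and $\beta_2\kp+\gamma_2\theta=\hsigma\alpha_2(\rho^\kappa,\rho)+2\pi\ell_2$. I would eliminate $\theta$ by multiplying the first equation by $\gamma_2$, the second by $\gamma_1$, and subtracting; using $\gamma_2\beta_1-\gamma_1\beta_2=\det[\beta\;\gamma]$, then multiplying through by $\hsigma$, dividing by $\gamma_2$ (with $\gamma_1/\gamma_2=\kappa$), and setting $\ell:=\hsigma(\gamma_1\ell_2-\gamma_2\ell_1)\in\Z$, this yields precisely \eqref{kth-solv2a}. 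To obtain \eqref{kth-solv2b} I would solve the second equation for $\theta$, getting $\theta=\frac1{\gamma_2}(\hsigma\alpha_2-\beta_2\kp)+\frac{2\pi\ell_2}{\gamma_2}$, and then reduce the relation $\hsigma\ell=\gamma_1\ell_2-\gamma_2\ell_1$ modulo $\gamma_2$; since $\det[\beta\;\gamma]=\pm1$ forces $\gcd(\gamma_1,\gamma_2)=1$ and, via $\beta_1\gamma_2-\beta_2\gamma_1=\det[\beta\;\gamma]$, gives $\gamma_1^{-1}\equiv-\det[\beta\;\gamma]\,\beta_2\pmod{\gamma_2}$, one finds $\ell_2\equiv-\det[\beta\;\gamma]\,\beta_2\,\hsigma\ell\pmod{\gamma_2}$, and substituting produces \eqref{kth-solv2b} modulo $2\pi$. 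This will handle the ``only if'' direction, and as a byproduct shows the resulting $\theta$ is well defined mod $2\pi$ regardless of which pair $(\ell_1,\ell_2)$ realizes a given $\ell$ (two such pairs differ by a multiple of $(\gamma_1,\gamma_2)$, changing $\theta$ by a multiple of $2\pi$).

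For the ``if'' direction I would reverse this: given $\rho\in[\rho_{\rm critical},1)$, a sign $\hsigma$, an integer $\ell$ with \eqref{kth-solv2a}, and $\theta$ as in \eqref{kth-solv2b}, equation \eqref{kth-solv2b} immediately yields $\beta_2\kp+\gamma_2\theta\equiv\hsigma\alpha_2(\rho^\kappa,\rho)\pmod{2\pi}$; and rewriting $\gamma_1\theta$ using the identity $\gamma_1\beta_2=\beta_1\gamma_2-\det[\beta\;\gamma]$ and then invoking \eqref{kth-solv2a} yields $\beta_1\kp+\gamma_1\theta\equiv\hsigma\alpha_1(\rho^\kappa,\rho)\pmod{2\pi}$. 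With both phase congruences holding for a common sign and the feasibility bounds $\rho^\kappa+\rho\ge1$, $|\rho^\kappa-\rho|\le1$ in force, Proposition~\ref{hep-X-solve} gives \eqref{hep-2}, hence \eqref{hep-1}; and $|\zeta|=\rho^{1/\gamma_2}<1$ since $\rho<1$.

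The main obstacle will be the modular bookkeeping in the middle step: extracting from the two phase congruences (carrying the two loose integers $\ell_1,\ell_2$) the single scalar constraint \eqref{kth-solv2a} with one genuinely free parameter $\ell\in\Z$, and conversely showing that \eqref{kth-solv2a} together with the explicit formula \eqref{kth-solv2b} for $\theta$ recovers both congruences. The arithmetic fact doing all the work is $\det[\beta\;\gamma]=\pm1$, which simultaneously delivers $\gcd(\gamma_1,\gamma_2)=1$ (so that $\gamma_1\ell_2-\gamma_2\ell_1$ ranges over all of $\Z$ and $\gamma_1$ is invertible modulo $\gamma_2$) and the identity $\beta_2\gamma_1=\beta_1\gamma_2-\det[\beta\;\gamma]$ needed to move between the ``$\gamma_1\theta$'' and ``$\gamma_2\theta$'' descriptions of $\theta$. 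Everything else is the elementary trigonometry already packaged in Proposition~\ref{hep-X-solve} together with the monotonicity of $\rho\mapsto\rho^\kappa+\rho$.
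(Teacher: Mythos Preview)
Your proposal is correct and follows essentially the same approach as the paper: apply Proposition~\ref{hep-X-solve} with $(\rho_1,\rho_2)=(\rho^\kappa,\rho)$, reduce the feasibility conditions to $\rho\in[\rho_{\rm critical},1)$, and then solve the resulting $2\times2$ linear system in $(k,\theta)$ modulo $2\pi$. The paper packages the linear algebra as a single matrix inversion (equation~\eqref{kth-solv1}) and then says ``by elementary computation'' one rewrites it as \eqref{kth-solv2}; you instead eliminate $\theta$ directly and do the modular bookkeeping explicitly, which amounts to carrying out that elementary computation in full (and you also spell out the converse direction, which the paper leaves implicit in the equivalence of the matrix inversion).
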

\begin{proof}
By Proposition \ref{hep-X-solve}, equation \eqref{hep-X} has a solution if and only if
 \[ \textrm{$|\rho-\rho^\kappa|\le1$ and $\rho+\rho^\kappa\ge1$.}
 \]
These conditions hold 
 if and only if $\rho\in[\rho_{\rm critical},\rho_+]$
   where $\rho_{\rm critical}$ is defined in \eqref{rhoc2} and $\rho_+>1$ is such that $\rho_+-\rho_+^\kappa=1$. Since we are interested in roots, $\zeta$,  inside the open unit disc, we restrict our attention to $\rho\in[\rho_{\rm critical},1)$.
By Proposition \ref{hep-X-solve}, the solutions 
  $\zeta=\rho^{\frac{1}{\gamma_2}}e^{i\theta}$ of  \eqref{hep-1} inside the open unit disc correspond to $(\rho,\kp,\theta)$ with 
  $\rho\in[\rho_{\rm critical},1)$  and  
   $(\kp,\theta)$ satisfying  
 \begin{subequations}
 \begin{align}
 \beta_1\kp + \gamma_1\theta &= \hsigma \alpha_1(\rho^\kappa,\rho)\quad \textrm{mod}\ 2\pi\\
 \beta_2\kp + \gamma_2\theta &= \hsigma \alpha_2(\rho^\kappa,\rho)\quad \textrm{mod}\ 2\pi
 \end{align}
 \label{kth-sys} 
 \end{subequations}
with either $\hsigma=+1$ or $\hsigma=-1$.
We invert the system \eqref{kth-sys} by using
  \[ \begin{pmatrix}\beta_1&\gamma_1\\ \beta_2&\gamma_2\end{pmatrix}^{-1}\ =\ \left(\det[\beta\;\gamma]\right)^{-1}\,\begin{pmatrix}\gamma_2&-\gamma_1\\ -\beta_2&\beta_1\end{pmatrix},\]
  where $\det[\beta\;\gamma]=\pm1 $ by our assumption \eqref{det-sig}.
  Hence, the solution of \eqref{kth-sys} is given by
  \begin{align}
  \begin{pmatrix} \kp\\ \theta\end{pmatrix} &=
 \det[\beta\;\gamma]\hsigma\ 
   \begin{pmatrix} 
  \gamma_2\ \alpha_1(\rho^\kappa,\rho) - \gamma_1\ \alpha_2(\rho^\kappa,\rho)\ +\ 2\pi \ell^\prime\\
 - \beta_2\ \alpha_1(\rho^\kappa,\rho) + \beta_1\ \alpha_2(\rho^\kappa,\rho)\ +\ 2\pi \ell^{\prime\prime}
  \end{pmatrix},
  \label{kth-solv1} 
  \end{align}
  for arbitrary integers $\ell^\prime$ and $\ell^{\prime\prime}$.
  By elementary computation using \eqref{det-sig} and \eqref{kappa-def}, we may rewrite \eqref{kth-solv1} as \eqref{kth-solv2}. This completes the proof of Proposition \ref{prop:rhotheta}.\end{proof}
 Thanks to \eqref{kth-solv2a} and \eqref{kth-solv2b} we can find all $(\rho,\theta)$ such that $\zeta=\rho^{\frac{1}{\gamma_2}}e^{i\theta}$ is a root of \eqref{hep-1}, which lies inside the unit circle, as follows: 
      \begin{enumerate}
      \item[Step 1:] Given $\kp\in[0,2\pi]$, find all $(\hsigma,\ell)$ with $\hsigma\in\{-1,+1\}$ and $\ell\in\Z$,  such that
       equation \eqref{kth-solv2a} admits a solution $\rho\in[\rho_{\rm critical}, 1)$.
      \item[Step 2:] For $\kp, \hsigma, \ell$ and $\rho$ produced by Step 1, we compute $\theta$ from 
      \eqref{kth-solv2b}.
      \end{enumerate} 
The two following lemmata will be useful for the proof of the main result of this section which gives $p(k)$, defined in \eqref{eq:rootsnum}.
  \begin{lemma}\label{distinct}
For any fixed $\kp$ satisfying the constraint 
{$\kp\in[0,2\pi]\ \setminus\ \Big\{0,\frac{2\pi}{3}, \pi,\frac{4\pi}{3},2\pi\Big\}$}, any two distinct 
$(\hsigma,\ell)$ arising from Step 1 give rise to two distinct solutions of \eqref{hep-1}.
  Hence, the roots $\zeta=\rho^{\frac{1}{\gamma_2}}\ e^{i\theta}$ of \eqref{hep-1}, which lie inside the open unit circle, are in one-to-one correspondence with the pairs $(\hsigma,\ell)$ obtained in Step 1. 
\end{lemma}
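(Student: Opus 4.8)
\emph{Plan.} The plan is to show that the map $(\hsigma,\ell)\mapsto \zeta=\rho^{1/\gamma_2}e^{i\theta}$ --- which sends a pair produced by Step~1 to the root of \eqref{hep-1} built from it via \eqref{kth-solv2a} and \eqref{kth-solv2b} --- is a well-defined bijection onto the set of roots of \eqref{hep-1} lying in the open unit disc; the lemma then follows by combining the injectivity of this map with its surjectivity. First I would record well-definedness: by Lemma~\ref{mono-lem} the function $M(\rho)=\alpha_1(\rho^\kappa,\rho)-\kappa\alpha_2(\rho^\kappa,\rho)$ is strictly decreasing on $[\rho_{\rm critical},1]$, so for each pair $(\hsigma,\ell)$ arising from Step~1 equation \eqref{kth-solv2a} has exactly one solution $\rho\in[\rho_{\rm critical},1)$; then \eqref{kth-solv2b} determines $\theta$ uniquely modulo $2\pi\Z$, so $\zeta$ is uniquely determined by the pair. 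Surjectivity onto the roots in the open unit disc is precisely the content of Proposition~\ref{prop:rhotheta}.

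The crux is injectivity, and the argument I would give is short. Since $M$ is strictly decreasing, hence injective, on $[\rho_{\rm critical},1)$, two Step~1 pairs yield the same $\rho$ if and only if they yield the same value of the right-hand side of \eqref{kth-solv2a}, namely $\gamma_2^{-1}\bigl(\det[\beta\;\gamma]\,\hsigma\,\kp+2\pi\ell\bigr)$; so it is enough to check that any two distinct Step~1 pairs $(\hsigma,\ell)$, $(\hsigma',\ell')$ give distinct such values. If $\hsigma=\hsigma'$ but $\ell\neq\ell'$, the two values differ by $2\pi(\ell-\ell')/\gamma_2\neq0$. If instead $\hsigma\neq\hsigma'$, they differ by $\gamma_2^{-1}\bigl(\det[\beta\;\gamma](\hsigma-\hsigma')\kp+2\pi(\ell-\ell')\bigr)$, and since $\hsigma-\hsigma'=\pm2$ and $\det[\beta\;\gamma]=\pm1$ this difference vanishes only if $\kp\in\pi\Z$, i.e. $\kp\in\{0,\pi,2\pi\}$ (as $\kp\in[0,2\pi]$) --- all of which are excluded by the hypothesis $\kp\in[0,2\pi]\setminus\{0,2\pi/3,\pi,4\pi/3,2\pi\}$. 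Hence distinct Step~1 pairs produce distinct $\rho$, therefore distinct moduli $|\zeta|=\rho^{1/\gamma_2}$, therefore distinct roots $\zeta$. Combined with the surjectivity noted above, this gives the stated one-to-one correspondence.

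I do not expect a genuine obstacle: the whole lemma reduces to the two-case computation just described, which uses only $\det[\beta\;\gamma]=\pm1$ and the strict monotonicity of $M$ from Lemma~\ref{mono-lem}. The single point worth a careful sentence is that the $\rho$ attached to a Step~1 pair really lies in the half-open interval $[\rho_{\rm critical},1)$ and not at the endpoint $\rho=1$ --- a root on the unit circle would spoil the modulus-comparison step --- which is built into the definition of Step~1 and is in any case consistent with Propositions~\ref{nogap} and~\ref{ess_spec}, guaranteeing that $p_\pm(\zeta,\kp)$ has no zero on the unit circle for $\kp$ outside the excluded set.
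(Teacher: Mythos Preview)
Your proposal is correct and follows essentially the same approach as the paper: the core of both arguments is the two-case analysis (same $\hsigma$ with different $\ell$, versus different $\hsigma$) showing that distinct Step~1 pairs force distinct right-hand sides in \eqref{kth-solv2a}, hence distinct $\rho$ by the strict monotonicity of $M$. Your write-up is in fact slightly more complete, as you make explicit the well-definedness and surjectivity pieces that the paper leaves implicit in the surrounding discussion.
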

\begin{proof}
  Let $(\hsigma,\ell)$ and $(\hsigma^\prime,\ell^\prime)$ be two distinct pairs arising in Step 1. From 
  $(\hsigma,\ell)$  we produce $\rho$ by solving \eqref{kth-solv2a} in Step 1, and then from Step 2 we obtain  $\theta$ from \eqref{kth-solv2b}. We then set $\zeta=\rho^{\frac{1}{\gamma_2}} e^{i\theta}$. Similarly, from $(\hsigma^\prime,\ell^\prime)$, we produce $\rho^\prime$ by solving \eqref{kth-solv2a}, then obtain $\theta^\prime$ from \eqref{kth-solv2b}, and then set $\zeta^\prime=(\rho^\prime)^{\frac{1}{\gamma_2}} e^{i\theta^\prime}$. We must show that $\zeta\ne\zeta^\prime$. We distinguish two cases.
  ~\\
{\bf Case 1:\ $\boldsymbol\hsigma=\boldsymbol\hsigma^\prime$ and $\boldsymbol{\ell}\ne \boldsymbol{\ell}^\prime$.} Then, \eqref{kth-solv2a} implies that  $M(\rho)-M(\rho^\prime)=\left(2\pi/\gamma_2\right)(\ell-\ell^\prime)\ne0$, where $M$ is defined in \eqref{M-def}. Therefore,  $\rho\ne\rho^\prime$, and hence $\zeta\ne\zeta^\prime$.
 \medskip
~\\
{\bf Case 2:\ $\boldsymbol\hsigma\ne\boldsymbol\hsigma^\prime$} If $\zeta=\zeta^\prime$, then $\rho=\rho^\prime$, so from \eqref{kth-solv2a}
$ \det[\beta\;\gamma]\hsigma\kp+2\pi \ell = \det[\beta\;\gamma]\hsigma^\prime\kp+2\pi \ell^\prime$ and hence
 $ \det[\beta\;\gamma](\hsigma-\hsigma^\prime)\kp = 2\pi(\ell^\prime-\ell)$. Since $\hsigma\ne\hsigma^\prime$, 
  $\det[\beta\;\gamma](\hsigma-\hsigma^\prime)=\pm2$ and therefore $\kp\in\Z\pi$, which is ruled out by our hypothesis.
  \end{proof}
To count  all $(\hsigma,\ell)$ obtained in Step 1 above, we make use of a crucial property of 
  \eqref{kth-solv2a}.
    \begin{lemma}[Monotonicity Lemma]\label{mono-lem}
    Let $\kappa\in(0,1)$, and let $\rho_{\rm critical}\in(0,1)$ be the solution of the equation $\rho_{\rm critical}^\kappa+\rho_{\rm critical}=1$. Then, the function $M$ defined in \eqref{M-def}
    is strictly decreasing on $[\rho_{\rm critical},1]$.
  \end{lemma}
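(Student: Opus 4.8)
The goal is to show that
\[
M(\rho)=\alpha_1(\rho^\kappa,\rho)-\kappa\,\alpha_2(\rho^\kappa,\rho)
\]
is strictly decreasing on $[\rho_{\rm critical},1]$, where $\alpha_1,\alpha_2$ are given by \eqref{alpha12}. The plan is to compute $M'(\rho)$ directly and show it is strictly negative on the open interval $(\rho_{\rm critical},1)$. First I would introduce the two arguments of the arc-cosines, writing $u(\rho)=\tfrac{\rho^2-\rho^{2\kappa}-1}{2\rho^\kappa}$ so that $\alpha_1=\arccos u$, and $w(\rho)=\tfrac{\rho^{2\kappa}-\rho^2-1}{2\rho}$ so that $\alpha_2=-\arccos w$. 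Then
\[
M(\rho)=\arccos u(\rho)+\kappa\,\arccos w(\rho),
\]
and $M'(\rho)=-\dfrac{u'(\rho)}{\sqrt{1-u^2}}-\kappa\,\dfrac{w'(\rho)}{\sqrt{1-w^2}}$. Since on $(\rho_{\rm critical},1)$ we have $|\rho-\rho^\kappa|<1$ and $\rho+\rho^\kappa>1$ strictly (these are the interior of the constraints from Proposition \ref{hep-X-solve}), both radicals $\sqrt{1-u^2}$ and $\sqrt{1-w^2}$ are strictly positive; in fact a short computation gives $1-u^2=\dfrac{\left(1-(\rho-\rho^\kappa)^2\right)\left((\rho+\rho^\kappa)^2-1\right)}{4\rho^{2\kappa}}$ and similarly $1-w^2=\dfrac{\left(1-(\rho-\rho^\kappa)^2\right)\left((\rho+\rho^\kappa)^2-1\right)}{4\rho^{2}}$, so the two radicals differ only by the factor $\rho^\kappa$ versus $\rho$ in the denominator. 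Writing $\Delta(\rho)=\sqrt{\left(1-(\rho-\rho^\kappa)^2\right)\left((\rho+\rho^\kappa)^2-1\right)}>0$, we get $\sqrt{1-u^2}=\Delta/(2\rho^\kappa)$ and $\sqrt{1-w^2}=\Delta/(2\rho)$.

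Next I would compute the numerators. Differentiating, $u'(\rho)=\tfrac{d}{d\rho}\!\left(\tfrac{\rho^{2}-\rho^{2\kappa}-1}{2\rho^\kappa}\right)$ and $w'(\rho)=\tfrac{d}{d\rho}\!\left(\tfrac{\rho^{2\kappa}-\rho^{2}-1}{2\rho}\right)$; these are elementary rational/power expressions in $\rho$. Substituting into $M'$ and clearing the common factor $2/\Delta$, the sign of $M'(\rho)$ is the sign of
\[
-\rho^\kappa\,u'(\rho)-\kappa\,\rho\,w'(\rho).
\]
I would expand this combination; after simplification it should reduce to a manifestly negative expression on $(\rho_{\rm critical},1)$. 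The cleanest route is probably to carry out the differentiation with the substitution $\rho=e^{-t}$, $t>0$, which turns $\rho^\kappa=e^{-\kappa t}$ and makes the power-law structure transparent: set $x=\rho=e^{-t}$, $y=\rho^\kappa=e^{-\kappa t}$ with $0<y<1$, $0<x<1$, and $y=x^\kappa$. Then $u=\tfrac{x^2-y^2-1}{2y}$, $w=\tfrac{y^2-x^2-1}{2x}$, and $\tfrac{dx}{dt}=-x$, $\tfrac{dy}{dt}=-\kappa y$. One finds $\tfrac{du}{dt}=\tfrac{-x^2+\kappa(x^2-y^2-1)\cdot?}{\dots}$ — I would do this carefully, but the upshot should be that $\tfrac{dM}{dt}>0$, i.e. $M$ increases in $t$ and hence decreases in $\rho$.

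\textbf{Main obstacle.} The only real work is the algebraic simplification showing that the combined numerator $-\rho^\kappa u'(\rho)-\kappa\rho w'(\rho)$ has a definite sign; there is genuine cancellation to track and one must use the inequality $\kappa\in(0,1)$ (the statement is false for $\kappa\notin(0,1)$, so the argument must genuinely invoke it) together with $\rho<1$. A clean way to see the sign is to note that $\alpha_1$ and $-\alpha_2$ are the two angles of the triangle with sides $1,\rho^\kappa,\rho$ opposite to the sides of length $\rho$ and $1$ respectively (law of cosines), so $M=\alpha_1-\kappa\alpha_2$ is a weighted combination of interior angles of a triangle whose two shorter sides shrink at different exponential rates; as $\rho\to1^-$ the triangle degenerates. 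I would present the computation via $t=-\log\rho$ as above, verify $dM/dt>0$ by reducing to an inequality of the form (positive combination of $1-x^2$, $1-y^2$, and $x^2-y^2$ terms) $>0$, and note the endpoint behavior ($\rho=\rho_{\rm critical}$: $\rho^\kappa+\rho=1$, where $\Delta\to 0$ but the one-sided derivative is still of one sign; $\rho=1$: where $\Delta\to0$ as well and $M(1)$ is finite). Strict monotonicity on the closed interval then follows from strict negativity of $M'$ on the open interval together with continuity of $M$ on $[\rho_{\rm critical},1]$.
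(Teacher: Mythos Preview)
Your approach is essentially the same as the paper's: compute $M'(\rho)$ directly, recognize that $\sqrt{1-u^2}=\Delta/(2\rho^\kappa)$ and $\sqrt{1-w^2}=\Delta/(2\rho)$ share the same numerator $\Delta=\sqrt{r(\rho)}$ (the paper's $r(\rho)$ is exactly your $(1-(\rho-\rho^\kappa)^2)((\rho+\rho^\kappa)^2-1)$), and reduce the sign of $M'$ to the sign of $-\rho^\kappa u'-\kappa\rho w'$. The one step you leave as ``should reduce to a manifestly negative expression'' is in fact clean: carrying out the differentiation and collecting terms gives
\[
-\rho^\kappa u'(\rho)-\kappa\,\rho\, w'(\rho)=\frac{1}{\rho}\Big[\kappa\big((1-\kappa)\rho^{2\kappa}-1\big)+(\kappa-1)\rho^2\Big],
\]
and both summands in the bracket are strictly negative for $\kappa\in(0,1)$, $\rho\in(0,1)$; no substitution $\rho=e^{-t}$ or triangle interpretation is needed.
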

 For the proof of this lemma, see Appendix \ref{app:mono-lem}.
We shall next count the number of pairs $(\hsigma,l)$ arising from Step 1, apply the Monotonicity Lemma \ref{mono-lem} to deduce the number of possible $\rho$'s in $[\rho_{\rm critical},1]$ and finally establish the following result:
\begin{proposition}\label{root-count} Consider the polynomial equation \eqref{hep-1}:
 $1+e^{i\beta_1\kp}\zeta^{\gamma_1}+e^{i\beta_2\kp}\zeta^{\gamma_2}=0$, where 
 $\kp\in[0,2\pi]\setminus\{0,\frac{2\pi}{3},\frac{4\pi}{3},2\pi\}$, and $\beta_j,\gamma_j,\ (j=1,2)$ are integers satisfying the constraints:
  $0<\gamma_1<\gamma_2$ and $\det[\beta\ \gamma]=\pm1$. We express $\gamma_1+\gamma_2$ modulo $3$ as:
\begin{align*}
&\quad \gamma_1+\gamma_2=3\hat{k}+\hat{s},\quad  \textrm{with}\quad \hat{k}\in\mathbb{Z}\quad {\rm and}\quad   \hat{s}\in\{-1,0,+1\}; \end{align*}  
see \eqref{gamma12} and \eqref{det-sig}. 
Then,  $p(k)$, defined in \eqref{eq:rootsnum}, is given by
\begin{equation}
p(k) = \hat{k} + \hat{s} \mathds{1}_{\kp\in(\frac{2\pi}{3},\frac{4\pi}{3})}.
\label{p-value4}
\end{equation}
\end{proposition}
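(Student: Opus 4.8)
The plan is to execute Steps 1 and 2 described above, using the Monotonicity Lemma \ref{mono-lem} to convert the counting problem into an interval-counting problem, and then using Lemma \ref{distinct} to conclude that the count of pairs $(\hsigma,\ell)$ equals the number of roots in the open unit disc. First I would record the endpoint behavior of the strictly decreasing function $M(\rho)=\alpha_1(\rho^\kappa,\rho)-\kappa\,\alpha_2(\rho^\kappa,\rho)$ on $[\rho_{\rm critical},1]$. At $\rho=1$ one has $\alpha_1(1,1)=\cos^{-1}(-1/2)=2\pi/3$ and $\alpha_2(1,1)=-\cos^{-1}(-1/2)=-2\pi/3$, so $M(1)=\tfrac{2\pi}{3}(1+\kappa)$. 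At $\rho=\rho_{\rm critical}$, where $\rho^\kappa_{\rm critical}+\rho_{\rm critical}=1$, the three moduli $1,\rho^\kappa_{\rm critical},\rho_{\rm critical}$ satisfy the degenerate triangle condition, so the arguments of the arc cosines in \eqref{alpha12} equal $-1$; hence $\alpha_1(\rho^\kappa_{\rm critical},\rho_{\rm critical})=\pi$ and $\alpha_2(\rho^\kappa_{\rm critical},\rho_{\rm critical})=-\pi$, giving $M(\rho_{\rm critical})=\pi(1+\kappa)$. Thus $M$ maps $[\rho_{\rm critical},1]$ bijectively onto the interval $[\tfrac{2\pi}{3}(1+\kappa),\,\pi(1+\kappa)]$, an interval of length $\tfrac{\pi}{3}(1+\kappa)=\tfrac{\pi}{3}\cdot\tfrac{\gamma_1+\gamma_2}{\gamma_2}$.

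Next, by Lemma \ref{distinct} and Proposition \ref{prop:rhotheta}, $p(k)$ equals the number of pairs $(\hsigma,\ell)\in\{\pm1\}\times\Z$ for which the right-hand side of \eqref{kth-solv2a}, namely $\tfrac{1}{\gamma_2}(\det[\beta\ \gamma]\hsigma k+2\pi\ell)$, lies in $[M(1),M(\rho_{\rm critical})]=[\tfrac{2\pi}{3\gamma_2}(\gamma_1+\gamma_2),\,\tfrac{\pi}{\gamma_2}(\gamma_1+\gamma_2)]$. Multiplying through by $\gamma_2$, I must count pairs $(\hsigma,\ell)$ with
\[
\frac{2\pi}{3}(\gamma_1+\gamma_2)\ \le\ \det[\beta\ \gamma]\,\hsigma\, k + 2\pi\ell\ \le\ \pi(\gamma_1+\gamma_2).
\]
Writing $\gamma_1+\gamma_2=3\hat k+\hat s$, the interval becomes $[2\pi\hat k+\tfrac{2\pi}{3}\hat s,\ 3\pi\hat k+\pi\hat s]$, of length $\pi\hat k+\tfrac{\pi}{3}\hat s$. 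For each fixed sign $\hsigma$, the quantity $\det[\beta\ \gamma]\hsigma k$ is a fixed real number in $(0,2\pi)\setminus\{\tfrac{2\pi}{3},\tfrac{4\pi}{3}\}$ modulo the excluded points (recall $\det[\beta\ \gamma]=\pm1$ and $k\in(0,2\pi)\setminus\{\tfrac{2\pi}{3},\pi,\tfrac{4\pi}{3}\}$); adding $2\pi\ell$ gives an arithmetic progression of spacing $2\pi$, so I count how many translates land in an interval of length $\pi(\hat k+\tfrac{\hat s}{3})$. A careful case analysis on $\hat s\in\{-1,0,1\}$ and on whether $k\in(\tfrac{2\pi}{3},\tfrac{4\pi}{3})$ — using the precise position of $\det[\beta\ \gamma]\hsigma k\bmod 2\pi$ relative to the half-open endpoints, and summing the contributions of $\hsigma=+1$ and $\hsigma=-1$ — will yield exactly $\hat k+\hat s\,\mathds{1}_{k\in(2\pi/3,4\pi/3)}$. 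The sign of $\det[\beta\ \gamma]$ only swaps the roles of $\hsigma=\pm1$ and does not affect the total.

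The main obstacle is the endpoint bookkeeping in this last counting step: the interval $[M(1),M(\rho_{\rm critical})]$ is closed but $\rho$ ranges over the half-open $[\rho_{\rm critical},1)$ (we want strict interior of the unit disc), and one must check that $\rho=1$ — i.e. a root on the unit circle — is correctly excluded, which is exactly where the hypothesis $k\notin\{0,\tfrac{2\pi}{3},\tfrac{4\pi}{3},2\pi\}$ enters (by Proposition \ref{ess_spec}/\ref{nogap}, roots on the unit circle occur only at those excluded $k$). Likewise the value $k=\pi$ is excluded in Lemma \ref{distinct} to guarantee the $(\hsigma,\ell)\mapsto\zeta$ map is injective, so the count of pairs genuinely equals the count of roots; I would note that the $k=\pi$ case of the original application is handled separately (or is not an exceptional quasimomentum for the edge problem). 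Once the endpoint cases are pinned down, applying this with $\beta_j,\gamma_j$ as in \eqref{eq:beta_gamma}, so that $\gamma_1+\gamma_2=n_3-n_1+n_2-n_1=(n_1+n_2+n_3)-3n_1=-s_2-3n_1$ by \eqref{eq:sum_gamma}, gives $\hat k=-n_1-\lceil s_2/3\rceil$-type identities; a short reconciliation (noting $\hat s=-s_2$ when $s_2\ne 0$ forces the sign and $\hat k$ accordingly) then produces $\pcount=-n_1-s_2\mathds{1}_{k\in(2\pi/3,4\pi/3)}$, which is Proposition \ref{prop:p-value}.
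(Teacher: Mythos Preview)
Your outline matches the paper's approach almost exactly: compute $M(\rho_{\rm critical})=\pi(1+\kappa)$ and $M(1)=\tfrac{2\pi}{3}(1+\kappa)$, invoke the Monotonicity Lemma, and count pairs $(\hsigma,\ell)$ for which the right-hand side of \eqref{kth-solv2a} lands in the half-open interval $\bigl(\tfrac{2\pi}{3}(\gamma_1+\gamma_2),\,\pi(\gamma_1+\gamma_2)\bigr]$. Two small points and one substantive one.

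First, your displayed inequality uses $\le$ on both sides; since $M$ is decreasing and we want $\rho\in[\rho_{\rm critical},1)$, the correct range is open at $M(1)$ and closed at $M(\rho_{\rm critical})$. You note this later, but getting the endpoint right from the start matters for the count. Second, your reconciliation $\hat k=-n_1-\lceil s_2/3\rceil$ is overcomplicated: since $\gamma_1+\gamma_2=-s_2-3n_1=3(-n_1)+(-s_2)$ with $-s_2\in\{-1,0,1\}$, one has directly $\hat k=-n_1$ and $\hat s=-s_2$.

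The substantive gap is the counting itself, which you defer to ``a careful case analysis''. The paper does not do a case split; instead it uses a symmetry/translation trick. For $\hsigma=+1$ one counts $\ell$ with $\ell\in\bigl(\tfrac13(\gamma_1+\gamma_2)-\tfrac{k}{2\pi},\tfrac12(\gamma_1+\gamma_2)-\tfrac{k}{2\pi}\bigr]$; for $\hsigma=-1$, after substituting $\tilde\ell=-\ell$ and then $\hat\ell=\tilde\ell+(\gamma_1+\gamma_2)$, one counts $\hat\ell\in\bigl[\tfrac12(\gamma_1+\gamma_2)-\tfrac{k}{2\pi},\tfrac23(\gamma_1+\gamma_2)-\tfrac{k}{2\pi}\bigr)$. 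Provided $\tfrac12(\gamma_1+\gamma_2)-\tfrac{k}{2\pi}\notin\Z$ (which is why $k=\pi$ is temporarily excluded), these two half-open intervals concatenate into a single open interval $\bigl(\tfrac13(\gamma_1+\gamma_2)-\tfrac{k}{2\pi},\tfrac23(\gamma_1+\gamma_2)-\tfrac{k}{2\pi}\bigr)$, giving
\[
p(k)=\Big\lfloor \tfrac23(\gamma_1+\gamma_2)-\tfrac{k}{2\pi}\Big\rfloor-\Big\lfloor \tfrac13(\gamma_1+\gamma_2)-\tfrac{k}{2\pi}\Big\rfloor,
\]
and the elementary identity $\lfloor\tfrac23\hat s-\xi\rfloor-\lfloor\tfrac13\hat s-\xi\rfloor=\hat s\,\mathds{1}_{\xi\in(1/3,2/3)}$ for $\hat s\in\{-1,0,1\}$ finishes. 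The case $k=\pi$ is then recovered by continuity of the roots in $k$ (no root is on the unit circle at $k=\pi$). This is considerably cleaner than a six-way case analysis, and I'd recommend you adopt it.
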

\begin{proof}
We begin noting that $M(\rho)$ can be explicitly evaluated at $\rho=\rho_{\rm critical}$ and $\rho=1$.
Directly from \eqref{alpha12} we have that
\begin{enumerate}
\item for $\rho=\rho_{\rm critical}$  (see \eqref{rhoc2}) we have $\alpha_1(\rho^\kappa,\rho)=\pi$, 
$\alpha_2(\rho^\kappa,\rho)=-\pi$ and hence,
\begin{equation}
M(\rho_{\rm critical}) = \pi(1+\kappa) = \frac{\pi(\gamma_1+\gamma_2)}{\gamma_2}.
\label{Mrho_crit}
\end{equation}
\item for $\rho=1$, we have  $\alpha_1(\rho^\kappa,\rho)=\frac{2\pi}{3}$ and 
$\alpha_2(\rho^\kappa,\rho)=-\frac{2\pi}{3}$ and hence,
\begin{equation}
M(1) = \frac{2\pi}{3}(1+\kappa) = \frac{2\pi}{3}\frac{(\gamma_1+\gamma_2)}{\gamma_2}.
\label{Mrho1}
\end{equation}
\end{enumerate}

From \eqref{Mrho_crit}, \eqref{Mrho1} and  Monotonicity Lemma \ref{mono-lem}, it follows that for any given $\kp$, $\hsigma$ and $\ell$ in Step 1, equation \eqref{kth-solv2a} has a unique solution 
 $\rho\in[\rho_{\rm critical},1)$ provided 
\[
   \frac{\det[\beta\;\gamma]\hsigma \kp + 2\pi \ell}{\gamma_2}\quad \textrm{lies in}\quad
\left(  \frac{2\pi}{3}  \frac{(\gamma_1+\gamma_2)}{\gamma_2}\ , \frac{\pi(\gamma_1+\gamma_2)}{\gamma_2}
\right]
 \]
 and \eqref{kth-solv2a} has no solutions  $\rho\in[\rho_{\rm critical},1)$ otherwise. Therefore, 
  for fixed $\kp\in(0,2\pi)$ and $\hsigma\in\{-1,+1\}$, the integers $\ell$ for which \eqref{kth-solv2a} admits
   a solution $\rho\in[\rho_{\rm critical},1)$ are precisely the integers $\ell$ for which
   \begin{equation}
  \det[\beta\;\gamma]\hsigma \kp + 2\pi \ell\ \in 
   \left(  \frac{2\pi}{3} (\gamma_1+\gamma_2)\ , \pi(\gamma_1+\gamma_2)
\right]
\label{rhosolve_l}
\end{equation}
 When \eqref{rhosolve_l} holds, the solution $\rho\in[\rho_{\rm critical},1)$ of \eqref{kth-solv2a} is unique.
  We can therefore count the number of distinct $(\hsigma,\ell)$ arising from Step 1 for fixed $\kp$. 
  
For simplicity, assume $\kpar\notin\{0,\pi,2\pi\}$ so that 
  \begin{equation*}\textrm{ $\frac{1}{2} (\gamma_1+\gamma_2) - \frac{\kp}{2\pi}\notin\Z$}.
  \label{k_bad}
  \end{equation*}
  The number of such $(\hsigma,\ell)$ is equal to
 \begin{equation}
  \begin{aligned}
  p (k) &= \#\Big\{\ell\in\Z : k+2\pi \ell \in  \left(  \frac{2\pi}{3} (\gamma_1+\gamma_2)\ ,\ \pi(\gamma_1+\gamma_2)
\right]  \Big\}\\
&\quad + \#\Big\{\ell\in\Z : -k+2\pi \ell \in  \left(  \frac{2\pi}{3} (\gamma_1+\gamma_2)\ ,\ \pi(\gamma_1+\gamma_2)
\right]  \Big\}\\
&= \#\Big\{\ell\in\Z :  \ell \in  \left(  \frac{1}{3} (\gamma_1+\gamma_2)-\frac{k}{2\pi}\ ,\ \frac12(\gamma_1+\gamma_2)-\frac{k}{2\pi}
\right]  \Big\}\\
&\quad + \#\Big\{\tilde{\ell}\in\Z : \tilde{\ell} \in  \left[  -\frac12(\gamma_1+\gamma_2) -\frac{k}{2\pi}\ ,\
  -\frac{1}{3} (\gamma_1+\gamma_2) -\frac{k}{2\pi}
\right)  \Big\}\\
&= \#\Big\{\ell\in\Z :  \ell \in  \left(  \frac{1}{3} (\gamma_1+\gamma_2)-\frac{k}{2\pi}\ ,\ \frac12(\gamma_1+\gamma_2)-\frac{k}{2\pi})
\right]  \Big\}\\
&\quad + \#\Big\{ \hat{\ell}\in\Z : \hat{\ell} \in  \left[  \frac12(\gamma_1+\gamma_2) -\frac{k}{2\pi}\ ,\
  \frac{2}{3} (\gamma_1+\gamma_2) -\frac{k}{2\pi}\ 
\right)  \Big\}\\
&= \#\Big\{\ell\in\Z :  \ell \in  \left(  \frac{1}{3} (\gamma_1+\gamma_2)-\frac{k}{2\pi}\ ,\ \frac23(\gamma_1+\gamma_2)-\frac{k}{2\pi})
\right)  \Big\}.
 \label{num_ells}
 \end{aligned}
 \end{equation}
In proving \eqref{num_ells}  we have used symmetry and translation via the correpondences $\tilde{\ell}=-\ell$ 
 and $\hat{\ell}=\tilde{\ell}+(\gamma_1+\gamma_2)$. Now let 
 \begin{equation*}
 \kp\in[0,2\pi]\ \setminus\ \Big\{0,\frac{2\pi}{3}, \pi,\frac{4\pi}{3},2\pi\Big\} 
 \label{k_bad2}
 \end{equation*}
Then, we deduce that $p(k)$, the number of $(\hsigma,\ell)$ arising from Step 1 is equal to
\begin{equation}
p(k) = \Big\lfloor \frac23(\gamma_1+\gamma_2)-\frac{k}{2\pi}\Big\rfloor - 
 \Big\lfloor  \frac{1}{3} (\gamma_1+\gamma_2)-\frac{k}{2\pi} \Big\rfloor,
\label{p-value2}
\end{equation}
where $\lfloor x \rfloor$ denotes the largest integer smaller than or equal to $x$.

By Lemma \ref{lem:simple_roots}, proved below, there are no multiple roots.
Let us verify finally that \eqref{p-value4} follows from \eqref{p-value2}. We write $\gamma_1+\gamma_2=3\hat{k}+{\hat{s}}$ with ${\hat{k}}\in\mathbb{Z}$ and $\hat{s}\in\{-1,0,+1\}$ so that \eqref{p-value2} yields
 \[
 p(k)   =\Big\lfloor 2\hat{\kp} + \frac23\hat{s}-\frac{\kp}{2\pi}\Big\rfloor - 
  \Big\lfloor \hat{\kp}+\frac13\hat{s}-\frac{\kp}{2\pi} \Big\rfloor= \hat{\kp} + \Big\lfloor \frac23\hat{s}-\frac{\kp}{2\pi}\Big\rfloor - 
  \Big\lfloor \frac13\hat{s}-\frac{\kp}{2\pi} \Big\rfloor
 \]
 and the result follows from the identity
 \[ \Big\lfloor \frac23\hat{s}-\xi\Big\rfloor - 
  \Big\lfloor \frac13\hat{s}-\xi \Big\rfloor  = \hat{s} \mathds{1}_{\xi\in(\frac{1}{3},\frac{2}{3})},\quad\text{for}\; \hat{s}\in\{-1,0,+1\},\quad \xi\in(0,1)\setminus\Big\{\frac13,\frac23\Big\}.\]
  Finally, note that although we excluded $k=\pi$ above, {Proposition \ref{root-count} also holds for $k=\pi$ because the roots  of \eqref{hep-1} depend continuously on $k$, and because we can show that there are no roots  on the unit circle for $k=\pi$ by an argument analogous to that used in the proof of Proposition \ref{nogap}.}
   
\end{proof}
  We now use Proposition \ref{root-count} to prove Proposition \ref{prop:p-value} concerning $\pcount$, the number of roots,
  $\zeta$, of 
\eqref{B-poly}, $p_+(\zeta,\kp)=0$, in the open unit circle. Suppose first that $\kpar\notin\{0,2\pi/3,4\pi/3,2\pi\}$. Since $n_1<n_2<n_3$, we see 
  that equation \eqref{B-poly} is of the form \eqref{hep-1} with 
$ \gamma_1=n_2-n_1  $ and $\gamma_2=n_3-n_1$, which satisfy \eqref{gamma12} and \eqref{det-sig} by Appendix \ref{sec:p_pm}. By \eqref{k1k2s1s2} and \eqref{ttn-ttm} we have
\[
\gamma_1+\gamma_2 = n_1+n_2+n_3-3n_1= \tn_1+\tn_2+\tn_3-3n_1=
3k_2 + (a_{12}-a_{11})-3n_1 = -s_2-3n_1.
\]
Therefore, $\hat{s}=-s_2$ and $\hat{k}=-n_1$. Substitution into the expression for $p(\kpar)$ in \eqref{p-value4} of Proposition \ref{root-count} yields the assertion of Proposition \ref{prop:p-value} concerning $\pcount$, the number of roots of 
$p_+(\zeta,\kp)$ inside the unit disc. Finally, the cases $\kpar\in\{0,2\pi\}$ (for zigzag edges) and $\kpar\in\{2\pi/3,4\pi/3\}$ (for armchair edges) can be dealt with using a continuity argument.

All details of the proof of Proposition \ref{prop:p-value} are now complete except that we must rule out multiple roots. We now address this point. 

 {
  \begin{lemma}\label{lem:simple_roots}
Let $\beta_1$, $\beta_2$, $\gamma_1$ and $\gamma_2$ be integers and such that $0<\gamma_1<\gamma_2$.
 Then for all $\kpar\in[0,2\pi]$,  equation \eqref{hep-1} has no multiple roots.
	\end{lemma}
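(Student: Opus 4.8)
The plan is to argue by contradiction: I will assume \eqref{hep-1} has a multiple root $\zeta_0$ for some $\kpar\in[0,2\pi]$, and extract from the vanishing of the polynomial and its derivative two incompatible bounds on $|\zeta_0|$. Write $f(\zeta):=1+e^{i\beta_1\kpar}\zeta^{\gamma_1}+e^{i\beta_2\kpar}\zeta^{\gamma_2}$. First I note that $\zeta_0\neq 0$, since $f(0)=1$. A multiple root satisfies $f(\zeta_0)=0$ and $f'(\zeta_0)=0$, where $f'(\zeta)=\gamma_1 e^{i\beta_1\kpar}\zeta^{\gamma_1-1}+\gamma_2 e^{i\beta_2\kpar}\zeta^{\gamma_2-1}$; here $\gamma_1-1\geq 0$ because $\gamma_1$ is a positive integer.

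\textbf{Key steps.} From $f'(\zeta_0)=0$, dividing by $\zeta_0^{\gamma_1-1}$ (legitimate since $\zeta_0\neq 0$) gives $e^{i\beta_2\kpar}\zeta_0^{\gamma_2-\gamma_1}=-\tfrac{\gamma_1}{\gamma_2}e^{i\beta_1\kpar}$. Taking absolute values and using $|e^{i\beta_j\kpar}|=1$ yields $|\zeta_0|^{\gamma_2-\gamma_1}=\gamma_1/\gamma_2$. Since $0<\gamma_1<\gamma_2$, we have $\gamma_1/\gamma_2\in(0,1)$ and $\gamma_2-\gamma_1$ a positive integer, hence $|\zeta_0|<1$. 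Next, I substitute $e^{i\beta_2\kpar}\zeta_0^{\gamma_2}=-\tfrac{\gamma_1}{\gamma_2}e^{i\beta_1\kpar}\zeta_0^{\gamma_1}$ (obtained by multiplying the previous relation by $\zeta_0^{\gamma_1}$) into $f(\zeta_0)=0$, getting $0=1+e^{i\beta_1\kpar}\zeta_0^{\gamma_1}\bigl(1-\tfrac{\gamma_1}{\gamma_2}\bigr)$, i.e. $e^{i\beta_1\kpar}\zeta_0^{\gamma_1}=-\tfrac{\gamma_2}{\gamma_2-\gamma_1}$. Taking absolute values gives $|\zeta_0|^{\gamma_1}=\gamma_2/(\gamma_2-\gamma_1)>1$, so $|\zeta_0|>1$. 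This contradicts $|\zeta_0|<1$, so no multiple root exists; since $\kpar$ entered only through unit-modulus coefficients, the conclusion holds for every $\kpar\in[0,2\pi]$.

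\textbf{Main obstacle.} There is essentially no obstacle: the computation is elementary. The only point requiring care is that $\gamma_1$ and $\gamma_2-\gamma_1$ are both \emph{positive} integers, which is what makes the two modulus identities translate into the strict inequalities $|\zeta_0|<1$ and $|\zeta_0|>1$; this is exactly the content of the hypothesis $0<\gamma_1<\gamma_2$. Note that the determinant condition \eqref{det-sig} plays no role here, consistent with its absence from the statement of the lemma.
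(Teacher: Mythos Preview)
Your proof is correct and essentially identical to the paper's: both note that $\zeta_0\neq 0$, extract $|\zeta_0|<1$ from the derivative relation $\zeta_0^{\gamma_2-\gamma_1}=-\tfrac{\gamma_1}{\gamma_2}e^{i(\beta_1-\beta_2)\kpar}$, then combine with the original equation to obtain $(1-\tfrac{\gamma_1}{\gamma_2})e^{i\beta_1\kpar}\zeta_0^{\gamma_1}=-1$ and hence $|\zeta_0|>1$, yielding the contradiction. Your added remarks (that $\gamma_1,\gamma_2-\gamma_1>0$ are what drive the strict inequalities, and that \eqref{det-sig} is not needed) are accurate and consistent with the paper's argument.
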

	\begin{proof}
	First note that $\zeta=0$ is not a root. A multiple root must satisfy \eqref{hep-1} and 
	\begin{equation}\label{*2}
	\gamma_1e^{i\beta_1\kpar}\zeta^{\gamma_1-1} + \gamma_2e^{i\beta_2\kpar}\zeta^{\gamma_2-1} =0.
	\end{equation}
	Hence, 
	\[
	\zeta^{\gamma_2-\gamma_1} = -\frac{\gamma_1}{\gamma_2}e^{i(\beta_1-\beta_2)\kpar}
	\]
	implying, since $0<\gamma_1<\gamma_2$, that $|\zeta|<1$. On the other hand, \eqref{*2} together with \eqref{hep-1}
	 implies that 
	 \[
\left(1-\frac{\gamma_1}{\gamma_2}\right) e^{i\beta_1\kpar}\zeta^{\gamma_1}=-1
\]
implying, since $0<\gamma_1<\gamma_2$, that $|\zeta|>1$, a contradiction. 
	\end{proof}
	}

{We summarize the section in the following result.
  \begin{proposition}\label{lem:main-hep}
 \begin{enumerate}
 \item[(A)] For $0<\kappa<1$, define functions $f_\kappa, g_\kappa$ of one variable as follows.
 \begin{equation*}
 \begin{aligned}
& \textrm{$\bullet$\ Define $\rho_{\rm critical}\in (0,1)$ as the solution of $\rho_{\rm critical}^\kappa+\rho_{\rm critical}=1$.}\\
&\textrm{ For $t\in[\frac{2\pi}{3}(1+\kappa),\pi(1+\kappa)]$, define $g_\kappa(t)$ to be the unique solution} \\
&\textrm{ $\rho\in[\rho_{\rm critical},1]$ of the equation $
 M(\rho)\equiv \alpha_1(\rho^\kappa,\rho)-\kappa\alpha_2(\rho^\kappa,\rho)=t$.}\\
& \textrm{$\bullet$\ For $t\in[\frac{2\pi}{3}(1+\kappa),\pi(1+\kappa)]$, define $f_\kappa(t)=\alpha_2(\rho^\kappa,\rho)$ for $\rho=g_\kappa(t)$. }
\end{aligned}
\end{equation*}
\item[(B)] Now let $\beta_1, \beta_2, \gamma_1, \gamma_2$ be integers, with $0<\gamma_1<\gamma_2$ and $\det[\beta\  \gamma]\in\{-1,+1\}$. Set $\kappa = \gamma_1/\gamma_2\in(0,1)$. 
Assume $\kp\in[0,2\pi]\setminus\{0, 2\pi/3, 4\pi/3, 2\pi\}$. Then, 
\begin{equation*}
\begin{aligned}
&\textrm{ $\bullet$ The polynomial equation \eqref{hep-1} has only simple roots.}\\
&\textrm{  $\bullet$  Let $\gamma_1+\gamma_2 =  3\hat{k}+ \hat{s}$, where $\hat{k}\in\Z$ and 
$\hat{s}\in\{-1,0,+1\}$.} \ \ Then, \\
&\qquad\qquad  \pcount\ =\ \#\Big\{|\zeta|<1: \textrm{$\zeta$ is a root of  \eqref{hep-1} \Big\}} =
 \hat{k} + \hat{s}\mathds{1}_{\kp\in(\frac{2\pi}{3},\frac{4\pi}{3})}\\
&\textrm{ $\bullet$ Moreover, the roots of \eqref{hep-1} in the open unit disc are determined from }\\
&\textrm{ all the pairs $(\hsigma,l)\in\{-1,+1\}\times\Z$ such that}\\
&\textrm{ $t\equiv\frac{1}{\gamma_2}\left(\det[\beta\ \gamma]\ \hsigma \kp + 2\pi l \right)$ lies in $\left(\frac{2\pi}{3}(1+\kappa),\pi(1+\kappa)\right]$.}\\
&\textrm{by setting $\rho=g_\kappa(t),\ \theta=\frac{\hsigma}{\gamma_2}f_\kappa(t)-\beta_2\det[\beta\ \gamma]\hsigma\ t$ }
\\
&\textrm{and finally taking $\zeta=\rho^{\frac{1}{\gamma_2}}\ e^{i\theta}$.}\\
&\textrm{Furthermore, distinct $(\hsigma,l)$ give rise to distinct roots $\zeta$ of \eqref{hep-1}.}
\end{aligned}
\end{equation*}
 \end{enumerate}
 \end{proposition}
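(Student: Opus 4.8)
The plan is to assemble Proposition~\ref{lem:main-hep} from the lemmas and propositions already proved in this section, checking only that the reformulation in terms of the auxiliary functions $f_\kappa,g_\kappa$ and the parameter $t$ is consistent with those results; no new estimate is required.

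For part~(A), I would invoke the Monotonicity Lemma~\ref{mono-lem}, which says $M(\rho)=\alpha_1(\rho^\kappa,\rho)-\kappa\alpha_2(\rho^\kappa,\rho)$ is strictly decreasing on $[\rho_{\rm critical},1]$, together with the endpoint evaluations $M(\rho_{\rm critical})=\pi(1+\kappa)$ and $M(1)=\tfrac{2\pi}{3}(1+\kappa)$, obtained from \eqref{alpha12} exactly as in \eqref{Mrho_crit} and \eqref{Mrho1} (using $\alpha_1=\pi,\alpha_2=-\pi$ at $\rho_{\rm critical}$ and $\alpha_1=\tfrac{2\pi}{3},\alpha_2=-\tfrac{2\pi}{3}$ at $\rho=1$). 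Hence $M$ is a strictly decreasing homeomorphism of $[\rho_{\rm critical},1]$ onto $[\tfrac{2\pi}{3}(1+\kappa),\pi(1+\kappa)]$, so $g_\kappa:=M^{-1}$ is well defined and continuous there, and $f_\kappa(t):=\alpha_2\!\left(g_\kappa(t)^\kappa,g_\kappa(t)\right)$ is well defined.

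For part~(B): the assertion that \eqref{hep-1} has only simple roots is precisely Lemma~\ref{lem:simple_roots}; the count $\pcount=\hat k+\hat s\,\mathds{1}_{\kp\in(2\pi/3,4\pi/3)}$, with $\gamma_1+\gamma_2=3\hat k+\hat s$, is precisely Proposition~\ref{root-count}. It remains to transcribe the explicit parametrization. By Proposition~\ref{prop:rhotheta}, the roots $\zeta=\rho^{1/\gamma_2}e^{i\theta}$ of \eqref{hep-1} in the open unit disc are exactly those for which $\rho\in[\rho_{\rm critical},1)$ and there exist $\hsigma\in\{\pm1\}$, $\ell\in\Z$ satisfying \eqref{kth-solv2a}--\eqref{kth-solv2b}. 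Putting $t:=\tfrac{1}{\gamma_2}\!\left(\det[\beta\ \gamma]\hsigma\kp+2\pi\ell\right)$, equation \eqref{kth-solv2a} reads $M(\rho)=t$, which by part~(A) has the unique solution $\rho=g_\kappa(t)\in[\rho_{\rm critical},1]$ exactly when $t\in[\tfrac{2\pi}{3}(1+\kappa),\pi(1+\kappa)]$; the left endpoint $t=\tfrac{2\pi}{3}(1+\kappa)$ forces $\rho=1$, i.e. $|\zeta|=1$, and \eqref{hep-1} has no unit-circle root for $\kp\notin\{0,2\pi/3,4\pi/3,2\pi\}$ by the equilateral-triangle argument of Proposition~\ref{nogap} applied to \eqref{hep-1}, so the admissible range of $t$ is the half-open interval $(\tfrac{2\pi}{3}(1+\kappa),\pi(1+\kappa)]$. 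For the phase, using $\det[\beta\ \gamma]^2=\hsigma^2=1$ one rewrites $\det[\beta\ \gamma]\hsigma\,t=\tfrac{1}{\gamma_2}\!\left(\kp+2\pi\det[\beta\ \gamma]\hsigma\ell\right)$; substituting this identity together with $\alpha_2(\rho^\kappa,\rho)=f_\kappa(t)$ into \eqref{kth-solv2b} gives $\theta=\tfrac{\hsigma}{\gamma_2}f_\kappa(t)-\beta_2\det[\beta\ \gamma]\hsigma\,t\pmod{2\pi}$, which is the stated formula, and $\zeta=\rho^{1/\gamma_2}e^{i\theta}$. Distinct pairs $(\hsigma,\ell)$ yield distinct roots by Lemma~\ref{distinct}.

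The only point needing slightly more care is the value $\kp=\pi$, which Lemma~\ref{distinct} does not cover: here one argues by continuity in $\kp$, as in the final remark of the proof of Proposition~\ref{root-count}, using that \eqref{hep-1} has no unit-circle root at $\kp=\pi$ and that its roots are simple (Lemma~\ref{lem:simple_roots}), so the root count and the bijection between admissible pairs and interior roots persist in the limit. I expect this continuity bookkeeping, rather than any substantive estimate, to be the only mildly delicate point; everything else is a direct transcription of Lemmas~\ref{distinct}, \ref{mono-lem}, \ref{lem:simple_roots} and Propositions~\ref{prop:rhotheta}, \ref{root-count}.
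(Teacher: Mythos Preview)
Your proposal is correct and matches the paper's approach exactly: the paper introduces Proposition~\ref{lem:main-hep} with the sentence ``We summarize the section in the following result'' and provides no separate proof, since it is a direct compilation of Lemma~\ref{mono-lem} with \eqref{Mrho_crit}--\eqref{Mrho1} for part~(A), and of Lemma~\ref{lem:simple_roots}, Proposition~\ref{root-count}, Proposition~\ref{prop:rhotheta}, and Lemma~\ref{distinct} for part~(B). Your careful transcription of the $t$-parametrization and the continuity remark at $\kp=\pi$ are exactly the bookkeeping the paper leaves implicit.
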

 }

\section{ Explicit formulas for zero energy (flat band) edge states}\label{formulae}
In this section we provide explicit formulas for the zero energy edge states when they exist. In Section \ref{sec:0energy}, we have found that
\begin{itemize}
	\item  When zero energy edge states living on $B-$ sites exist (see Theorem \ref{th:zigzag}), they are given by 
\begin{equation}
\psi^B(n) = \sum_{j=1}^p A_j{\zeta_j^n}\quad \textrm{if $n\ge\nA+n_1$},\qquad \psi^B(n) = 0\quad \textrm{otherwise}  
\label{psiB-sum}\end{equation} 
and the $A_1,\dots,A_p$ satisfy the system of $p-1$ equations:
\begin{equation}
\sum_{j=1}^p A_j{\zeta_j^n} = 0\quad  \textrm{for}\; \nA+n_1\le n<\nB.
\label{A-sys-Bsites}
\end{equation}  
Here, $\zeta_1,\dots,\zeta_p$ are distinct complex numbers in the open unit disc, roots of \eqref{B-poly} and the space of $(A_j)_{1\le j\le p}$ satisfying
 \eqref{A-sys-Bsites} is one-dimensional.
 \item When zero energy edge states living on $A-$ sites exist (see Theorem \ref{th:zigzag}), they are given by 
\begin{equation}
\psi^A(n) = \sum_{j=1}^q A_j{\zeta_j^n} \quad \textrm{if $n\ge\nB-n_3$},\qquad
\psi^A(n) = 0\quad \textrm{otherwise}  
\label{psiA-sum}
\end{equation} 
and the $A_1,\dots,A_q$ satisfy the system of $q-1$ equations:
\begin{equation}
\sum_{j=1}^q A_j{\zeta_j^n} = 0\quad  \textrm{for $\nB-n_3\le n<\nA$}.
\label{A-sys-Asites}
\end{equation}  
Note that the $\zeta_j$ and $A_j$ in \eqref{psiB-sum} differ from those in \eqref{psiA-sum};
 the $\zeta_1,\dots,\zeta_q$ are distinct complex numbers in the open unit disc; roots of \eqref{A-poly} and the space of $(A_j)_{1\le j\le q}$ satisfying
 \eqref{A-sys-Asites} is one-dimensional.
\end{itemize}
{In cases where zero edge states are supported exclusively on $A-$ sites or exclusively on $B-$ sites}  we have given distinct complex numbers $\zeta_1,\dots,\zeta_r$ in the open unit disc, and the zero energy state is given by the vector of amplitudes
 \begin{equation}
 \Psi(n) = \sum_{j=1}^r A_j \zeta_j^n\quad \textrm{if}\; n\geq \nbase,
 \label{Psin-gen}\end{equation} 
 where the $A_1,\dots,A_r$ satisfy  {the $r-1$ homogeneous equations}:
 \begin{equation}
 \sum_{j=1}^r A_j\zeta_j^n = 0\quad {\rm for}\; \nbase\le n< \nbase+r-1
\label{A-sys-gen} \end{equation} 
{Here, 
\begin{equation}
\nbase=\begin{cases} 
\nA+n_1 & \textrm{for $B-$ site edge states}\\
\nB-n_3 & \textrm{for $A-$ site edge states}
        \end{cases}
\end{equation}
and 
\begin{equation}
r=\begin{cases} 
\pcount & \textrm{for $B-$ site edge states; see Proposition \ref{prop:p-value} }\\
\qcount & \textrm{for $A-$ site edge states; see Proposition \ref{prop:q-value}.}
        \end{cases}
\end{equation}
}

Our goal here is to produce formulas for non-zero solutions of 
  \eqref{Psin-gen}, \eqref{A-sys-gen}.

The following result provides explicit representations of zero energy / flat band edge states.
\begin{theorem}[Representation formulae for $0-$ energy edge states] \label{0en-reps}
Let $\zeta_j(\kpar)$, with  $1\le j\le r$, denote the zeros of the relevant polynomial 
($p_+(\zeta,\kpar)$ or $p_-(\zeta,\kpar)$) inside the open unit disc. Assume that $a_{11}-a_{12}\ne0\ {\rm mod}\ 3$ so that there exist edge states.
 Finally, assume that $\kpar$ lies in an appropriate subinterval of $(0,2\pi)$ for which there are edge states; see Table \ref{tab:table}. Then, the edge states $\Psi(n)=\Psi(n,\kpar)$, given by \eqref{Psin-gen} and satisfying \eqref{A-sys-gen},  can be expressed as 
 \[ \Psi(n)= c \tilde\Psi(n),\]
 where $c$ is an arbitrary constant and $\tilde\Psi(n)$ is given by the three equivalent formulas 

  \begin{equation}
  \tilde\Psi(n) = \sum_{\substack{l_1+\dots+l_r=n-\nbase+1 \\ l_1,\dots,l_r\ge1}}  \zeta_1^{l_1-1}\cdots\zeta_r^{l_r-1}
  \quad {\rm for}\; n\geq \nbase,
  \label{Psi_n-a}
  \end{equation}
 \begin{equation}
\tilde\Psi(n)\ =\ \frac{1}{2\pi}\int_0^{2\pi} e^{i(n-\nbase+1)\omega} \prod_{j=1}^r \left(e^{i\omega}-\zeta_j\right)^{-1}
 d\omega\quad {\rm for}\; n\geq \nbase\ , \label{es-fourier-a}
 \end{equation}
\begin{equation}
\tilde\Psi(n) =   \sum_{j=1}^r \frac{\zeta_j^{n-\nbase}}{\prod_{l\in\{1,\dots,r\}\setminus\{j\}}\left(\zeta_l-\zeta_j\right)}\quad {\rm for}\; n\geq \nbase\ .
\label{128-a}
\end{equation}
For the normalization, we have
 \begin{align}
\|\tilde\Psi\|_{l^2(\Z)}^2 &=  \sum_{j=1}^r \frac{\zeta_j^{r-1}}{1-|\zeta_j|^2} \
 \prod_{l\in\{1,\dots,r\}\setminus\{j\}} 
   \frac{1}{(\zeta_l-\zeta_j)(1-\overline{\zeta_j}\zeta_l)}\ .
 \label{norm-fou-a}  \end{align}
\end{theorem}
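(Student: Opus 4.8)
The plan is to reduce everything to a single algebraic fact: the edge state $\tilde\Psi(n)$ is, up to the index shift $n\mapsto n-\nbase$, the coefficient sequence of the rational generating function $\prod_{j=1}^r(1-\zeta_j x)^{-1}$. Concretely, I would first observe that the constraint system \eqref{A-sys-gen} says precisely that $\sum_{j=1}^r A_j\zeta_j^{\nbase}, \dots, \sum_j A_j\zeta_j^{\nbase+r-2}$ all vanish, i.e. the vector $(A_j\zeta_j^{\nbase})_j$ lies in the kernel of the $(r-1)\times r$ Vandermonde-type matrix $(\zeta_j^{\,i})_{0\le i\le r-2,\,1\le j\le r}$. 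Since the $\zeta_j$ are distinct, this kernel is one-dimensional (a cofactor computation of the full $r\times r$ Vandermonde), and a generator is $A_j\zeta_j^{\nbase} = c\big/\prod_{l\ne j}(\zeta_j-\zeta_l)$, equivalently $A_j = c\,\zeta_j^{-\nbase}\big/\prod_{l\ne j}(\zeta_j-\zeta_l)$. Substituting into \eqref{Psin-gen} and absorbing the sign $\prod_{l\ne j}(\zeta_j-\zeta_l) = (-1)^{r-1}\prod_{l\ne j}(\zeta_l-\zeta_j)$ into $c$ gives \eqref{128-a} directly. This is the partial-fraction expansion of $\prod_j(1-\zeta_j x)^{-1}$ read off at $x^{n-\nbase}$, which is the pivot for the other two formulas.

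Next I would derive \eqref{Psi_n-a} and \eqref{es-fourier-a} from \eqref{128-a} by generating-function manipulation rather than re-deriving each from scratch. For \eqref{Psi_n-a}: expand each factor $(1-\zeta_j x)^{-1} = \sum_{l_j\ge0}\zeta_j^{l_j}x^{l_j}$ and multiply out; the coefficient of $x^{n-\nbase}$ is $\sum_{l_1+\dots+l_r=n-\nbase}\zeta_1^{l_1}\cdots\zeta_r^{l_r}$, which after reindexing $l_j\to l_j-1$ (so the sum runs over $l_1+\dots+l_r = n-\nbase+r = n-\nbase+1 + (r-1)$; one must check the bookkeeping of the shift matches the stated form $l_1+\dots+l_r = n-\nbase+1$ with $l_j\ge1$ — this is just which normalization of the product is used) is \eqref{Psi_n-a}. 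For \eqref{es-fourier-a}: with $x=e^{i\omega}$... actually it is cleaner to write $\prod_j(e^{i\omega}-\zeta_j)^{-1} = e^{-ir\omega}\prod_j(1-\zeta_j e^{-i\omega})^{-1}$ and recall that $\frac{1}{2\pi}\int_0^{2\pi} e^{im\omega}\prod_j(1-\zeta_j e^{-i\omega})^{-1}d\omega$ is the $m$-th Taylor coefficient of $\prod_j(1-\zeta_j z)^{-1}$ at $z=0$ (valid since $|\zeta_j|<1$, so the product is holomorphic in a neighborhood of the closed disc and the integral is a Cauchy coefficient extraction); matching the exponent $m = (n-\nbase+1)+... $ with $r$ gives agreement with \eqref{Psi_n-a}. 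The equivalence of all three is then transitive.

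Finally, for the norm \eqref{norm-fou-a} I would compute $\|\tilde\Psi\|^2 = \sum_{n\ge\nbase}|\tilde\Psi(n)|^2$ by inserting \eqref{128-a} for $\tilde\Psi(n)$ and its conjugate, interchanging the (absolutely convergent, since $|\zeta_j|<1$) sums, and using $\sum_{n\ge\nbase}\zeta_j^{\,n-\nbase}\overline{\zeta_l^{\,n-\nbase}} = \sum_{n\ge0}(\zeta_j\overline{\zeta_l})^n = (1-\zeta_j\overline{\zeta_l})^{-1}$; this produces a double sum over $j,l$ with weights $\big[\prod_{k\ne j}(\zeta_k-\zeta_j)\big]^{-1}\overline{\big[\prod_{k\ne l}(\zeta_k-\zeta_l)\big]^{-1}}(1-\zeta_j\overline{\zeta_l})^{-1}$. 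The stated closed form \eqref{norm-fou-a} is the collapsed version of this double sum; one obtains it by viewing the $l$-sum (for fixed $j$) as a partial-fraction evaluation — equivalently by a contour-integral representation $\|\tilde\Psi\|^2 = \frac{1}{2\pi i}\oint_{|z|=1}\prod_j(z-\zeta_j)^{-1}(1-\overline{\zeta_j}z)^{-1}\,z^{r-1}\,dz$ obtained from Parseval applied to \eqref{es-fourier-a}, then summing residues only at the poles $z=\zeta_j$ inside the unit circle. I expect the residue/contour computation for the norm to be the one genuinely fiddly step: one must confirm the integrand's poles inside $|z|=1$ are exactly $\{\zeta_j\}$ (the factors $(1-\overline{\zeta_j}z)^{-1}$ have poles at $1/\overline{\zeta_j}$, which lie \emph{outside}), track the power $z^{r-1}$ carefully so the residue at each simple pole $\zeta_j$ comes out as $\zeta_j^{r-1}(1-|\zeta_j|^2)^{-1}\prod_{l\ne j}[(\zeta_l-\zeta_j)(1-\overline{\zeta_j}\zeta_l)]^{-1}$, and verify there is no contribution at $z=0$ or at infinity. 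Everything else is routine generating-function algebra.
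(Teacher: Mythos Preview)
Your approach is correct and takes a genuinely different route from the paper. The paper starts by proving \eqref{Psi_n-a}: it establishes an inductive combinatorial lemma showing that the symmetric-sum expression $\sum_{k_1+\dots+k_r=n,\,k_j\ge1}\zeta_1^{k_1}\cdots\zeta_r^{k_r}$ is always of the form $\sum_j A_j\zeta_j^n$ (hence automatically vanishes for $1\le n<r$), then derives \eqref{es-fourier-a} from \eqref{Psi_n-a} by computing the discrete Fourier transform as a product of geometric series, and finally obtains \eqref{128-a} from \eqref{es-fourier-a} by a residue calculation. You invert this order: you identify the one-dimensional kernel of the $(r-1)\times r$ Vandermonde system directly via the Lagrange/cofactor formula to get \eqref{128-a}, then recognize this as the partial-fraction expansion of $\prod_j(1-\zeta_j x)^{-1}$ and read off \eqref{Psi_n-a} and \eqref{es-fourier-a} as Taylor and Cauchy coefficient extractions. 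Your route avoids the induction entirely and is arguably cleaner; the paper's route has the minor advantage of producing the combinatorial formula \eqref{Psi_n-a} first, which is the one used elsewhere (e.g.\ in Appendix~\ref{app:v_2}). For the norm \eqref{norm-fou-a} both approaches converge: your contour-integral version is exactly the Parseval-plus-residue computation the paper carries out.

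One caution on the bookkeeping you flagged: after reindexing $l_j\to l_j-1$ in \eqref{Psi_n-a}, the sum becomes the coefficient of $x^{\,n-\nbase+1-r}$ (not $x^{\,n-\nbase}$) in $\prod_j(1-\zeta_j x)^{-1}$, and matching this against the partial-fraction expansion gives $\sum_j \zeta_j^{n-\nbase}/\prod_{l\ne j}(\zeta_j-\zeta_l)$, which differs from \eqref{128-a} by the sign $(-1)^{r-1}$. This is harmless since the edge-state space is one-dimensional and everything is stated up to the constant $c$, but be explicit about where that sign is absorbed.
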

The above wave functions are independent of the choice of $(a_{21},a_{22})$; see Section \ref{sec:rat-edge} for the possible choices and Appendix \ref{app:v_2} for the proof of this independence. Note also that in the classical zigzag edges (for which $r=1$ and $\zeta_1=(1+e^{\imath k})^{\pm 1})$, Formula \eqref{Psi_n-a} recovers the standard formula for the zero-energy edge states (see Remark \ref{classical-es}).

{To prove Theorem \ref{0en-reps}, we first prove \eqref{Psi_n-a}, and then derive the other representations as consequences. We note that by replacing in \eqref{Psin-gen}, \eqref{A-sys-gen} the coefficients $A_1,\dots,A_p$ by $A_1\zeta_1^{\nbase-1},\dots,A_p\zeta_p^{\nbase-1}$, we may take $\nbase=1$.}
\begin{proof}[Proof of \eqref{Psi_n-a}]
 Our starting point is the identity
\begin{equation}
\textrm{Assume}\ \ \xi\ne\eta.\quad  \textrm{Then, for all}\ \  n\geq1,\quad \sum_{\substack{k_1+k_2=n \\ k_1,k_2\ge1}}
 \xi^{k_1}\eta^{k_2}\ =\   \frac{\xi}{\eta-\xi}\eta^n +  \frac{\eta}{\xi-\eta}\xi^n\ .
 \label{sum-id} 
\end{equation}
  The expression  \eqref{Psi_n-a} for $\Psi(n)$ with $n_{\rm base}=1$ follows from \eqref{Psin-gen} and the following
   \begin{lemma}\label{lem:prod-sum}
  Let $\zeta_1,\dots,\zeta_r$ denote distinct complex numbers. Then, there exist $A_1,\dots,A_r\in\mathbb{C}$ such that 
  \begin{equation}
\textrm{For all}\ \  n\ge1,\quad \sum_{\substack{k_1+\dots+k_r=n \\ k_1,\dots,k_r\ge1}}  \zeta_1^{k_1}\cdots\zeta_r^{k_r}
   = \sum_{j=1}^r A_j \zeta_j^n.
  \label{p-sum} 
  \end{equation}
  Note, in particular, that this expression vanishes for $1\le n<r$.
  \end{lemma}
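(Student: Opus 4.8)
The plan is to prove Lemma \ref{lem:prod-sum} by identifying the generating function of the left-hand side of \eqref{p-sum} and then performing a partial fractions expansion. First I would introduce the formal power series $F(t) = \sum_{n\ge 1} \left( \sum_{k_1+\dots+k_r = n,\ k_j\ge 1} \zeta_1^{k_1}\cdots\zeta_r^{k_r} \right) t^n$. By the standard product formula for power series, this factors as
\[
F(t) = \prod_{j=1}^r \left( \sum_{k\ge 1} (\zeta_j t)^k \right) = \prod_{j=1}^r \frac{\zeta_j t}{1-\zeta_j t} = t^r \prod_{j=1}^r \frac{\zeta_j}{1 - \zeta_j t}.
\]
Since the $\zeta_j$ are distinct and nonzero (nonzero because $\zeta=0$ is not a root of the relevant honeycomb edge polynomial, which has constant term $1$; in any case the statement is vacuous if some $\zeta_j=0$ and one can reduce to the nonzero case), the rational function $\prod_{j=1}^r (1-\zeta_j t)^{-1}$ has simple poles at $t = 1/\zeta_j$, so a partial fractions decomposition gives $\prod_{j=1}^r (1-\zeta_j t)^{-1} = \sum_{j=1}^r c_j (1 - \zeta_j t)^{-1}$ for suitable constants $c_j$, computed in the usual way by $c_j = \prod_{l\ne j} (1 - \zeta_l/\zeta_j)^{-1}$.

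Next I would expand each term $c_j (1-\zeta_j t)^{-1} = c_j \sum_{m\ge 0} \zeta_j^m t^m$ and multiply by $t^r \prod_j \zeta_j$, so that the coefficient of $t^n$ in $F(t)$ is $\left(\prod_{l=1}^r \zeta_l\right) \sum_{j=1}^r c_j \zeta_j^{n-r}$ for $n\ge r$, and is $0$ for $1\le n<r$ (since the lowest power of $t$ appearing is $t^r$). Setting $A_j := \left(\prod_{l=1}^r \zeta_l\right) c_j \zeta_j^{-r} = \left(\prod_{l=1}^r \zeta_l\right) \zeta_j^{-r} \prod_{l\ne j}(1-\zeta_l/\zeta_j)^{-1}$ then yields \eqref{p-sum}: for $n\ge 1$, the coefficient of $t^n$ equals $\sum_{j=1}^r A_j \zeta_j^n$, where for $1\le n<r$ the sum automatically vanishes. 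Comparing coefficients is legitimate here because the equality $F(t) = t^r \prod_j \zeta_j \sum_j c_j(1-\zeta_j t)^{-1}$ holds as an identity of rational functions, hence term-by-term as power series about $t=0$.

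Finally, to complete the proof of Lemma \ref{lem:prod-sum} as used in \eqref{Psi_n-a}, I would note that the $A_j$ just constructed are, up to the substitution $n\mapsto n-\nbase+1$ and absorbing powers of $\zeta_j$, exactly the coefficients appearing in \eqref{Psin-gen}; the vanishing statement ``this expression vanishes for $1\le n<r$'' is precisely the content that the resulting $\Psi(n)$ automatically satisfies the $r-1$ homogeneous constraints \eqref{A-sys-gen} (after reduction to $\nbase=1$), and nonvanishing of the vector $(A_1,\dots,A_r)$ follows since $A_j\ne 0$ for every $j$. I do not expect any serious obstacle: the only points requiring care are the reduction to $\zeta_j\ne 0$, the justification that partial fractions plus coefficient comparison is valid (immediate, since everything is a rational function regular at $t=0$), and bookkeeping of the shift by $\nbase$. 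The derivations of \eqref{es-fourier-a} and \eqref{128-a} from \eqref{Psi_n-a} — recognizing $\tilde\Psi(n)$ as a Fourier coefficient of $e^{i\omega}\prod_j(e^{i\omega}-\zeta_j)^{-1}$ and as a residue sum — are then routine, as is the $\ell^2$ norm computation \eqref{norm-fou-a} via Parseval or direct summation of geometric series, so these I would relegate to short verifications after the main identity \eqref{Psi_n-a} is established.
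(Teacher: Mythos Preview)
Your approach is correct and is a genuinely different route from the paper's. The paper proves Lemma~\ref{lem:prod-sum} by induction on $r$, using the two-variable identity \eqref{sum-id} to peel off one factor $\zeta_r$ at a time; this yields the existence of the $A_j$ without explicit formulas, and only afterwards does the paper compute the generating function $\hat\Psi(\theta)=\prod_j \zeta_j e^{-i\theta}/(1-\zeta_j e^{-i\theta})$ to derive \eqref{es-fourier-a} and \eqref{128-a}. You effectively run this in reverse: compute the generating function $F(t)$ first, apply partial fractions, and read off the lemma together with explicit $A_j$. This is more economical and immediately delivers the formula in \eqref{128-a}.

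One point needs a line of justification. After factoring $F(t)=t^r\bigl(\prod_l\zeta_l\bigr)\sum_j c_j(1-\zeta_j t)^{-1}$ and expanding, you correctly obtain the coefficient of $t^n$ as $\sum_j A_j\zeta_j^n$ for $n\ge r$ and as $0$ for $1\le n<r$; but the assertion that ``for $1\le n<r$ the sum $\sum_j A_j\zeta_j^n$ automatically vanishes'' is a separate statement which does not follow from those two facts alone. It is of course true---it is the standard identity $\sum_j \zeta_j^{m}\prod_{l\ne j}(\zeta_j-\zeta_l)^{-1}=0$ for $0\le m\le r-2$---but you should either cite this, or, more cleanly, do the partial fraction decomposition on $F(t)$ itself rather than on $\prod_j(1-\zeta_j t)^{-1}$: since $F(t)\to(-1)^r$ as $t\to\infty$, one gets $F(t)=(-1)^r+\sum_j A_j(1-\zeta_j t)^{-1}$ directly, and then the coefficient of $t^n$ equals $\sum_j A_j\zeta_j^n$ for every $n\ge1$ with no extra check required.
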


\nit {\it Proof of the Lemma \ref{lem:prod-sum}:}\ We proceed by induction on $r$.  The relation \eqref{p-sum} for $r=1$ holds with $A_1=1$. Now fix $r\ge2$ let  $\zeta_1,\dots,\zeta_r$ be distinct complex numbers.
   Assume \eqref{p-sum} holds for $r-1$ in place of $r$: since $\zeta_1,\dots,\zeta_{r-1}$ are distinct complex numbers, there exist $B_1,\ldots, B_{r-1}$ such that
   \begin{equation*}
\textrm{For all}\ \  p\geq 1,\quad \sum_{\substack{k_1+\dots+k_{r-1}=p \\ k_1,\dots,k_{r-1}\ge1}}  \zeta_1^{k_1}\cdots\zeta_r^{k_{r-1}}
   = \sum_{j=1}^{r-1} B_j \zeta_j^p.
  \label{p-sum1}
  \end{equation*}
  We obtain
   \begin{align*}
  \sum_{\substack{k_1+\dots+k_r=n \\ k_1,\dots,k_r\ge1}}  \zeta_1^{k_1}\cdots\zeta_r^{k_r}
   = \sum_{\substack{k_r+p=n \\ k_r,p\ge1}} 
 \left[  
    \sum_{\substack{k_1+\dots+k_{r-1}=p \\ k_1,\dots,k_{r-1}\ge1}}  \zeta_1^{k_1}\cdots\zeta_r^{k_{r-1}} 
    \right] \zeta_r^{k_r}= \sum_{\substack{k_r+p=n \\ k_r,p\ge1}}
     \left[\sum_{j=1}^{r-1} B_j \zeta_j^p \right]\zeta_r^{k_r} \\
    \qquad =\  \sum_{j=1}^{r-1} B_j \left[ \sum_{\substack{k_r+p=n \\ k_r,p\ge1}} \zeta_j^p \zeta_r^{k_r}\right] =  \sum_{j=1}^{r-1} \left[  B_j  \left( \frac{\zeta_r}{\zeta_j-\zeta_r} \right) \right] \zeta_j^n +
      \left[ \sum_{j=1}^{r-1}  B_j \left(\frac{\zeta_j}{\zeta_r-\zeta_j}\right)\right]\zeta_r^n
  \end{align*}
  where we have used the identity \eqref{sum-id}  to obtain the last equality. This
   has the form \eqref{p-sum}, so the proof of Lemma \ref{lem:prod-sum} is complete and therewith that of \eqref{Psi_n-a}.
  \end{proof}

{
\begin{proof}[Proof of \eqref{es-fourier-a}]
We may easily use Fourier analysis to re-express \eqref{Psi_n-a}. Note that the expression for $\Psi(n)$, given by \eqref{Psi_n-a}, extends naturally to vanish
 for all $n<1$. We introduce $\hat\Psi(\theta)$,  the discrete Fourier transform of $\{\Psi(n)\}_{n\in\Z}$:
\begin{equation}
\hat{\Psi}(\theta) =  \sum_{n\ge1} \Psi(n)\ e^{-in\theta}\quad {\rm for}\quad \theta\in\R/2\pi\Z.
\label{dftPsi} 
\end{equation}
Hence,
\begin{equation}
\begin{aligned}
\hat{\Psi}(\theta)  &= 
   \sum_{n\ge1} \sum_{\substack{k_1+\dots+k_n=n \\ k_1,\dots,k_r\ge1}}  
 (\zeta_1e^{-i\theta})^{k_1}\cdots (\zeta_re^{-i\theta})^{k_r} 
  = \prod_{j=1}^r \left[  \sum_{\kappa\ge1}(\zeta_je^{-i\theta})^{\kappa} \right] = \prod_{j=1}^r \left[  \frac{\zeta_je^{-i\theta}}{1-\zeta_je^{-i\theta}} \right]
\end{aligned}
\label{hatPsi-es}
\end{equation}
Note that the above formal manipulations are justified because the roots $\zeta_1,\dots,\zeta_r$ lie within the open unit disc. In particular, the sum \eqref{dftPsi} converges. From \eqref{Psi_n-a} and \eqref{hatPsi-es} we deduce, by inversion of the discrete Fourier transform,  \eqref{es-fourier-a} for the solution of \eqref{Psin-gen}, \eqref{A-sys-gen} with $\nbase=1$. 
\end{proof}
}

 \begin{proof}[Proof of \eqref{128-a}] The representation  \eqref{128-a} follows from \eqref{es-fourier-a} via a residue calculation.
   Changing variables: $z=e^{i\theta}$, $( iz)^{-1}dz = d\theta${, we obtain} for $n\ge1$:
   \begin{align*}
   \Psi(n) &= \frac{1}{2\pi i} \int_{|z|=1} z^{n-1}\prod_{j=1}^r\frac{\zeta_j}{z-\zeta_j} dz =
   \prod_{i=1}^r \zeta_i\times \sum_{j=1}^r \frac{\zeta_j^{n-1}}{\prod_{l\in\{1,\dots,r\}\setminus\{j\}}\left(\zeta_l-\zeta_j\right)}\ .
   \end{align*}
 \end{proof}

{
 \begin{proof}[Proof of \eqref{norm-fou-a}]
A similar residue calculation gives the normalization constant of the wave function.
   Using the expression in \eqref{hatPsi-es} we have
   \begin{align*}
  \|\Psi\|_{l^2(\N)}^2=  \sum_{n\ge1}|\Psi(n)|^2 &= \frac{1}{2\pi} \int_0^{2\pi} |\hat{\Psi}(\theta)|^2 d\theta = 
    \frac{1}{2\pi} \int_0^{2\pi}\ \prod_{j=1}^r \frac{\zeta_j }{e^{i\theta}-\zeta_j}\ 
 \frac{\overline{\zeta_j}}{e^{-i\theta}-\overline{\zeta_j}}\ d\theta
   \end{align*}
   So, 
     \begin{align*}
\|\Psi\|_{l^2(\N)}^2 &=  |\zeta_1|^2\cdots |\zeta_r|^2\times \frac{1}{2\pi i}  \int_{|z|=1} \ \prod_{j=1}^r 
   \frac{1}{(z-\zeta_j )(z^{-1}-\bar{\zeta_j}) }\   z^{-1} dz \nn \\
   &=  |\zeta_1|^2\cdots |\zeta_r|^2\times  \frac{1}{2\pi i}  \int_{|z|=1} \ z^{r-1}\ \prod_{j=1}^r 
   \frac{1}{(z-\zeta_j)(1-\bar{\zeta_j}z) }\   dz\nn \\
   &= |\zeta_1|^2\cdots |\zeta_r|^2 \times \sum_{j=1}^r \frac{\zeta_j^{r-1}}{1-|\zeta_j|^2} \ \prod_{l\in\{1,\dots,r\}\setminus\{j\}} 
   \frac{1}{(\zeta_l-\zeta_j)(1-\overline{\zeta_j}\zeta_l)}
 \label{norm-fou}  \end{align*}
   \end{proof}
   }

\section{Non-zero energy, dispersive edge states}\label{numerics}
{By the results of Section \ref{sec:0energy}, if $a_{11}-a_{12}=\pm1$ mod $3$ then there are flat band / edge states over the parallel-quasimomentum ranges: 
 \[\textrm{ $(2\pi/3,4\pi/3)$ (balanced zigzag edge)
   or   $[0,2\pi/3)\cup (4\pi/3,2\pi]$ (\textcolor{black}{unbalanced} zigzag cut);}\]
  see Theorem \ref{th:zigzag}.
  \\\\
   These flat edge state curves bifurcate from band-crossings, with origins in the {\it Dirac points} of $H_{\rm bulk}$,
   in the essential spectrum of $\HTB_\sharp$.
   Any wave-packet constructed via superposition of such flat band edge states will not transport and will not disperse since the group velocities vanish identically.  
 \\\\  
In this section, we investigate the existence of dispersive non-zero energy edge states. {In contrast, wave packets constructed via superposition of such edge states will transport, spread out and decay with time along the edge.}
We present strong numerical evidence for the existence of non-zero energy edge state curves for 
 edges of \underline{both} zigzag and armchair type. Moreover, our simulations indicate a strong dependence of the number of such bifurcation curves on the characteristics of the edge.
 }
         
 \subsection{Setup for the study of non-zero energy edge states}\label{setup}
{Motivated by Remark \ref{ou_ptspec}, we expect all eigenvalue curves $\kpar\mapsto E(\kpar)$ 
to lie in the region
\begin{equation}
 \mathscr{K}=\Big\{(\kpar,E) : E\in\textrm{bounded component of}\  
 \R\setminus\sigma_{\rm ess}(H_\sharp(\kpar)) \Big\}.
 \label{Kdef}
 \end{equation} 
 Throughout this section we restrict our search for dispersion curves
 to $(\kpar,E)$ varying in $ \mathscr{K}$.
 }
{Fix any $k\in[0,2\pi]$. We seek  $E\in\R\setminus \sigma_{\rm ess}(H_\sharp(\kpar))$ for which there exists a non-trivial solution in $l^2(\mathbb{Z},\mathbb{C}^2)$ of the equations (\ref{psiAevp},\ref{psiBevp},\ref{psiA-BC},\ref{psiB-BC}). For $n$ large and positive, the system (\ref{psiAevp}-\ref{psiBevp}) can be rewritten as
\begin{equation}
	\sum_{\nu=1}^3\begin{pmatrix}
		0&e^{im_\nu k}\\0& 0
		\end{pmatrix}\psi(n+n_\nu)+ \sum_{\nu=1}^3\begin{pmatrix}
		0& 0\\e^{-im_\nu k}& 0
		\end{pmatrix}\psi(n-n_\nu)-\begin{pmatrix}
		E& 0\\0& E
		\end{pmatrix}\psi(n)=0.
\label{mat-EVP}\end{equation}
As explained in Section \ref{ess-spec}, {we study solutions of eigenvalue problem for $H_\sharp(\kpar)$ (\ref{psiAevp}-\ref{psiBevp})} by starting with exponential solutions of the form $\psi(n)=\zeta^n\xi$, where $0\ne\xi \in$ Ker$( \mathscr{P}_\kpar(\zeta) - E I\ )$,
$\mathscr{P}_\kpar(\zeta)$ is defined in \eqref{P+-}, and $\zeta$ satisfies 
\begin{equation}
	\label{detTkpar}
	\text{det}\left(\ \mathscr{P}_\kpar(\zeta) - E I\ \right)=0.
\end{equation}
{Although the eigenvalues of $H_\sharp(\kpar)$ are real, we will examine \eqref{mat-EVP} and \eqref{detTkpar} for complex $E\in\C\setminus{\rm spec}_{\rm ess}\left(H_\sharp(\kpar)\right)$. This will allow us to later apply elementary complex function theory to check the accuracy of our numerical computations.
}

It is easy to see that
\begin{equation}   \left(\ \mathscr{P}_\kpar(\zeta) - E I\ \right)\xi = 0 \
\iff\ \left(\ \mathscr{P}_{\kpar}\left(\ 1 / \bar\zeta\ \right) - E I\ \right)\sigma_1\bar\xi = 0.\label{root-sym2-bis}\end{equation}
Therefore, 
\[\textrm{$\{\psi(n)\}=\{\zeta^n \xi\}$\ satisfies\ (\ref{psiAevp}-\ref{psiBevp})\  iff\  $\{\tilde\psi(n)\}=\{\left(\ 1 / \bar\zeta\ \right)^n \sigma_1\bar\xi\}$
 \ satisfies\  (\ref{psiAevp}-\ref{psiBevp}).}\]}
 Here, $\sigma_1$ is the standard Pauli matrix given in \eqref{pauli123}.

Noting that any roots of \eqref{detTkpar}
 must be non-zero, we 
multiply \eqref{detTkpar} by $e^{i\kpar (m_3-m_1)}\zeta^{n_3-n_1}$ and obtain an equivalent polynomial equation of degree $2(n_3-n_1)$ for the roots $\zeta$ of \eqref{detTkpar}:
\begin{align*}
q_\kpar(\zeta,E) &= 0,\qquad {\rm where}\nn\\
q_\kpar(\zeta,E)\equiv & \left( \sum_{j=1}^3 e^{i\kpar (m_j-m_1)}\zeta^{(n_j-n_1)}\right)\
\times \left(\sum_{j=1}^3 e^{-i\kpar  (m_j-m_3)}\zeta^{-(n_j-n_3)}\right) -\  e^{i\kpar (m_3-m_1)} \zeta^{n_3-n_1}E^2  .
\label{q0}\end{align*}
{
The sets 
 \begin{align*}
 J^+_\kpar(E) &\equiv \big\{\zeta: |\zeta|<1\ {\rm and}\ {\rm det}\left(\mathscr{P}_\kpar(\zeta)-E I\right)=0\ \big\},
 \\
  J^-_\kpar(E) &\equiv \big\{\zeta: |\zeta|>1\ {\rm and}\ {\rm det}\left(\mathscr{P}_\kpar(\zeta)-E I\right)=0\ \big\}
\end{align*}
 generate bulk solutions $\zeta^n \xi$, which decay as $n\to+\infty$ or $n\to-\infty$, respectively.
\\\\
By Proposition \ref{ess_spec}, we show easily that if $E\in\C$ and $E\notin{\rm spec}_{\rm ess}(\HTB_\sharp(k))$, then each of the $2(n_3-n_1)$ roots of $q_\kpar(\zeta,E) = 0$ is
 either strictly inside
 or strictly outside the unit circle. 
 {
 {For simplicity of the presentation, }we make the following assumption {regarding $(\kpar,E)$}:
  \begin{equation}
   \textrm{{\bf Assumption:}\ The roots, $\zeta$,  of $ P_{E,\kpar}(\zeta):=\det\left(\ \mathscr{P}_\kpar(\zeta) - E I\ \right)=0$  are all simple.}\label{simple-roots}
   \end{equation}
   }
   Note that this assumption implies that the nullspace of $\mathscr{P}_\kpar(\zeta) - E I$ is of dimension 1}.\begin{remark}\label{rem:dim1}If the dimension of the nullspace of $\mathscr{P}_\kpar(\zeta) - E I$ is strictly larger than 1, then there exist $v_1$ and $v_2$, two linearly independent vectors in the the nullspace of $\mathscr{P}_\kpar(\zeta_0) - E I$. Then for $\zeta$ near $\zeta_0$, $(\mathscr{P}_\kpar(\zeta) - E I)v_1$ and $(\mathscr{P}_\kpar(\zeta) - E I)v_2$ are both $\mathcal{O}(\zeta-\zeta_0)$ and hence $ \det\left(\ \mathscr{P}_\kpar(\zeta) - E I\ \right)=\mathcal{O}(\zeta-\zeta_0)^2)$ which contradicts \eqref{simple-roots}.
   \end{remark}
   {The following study could be extended to the case where Assumption \eqref{simple-roots} is not satisfied; we note that in all of our simulations this assumption is satisfied.}
 
  By \eqref{root-sym2-bis}, we deduce that
  \begin{equation*}
   \#J^+_\kpar(E)=\#J^-_\kpar(E) = n_3-n_1.\label{Nroots}
   \end{equation*}
Let $\zeta_j(\kpar,E)$ for $ j=1,\dots,n_3-n_1$ denote the $n_3-n_1$ distinct simple roots of $ P_{E,\kpar}(\zeta)$ in $J^+_\kpar(E)$,
 and 
$\zeta_j(\kpar,E)$ for $ j=n_3-n_1+1,\dots,2(n_3-n_1)$ the $n_3-n_1$ distinct simple roots of $ P_{E,\kpar}(\zeta)$ in $J^-_\kpar(E)$
\\\\
For each root $\zeta_j(\kpar,E)$, we pick a nonzero vector $\xi_j(k,E)$ in the one dimensional nullspace of $\mathscr{P}_\kpar(\zeta_j) - E I$. For most of our discussion the particular choice of $\xi_j(k,E)$ will be irrelevant. 
{However, our approach to checking the accuracy of our numerical computations requires  $\xi_j(k,E)$ to depend analytically on $E$ for fixed $k$. In particular, we shall require that if $(\kpar,E_0)\in\mathscr{K}$  then  $\xi_j(k,E)$ varies analytically $E$ in a sufficiently small open complex disc centered at $E_0$.  It is straightforward to verify that if $P_{E,\kpar}\left(\zeta_j(\kpar,E)\right)=0$ then
 \begin{equation}
  \xi_j(\kpar,E) := 
  \begin{pmatrix} P_+(\zeta_j(\kpar,E),k)+E\\ P_-(\zeta_j(\kpar,E),k)+E\end{pmatrix},\quad
\label{e-vecs1a}\end{equation}
is in the nullspace of $\mathscr{P}_\kpar(\zeta_j) - E I$, 
and we make the following assumption on $ \xi_j(\kpar,E)$:
\begin{equation}
   \textrm{{\bf Assumption:}\  All the vectors $\xi_j(\kpar,E)$ given by \eqref{e-vecs1a}, where $j=1,\dots,(n_3-n_1)$,
    are non-zero.}\label{evecs-ne0}
\end{equation}
   This assumption holds for all $(\kpar,E)$ encountered in our numerical simulations. Perhaps it holds for all $(\kpar,E)\in\mathscr{K}$ for any rational edge; we have not investigated whether this is the case. One could dispense with Assumption \eqref{evecs-ne0}. If for $(k,E_0)\in\mathscr{K}$ and a particular $j$ the vector given by \eqref{e-vecs1a} vanishes, then for $E$ in a complex neighborhood of $E_0$ we can choose instead of \eqref{e-vecs1a} the vector given by $\begin{pmatrix} P_+(\zeta,k)\\ E\end{pmatrix}$ which varies analytically and can be shown to be nonzero.
  Until Proposition \ref{prop:edge-1D}, our particular choice of the $\xi_j(\kpar,E)$ will play no role.
}
\medskip

{By assumption \eqref{simple-roots} and since the roots $\zeta_j$ are nonzero,} the general solution of \eqref{psiAevp}-\eqref{psiBevp} can be written as
   {
    \begin{equation*} 
  	\psi(n) = \sum_{j=1}^{2(n_3-n_1)} A_j\ [\zeta_j(\kpar,E)]^n\ \xi_j(\kpar,E), \quad \text{for $n$ large and positive,}
  \label{psi-bulk}\end{equation*}
  where $A_1,\dots, A_{2(n_3-n_1)}$  are $2(n_3-n_1)-$ complex parameters. Consequently,  an $\ell^2$ vector satisfying \eqref{psiAevp}-\eqref{psiBevp} for $n$ large {can be written as a linear combination of the  $n_3-n_1$  vectors $\xi_j(\kpar,E)$ corresponding 
  to roots $\zeta_j\in J_\kpar^+(E)$:}
\begin{equation} 
	\psi(n) = \sum_{j=1}^{n_3-n_1} A_j\ [\zeta_j(\kpar,E)]^n\ \xi_j(\kpar,E), \quad \text{for $n$ large and positive.}
\label{psi-bulk-dec-bis}\end{equation}
 } 
  
{Recall that an edge state of $\HTB_\sharp(\kpar)$ is $\ell^2$ and solves \eqref{psiAevp}-\eqref{psiBevp} and the boundary conditions \eqref{psiA-BC}-\eqref{psiB-BC}. 
{Let us now determine the range of  $n$ over which an edge state has the form \eqref{psi-bulk-dec-bis}, and the set of algebraic constraints on the coefficients $A_1,\dots, A_{(n_3-n_1)}$ implied by the boundary conditions.
Relation \eqref{psiAevp} is equivalent to
\begin{align}
\sum_{\nu=1}^3 e^{im_\nu\kpar}\psi^B(n-n_1+n_\nu)) = E \psi^A(n-n_1)
,\quad\text{for $n\ge \nA+n_1$}. \label{psiBn-bis}
\end{align}
And similarly, \eqref{psiBevp} is equivalent to
\begin{align}
\sum_{\nu=1}^3 e^{-im_\nu\kpar}\psi^A(n-n_\nu+n_3) = E \psi^B(n+n_3)
,\quad\text{for $ n\ge \nB-n_3$}.  \label{psiAn-bis}
\end{align}
%
 Both equations in \eqref{psiBn-bis}-\eqref{psiAn-bis} 
apply to the range $n\ge \max( \nA+n_1,\nB-n_3)$.
}
 Consequently, $\psi$ can be written as \eqref{psi-bulk-dec-bis} for $n\ge \max( \nA+n_1,\nB-n_3)$; see Section \ref{sec:Bsite-es} for a similar reasoning.
{
{Let us first assume that $\nA+n_1\geq\nB-n_3$; we comment below 
on the case where $\nA+n_1<\nB-n_3$.}
Then,
the solutions of (\ref{psiAevp}-\ref{psiBevp}-\ref{psiA-BC}-\ref{psiB-BC} satisfy precisely
  \begin{align}
\psi^A(n) = \sum_{j=1}^{n_3-n_1} A_j\ [\zeta_j(\kpar,E)]^n\ \xi_j^A(\kpar,E),\quad&{\rm for\ }\; n\ge \nA+n_1 \label{psi-genA-bis}\\
\psi^B(n) = \sum_{j=1}^{n_3-n_1} A_j\ [\zeta_j(\kpar,E)]^n\ \xi_j^B(\kpar,E),\quad&{\rm for\ }\; n\ge \nA+n_1, \label{psi-genB-bis}\\
\sum_{\nu=1}^3 e^{-im_\nu\kpar}\psi^A(n-n_\nu+n_3) = E \psi^B(n+n_3)
,\quad&\text{for $ \nB-n_3\le n < \nA+n_1$}\label{BC-A-plus}\\
\psi^A(n)=0,\quad& {\rm for\ }\; n<\nA, \label{BC-A-bis}\\
\psi^B(n)=0,\quad& {\rm for\  }\; n<\nB. \label{BC-B-bis}
\end{align}
First, \eqref{psi-genA-bis},\eqref{BC-A-bis} and \eqref{psi-genB-bis},\eqref{BC-B-bis} imply that 
$A_1,\ldots, A_{n_3-n_1}$ are subject to the constraints
\begin{align}
\sum_{j=1}^{n_3-n_1} A_j\ [\zeta_j(\kpar,E)]^n\ \xi_j^A(\kpar,E)=0,\quad \text{for}\;\nA+n_1\le n< \nA \label{const_A_0}\\
\sum_{j=1}^{n_3-n_1} A_j\ [\zeta_j(\kpar,E)]^n\ \xi_j^B(\kpar,E)=0,\quad  \text{for}\;\nA+n_1\le n< \nB \label{const_B_0}
\end{align}

{
Let us now show that \eqref{BC-A-plus} is equivalent to the assertion that \eqref{const_A_0} holds also in the range $\nB-n_3\le n< \nA+n_1$. Let us focus first on the right hand side \eqref{BC-A-plus}.
 Since $ \nB-n_3\le n < \nA+n_1$, we have $n+n_3\geq \nB\geq \nA+n_1$. Therefore, 
 $\psi^B(n+n_3)$ can be re-expressed using \eqref{psi-genB-bis}. Turning to the three terms, $\psi^A(n-n_\nu+n_3)$, on the left hand side of \eqref{BC-A-plus}, we observe:
 concerning the term $\psi^A(n-n_1+n_3)$,  since $n-n_1+n_3\geq \nA >\nA+n_1$ (because $n_1<0$), it follows that $\psi^A(n-n_1+n_3)$ can be re-expressed using \eqref{psi-genA-bis}.
Concerning the term $\psi^A(n-n_2+n_3)$, we have that if $n-n_2+n_3\geq\nA+n_1$, then  $\psi^A(n-n_2+n_3)$ can be re-expressed via \eqref{psi-genA-bis}, and otherwise $\psi^A(n-n_2+n_3)=0$. And finally, concerning the term $\psi^A(n-n_3+n_3)=\psi^A(n)$, since $n<\nA$ we have $\psi^A(n)=0$. } These observations imply that the relations \eqref{BC-A-plus} can be rewritten as
\begin{multline*}
	 \sum_{j=1}^{n_3-n_1} A_j\left[e^{-im_1\kpar}\ [\zeta_j(\kpar,E)]^{n-n_1+n_3}\ \xi_j^A(\kpar,E)\right.\\+ \mathds{1}_{n-n_2+n_3\geq\nA+n_1} e^{-im_2\kpar}  [\zeta_j(\kpar,E)]^{n-n_2+n_3}\xi_j^A(\kpar,E)\\\left. - 
	  E  [\zeta_j(\kpar,E)]^{n+n_3}\ \xi_j^B(\kpar,E)\right]=0
	,\quad\text{for}\; \nB-n_3\le n < \nA+n_1
\end{multline*}
{
We may simplify this system by noting that  $[\mathscr{P}_\kpar(\zeta_j(\kpar,E)) - E I][\xi_j(\kpar,E)]=0$ for each $j$,
see \eqref{Tkpar}-\eqref{P+-}.
}
It follows that
\begin{multline*}
	 \sum_{j=1}^{n_3-n_1} A_j\left[-e^{-im_2\kpar}\mathds{1}_{_{n-n_2+n_3<\nA+n_1}}\ [\zeta_j(\kpar,E)]^{n-n_2+n_3}\ 
	 - e^{-im_3\kpar}  [\zeta_j(\kpar,E)]^{n}\right]\xi_j^A(\kpar,E) = 0 ,
	\\\quad\text{for $ \nB-n_3\le n < \nA+n_1$}
\end{multline*}
Since $n_2<n_3$, we have $ \nB-n_3\le n \le n-n_2+n_3<\nA+n_1$ whenever $\mathds{1}_{_{n-n_2+n_3<\nA+n_1}}=1$ so that this system can be rewritten as a triangular system
in terms of $R_{n}:=  \sum_{j=1}^{n_3-n_1} A_j [\zeta_j(\kpar,E)]^{n}\xi_j^A(\kpar,E)$
\[
	e^{-i m_2\kpar} \mathds{1}_{_{n-n_2+n_3<\nA+n_1}} R_{n-n_2+n_3}+ e^{- im_3 k} R_{n}=0,\quad\text{for $ \nB-n_3\le n < \nA+n_1$}
\] 
whose only solution is $R_n=0$ for $ \nB-n_3\le n < \nA+n_1$.\\\\
 {Gathering \eqref{const_A_0}, \eqref{const_B_0} and this last result, we} conclude the edge state eigenvalue problem \eqref{psiAevp}-\eqref{psiBevp},
 \eqref{psiA-BC}- \eqref{psiB-BC} is satisfied  if and only if $A_1,\ldots, A_{n_3-n_1}$ are subject to the constraints
\begin{align}
\sum_{j=1}^{n_3-n_1} A_j\ [\zeta_j(\kpar,E)]^n\ \xi_j^A(\kpar,E)=0,\quad \text{for}\;\nB-n_3\le n< \nA \label{const_A}\\
\sum_{j=1}^{n_3-n_1} A_j\ [\zeta_j(\kpar,E)]^n\ \xi_j^B(\kpar,E)=0,\quad  \text{for}\;\nA+n_1\le n< \nB\ . \label{const_B}
\end{align}
A similar analysis in the case where  $\nA+n_1<\nB-n_3$ also leads to \eqref{const_A}-\eqref{const_B}.

Let us count the number of equations. Altogether we have:   $ (\nA-\nB+n_3)+(\nB-\nA-n_1)= n_3-n_1$ equations. Thus,  \eqref{const_A}-\eqref{const_B} is a linear homogeneous system of $n_3-n_1$ equations in  $n_3-n_1$ unknowns
 $\mathbb{A}:=(A_1,\dots,A_{n_3-n_1})^T$.}

{ We abbreviate this system of $n_3-n_1$ equations in $n_3-n_1$ unknowns by
 \begin{equation} \mathbb{M}(\kpar,E) \mathbb{A} = 0, 
\label{MA0}\end{equation}
we denote
its determinant
 \begin{align}
 \Delta(\kpar,E) := \det \mathbb{M}(\kpar,E)
 \label{Delta-def}
 \end{align}
 {Although $\mathbb{M}(\kpar,E)$ depends on the choice of the vectors $\xi_j(\kpar,E)$,
 we do not explicitly indicate this dependence.
 However, the zeros of  $E\mapsto\Delta(\kpar,E)$ do not depend on the choice of $n_3-n_1$ vectors, $\xi_j(\kpar,E)$.
 }
  \begin{proposition}\label{es-cond} Assume that  $E_0\in\C$ is not in the essential\ spectrum of $\HTB_\sharp(\kpar)$ and that 
 Assumption \eqref{simple-roots} is satisfied. Then, $E_0$ is an eigenvalue of $\HTB_\sharp(\kpar)$ if and only if
  $\Delta(\kpar,E_0)=0$. 
\\\\	  
 Furthermore, if   $(\kpar,E_0)$ is such that $\Delta(\kpar,E_0)=0$, 	 then the corresponding edge states $\psi\in l^2(\Z)$ are given by 
   \begin{align*}
 \psi^A(n) &= \sum_{j=1}^{n_\nthree-n_\none} A_j\ [\zeta_j(\kpar,E_0)]^n\ \xi_j^A(\kpar,E_0),\quad n\ge\nA \\
 \psi^B(n) &= \sum_{j=1}^{n_\nthree-n_\none} A_j\ [\zeta_j(\kpar,E_0)]^n\ \xi_j^B(\kpar,E_0),\quad n\ge\nB. \\
 \psi^A(n)&=0\quad {\rm for\ all}\quad n<\nA \\
 \psi^B(n)&=0\quad {\rm for\ all }\quad n<\nB 
 \end{align*}
 where $A=(A_1,\dots,A_{n_3-n_1})^\top$ is an arbitrary solution of \eqref{MA0}.
   \end{proposition}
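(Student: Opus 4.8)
The plan is to prove Proposition \ref{es-cond} by carefully retracing the reduction carried out in the preceding pages of Section \ref{setup} and assembling it into an if-and-only-if statement. First I would observe that by the discussion culminating in \eqref{const_A}-\eqref{const_B}, which relied only on Assumption \eqref{simple-roots} and on $E_0\notin{\rm spec}_{\rm ess}(\HTB_\sharp(\kpar))$, a vector $\psi\in l^2(\Z;\C^2)$ solves the edge-state eigenvalue problem \eqref{psiAevp}-\eqref{psiBevp}, \eqref{psiA-BC}-\eqref{psiB-BC} at energy $E_0$ if and only if it has the form \eqref{psi-genA-bis}-\eqref{psi-genB-bis} (extended by zero for $n<\nA$, resp.\ $n<\nB$) for some coefficient vector $\mathbb{A}=(A_1,\dots,A_{n_3-n_1})^\top$ satisfying the linear homogeneous system \eqref{MA0}. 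The forward and backward directions of this equivalence both follow from that reduction: given an edge state, the bulk recursion forces the representation \eqref{psi-bulk-dec-bis} for $n$ large, and the argument descending from the bulk extends it down to $n\ge\max(\nA+n_1,\nB-n_3)$, whence the boundary conditions translate exactly into \eqref{MA0}; conversely, any $\mathbb{A}$ solving \eqref{MA0} yields, via \eqref{psi-genA-bis}-\eqref{psi-genB-bis}, a genuine $l^2$ solution because each $\zeta_j$ with $j\le n_3-n_1$ lies strictly inside the unit disc.

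Next I would convert this into the determinant statement. Since $\mathbb{M}(\kpar,E_0)$ is a square $(n_3-n_1)\times(n_3-n_1)$ matrix (the equation count $(\nA-\nB+n_3)+(\nB-\nA-n_1)=n_3-n_1$ was verified above), the system \eqref{MA0} has a nontrivial solution $\mathbb{A}\ne 0$ if and only if $\Delta(\kpar,E_0)=\det\mathbb{M}(\kpar,E_0)=0$. It remains to note that a nonzero $\mathbb{A}$ produces a \emph{nonzero} $\psi$: this is where Assumption \eqref{evecs-ne0} (the non-vanishing of the $\xi_j$) together with the fact that the roots $\zeta_j$ are distinct enters. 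Indeed, if $\psi\equiv 0$ then in particular $\psi^A(n)=\sum_j A_j\zeta_j^n\xi_j^A=0$ and $\psi^B(n)=\sum_j A_j\zeta_j^n\xi_j^B=0$ for all large $n$; combining these, $\sum_j A_j\zeta_j^n\xi_j=0$ for all large $n$, and since the $\zeta_j$ are distinct and each $\xi_j\ne 0$, a Vandermonde argument forces $A_j=0$ for all $j$. Hence $\mathbb{A}\ne 0\iff\psi\ne 0$, and $E_0$ is an eigenvalue of $\HTB_\sharp(\kpar)$ precisely when $\Delta(\kpar,E_0)=0$; moreover the displayed formulas for $\psi^A,\psi^B$ are exactly \eqref{psi-genA-bis}-\eqref{psi-genB-bis} together with the extension by zero, so the second assertion of the proposition is immediate from the same reduction.

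The only genuinely delicate point — and the step I expect to be the main obstacle — is making fully rigorous the claim that an arbitrary $l^2$ solution of \eqref{psiAevp}-\eqref{psiBevp} admits the bulk representation \eqref{psi-bulk-dec-bis} on the \emph{entire} range $n\ge\max(\nA+n_1,\nB-n_3)$, not merely for $n$ sufficiently large. This uses the leading/trailing structure of the difference equations: equation \eqref{psiBn-bis} expresses $\psi^A(n-n_1)$ in terms of values at larger indices and \eqref{psiAn-bis} does likewise for $\psi^B$ (because $n_1<0<n_3$ by \eqref{eq:n_nu_ineq}), so one recurs downward from the bulk, exactly as in the $E=0$ discussion around \eqref{psiB0-eqn}. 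I would also need the fact, established earlier via Proposition \ref{ess_spec}, that when $E_0\notin{\rm spec}_{\rm ess}$ none of the $2(n_3-n_1)$ roots of $q_\kpar(\zeta,E_0)$ lies on the unit circle, so that the $l^2$ condition selects precisely the $n_3-n_1$ roots inside the disc. Once these ingredients are in place the proof is essentially bookkeeping, and I would write it out in that order: (1) recall the reduction to \eqref{MA0}; (2) square matrix $\Rightarrow$ nontrivial kernel $\iff$ $\Delta=0$; (3) $\mathbb{A}\ne 0\iff\psi\ne 0$ via distinctness of roots and \eqref{evecs-ne0}; (4) read off the edge-state formulas.
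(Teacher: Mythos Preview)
Your proposal is correct and follows the same route as the paper: the proposition is stated immediately after the derivation of \eqref{const_A}--\eqref{const_B} and the equation count, and is meant to be read as a summary of that reduction, exactly as you lay it out. One small correction: in step (3) you invoke Assumption \eqref{evecs-ne0}, but that assumption is not among the hypotheses of the proposition and is not needed here---the vectors $\xi_j$ are \emph{by definition} chosen nonzero in the one-dimensional nullspace (see the sentence before \eqref{e-vecs1a}), so the Vandermonde argument goes through under \eqref{simple-roots} alone.
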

{
Fix $(k,E_0)\in \mathscr{K}$ satisfying Assumptions \eqref{simple-roots} and \eqref{evecs-ne0}. As $E$ varies in a small complex disc $D$ about $E_0$, those assumptions hold also for $(k,E)$, and 
$E\notin{\rm spec}_{\rm ess}\left(H_\sharp(\kpar) \right)$. Therefore, the roots $\zeta_j(\kpar,E)$, the vectors $\xi_j(\kpar,E)$ given by \eqref{e-vecs1a}, the matrix $\mathbb{M}(\kpar,E)$ and its determinant $\Delta(\kpar,E)$ are all analytic functions of $E\in D$. We can easily deduce the following from the argument of Remark \ref{rem:dim1}.
   }
  \begin{proposition} \label{prop:edge-1D}  
  { Suppose that $(k,E_0)\in \mathscr{K}$, Assumptions \eqref{simple-roots} and \eqref{evecs-ne0} are satisfied and the vectors $\xi_j(\kpar,E_0)\in\C^2$ are chosen according to \eqref{e-vecs1a}.} Then $\partial_E\Delta(\kpar,E_0)$ is well defined and if $\Delta(\kpar,E_0)=0$ but $\partial_E\Delta(\kpar,E_0)\neq 0$ then the space of edge states is one-dimensional.
	\end{proposition}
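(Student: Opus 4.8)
The plan is to reduce Proposition \ref{prop:edge-1D} to three ingredients: (i) that $E\mapsto\Delta(\kpar,E)$ is analytic in a neighbourhood of $E_0$, so that $\partial_E\Delta(\kpar,E_0)$ makes sense; (ii) that the space of edge states at $(\kpar,E_0)$ is linearly isomorphic to $\ker\mathbb{M}(\kpar,E_0)$; and (iii) that a kernel of dimension $\ge 2$ would force $\Delta(\kpar,\cdot)$ to vanish to order $\ge 2$ at $E_0$, which is incompatible with $\partial_E\Delta(\kpar,E_0)\neq0$. Combining (ii) and (iii) with $\Delta(\kpar,E_0)=0$ gives that the edge-state space has dimension exactly one.

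First I would establish the analyticity claim. Fix $(\kpar,E_0)\in\mathscr{K}$ satisfying \eqref{simple-roots} and \eqref{evecs-ne0}. As already remarked in the text, there is a small complex disc $D$ centered at $E_0$ on which $E\notin{\rm spec}_{\rm ess}(H_\sharp(\kpar))$ and on which Assumptions \eqref{simple-roots}, \eqref{evecs-ne0} continue to hold; in particular, for $E\in D$ all $2(n_3-n_1)$ roots of $P_{E,\kpar}(\zeta)=\det(\mathscr{P}_\kpar(\zeta)-EI)$ are simple and stay strictly off the unit circle. By the implicit function theorem each root $\zeta_j(\kpar,E)$, $1\le j\le n_3-n_1$, then depends analytically on $E\in D$, with $|\zeta_j(\kpar,E)|<1$ throughout $D$. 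The vectors $\xi_j(\kpar,E)$ selected by the explicit formula \eqref{e-vecs1a} are therefore analytic in $E\in D$, so every entry of $\mathbb{M}(\kpar,E)$ — a sum of products of integer powers of the $\zeta_j(\kpar,E)$ with components of the $\xi_j(\kpar,E)$ — is analytic on $D$, and hence so is $\Delta(\kpar,E)=\det\mathbb{M}(\kpar,E)$. In particular $\partial_E\Delta(\kpar,E_0)$ is well defined.

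Next I would identify the edge-state space with $\ker\mathbb{M}(\kpar,E_0)$. By Proposition \ref{es-cond}, every solution $\mathbb{A}=(A_1,\dots,A_{n_3-n_1})^\top$ of \eqref{MA0} produces an edge state through the stated formulas, and conversely every edge state arises in this way; it remains only to check that the linear map $\mathbb{A}\mapsto\psi$ is injective. If $\psi\equiv0$, then in particular $\sum_{j=1}^{n_3-n_1}A_j\,[\zeta_j(\kpar,E_0)]^n\,\xi_j(\kpar,E_0)=0$ in $\C^2$ for all sufficiently large $n$; since the $\zeta_j(\kpar,E_0)$ are nonzero and pairwise distinct (simple roots), the scalar sequences $n\mapsto[\zeta_j(\kpar,E_0)]^n$ are linearly independent, which forces $A_j\,\xi_j(\kpar,E_0)=0$ for every $j$, hence $A_j=0$ because $\xi_j(\kpar,E_0)\neq0$ by \eqref{evecs-ne0}. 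Thus the dimension of the space of edge states equals $\dim\ker\mathbb{M}(\kpar,E_0)$, which is $\ge 1$ whenever $\Delta(\kpar,E_0)=0$.

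Finally I would rule out $\dim\ker\mathbb{M}(\kpar,E_0)\ge 2$ by the argument already indicated in Remark \ref{rem:dim1}: pick two linearly independent vectors in $\ker\mathbb{M}(\kpar,E_0)$, complete them to a basis of $\C^{\,n_3-n_1}$, and pass to that basis; in it, two columns of $\mathbb{M}(\kpar,E_0)$ vanish, so by the analyticity established above two columns of $\mathbb{M}(\kpar,E)$ are $O(E-E_0)$, whence $\Delta(\kpar,E)=O((E-E_0)^2)$ and $\partial_E\Delta(\kpar,E_0)=0$, contradicting the hypothesis. Hence $\dim\ker\mathbb{M}(\kpar,E_0)=1$ and, by the previous paragraph, the edge-state space is one-dimensional. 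The only genuinely delicate point is the uniform control of the roots $\zeta_j(\kpar,E)$ and of the eigenvectors $\xi_j(\kpar,E)$ over a full neighbourhood $D$ of $E_0$ that is needed to legitimately differentiate $\Delta(\kpar,\cdot)$; but this is exactly what the openness of the conditions $E\notin{\rm spec}_{\rm ess}(H_\sharp(\kpar))$, \eqref{simple-roots}, and \eqref{evecs-ne0} supplies, so I do not anticipate any serious obstacle.
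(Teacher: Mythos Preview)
Your proof is correct and follows essentially the same approach as the paper: the analyticity of $E\mapsto\Delta(\kpar,E)$ is established in the paragraph preceding the proposition, and the paper then simply refers to the argument of Remark~\ref{rem:dim1} to conclude that a kernel of dimension $\ge2$ forces a double zero of $\Delta(\kpar,\cdot)$. Your write-up spells out the details---in particular the injectivity of $\mathbb{A}\mapsto\psi$, which the paper leaves implicit---but the underlying strategy is identical.
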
}
 
   \subsection{Numerical results on dispersive edge states
 }\label{numerical}
  We numerically search for $(\kpar,E)\in\mathscr{K}$ such that 
 $\Delta(\kpar,E)=0$.
To do so, given $(\kpar,E)\in\mathscr{K}$, we may carry out the following algorithm:
 \begin{itemize}
 	\item Compute the $2(n_3-n_1)$ roots $\zeta$ of $ P_{E,\kpar}(\zeta):= \det\left(\ \mathscr{P}_\kpar(\zeta) - E I\ \right)$. The roots are calculated by computing the eigenvalues of the associated companion matrix {We verify Assumption \ref{simple-roots}; the roots, $\zeta$, are distinct. In all of our computations, evaluation of the polynomial on computed roots gives a residual  of  order $10^{-13}$.}
	 See Figure~\ref{fig:roots_P} for some examples. 
	  The roots are ordered as explained in the previous section:  $\zeta_j(\kpar,E),\ j=1,\dots,n_3-n_1$ denote the $n_3-n_1$ roots which lie inside the unit circle (in red in Figure~\ref{fig:roots_P}) and
$\zeta_j(\kpar,E),\ j=n_3-n_1+1,\dots,2(n_3-n_1)$ denote the $n_3-n_1$ roots which lie outside (in blue in Figure~\ref{fig:roots_P}). 
	\item Deduce whether $E\in{\rm spec}_{\rm ess}(\HTB(\kpar))$ or not by using Proposition \ref{ess_spec}-(iii), see Figures \ref{fig:essential_spectrum_large} and \ref{fig:essential_spectrum}.
	\item Compute, for each root $\zeta(\kpar,E)$, 
	{the vector given by \eqref{e-vecs1a}.}
	\item Construct the matrix $\mathbb{M}(\kpar,E)$ appearing in \eqref{MA0} and compute its determinant $\Delta(\kpar,E)$.
 \end{itemize}
	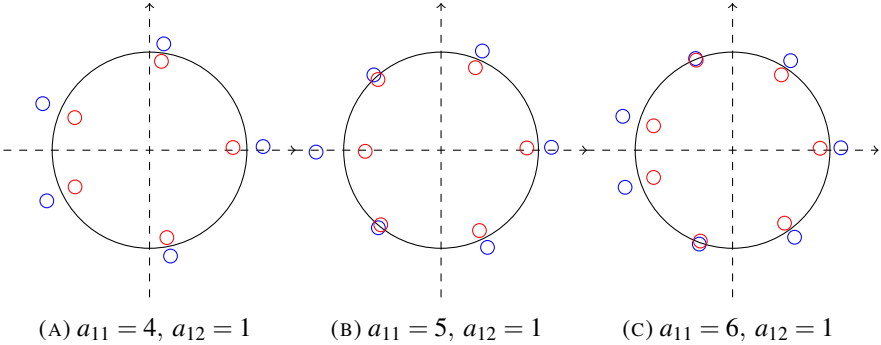
\begin{figure}[htbp]
	    \centering
   	     \begin{subfigure}{0.3\textwidth}
   	        \centering
   	        \begin{tikzpicture}[scale=1.3]
   				\draw[->,dashed](-1.5,0)--(1.5,0);
   				\draw[->,dashed](0,-1.5)--(0,1.5);
				\draw[blue] (1.1658,0.0370) circle (2pt);
				\draw[red] (0.8569,0.0272) circle (2pt);
				\draw[blue] (0.2155,- 1.0792) circle (2pt);
				\draw[red] (0.1779,- 0.8911) circle (2pt);
				\draw[blue] (0.1463,+ 1.0839) circle (2pt);
				\draw[red] (0.1223,+ 0.9061) circle (2pt);
				\draw[blue] (-1.0550,- 0.5163) circle (2pt);
				\draw[blue] (-1.0971, + 0.4745) circle (2pt);
				\draw[red] (-0.7647, - 0.3743) circle (2pt);
				 \draw[red] (-0.7679, + 0.3321) circle (2pt);				
   	        \draw (0,0) circle (1cm);
   	        \end{tikzpicture}
   	        \caption{$a_{11}=4,\;a_{12}=1$}\end{subfigure}
			\begin{subfigure}{0.3\textwidth}
		        \centering
		        \begin{tikzpicture}[scale=1.3]
					\draw[->,dashed](-1.5,0)--(1.5,0);
					\draw[->,dashed](0,-1.5)--(0,1.5);
		        \draw (0,0) circle (1cm);
				\draw[blue] (1.1336,0.0294) circle (2pt);
				\draw[red] (0.8815,0.0228) circle (2pt);
				\draw[blue] (0.4770,- 0.9909) circle (2pt);
				\draw[red] (0.3944,- 0.8193) circle (2pt);
				\draw[blue] (0.4239,+ 1.0111) circle (2pt);
				\draw[red] (0.3527,+ 0.8412) circle (2pt);
				\draw[blue] (-0.6439,- 0.7911) circle (2pt);
				\draw[red] (-0.6189,- 0.7604) circle (2pt);
				\draw[blue] (-1.2835,- 0.0191) circle (2pt);
				\draw[blue] (-0.6901,+ 0.7675) circle (2pt);
				\draw[red] (-0.6478,+ 0.7205) circle (2pt);
				\draw[red] (-0.7789,- 0.0116) circle (2pt);
		        \end{tikzpicture}
		        \caption{$a_{11}=5,\;a_{12}=1$}\end{subfigure}
		    \begin{subfigure}{0.3\textwidth} 
		        \centering
		        \begin{tikzpicture}[scale=1.3]
					\draw[->,dashed](-1.5,0)--(1.5,0);
					\draw[->,dashed](0,-1.5)--(0,1.5);
		        \draw (0,0) circle (1cm);
				\draw[blue] (1.1119,0.0243) circle (2pt);
				\draw[red] (0.8989,0.0197) circle (2pt);
				\draw[blue] (0.5963,0.9126) circle (2pt);
				\draw[red] (0.5017,0.7679) circle (2pt);
				\draw[blue] (0.6367,- 0.8873) circle (2pt);
				\draw[red] (0.5338,- 0.7439) circle (2pt);
				\draw[blue] (-0.3421,- 0.9574) circle (2pt);
				\draw[red] (-0.3310,- 0.9263) circle (2pt);
				\draw[blue] (-0.3801,+ 0.9346) circle (2pt);
				\draw[red] (-0.3734,+ 0.9181) circle (2pt);
				\draw[blue] (-1.1037,- 0.3783) circle (2pt);
				\draw[blue] (-1.1274,+ 0.3454) circle (2pt);
				\draw[red] (-0.8108,- 0.2779) circle (2pt);
				\draw[red] (-0.8109,+ 0.2484) circle (2pt);
		        \end{tikzpicture}
		        \caption{$a_{11}=6,\;a_{12}=1$}\end{subfigure}	   	     
			\caption{The roots of the polynomial $ P_{E,\kpar}(\zeta)$ for $E=0.1$ and $\kpar=3$ for three different edges}
			\label{fig:roots_P}
		\end{figure}
We make a {\it heat map}
 of the function $(\kpar,E)\mapsto\log |\Delta(\kpar,E)|$ over the $(N_\kpar+1)\times(N_E+1)$ grid of points:
\begin{equation}\label{eq:disc}
	\kpar\in\Big\{\ell\frac{2\pi}{N_k},\;\ell\in\{0,1,\ldots,N_k\}\Big\} \;\text{and}\; E\in\Big\{-E_\text{lim}+j\frac{2E_\text{lim}}{N_E},\;\ell\in\{0,1,\ldots,N_E\}\Big\},
\end{equation}
where $N_k, N_E, E_\text{lim}$ are specified for each simulation. In all figures, the dark areas correspond to the essential spectrum;
 see Figures \ref{fig:edge_states_classical}, \ref{fig:edge_states1}, \ref{fig:essential_spectrum_large}, \ref{fig:essential_spectrum}, \ref{fig:edge_states}, \ref{fig:edge_states_armchair}, \ref{fig:edge_states_zigzag}, \ref{fig:edge_states_Fib}. Outside the dark areas, in particular for $(\kpar,E)\in\mathscr{K}$ (see \eqref{Kdef}), we seek edge state curves by studying where $(\kpar,E)\mapsto\log |\Delta(\kpar,E)|$ takes on very large negative values. In Figure \ref{fig:edge_states}, we consider 4 different edges, three of zigzag-type and one of armchair-type, with numerical parameters 
 and $ N_k=N_E=1000$ and specified values of $E_\text{lim}$.  The function $E\mapsto|\Delta(\kpar,E)|$ appears to vanish at $E=0$ for $k\in (2\pi/3,4\pi/3)$, for all three choices of zigzag-type edges (subplots (B),(C) and (D)) 
 but not to vanish at $E=0$ for all $\kpar\in(0,2\pi)$ for the armchair-type edge (subplot (A)).
 This illustrates Proposition \ref{Bstate-conds1} and Proposition \ref{Astate-conds1}. Note that these plots, however, do not reveal whether the edge states live on  $B-$ sites or a $A-$sites of $H_\sharp$.
 Moreover $|\Delta(\kpar,E)|$ appears to vanish along other (non-flat) curves. In Figure \ref{fig:edge_states1} we present, for the 3 different edges, such curves along which $|\Delta(E,\kpar)|$ falls below some small threshold value. The curves have been darkened for clarity. 
  We suspect, and we shall confirm,  the existence of edge states for $(\kpar,E)$ varying along these curves. 
 \begin{figure}[htbp]
 	\begin{center}
      	     \begin{subfigure}{0.45\textwidth}
      	        \centering
 			 \begin{tikzpicture}[scale=0.9]
 				\node at (0,0) {\includegraphics[height=5cm]{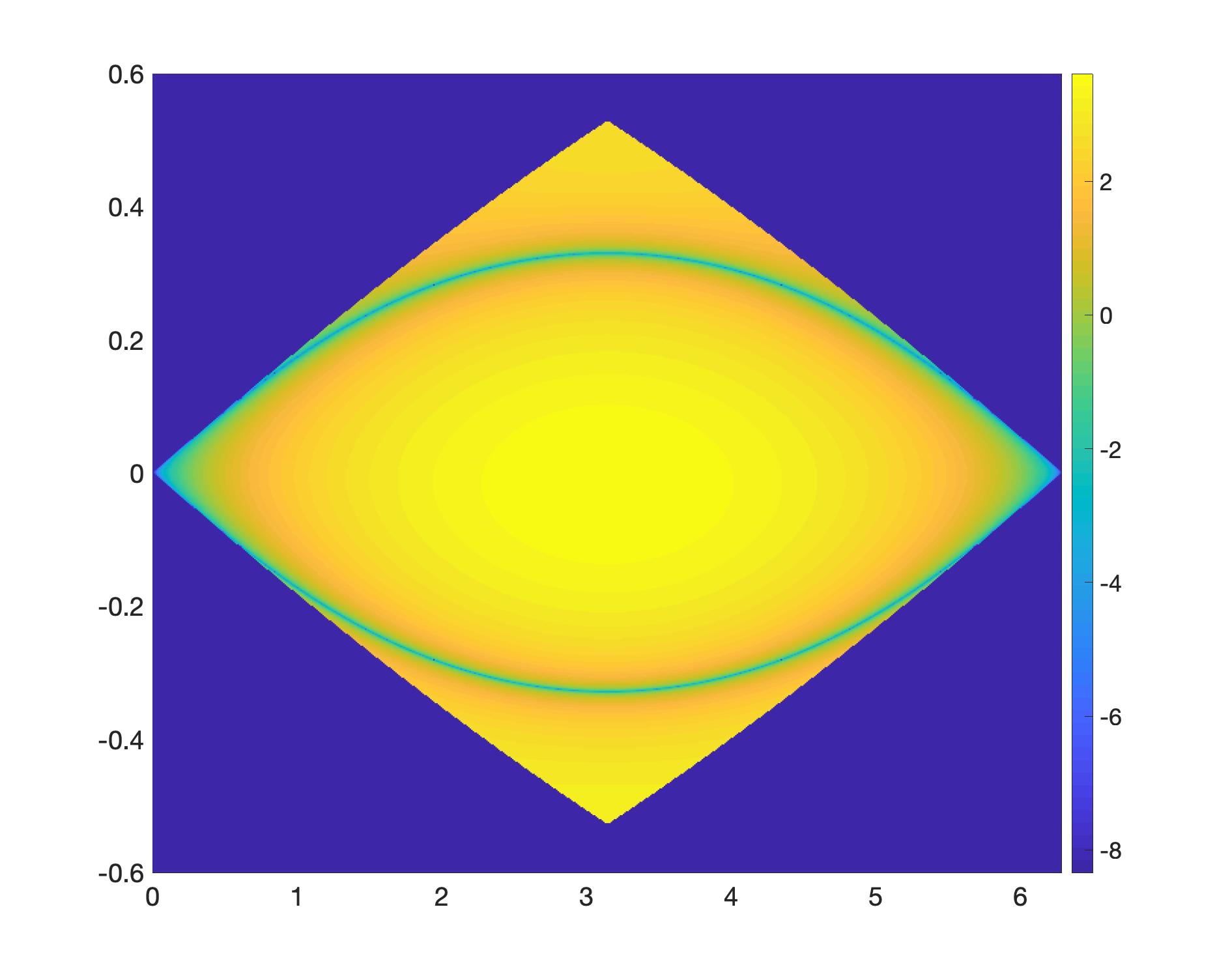}};
				\draw (-3,0) node{$E$};
				\draw (0,-2.5) node{$k$};
   	        \end{tikzpicture}
   	        \caption{$a_{11}=4,\,a_{12}=1,\,E_\text{lim}=0.6$ \textcolor{white}{(balanced)}}\end{subfigure}
     	     \begin{subfigure}{0.45\textwidth}
     	        \centering
			 \begin{tikzpicture}[scale=0.9]
 				\node at (0,0) {\includegraphics[height=5cm]{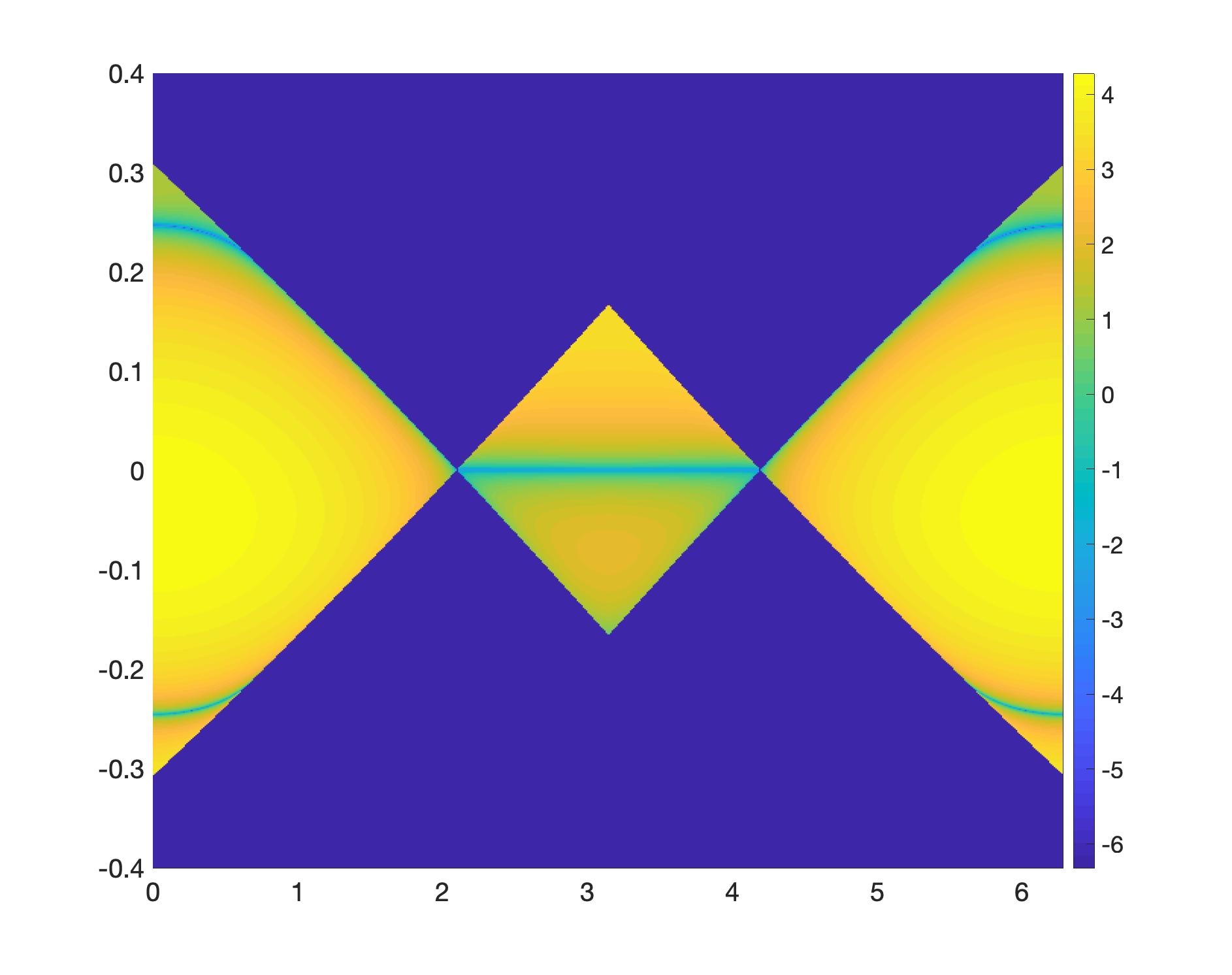}};
				\draw (0,-2.5) node{$k$};\draw (-3,0) node{$E$};
   	        \end{tikzpicture}
   	        \caption{$a_{11}=5,\,a_{12}=1,\,E_\text{lim}=0.4$ (balanced)}\end{subfigure}
			     	     \begin{subfigure}{0.45\textwidth}
     	        \centering
			 \begin{tikzpicture}[scale=0.9]
 				\node at (0,0) {\includegraphics[height=5cm]{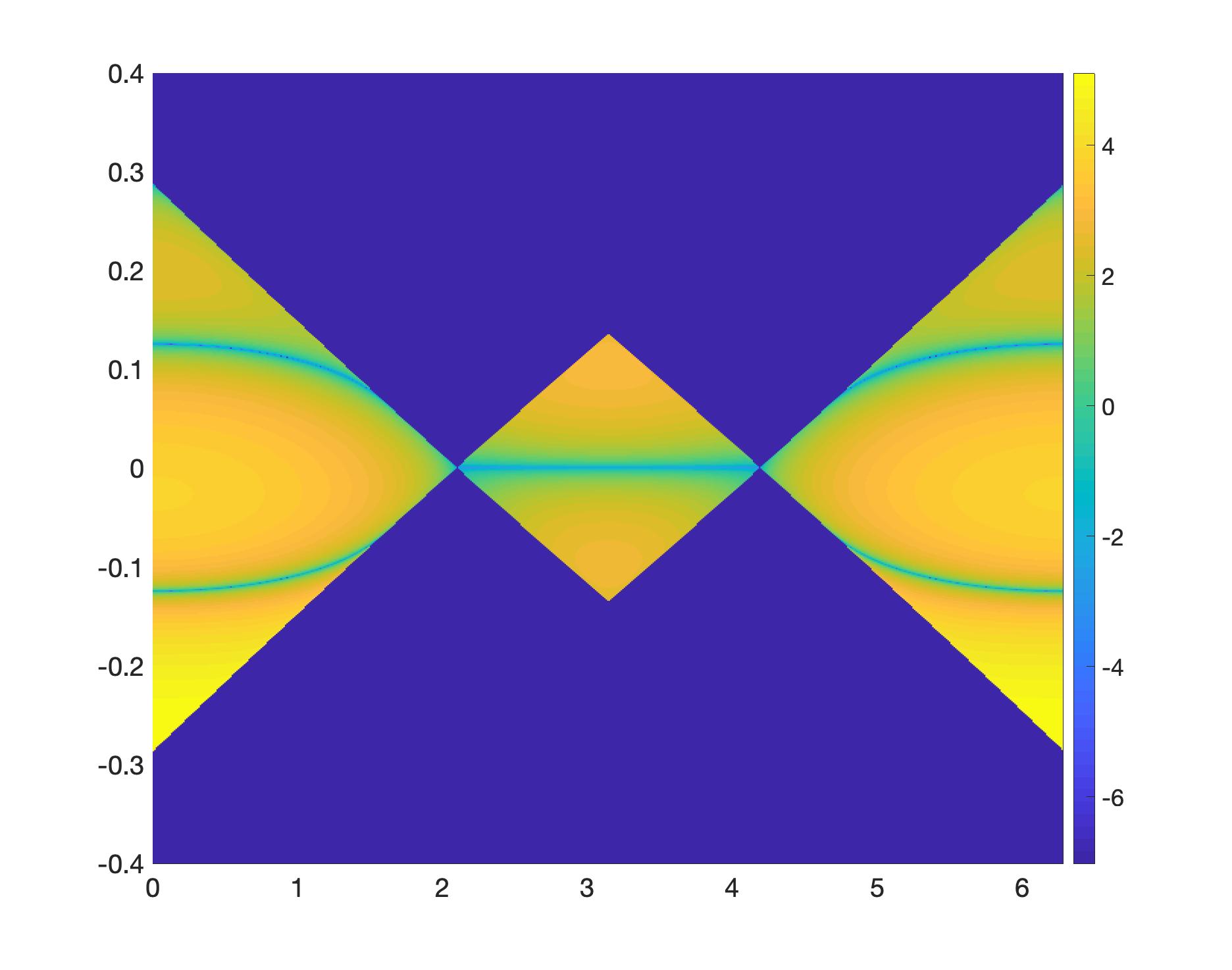}};
				\draw (0,-2.5) node{$k$};
   	        \end{tikzpicture}
   	        \caption{$a_{11}=6,\,a_{12}=1,\,E_\text{lim}=0.4$ (balanced)}\end{subfigure}
			    	     \begin{subfigure}{0.45\textwidth}
    	        \centering
		 \begin{tikzpicture}[scale=0.9]
 				\node at (0,0) {\includegraphics[height=5cm]{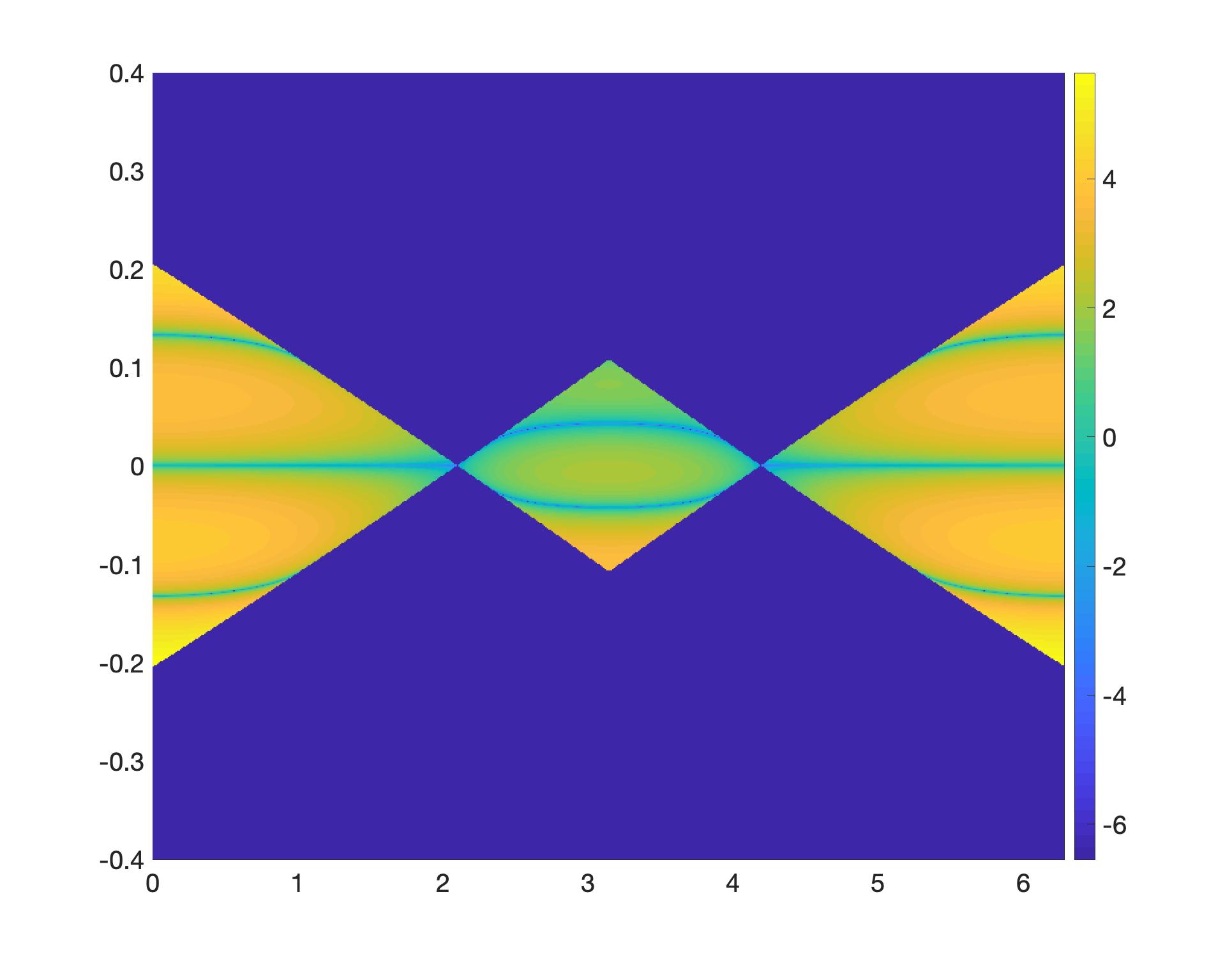}};
				\draw (0,-2.5) node{$k$};
   	        \end{tikzpicture}
   	        \caption{ {$a_{11}=8,\,a_{12}=1,\,E_\text{lim}=0.4$ (unbalanced)}}\end{subfigure}
 		\caption{Plots of $(\kpar,E)\mapsto\log |\Delta(\kpar,E)|$ for $k\in(0,2\pi)$ ($N_k=1000$ points) and $E\in(-E_\text{lim},E_\text{lim})$ ($N_E=1000$ points), see \eqref{eq:disc}. {Note that the color scale is different from one figure to another.}}
 		\label{fig:edge_states}
 	\end{center}
 \end{figure}
 
~\\\\ 
{To confirm existence of edge state near $(\kpar,E)=(k_0,E_0)$, we investigate $E\mapsto \Delta(k_0,E)$ in a neighborhood of $E_0$.  
{We first check that Assumptions \eqref{simple-roots} and \eqref{evecs-ne0}  hold and that $(\kpar_0,E_0)\in\mathscr{K}$.} From the discussion at the end of Section ~\ref{setup}, we know that for $E$ varying in some open complex neighborhood of $E_0$, the roots of \eqref{detTkpar} are still simple and do not lie on the unit circle. Hence, $E\mapsto \Delta(k_0,E)$ is well-defined and analytic in this neighborhood. We can then apply the algorithm presented above for any value $E$ in this neighborhood.} 

{For example, we find approximate zeros of the mapping $E\mapsto \Delta(k_0,E)$ for three cases:
 (a) $(a_{11},a_{12})=(4,1)$ and $(k_0,E_0)=(3,0.33)$, (b) $(a_{11},a_{12})=(5,1)$ and $(k_0,E_0)=(0.27,0.24)$ and (c) $(a_{11},a_{12})=(6,1)$ and $(k_0,E_0)=(1,0.11)$. }

{
We corroborate the existence of a zero near each $E_0$ by computing the winding number
 of the mapping $E\mapsto \Delta(k_0,E)$ along a sufficiently small circle about $E_0$.
 In Figure~\ref{fig:winding} we display, corresponding to each approximate zero $(k_0,E_0)$,
  the  image of a discretization of a small circle about $E_0$:
\begin{equation*}\label{eq:winding_discE}
	E\in \{E_0+r_Ee^{\imath j\frac{2\pi}{N_c}},\;j\in\{0,\ldots,N_c-1\}\},\quad (r_E=0.01, N_c=50).
\end{equation*}
  ~\\\\
For $E\in\{E_0+r_Ee^{\imath \theta_E},\;\theta_E\in[0,2\pi]\}$ and $r_E$ small enough, $\Delta(k_0,E)$ can be rewritten as $\Delta(k_0,E)=\rho_\Delta(\theta_E)e^{\imath \theta_\Delta(\theta_E)}$ with $\rho_\Delta$ and $\theta_\Delta$ continuous with respect to $\theta_E$. The winding number is then given by
\[
	W(k_0,E_0)=\frac{\theta_\Delta(2\pi)-\theta_\Delta(0)}{2\pi}.
\]
In all the  cases mentioned, the winding number is equal to $1$, implying that  there exists a simple root of $E\mapsto \Delta(k_0,E)$ near $E_0$. By Proposition \ref{prop:edge-1D}, the space of edge states {corresponding to a pair near} $(k_0,E_0)$ is one dimensional. We have checked the robustness of our winding number calculation by computing it for a range
 of sufficiently small radii, $r_E$; see Figure~\ref{fig:winding_big}. 
 }

 \begin{figure}[htbp]
 	\begin{center}
 		\begin{tikzpicture}		
 			\begin{scope}[shift={(-6.0,0)},scale=0.6,transform shape]	
 				\node at (0,0) {\includegraphics[height=5cm]{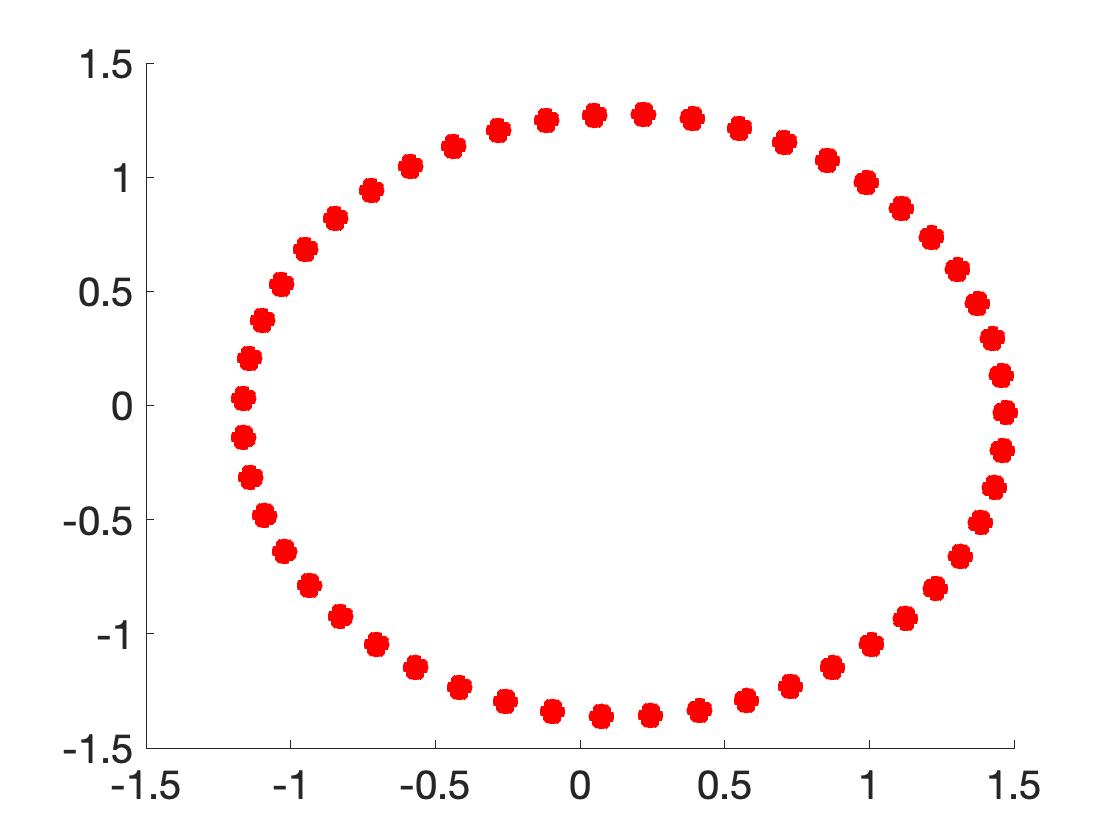}};
 				\draw (0.,2.5) node{$a_{11}=4,\;a_{12}=1$};
				\draw (-4,0) node{Im$\Delta({k}_0,E)$};
				\draw (0,-3) node{Re$\Delta({k}_0,E)$};
 			\end{scope}
			
 			\begin{scope}[shift={(-2,0)},scale=0.6,transform shape]	
 				\node at (0,0) {\includegraphics[height=5cm]{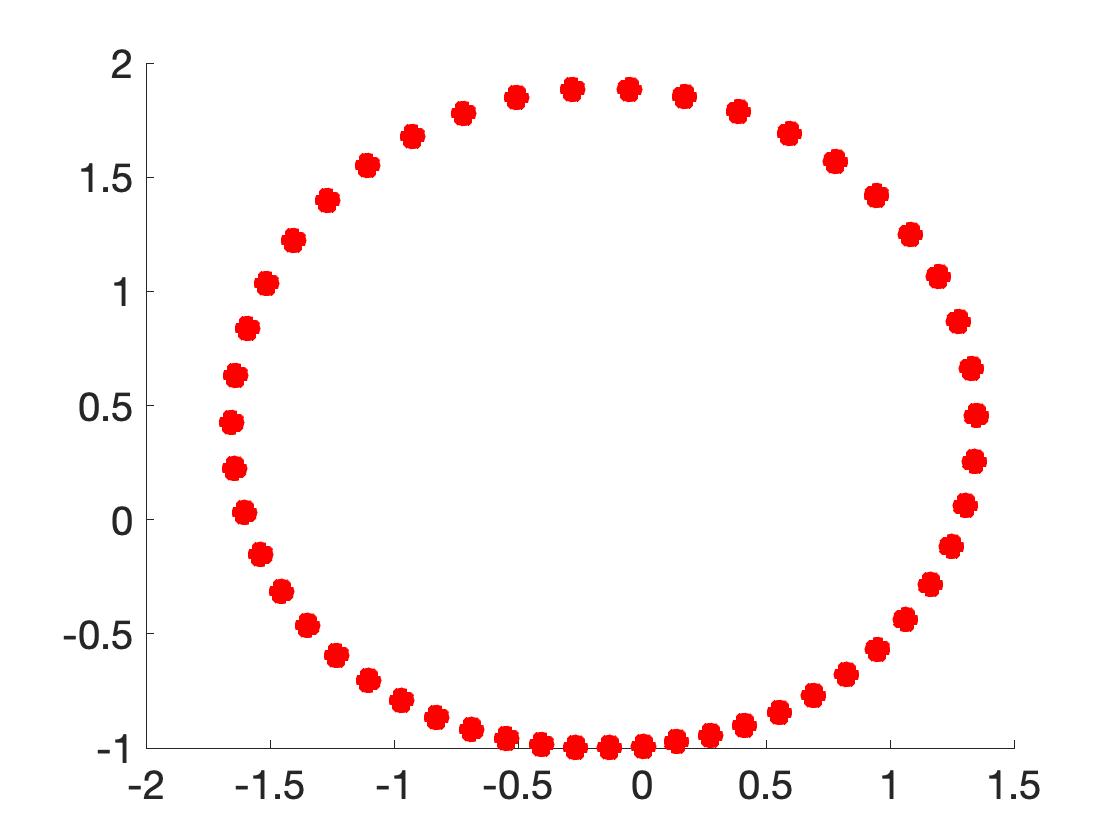}};
 				\draw (0.,2.5) node{$a_{11}=5,\;a_{12}=1$};
				\draw (0,-3) node{Re$\Delta({k}_0,E)$};
 			\end{scope}
			
 			\begin{scope}[shift={(2.0,0)},scale=0.6,transform shape]	
 				\node at (0,0) {\includegraphics[height=5cm]{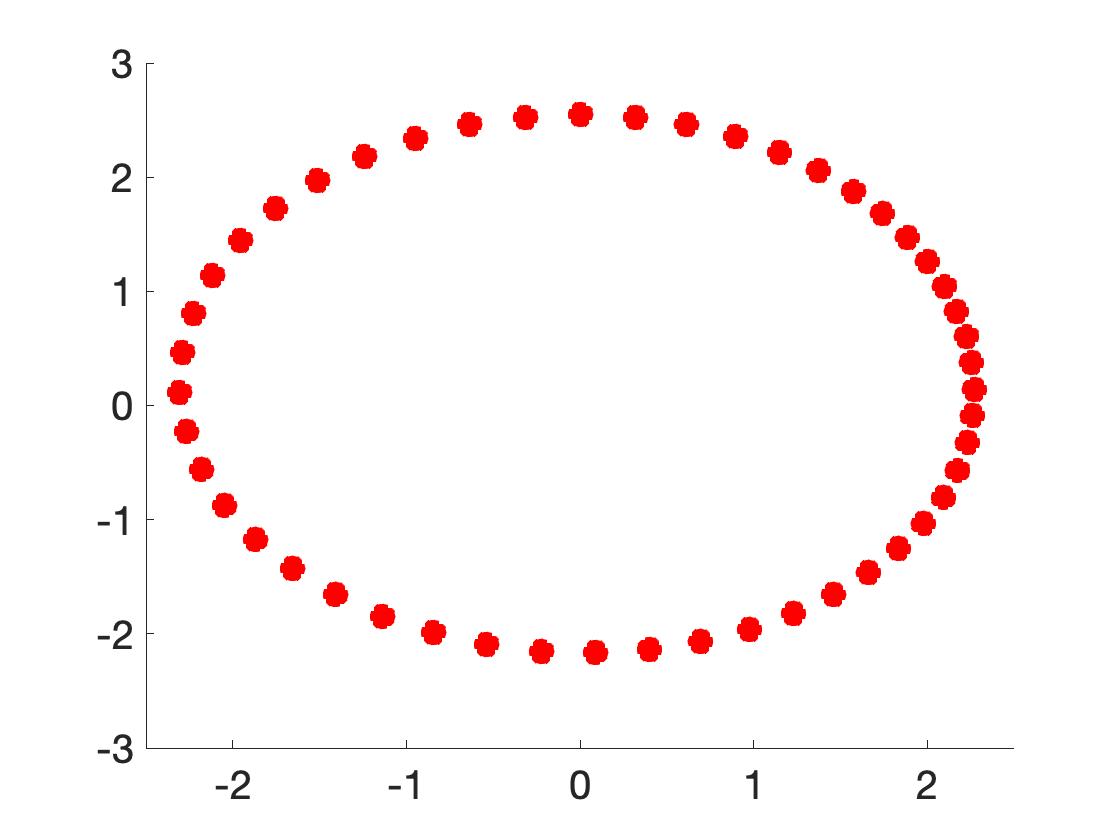}};
 				\draw (0.,2.5) node{$a_{11}=6,\;a_{12}=1$};
				\draw (0,-3) node{Re$\Delta({k}_0,E)$};
 			\end{scope}
 		\end{tikzpicture}
 		\caption{Plots of $\Delta({k}_0,E)$in the complex plane when $E$ takes the values specified in \eqref{eq:disc} with $r_E=0.01$ and $N_c=50$.}
 		\label{fig:winding}
 	\end{center}
 \end{figure}

 \begin{figure}[htbp]
 	\begin{center}
 		\begin{tikzpicture}
 			\begin{scope}[shift={(-6.0,0)},scale=0.6,transform shape]	
 				\node at (0,0) {\includegraphics[height=5cm]{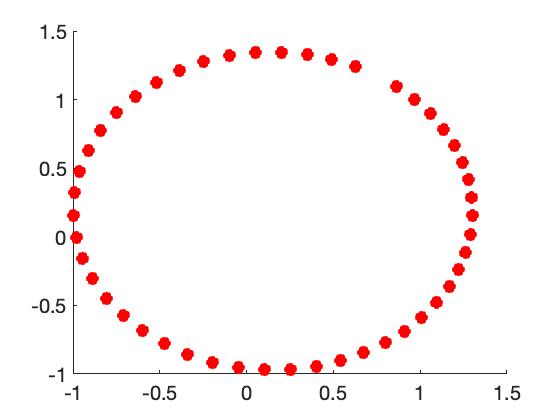}};
 				\draw (0.,2.5) node{$r_E=0.005$};
				\draw (-4,0) node{Im$\Delta({k}_0,E)$};
				\draw (0,-3) node{Re$\Delta({k}_0,E)$};
 			\end{scope}
			
 			\begin{scope}[shift={(-2,0)},scale=0.6,transform shape]	
 				\node at (0,0) {\includegraphics[height=5cm]{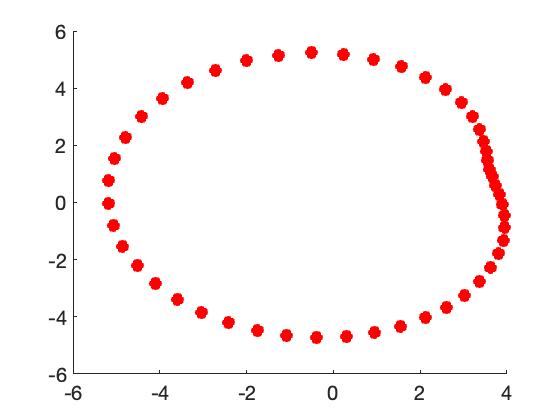}};
 				\draw (0.,2.5) node{$r_E=0.02$};
				\draw (0,-3) node{Re$\Delta({k}_0,E)$};
 			\end{scope}
			
 			\begin{scope}[shift={(2.0,0)},scale=0.6,transform shape]	
 				\node at (0,0) {\includegraphics[height=5cm]{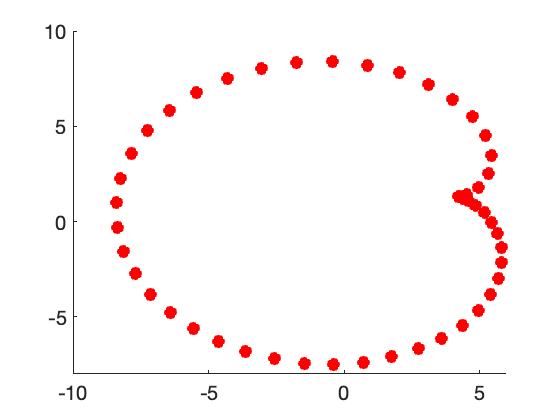}};
 				\draw (0.,2.5) node{$r_E=0.03$ };
				\draw (0,-3) node{Re$\Delta({k}_0,E)$};
 			\end{scope}
 		\end{tikzpicture}
 		\caption{For the edge $a_{11}=6,\;a_{12}=1$, plots of $\Delta({k}_0,E)$ in the complex plane when $E$ takes the values specified in \eqref{eq:disc} with $N_c=50$ and various values of $r_E$.}
 		\label{fig:winding_big}
 	\end{center}
 \end{figure}
 {We consider now a sequence of armchair edges defined by $a_{12}=1$ as $a_{11}$ increases. We observe the presence of multiple dispersive (non-flat) edge state curves bifurcating from $(\kpar,E)=(0,0)$ (and from $(\kpar,E)=(2\pi,0)$), {we see} that the number of curves increases when $a_{11}$ increases; see Figure ~\ref{fig:edge_states_armchair}. Similar observations hold for ordinary zigzag-like edges; see Figure ~\ref{fig:edge_states_zigzag}. Note that as $a_{11}$ tends to infinity, the sequence of studied edges ($a_{11}$ increasing  and $a_{12}=1$) tends to the ordinary classical zigzag edge. Although the classical zigzag edge has a single dispersion curve, which is flat only over a limited range of $k$, the nearly flat dispersion curves in Figures ~\ref{fig:edge_states_armchair} and ~\ref{fig:edge_states_zigzag} extend over all $k\in(0,2\pi)$. Note however that the definition of quasimomentum depends on the edge.~\\\\

 Let us now consider a sequence of edges, defined by parameters $( a_{11}^{(n)},a_{12}^{(n)})$ given by
 \[
 	\forall n\geq 3,\quad a_{11}^{(n)}=F_n\quad\text{and}\quad a_{12}^{(n)}=F_{n+1},
 \]
where $(F_n)_n$ is the Fibonacci sequence, defined recursively via
 \[
 	F_1=F_2=1,\quad F_{n+2}=F_{n+1}+F_n, 
 \]
and giving an approximation of the golden ratio
 \[
 	\lim_{n\rightarrow+\infty}\frac{a_{12}^{(n)}}{a_{11}^{(n)}}= \lim_{n\rightarrow+\infty}\frac{F_{n+1}}{F_n} = \frac{1+\sqrt{5}}2.
 \] For this sequence, we do not observe an increasing number of dispersion curves; see Figure \ref{fig:edge_states_Fib}. Instead we see a single dispersion curve and its mirror image, again tending to a flat band as $n$ tends to infinity. Note that color scale is not the same from one plot to another, which reflects the sensitive dependence of the determinant $\Delta(k,E)$ on $E$. The existence of edge states is validated as previously by computing winding numbers near the energies on suspected dispersive curves.}
  \begin{figure}[htbp]
  	\begin{center}
  		\begin{tikzpicture}
			
  			\begin{scope}[shift={(-4.0,5.5)},scale=0.8,transform shape]	
  				\node at (0,0) {\includegraphics[height=4.3cm]{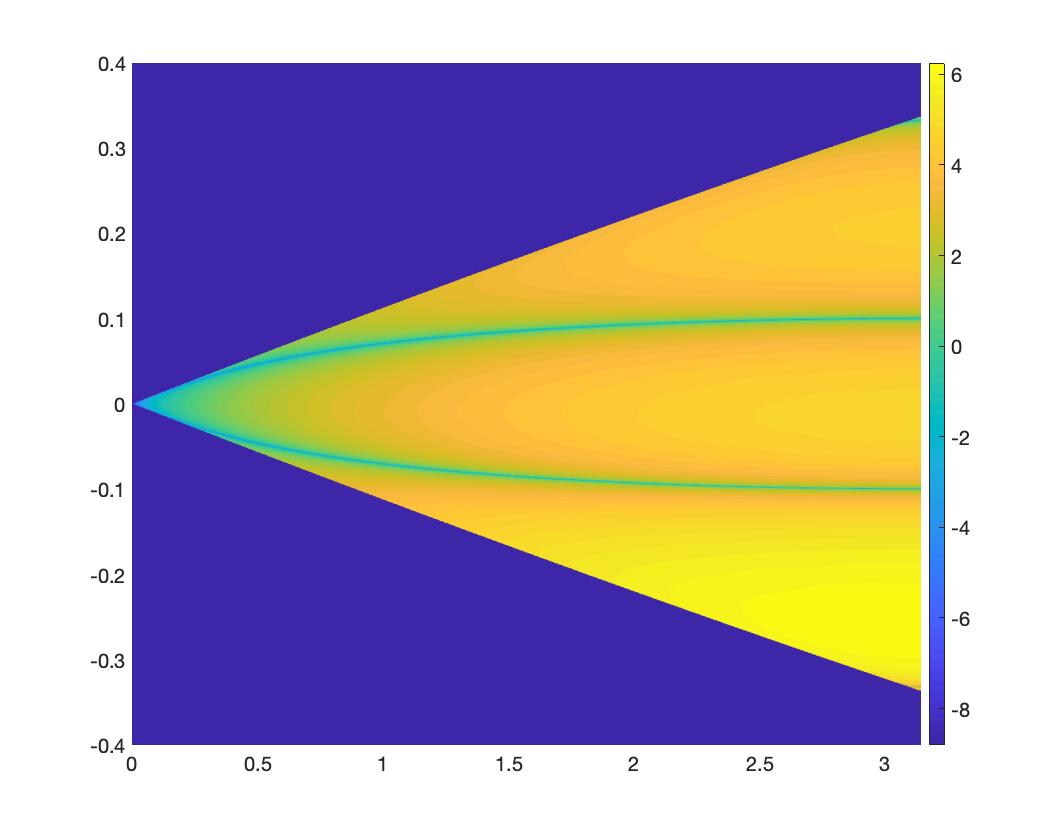}};
  				\draw (0.,2) node{$a_{11}=7,\;E_\text{lim}=0.4$};
 				\draw (-2.5,0) node{$E$};
  			\end{scope}
			
  			\begin{scope}[shift={(0,5.5)},scale=0.8,transform shape]	
  				\node at (0,0) {\includegraphics[height=4.3cm]{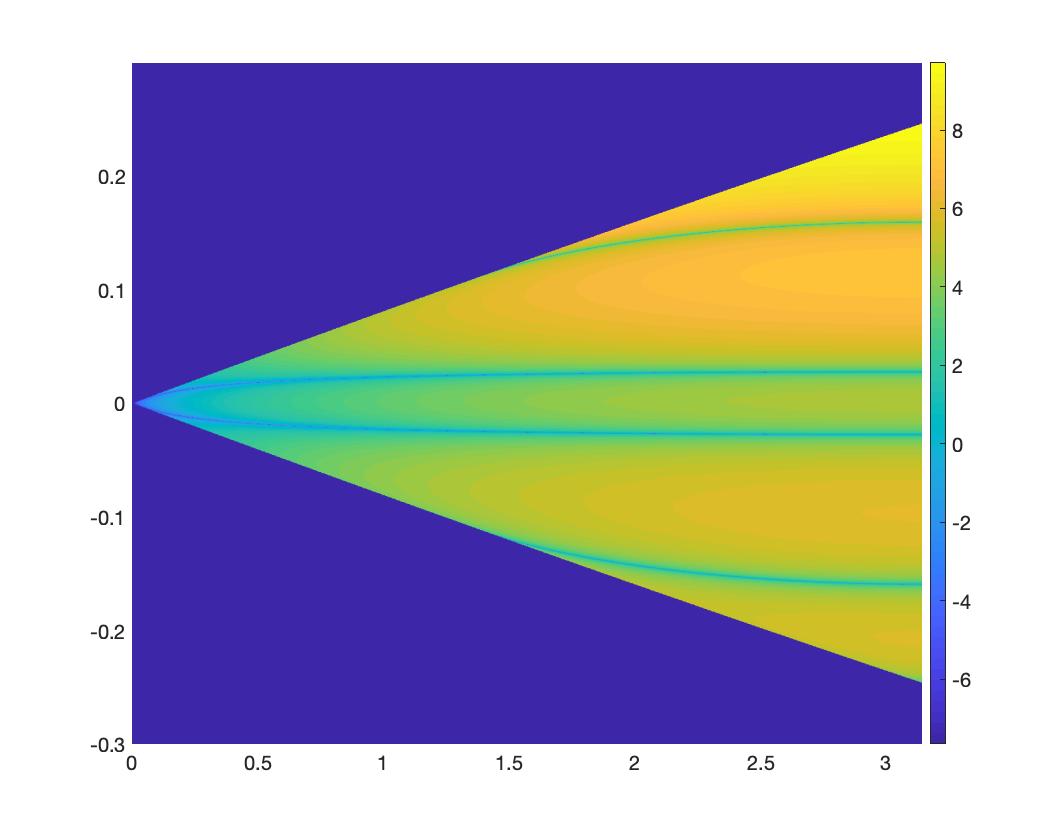}};
  				\draw (0.,2.1) node{$a_{11}=10,\;E_\text{lim}=0.3$};
  			\end{scope}
			
  			\begin{scope}[shift={(4.0,5.5)},scale=0.8,transform shape]	
  				\node at (0,0) {\includegraphics[height=4.3cm]{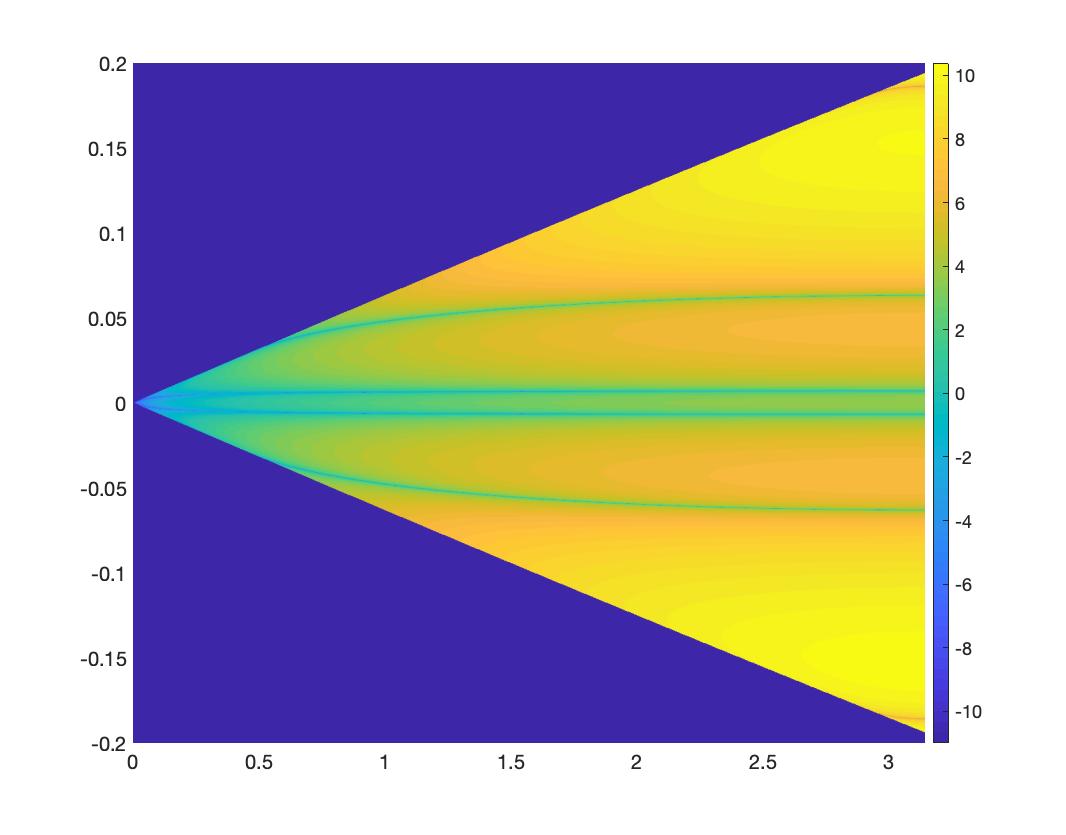}};
  				\draw (0.,2.1) node{$a_{11}=13,\;E_\text{lim}=0.2$};
  			\end{scope}
			
  			\begin{scope}[shift={(-4.0,2)},scale=0.8,transform shape]
  				\node at (0,0) {\includegraphics[height=4.3cm]{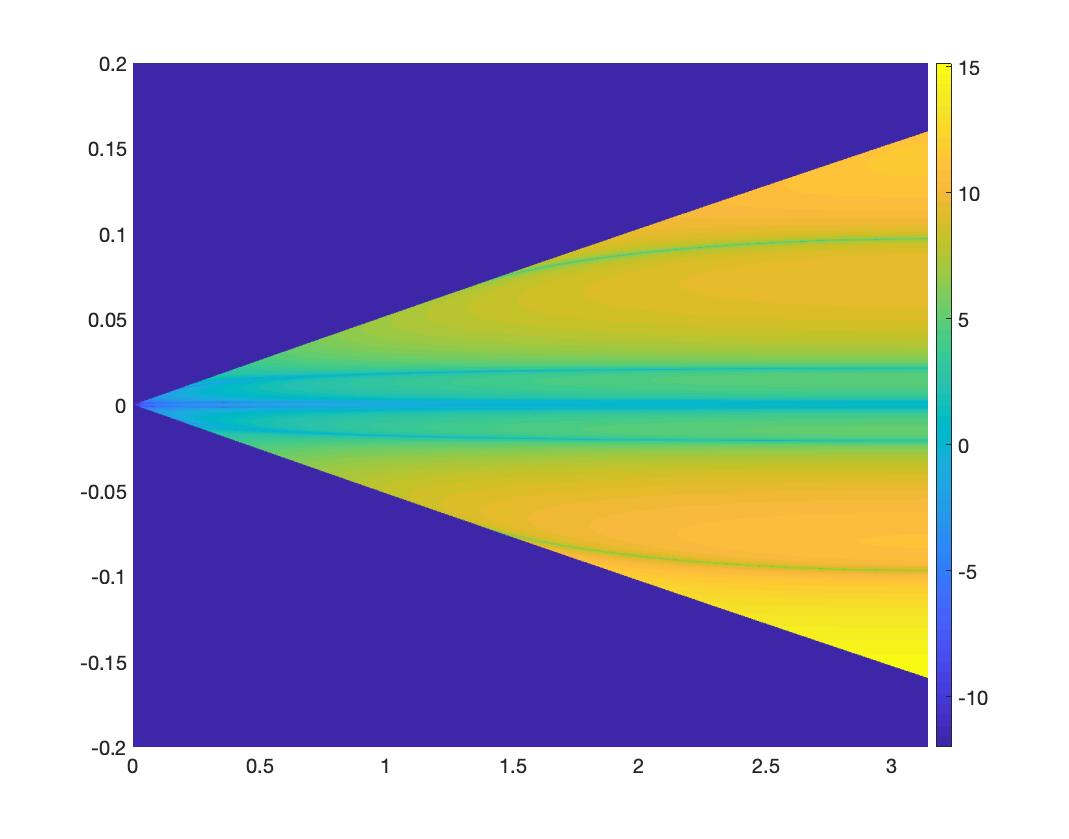}};
  				\draw (0.,2.1) node{$a_{11}=16,\;E_\text{lim}=0.2$};
         \draw (-2.5,0) node{$E$};
  			\end{scope}
  			\begin{scope}[shift={(0,2)},scale=0.8,transform shape]
  				\node at (0,0) {\includegraphics[height=4.3cm]{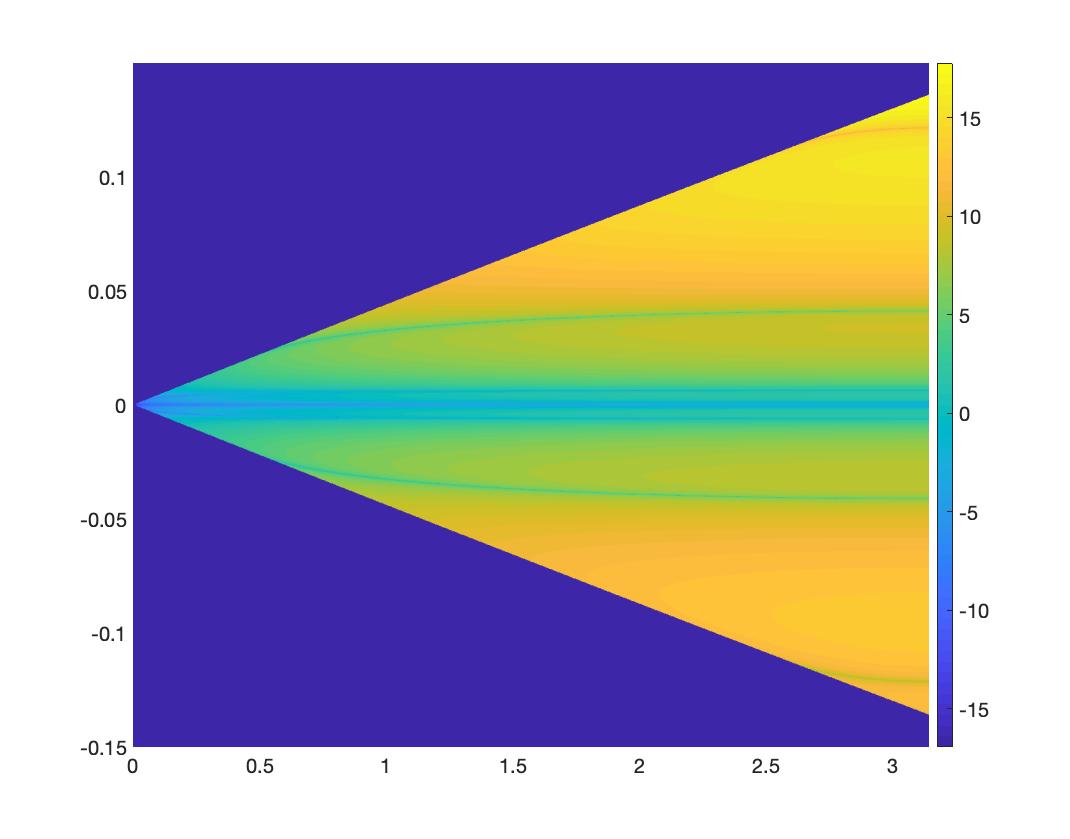}};
  				\draw (0.,2.1) node{$a_{11}=19,\;E_\text{lim}=0.15$};
  			\end{scope}
  			\begin{scope}[shift={(4.0,2)},scale=0.8,transform shape]
  				\node at (0,0) {\includegraphics[height=4.3cm]{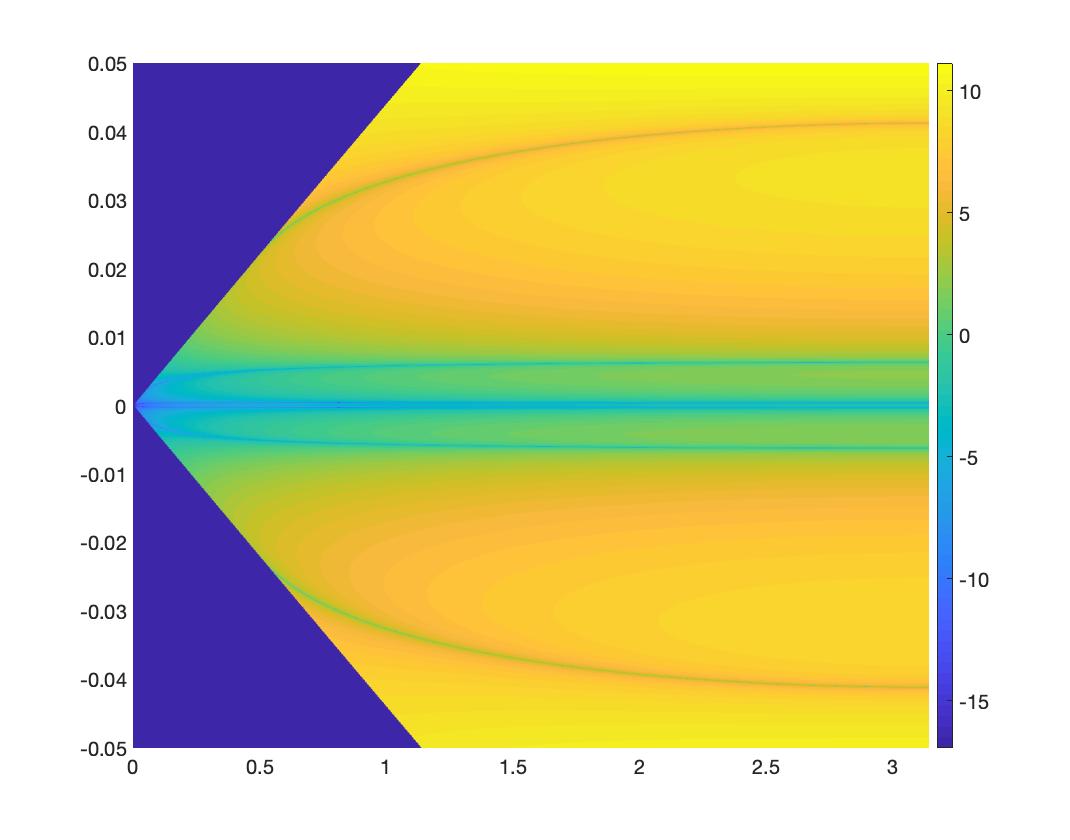}};
  				\draw (0.,2.1) node{$a_{11}=19,\;E_\text{lim}=0.05$};
  			\end{scope}
		     			\begin{scope}[shift={(-4.0,-1.5)},scale=0.8,transform shape]
		     				\node at (0,0) {\includegraphics[height=4.3cm]{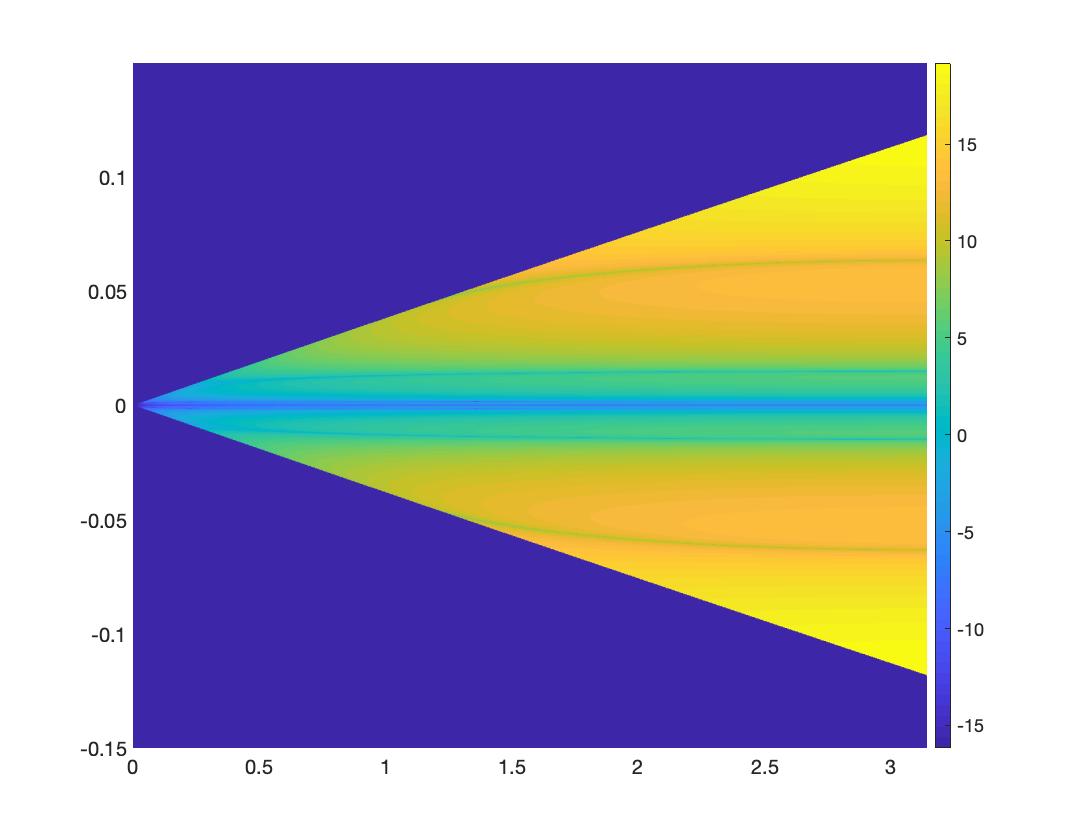}};
		     				\draw (0.,2.1) node{$a_{11}=22,\;E_\text{lim}=0.15$};
                \draw (-2.5,0) node{$E$};
		     			\end{scope}
		     			\begin{scope}[shift={(0.0,-1.5)},scale=0.8,transform shape]
		     				\node at (0,0) {\includegraphics[height=4.3cm]{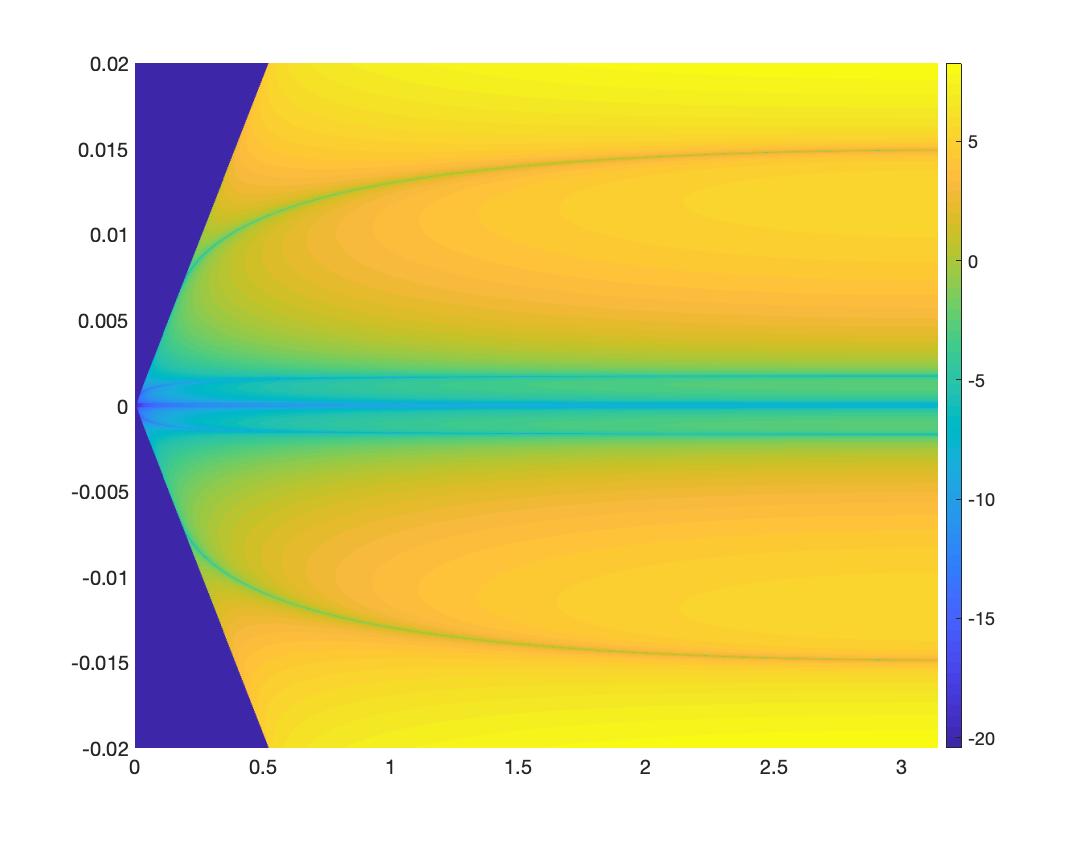}};
		     				\draw (0.,2.1) node{$a_{11}=22,\;E_\text{lim}=0.02$};
		     			\end{scope}
		     			\begin{scope}[shift={(4.0,-1.5)},scale=0.8,transform shape]
		     				\node at (0,0) {\includegraphics[height=4.3cm]{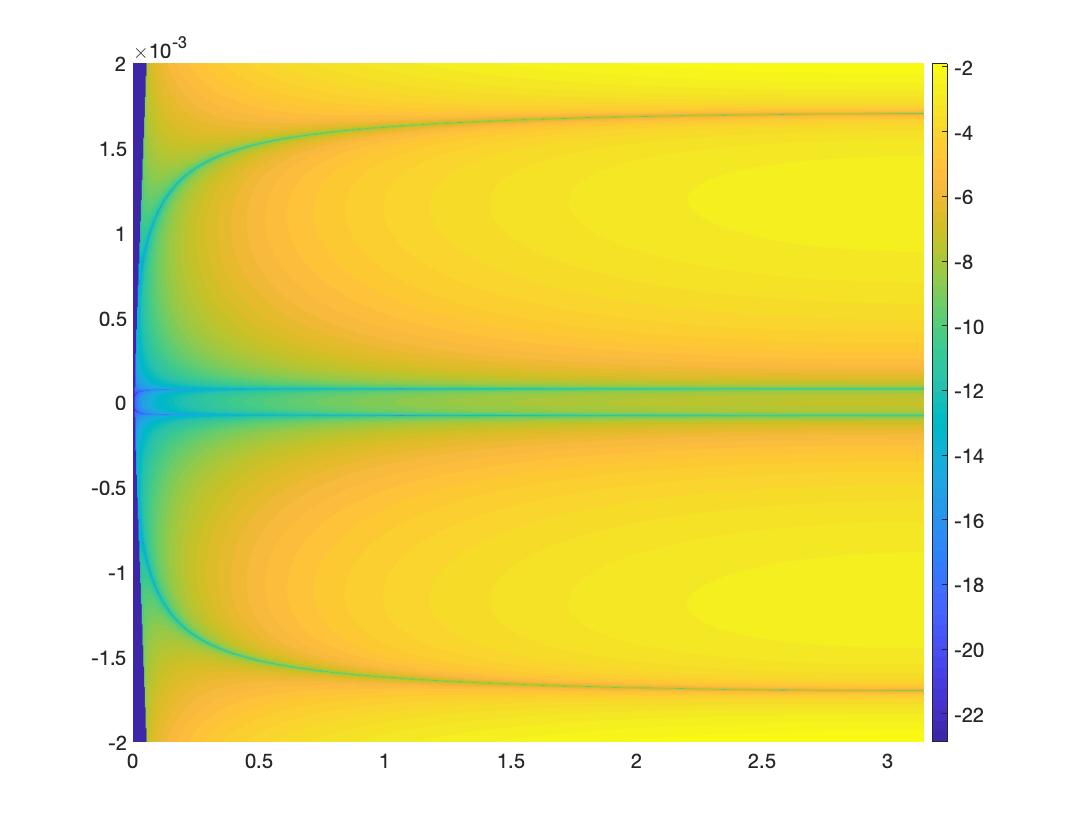}};
		     				\draw (0.,2.1) node{$a_{11}=22,\;E_\text{lim}=0.002$};
		     			\end{scope}
						
					     			\begin{scope}[shift={(-4.0,-5)},scale=0.8,transform shape]
					     				\node at (0,0) {\includegraphics[height=4.3cm]{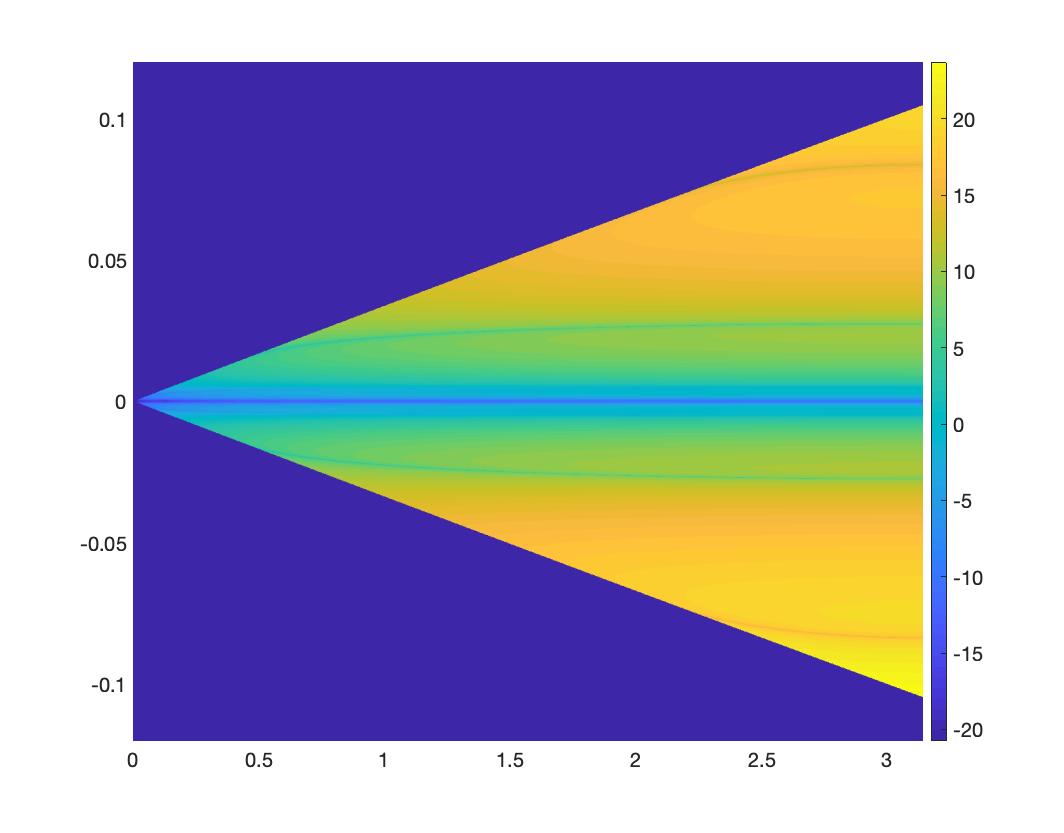}};
					     				\draw (0.,2.1) node{$a_{11}=25,\;E_\text{lim}=0.2$};
					    				\draw (0,-2.1) node{$k$};
                      \draw (-2.5,0) node{$E$};
					     			\end{scope}
					     			\begin{scope}[shift={(0.0,-5)},scale=0.8,transform shape]
					     				\node at (0,0) {\includegraphics[height=4.3cm]{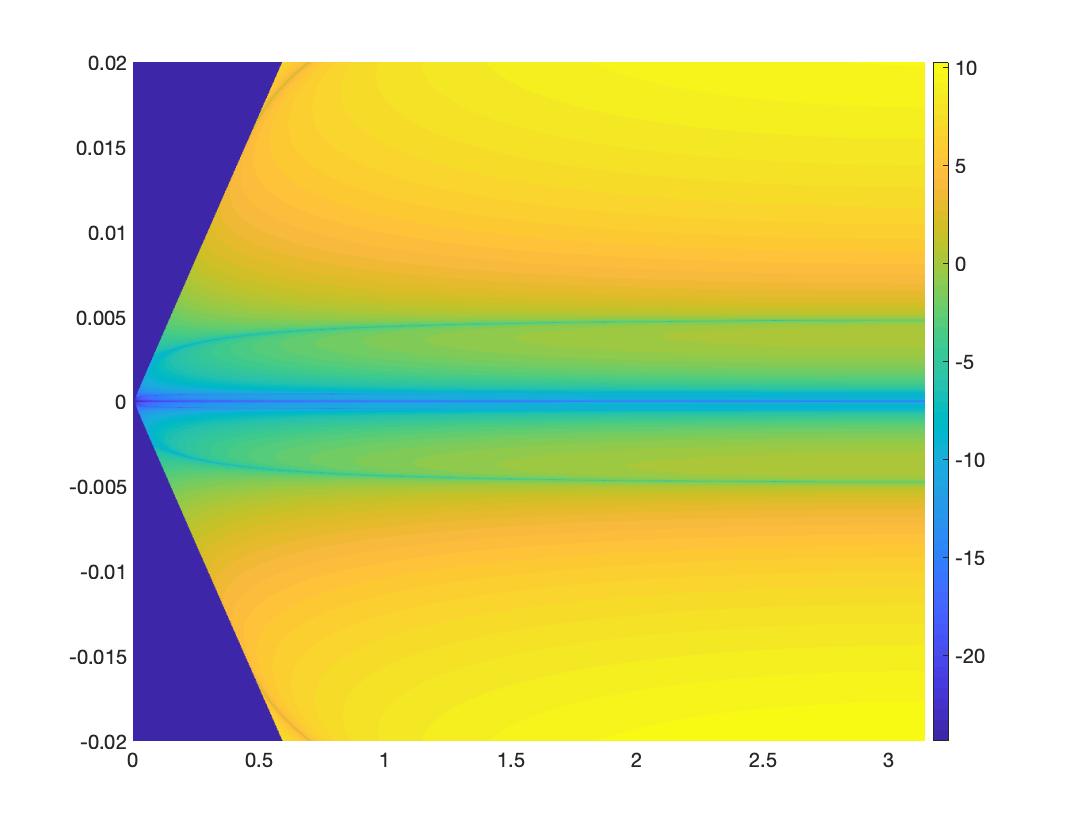}};
					     				\draw (0.,2.1) node{$a_{11}=25,\;E_\text{lim}=0.02$};
					    				\draw (0,-2.1) node{$k$};
					     			\end{scope}
					     			\begin{scope}[shift={(4.0,-5)},scale=0.8,transform shape]
					     				\node at (0,0) {\includegraphics[height=4.3cm]{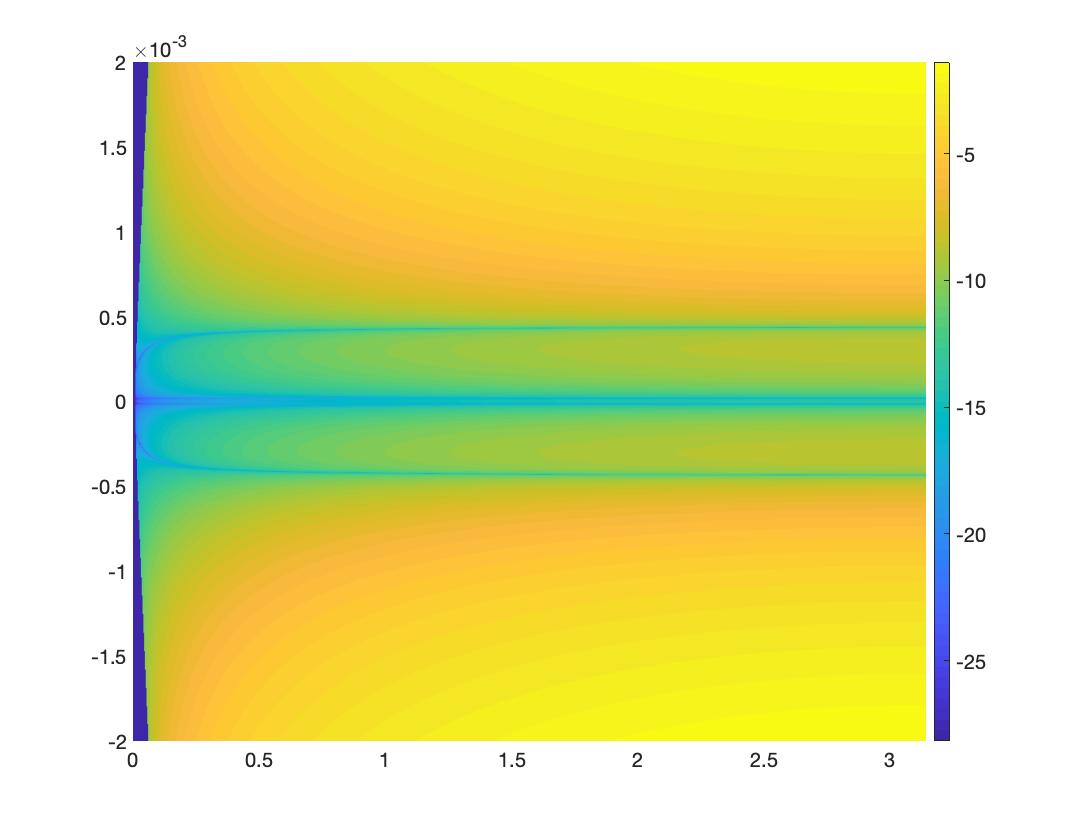}};
					     				\draw (0.,2.1) node{$a_{11}=25,\;E_\text{lim}=0.001$};
					    				\draw (0,-2.1) node{$k$};
					     			\end{scope}
  		\end{tikzpicture}
  		\caption{Plots of $(\kpar,E)\mapsto\log |\Delta(E,\kpar)|$ for $k\in(0,\pi)$ ($N_k=1000$ points) and $E\in(-E_\text{lim},E_\text{lim})$ ($N_E=1000$ points) for various armchair-like edges with $a_{12}=1$. }
  		\label{fig:edge_states_armchair}
  	\end{center}
  \end{figure}
   \begin{figure}[htbp]
   	\begin{center}
   		\begin{tikzpicture}
			
   			\begin{scope}[shift={(-4.0,6)},scale=0.8,transform shape]	
   				\node at (0,0) {\includegraphics[height=4.3cm]{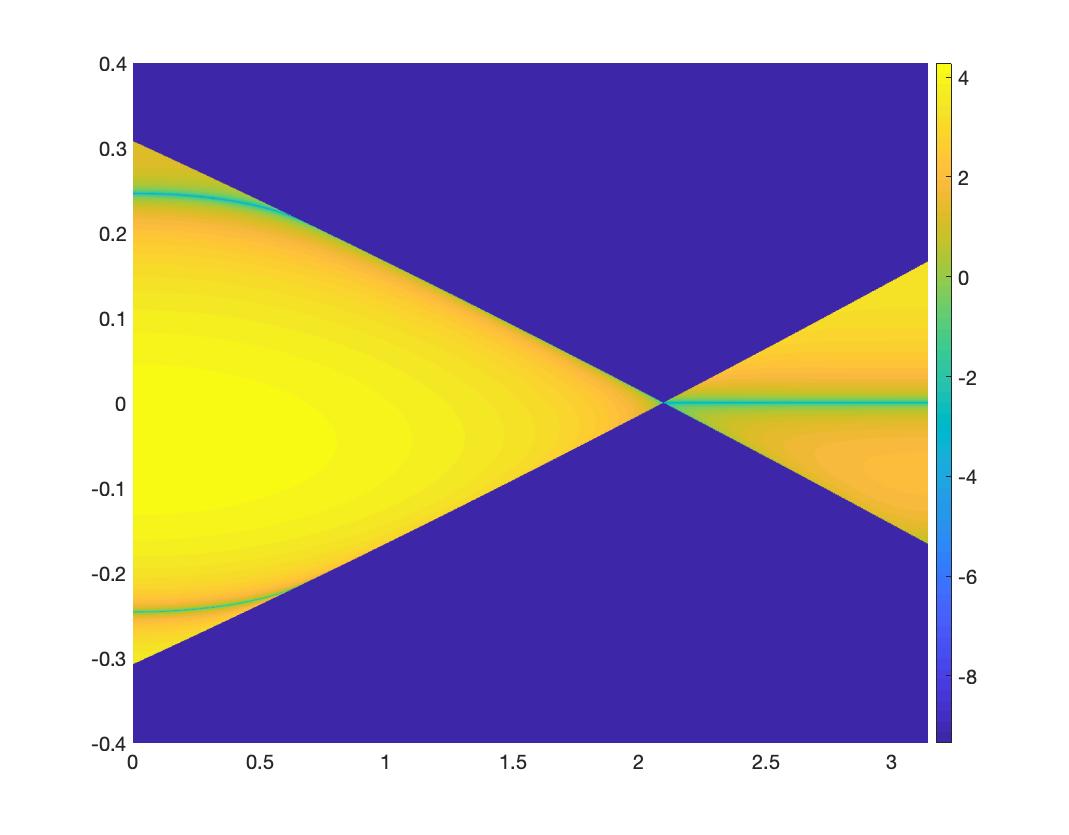}};
   				\draw (0.,2.1) node{$a_{11}=5,\;E_\text{lim}=0.4$};
  				\draw (-2.5,0) node{$E$};
   			\end{scope}
			
   			\begin{scope}[shift={(0,6)},scale=0.8,transform shape]	
   				\node at (0,0) {\includegraphics[height=4.3cm]{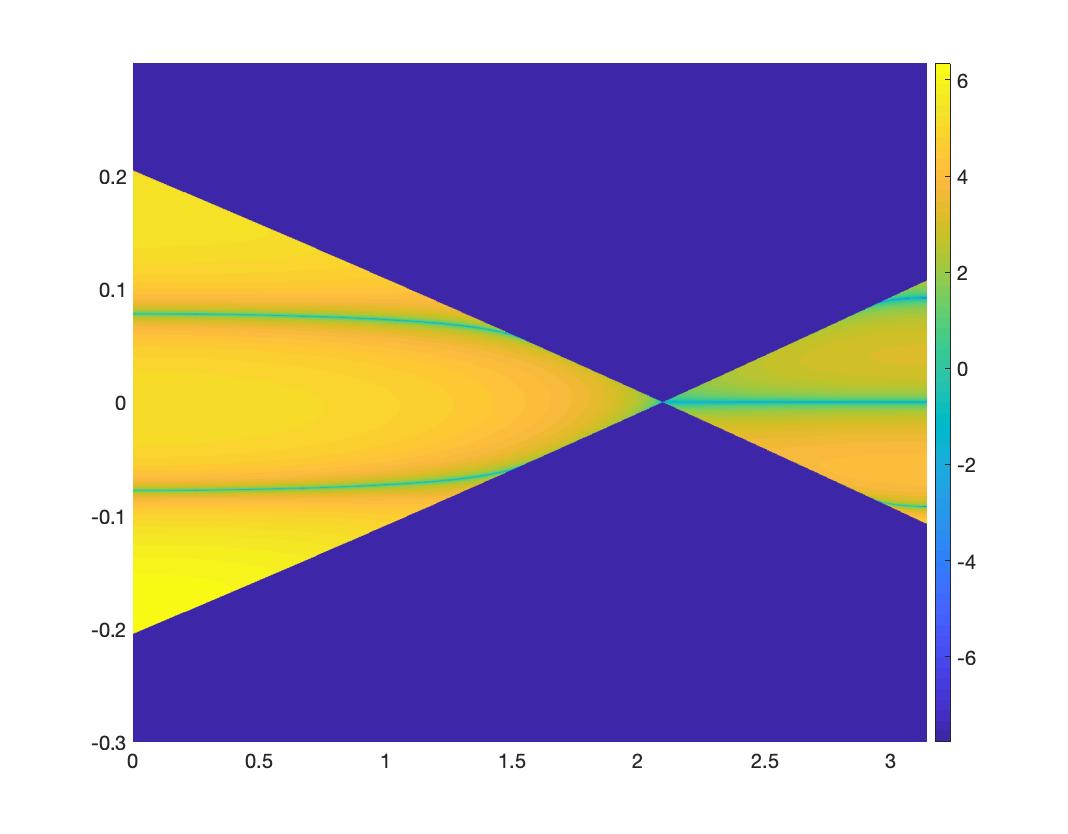}};
   				\draw (0.,2.1) node{$a_{11}=8,\;E_\text{lim}=0.3$};
   			\end{scope}
			
   			\begin{scope}[shift={(4.0,6)},scale=0.8,transform shape]	
   				\node at (0,0) {\includegraphics[height=4.3cm]{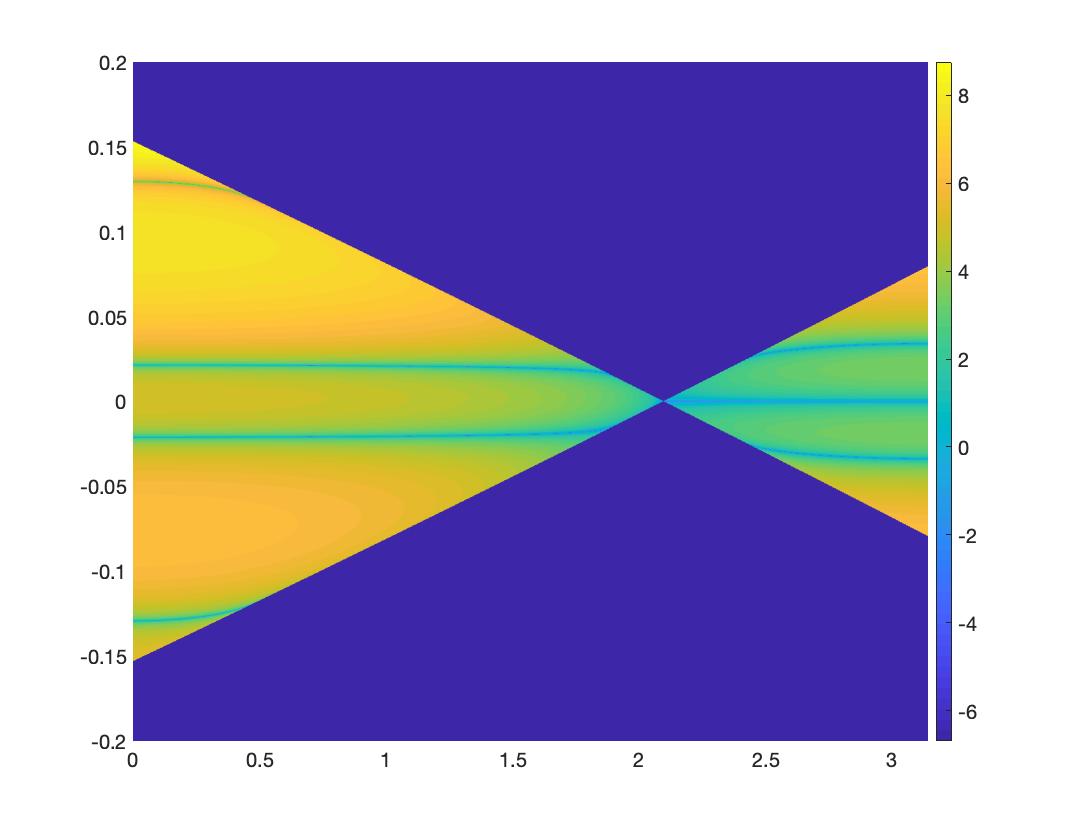}};
   				\draw (0.,2.1) node{$a_{11}=11,\;E_\text{lim}=0.2$};
   			\end{scope}
			
   			\begin{scope}[shift={(-4.0,2.5)},scale=0.8,transform shape]
   				\node at (0,0) {\includegraphics[height=4.3cm]{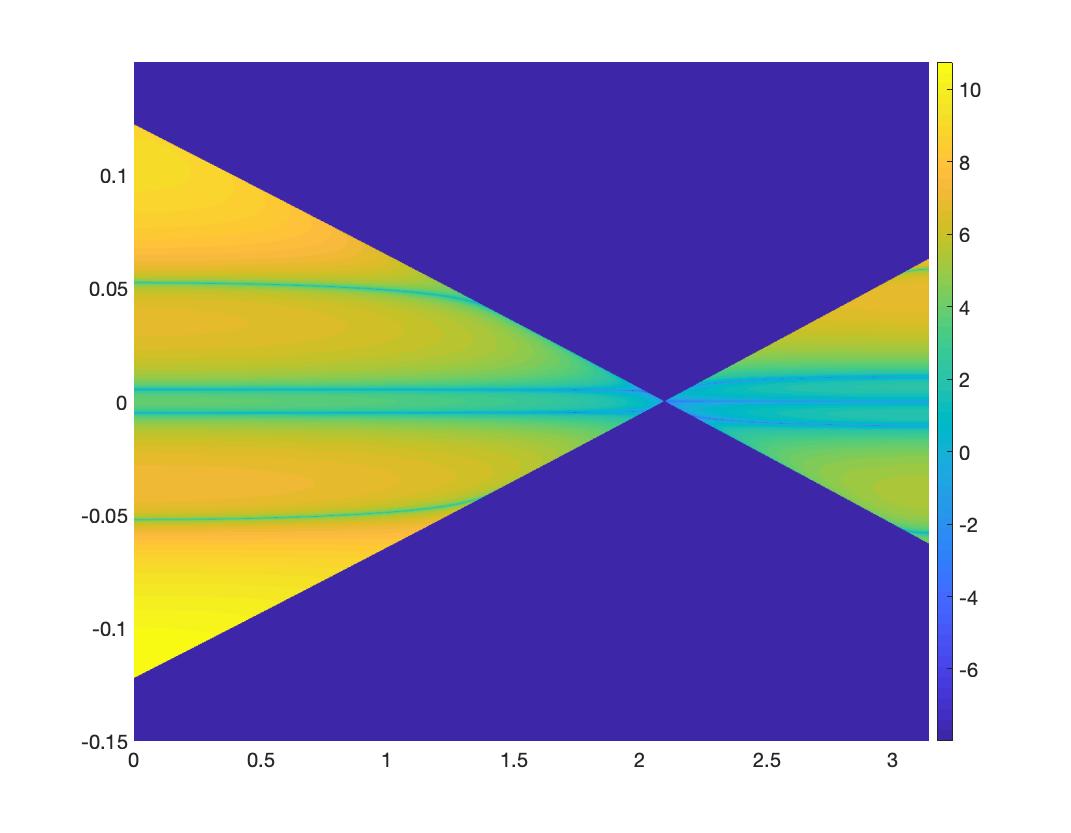}};
   				\draw (0.,2.1) node{$a_{11}=14,\;E_\text{lim}=0.2$};
           \draw (-2.5,0) node{$E$};
   			\end{scope}
   			\begin{scope}[shift={(0,2.5)},scale=0.8,transform shape]
   				\node at (0,0) {\includegraphics[height=4.3cm]{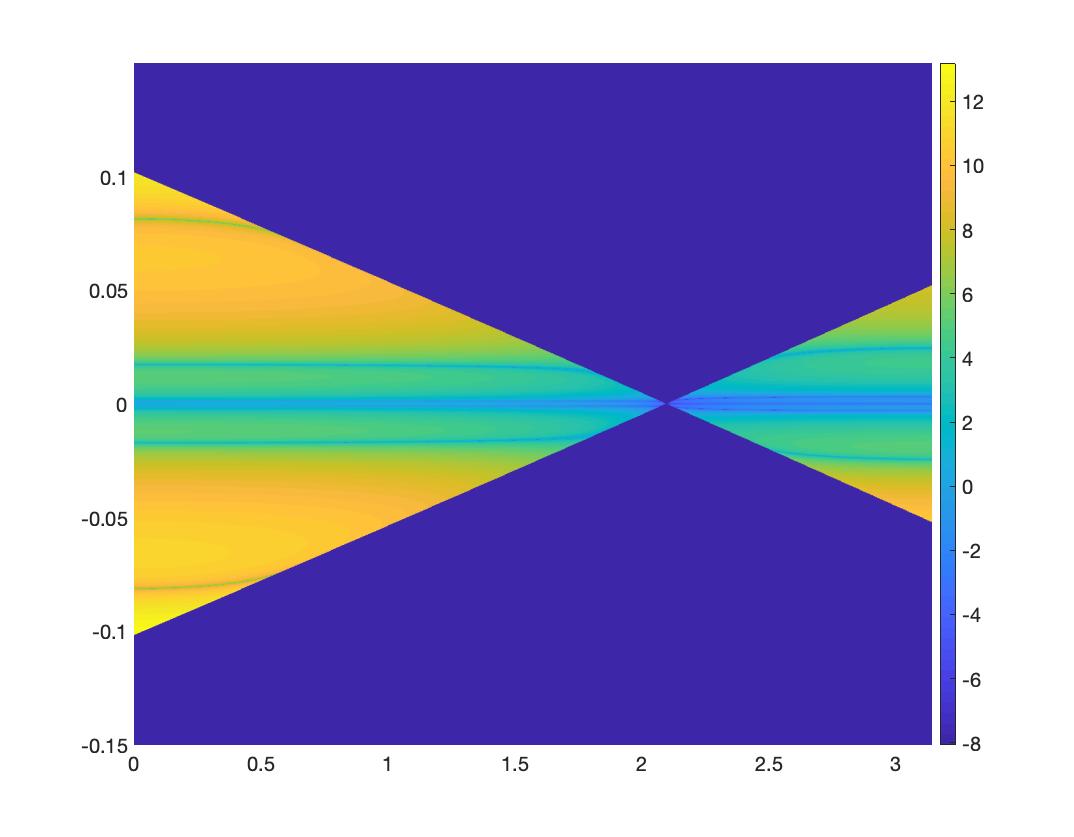}};
   				\draw (0.,2.1) node{$a_{11}=17,\;E_\text{lim}=0.15$};
   			\end{scope}
   			\begin{scope}[shift={(4.0,2.5)},scale=0.8,transform shape]
   				\node at (0,0) {\includegraphics[height=4.3cm]{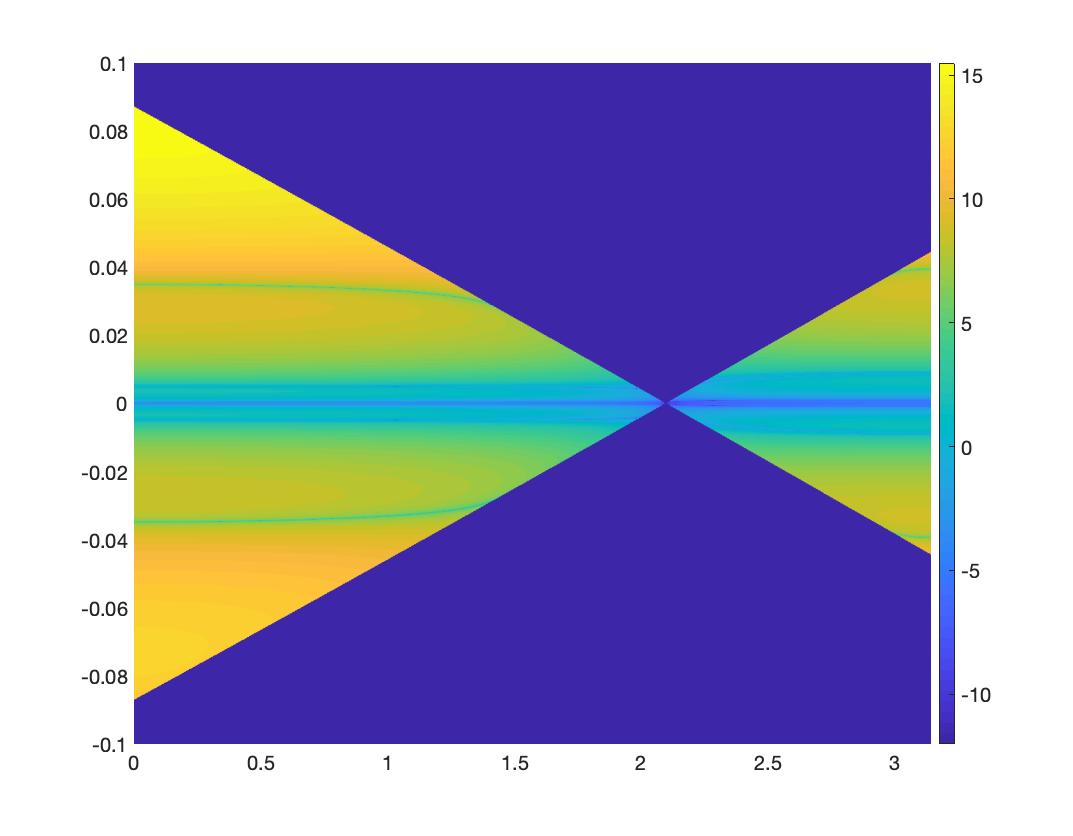}};
   				\draw (0.,2.1) node{$a_{11}=20,\;E_\text{lim}=0.1$};
   			\end{scope}

 		     			\begin{scope}[shift={(0,-1)},scale=0.8,transform shape]
 		     				\node at (0,0) {\includegraphics[height=4.3cm]{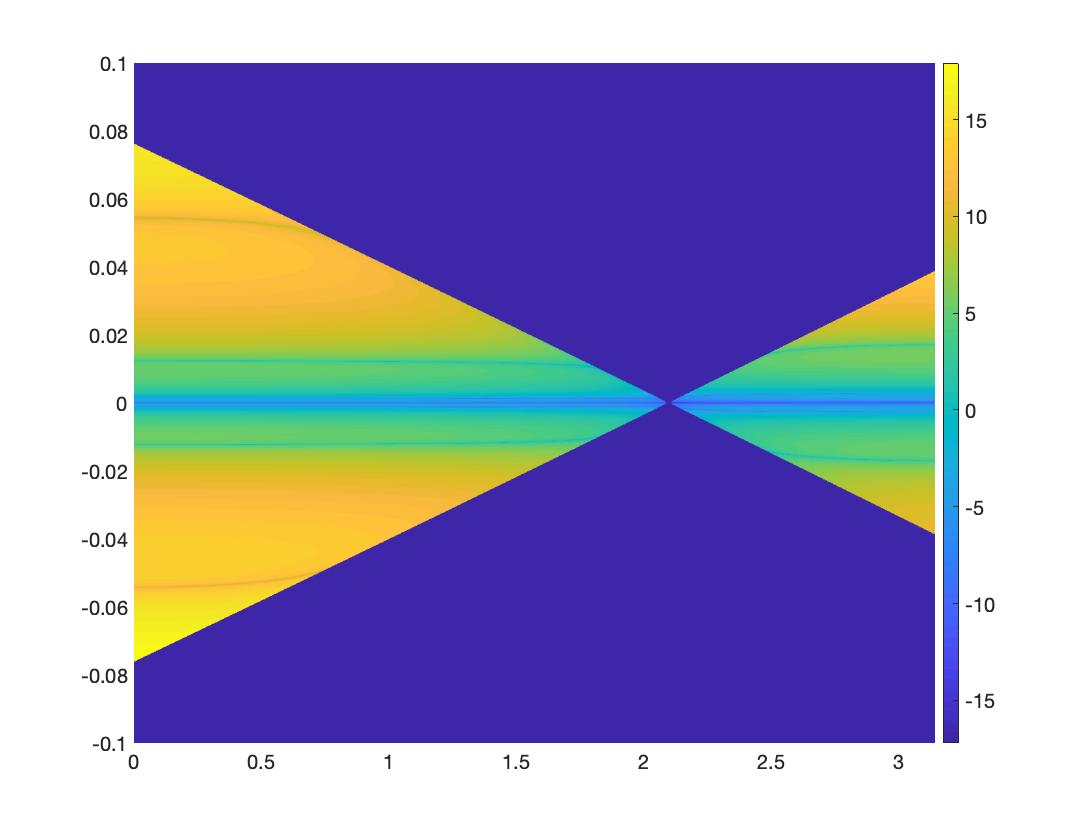}};
 		     				\draw (0.,2.1) node{$a_{11}=23,\;E_\text{lim}=0.1$};
 		    				\draw (0,-2.1) node{$k$};
 		     			\end{scope}

                \begin{scope}[shift={(-4.0,-1)},scale=0.8,transform shape]
 		     				\node at (0,0) {\includegraphics[height=4.3cm]{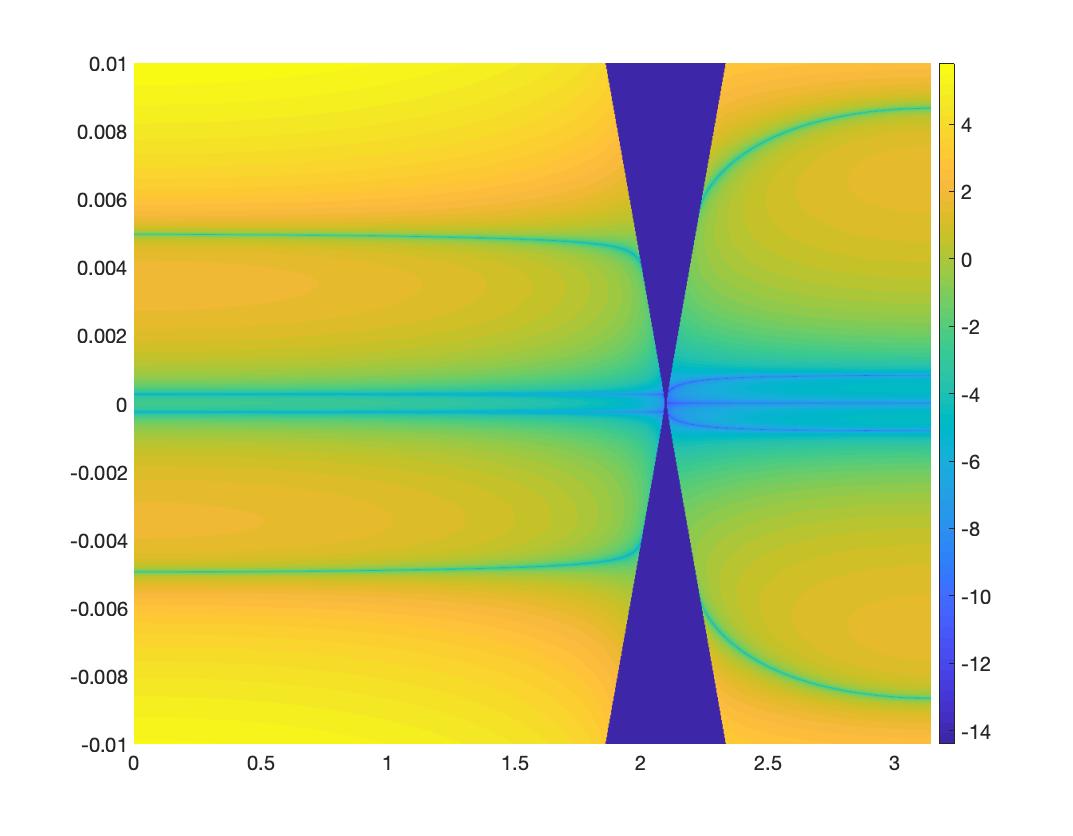}};
 		     				\draw (0.,2.1) node{$a_{11}=20,\;a_{12}^{(6)}=1,\;E_\text{lim}=0.01$};
 		    				\draw (0,-2.1) node{$k$};
                 \draw (-2.5,0) node{$E$};
 		     			\end{scope}

 		     			\begin{scope}[shift={(4.0,-1)},scale=0.8,transform shape]
 		     				\node at (0,0) {\includegraphics[height=4.3cm]{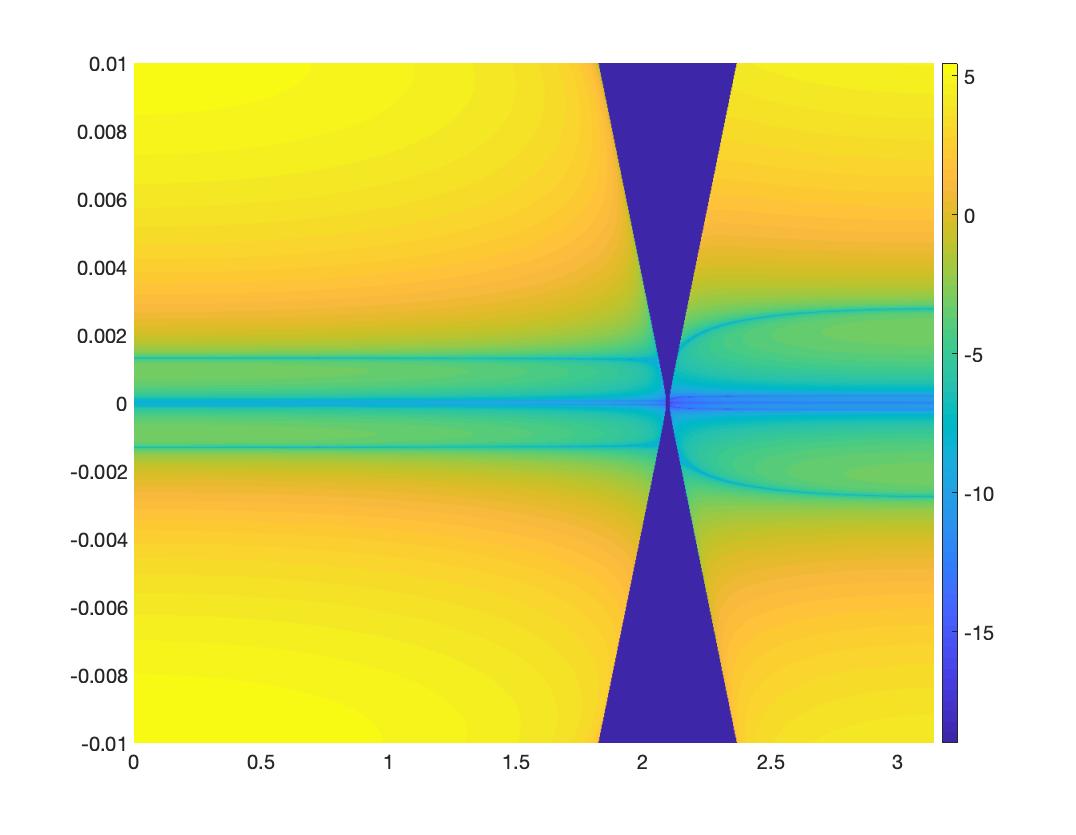}};
 		     				\draw (0.,2.1) node{$a_{11}=23,\;E_\text{lim}=0.01$};
 		    				\draw (0,-2.1) node{$k$};
 		     			\end{scope}
   		\end{tikzpicture}
   		\caption{Plots of $(\kpar,E)\mapsto\log |\Delta(E,\kpar)|$ for $k\in(0,\pi)$ ($N_k=600$ points) and $E\in(-E_\text{lim},E_\text{lim})$ ($N_E=600$ points) for various ordinary zigzag-like edges with $a_{12}=1$. }
   		\label{fig:edge_states_zigzag}
   	\end{center}
   \end{figure}
	 \begin{figure}[htbp]
	 	\begin{center}
	 		\begin{tikzpicture}
			
	 			\begin{scope}[shift={(-4.0,6)},scale=0.8,transform shape]	
	 				\node at (0,0) {\includegraphics[height=4.3cm]{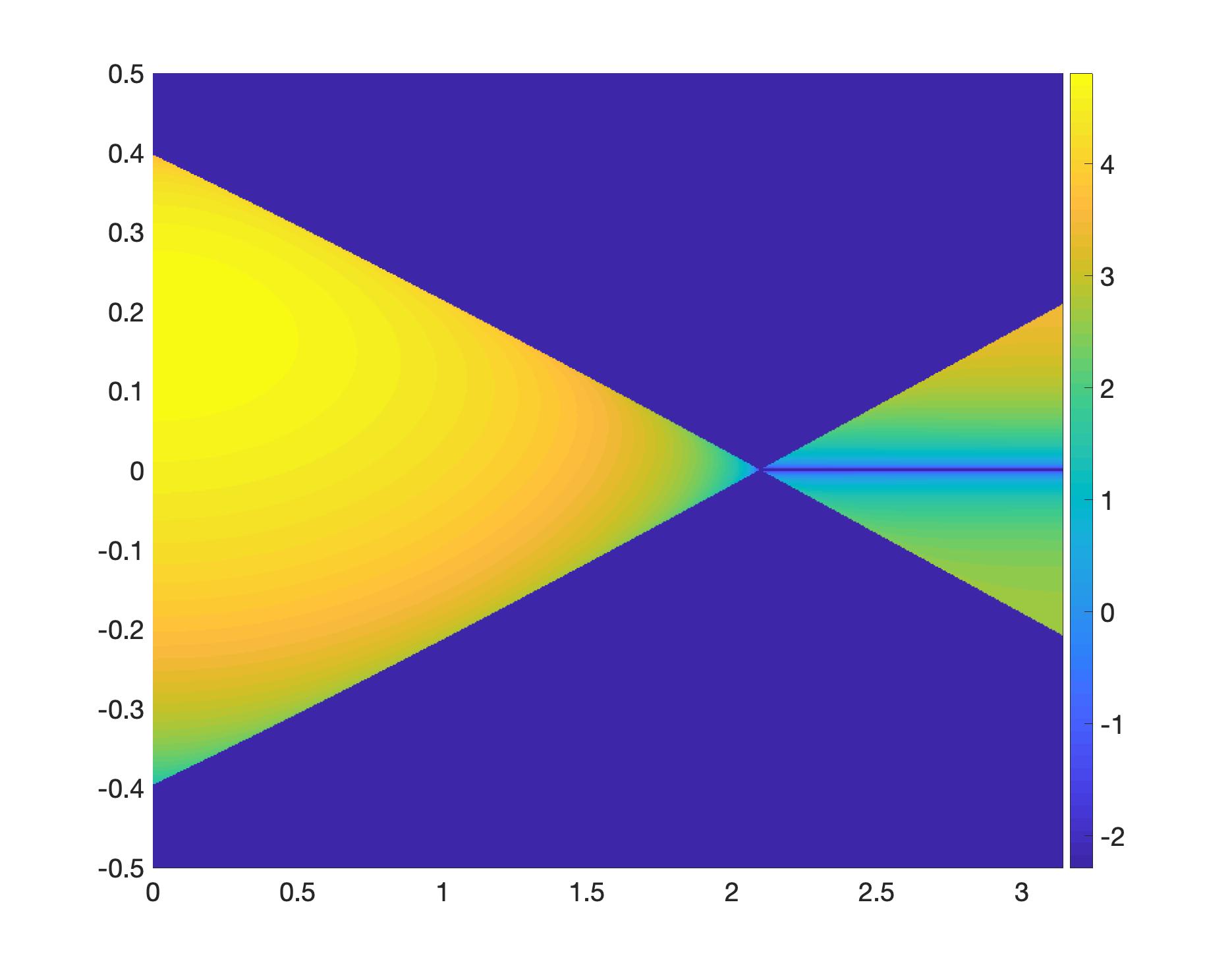}};
	 				\draw (0.,2.1) node{\small $a_{11}^{(3)}=2,\;a_{12}^{(3)}=3,\;E_\text{lim}=0.5$};
					\draw (-2.5,0) node{$E$};
	 			\end{scope}
			
	 			\begin{scope}[shift={(0,6)},scale=0.8,transform shape]	
	 				\node at (0,0) {\includegraphics[height=4.3cm]{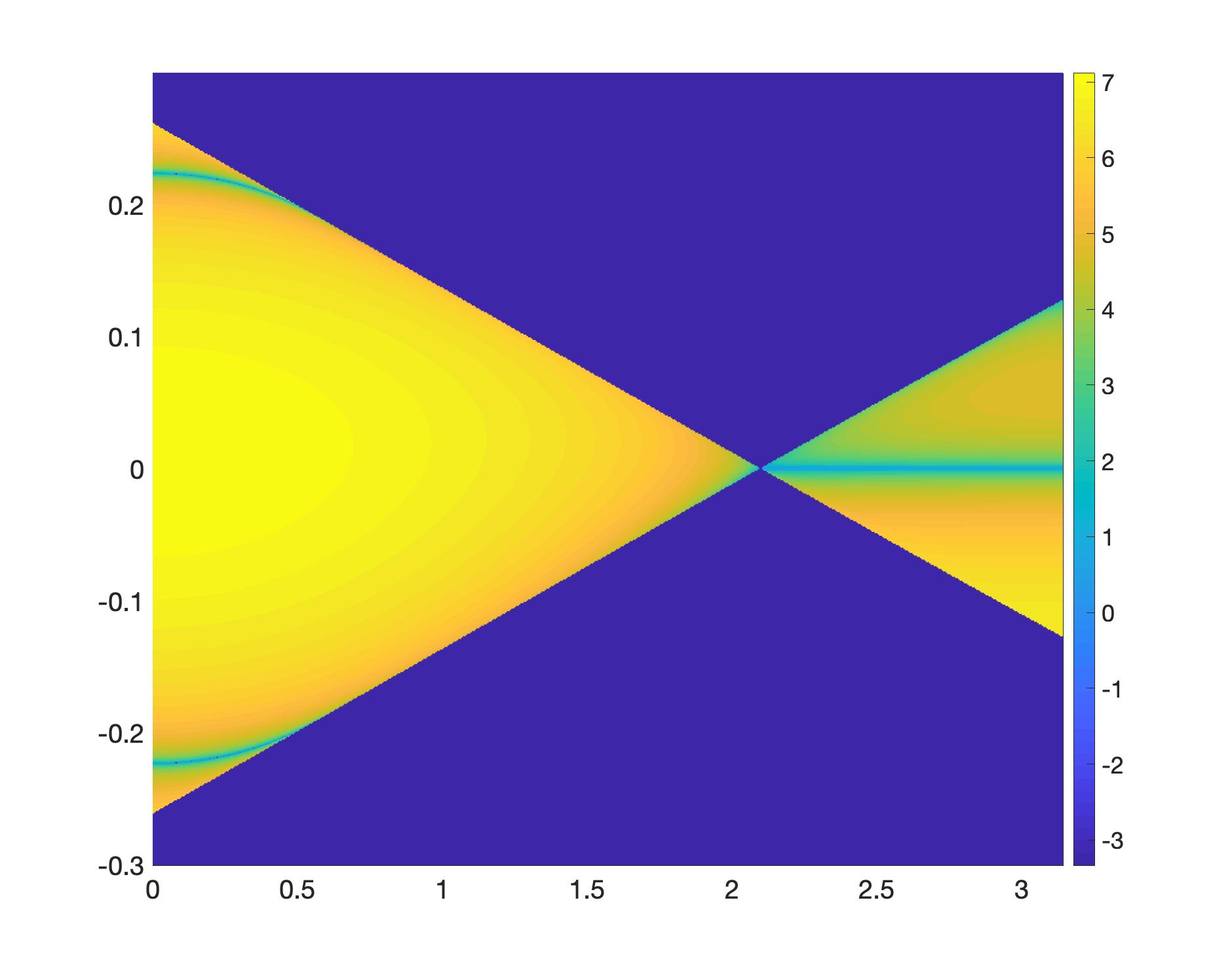}};
	 				\draw (0.,2.1) node{\small $a_{11}^{(4)}=3,\;a_{12}^{(4)}=5,\;E_\text{lim}=0.3$};
	 			\end{scope}
			
	 			\begin{scope}[shift={(4.0,6)},scale=0.8,transform shape]	
	 				\node at (0,0) {\includegraphics[height=4.3cm]{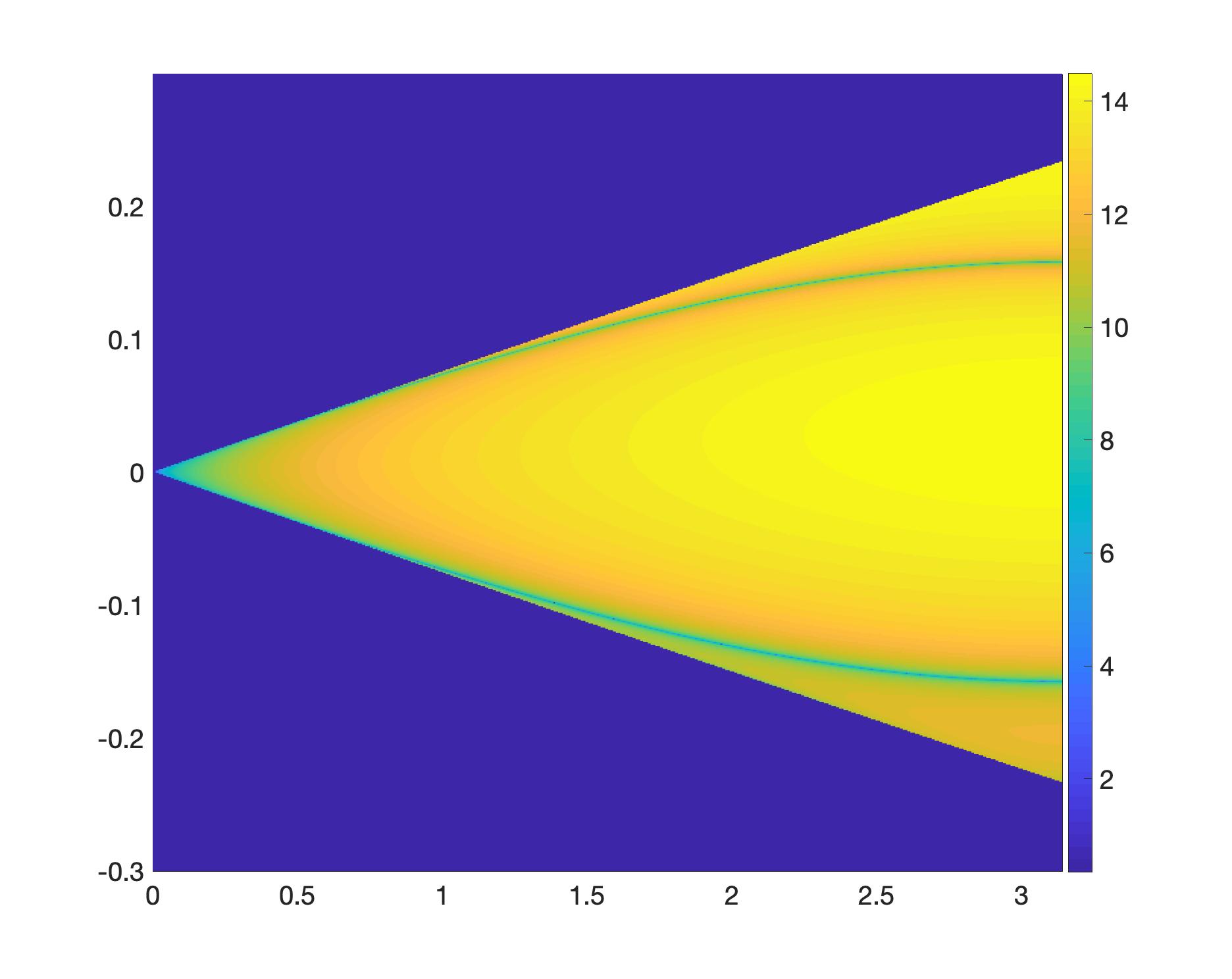}};
	 				\draw (0.,2.1) node{\small $a_{11}^{(5)}=5,\;a_{12}^{(5)}=8,\;E_\text{lim}=0.3$};
	 			\end{scope}
			
	 			\begin{scope}[shift={(-4.0,2.5)},scale=0.8,transform shape]
	 				\node at (0,0) {\includegraphics[height=4.3cm,width=5.5cm]{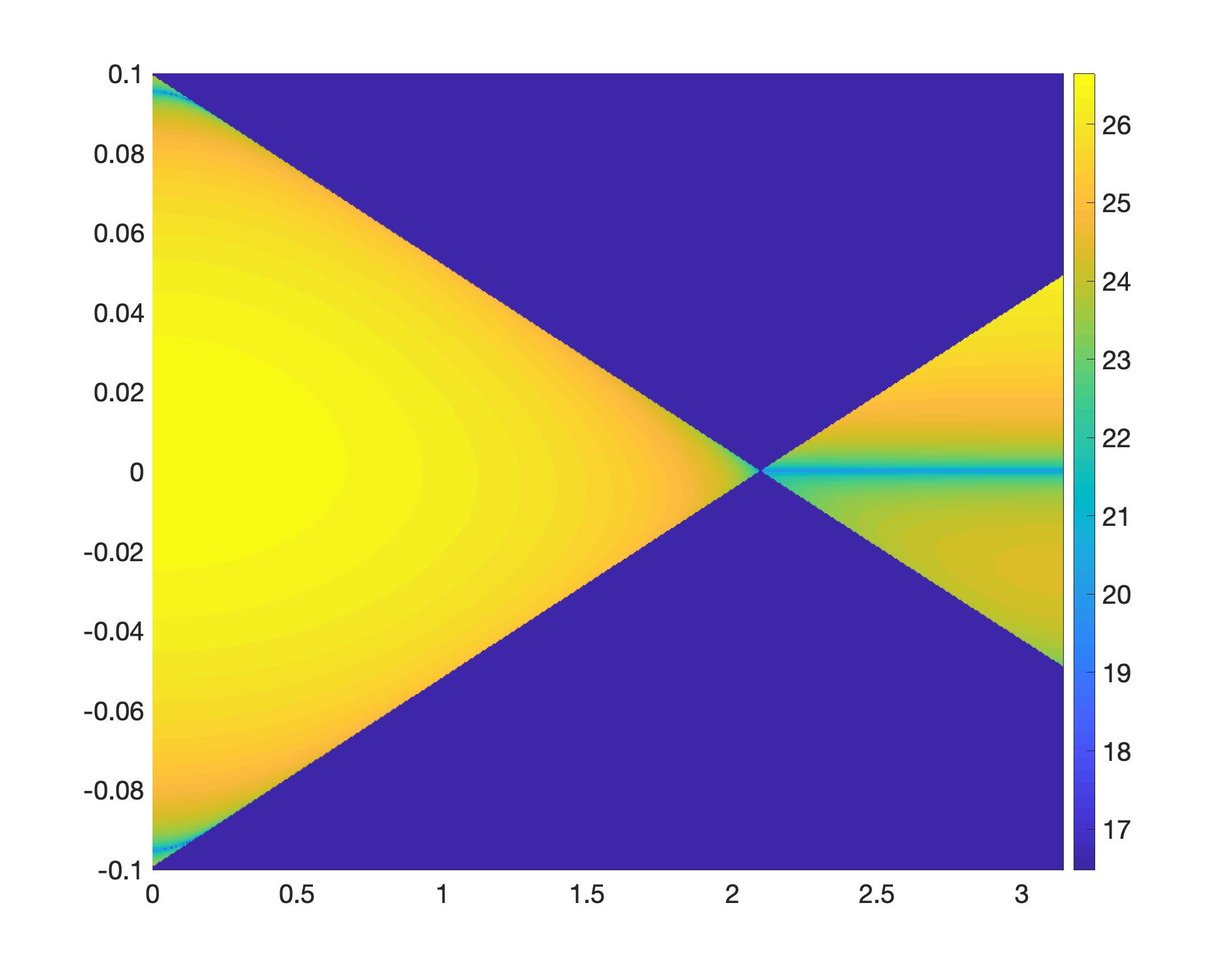}};
	 				\draw (0.,2.1) node{\small $a_{11}^{(6)}=8,\;a_{12}^{(6)}=13,\;E_\text{lim}=0.1$};
					\draw (-2.5,0) node{$E$};
	 			\end{scope}
	  			\begin{scope}[shift={(0.0,2.5)},scale=0.8,transform shape]
	  				\node at (0,0) {\includegraphics[height=4.3cm,width=5.5cm]{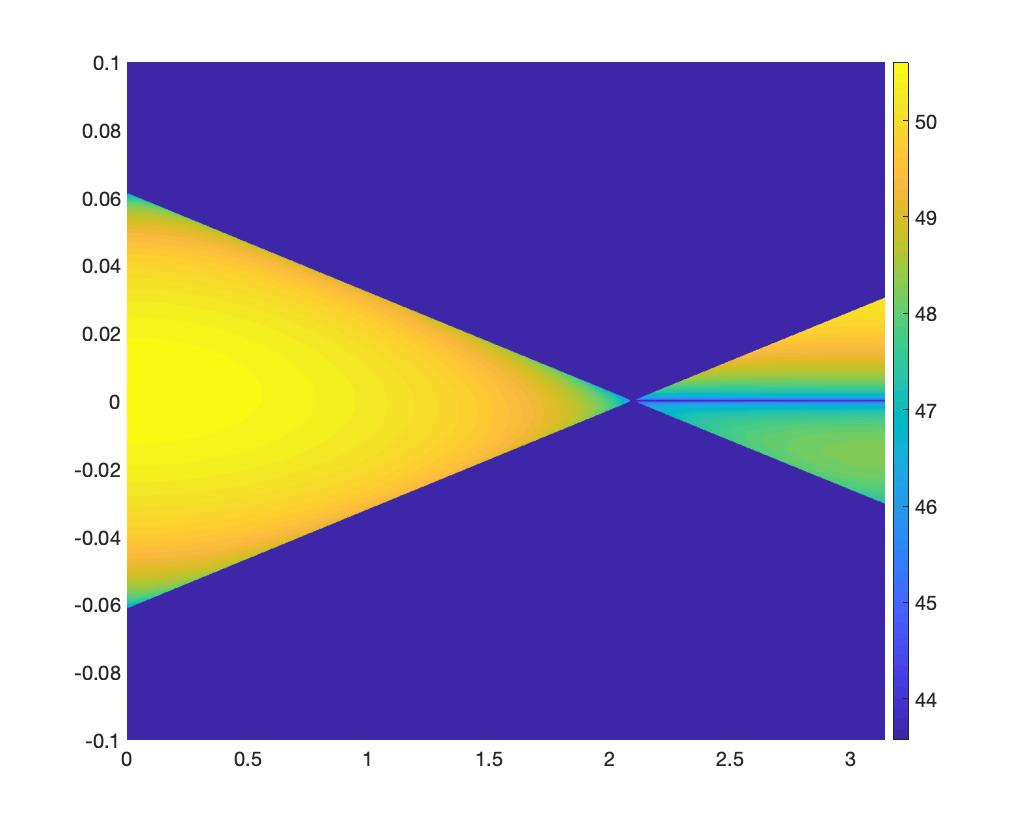}};
	  				\draw (0.,2.1) node{\small $a_{11}^{(7)}=13,\;a_{12}^{(7)}=21,\;E_\text{lim}=0.1$};
	  			\end{scope}
	  			\begin{scope}[shift={(4.0,2.5)},scale=0.8,transform shape]
	  				\node at (0,0) {\includegraphics[height=4.3cm,width=5.5cm]{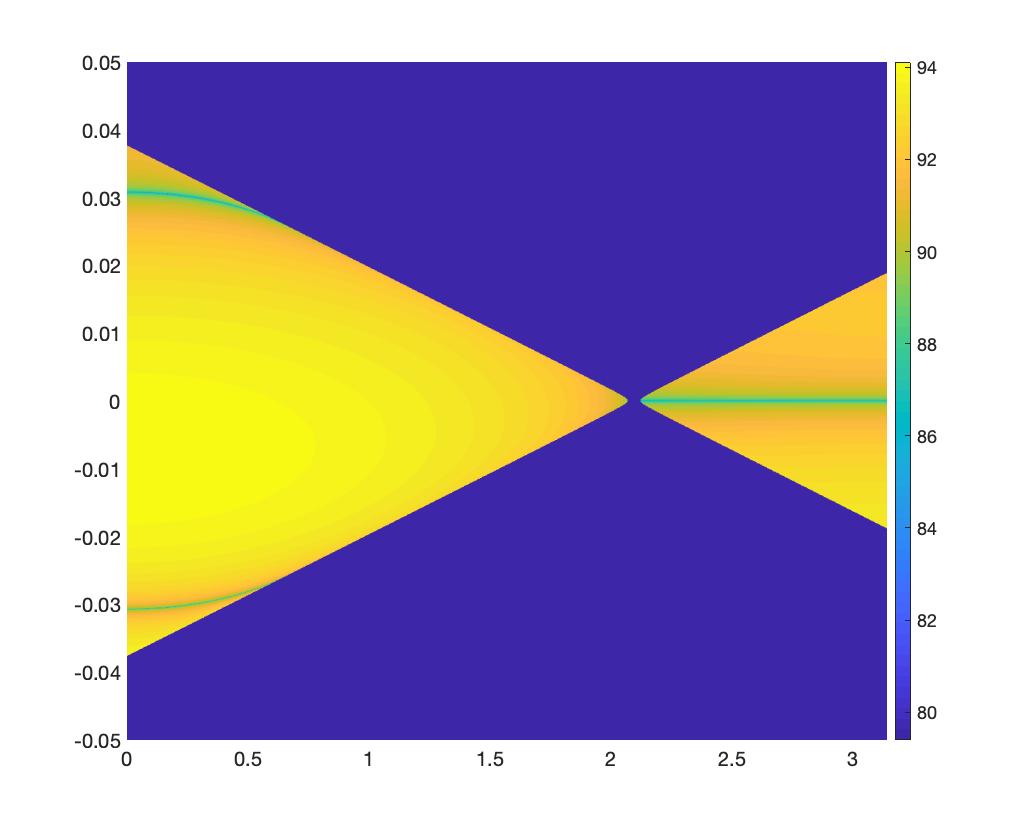}};
	  				\draw (0.,2.1) node{\small $a_{11}^{(8)}=21,\;a_{12}^{(8)}=34,\;E_\text{lim}=0.05$};
	  			\end{scope}
	  			\begin{scope}[shift={(4.0,-1)},scale=0.8,transform shape]
	  				\node at (0,0) {\includegraphics[height=4.3cm,width=5.5cm]{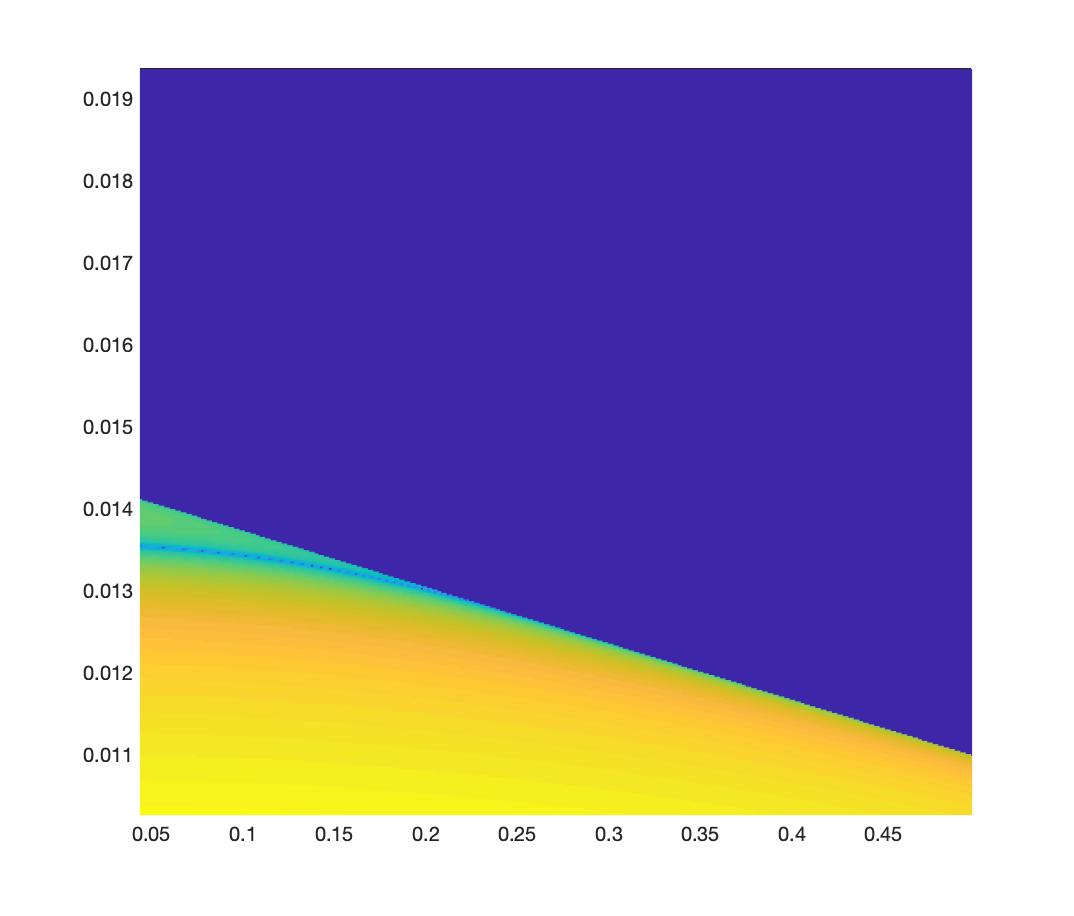}};
	  				\draw (0.,2.1) node{\small  $a_{11}^{(10)}=55,\;a_{12}^{(10)}=89$ (zoom)};
	 				\draw (0,-2.5) node{$k$};
					\draw[dashed] (-2.05,-1.65)--(-2.05,1.85)--(2.25,1.85)--(2.25,-1.65)--(-2.05,-1.65);
	  			\end{scope}
	  			\begin{scope}[shift={(0.0,-1)},scale=0.8,transform shape]
	  				\node at (0,0) {\includegraphics[height=4.3cm,width=5.5cm]{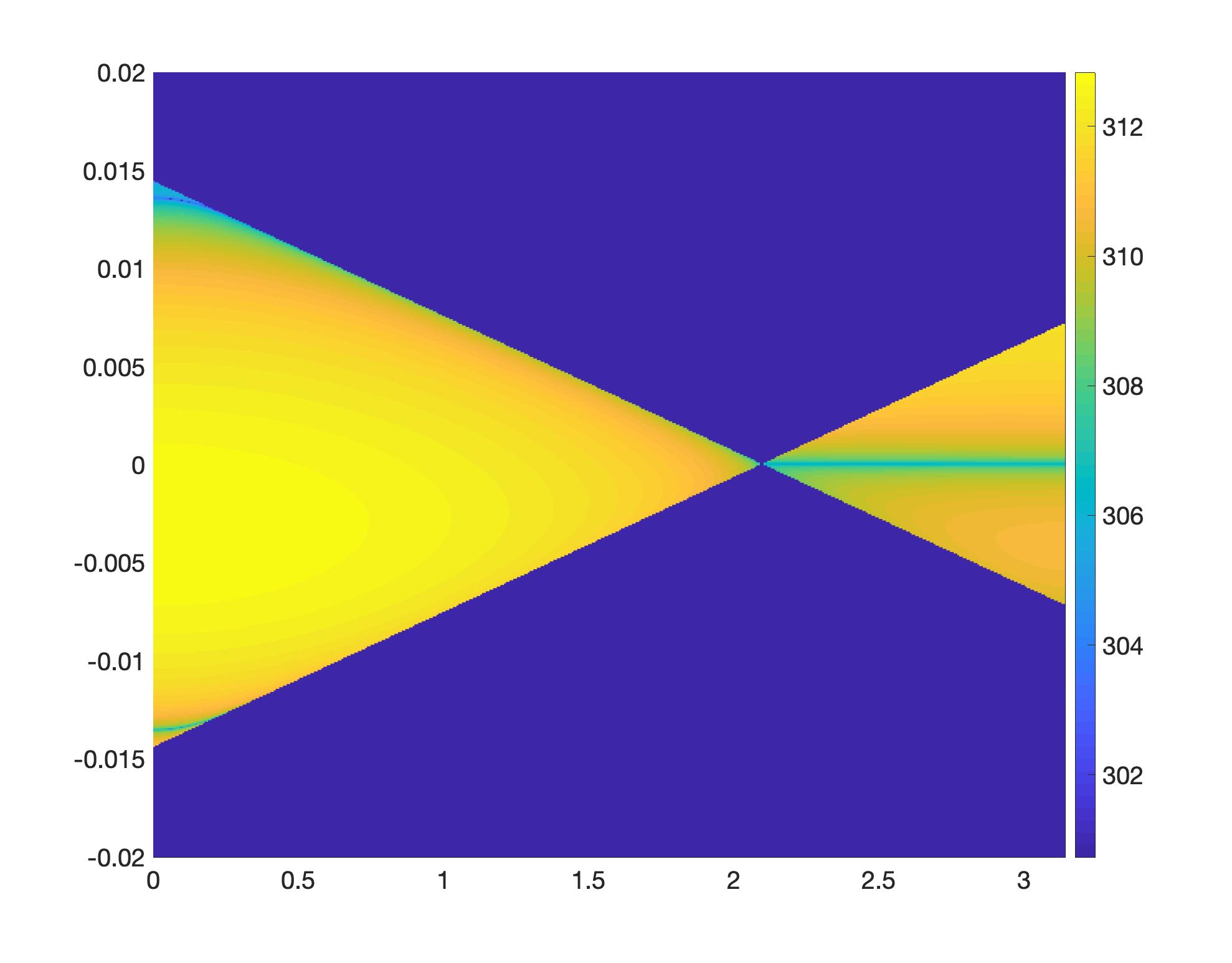}};
	  				\draw (0.2,2.1) node{ \small  $a_{11}^{(10)}=55,\;a_{12}^{(10)}=89,\;E_\text{lim}=0.02$};
	 				\draw (0,-2.5) node{$k$};
					\draw[dashed] (-1.1,0.8)--(-1.1,1.8)--(-2.05,1.8)--(-2.05,0.8)--(-1,0.8);
					\draw[dashed,->] (-1.1,1.3)--(3,0);
	  			\end{scope}
          \begin{scope}[shift={(-4.0,-1)},scale=0.8,transform shape]
	  				\node at (0,0) {\includegraphics[height=4.3cm,width=5.5cm]{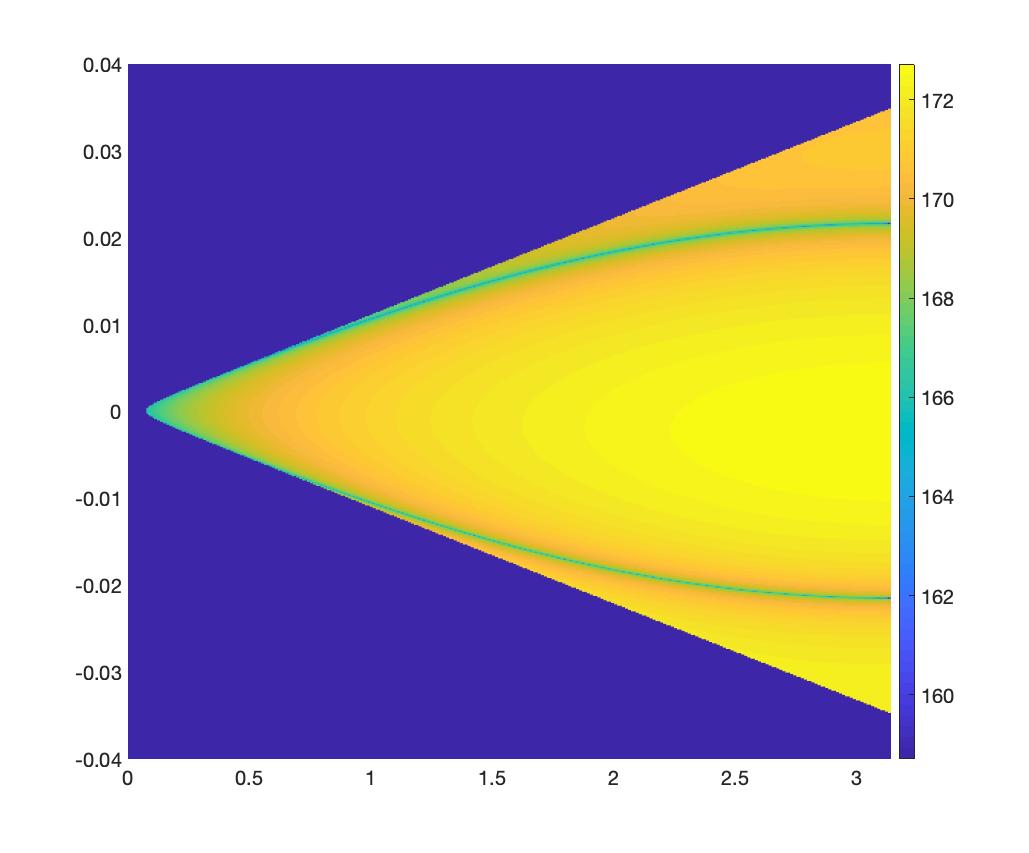}};
	  				\draw (0.,2.1) node{\small $a_{11}^{(9)}=34,\;a_{12}^{(9)}=55,\;E_\text{lim}=0.04$};
					\draw (-2.5,0) node{$E$};
	 				\draw (0,-2.5) node{$k$};
	  			\end{scope}
	 		\end{tikzpicture}
	 		\caption{Plots of $(\kpar,E)\mapsto\log |\Delta(\kpar,E)|$ for $k\in(0,\pi)$ ($N_k=1000$ points) and $E\in(-E_\text{lim},E_\text{lim})$ ($N_E=1000$ points) and with $\nA=\nB=0$ for the Fibonacci sequence. }
	 		\label{fig:edge_states_Fib}
	 	\end{center}
	 \end{figure}

{Thus, we have found strong numerical evidence for the existence edge states, whose energies lie on (non-flat) dispersion curves in the gap of the essential spectrum. In our simulations, all edge state curves (flat and non-flat) appear to emerge (bifurcate) from and terminate at band-crossings. Any edge state energies which are not in the gap of the essential spectrum must be embedded in the essential spectrum.
We have not addressed the general question of the existence of edge state energies or curves of edge state energies embedded in the essential spectrum; see Remark \ref{ou_ptspec}.}

        

\appendix
\section{Changing basis} \label{sec:vs_circ_to_vs}

In this section we prove \eqref{vB-1}, \eqref{ttn-ttm}, that the $\tilde{n}_\nu$'s defined in \eqref{ttn-ttm} are all distinct except in the classical zigzag case and finally the inequality \eqref{eq:n_nu_ineq}. Note that thanks to \eqref{e-nu} and \eqref{eq:change_basis}, in terms of the $\{\bv_1,\bv_2\}$ basis, the $\be^\nu$ are given by 
 \begin{equation}
 \begin{aligned}
e^1 &= \frac13\left(\ocirc{\bv}_1+\ocirc{\bv}_2\right) = 
\frac{1}{3}(a_{22}-a_{21}) \bv_1+ \frac{1}{3}(a_{11}-a_{12}) \bv_2\\
 e^2 &= \frac13\left(\ocirc{\bv}_1-2\ocirc{\bv}_2\right)= 
 \frac{1}{3}(a_{22}+2a_{21}) \bv_1- \frac{1}{3}(a_{12}+2a_{11}) \bv_2\\
e^3 &= \frac13\left(-2\ocirc{\bv}_1+\ocirc{\bv}_2\right) = 
  -\frac{1}{3}(a_{21}+2a_{22}) \bv_1 + \frac{1}{3}(a_{11}+2a_{12}) \bv_2
\end{aligned}
\label{A4}
\end{equation}

\subsection{ Points $\bv_A$ and  $\bv_B$ that lie in the fundamental cell}\label{vAvB-app}
The fundamental cell $\Gamma(0,0)$,  defined in \eqref{Gamma-mn}, contains the $A-$point given by \eqref{vA-1}. We look now for the $B$-point $\bv_B$ which lies in $\Gamma(0,0)$, i.e. we look for integers $q_1$, $q_2$ such that
 $\bv_B=q_1\bv_1+q_2\bv_2 + \ocirc{\bv}_B\in\Gamma(0,0)$, where
 $\bv_B^\circ = \be^1$. By \eqref{A4}, we have  $\bv_B\in\Gamma(0,0)$ if and only if
\begin{equation*}
q_1+\frac{1}{3}(a_{22}-a_{21}),\quad q_2+\frac{1}{3}(a_{11}-a_{12}) \in \left(-\frac12,+\frac12\right]
\label{q0-app}
\end{equation*}
By \eqref{k1k2s1s2}, this holds if and only if 
$ q_1=-k_1, \quad q_2=-k_2$
so that $\bv_B$ are given by \eqref{vB-1}.

\subsection{ $A-$points and $B-$ points  in the cell $\Gamma(m,n)$}\label{vAvB-mn-app}
 The $A-$points and $B-$points of our honeycomb are given, respectively, by $\bv_A+m\bv_1+n\bv_2$ and $\bv_B+m\bv_1+n\bv_2$,
 where $(m,n)\in\Z^2$. Since $\bv_A, \bv_B\in\Gamma(0,0)$, it follows that 
 \begin{equation}
 \begin{aligned}
 &\bv_A+m\bv_1+n\bv_2,\ \bv_B+m\bv_1+n\bv_2 \in \Gamma(m,n)=\Gamma(0,0) +m\bv_1+n\bv_2 \\
  &=\Big\{ x_1\bv_1+x_2\bv_2: x_1\in (m-\frac12,m+\frac12],\ x_2 \in (m-\frac12,m+\frac12] \Big\}
 \end{aligned}
 \label{Gam_mn_pts}
 \end{equation}
 Since $\Gamma(m,n)$ partition  $\R^2$, it follows that for any given $(m,n)\in\Z^2$, the points in \eqref{Gam_mn_pts}
  are the only points of the honeycomb that lie in $\Gamma(m,n)$.
  
  Next we represent the vectors $\be^\nu+(\bv_A-\bv_B)$ for $\nu=1,2,3$ with respect to the basis
   $\{\bv_1,\bv_2\}$. Using \eqref{A4}, \eqref{vA-1}, \eqref{vB-1} and \eqref{k1k2s1s2}, we deduce easily \eqref{tntm} where the $\ttm_\nu$'s and $\ttn_\nu$'s are given by \eqref{ttn-ttm}. 
Note, thanks to \eqref{k1k2s1s2} that
\begin{equation*}
\ttm_1+\ttm_2+\ttm_3=3k_1+a_{21}-a_{22} = - s_1\in\{-1,0,+1\}
\label{A13}
\end{equation*}
and 
\begin{equation}
\ttn_1+\ttn_2+\ttn_3=3k_2-a_{11}+a_{12} = - s_2\in\{-1,0,+1\}.
\label{A14}
\end{equation}

\subsection{Except in the zigzag case, the integers $\ttn_1$, $\ttn_2$, $\ttn_3$ are distinct}\label{distinct-n}

The integers $\ttn_1$, $\ttn_2$, $\ttn_3$ are distinct if and only if
\begin{equation}
a_{11}\ne0,\quad a_{12}\ne0,\quad a_{11}\ne-a_{12}
\label{A15}\end{equation}
We enumerate the cases where \eqref{A15} fails to hold. Recall that $a_{11}a_{22}-a_{12}a_{21}=1$.
Therefore, 
\begin{align*}
&\textrm{$a_{11}=0$ implies $-a_{12}a_{21}=1$. Hence, $a_{12}=\pm1$;}\\
&\textrm{$a_{12}=0$ implies $a_{11}a_{22}=1$. Hence, $a_{11}=\pm1$;}\\
&\textrm{$a_{11}=-a_{12}$ implies $a_{11}(a_{22}+a_{21})=1$. Hence,  $a_{11}=-a_{12}=1$ 
 or $a_{11}=-a_{12}=-1$.}
 \end{align*}
 Therefore, $\ttn_1, \ttn_2, \ttn_3$ are distinct except in the following cases:
 \begin{equation}
 \begin{aligned}
 (a_{11}, a_{12})=(0,1),\ (a_{11}, a_{12})=(0,-1)\\
 (a_{11}, a_{12})=(1,0),\ (a_{11}, a_{12})=(-1,0)\\
 (a_{11}, a_{12})=(1,-1),\ (a_{11}, a_{12})=(-1,1)
\end{aligned}
\label{A16}
 \end{equation}
 It is easily verified that all terminated honeycombs along an edge direction given in \eqref{A16} are equivalent
  to a balanced zigzag edge or an unbalanced zigzag edge by a symmetry of the honeycomb; see Definition \ref{def:ZZAC} and Figure \ref{fig:classical_edges} .
  
  \subsection{A few elementary inequalities involving $\ttn_\nu$}\label{a-few-ineq}
  
  Suppose $\ttn_1,\ttn_2,\ttn_3$ are all distinct. If all the $\ttn_\nu$ are non-negative , then $\ttn_1+\ttn_2+\ttn_3\ge0+1+2=3$, which contradicts \eqref{A14}. Similarly, if the $\ttn_\nu$ are all non-positive, then 
$\ttn_1+\ttn_2+\ttn_3\le0-1-2=-3$, again contradicting \eqref{A14}. Hence,
$
\min\{\ttn_1,\ttn_2,\ttn_3\}<0<\max\{\ttn_1,\ttn_2,\ttn_3\}.
$
 Recall (see \eqref{perm-n}) that we let $(n_1,n_2,n_3)$ denote the permutation of $\ttn_1,\ttn_2,\ttn_3$ satisfying
$
 n_1<n_2<n_3;
$
Therefore, $n_1<0< n_3$, which is \eqref{eq:n_nu_ineq}.
 Recall that by \eqref{nAnBs2}, $\nB-\nA\in\{-1,0,+1\}$. Consequently, \eqref{eq:n_nu_ineq} implies
 \begin{equation*}
 \nA + n_1\le \nB\quad\text{and}\quad  \nB- n_3\le \nA.
 \label{A19}\end{equation*} 

\section{The Wedge of the Edge Proposition}\label{app:woe-prop}


\begin{proof}[Proof of Proposition \ref{woe-prop}]
We use here the notations:
 \[ \beta=\left(\begin{matrix}\beta_1\\ \beta_2\end{matrix}\right),\quad \gamma=\left(\begin{matrix}\gamma_1\\ \gamma_2\end{matrix}\right),\quad \Delta\beta=\beta_1-\beta_2,\quad \Delta\gamma=\gamma_1-\gamma_2.\]
We recall that $(\kpar,\hat{k}_\perp)$ is given by \eqref{h_kpar}- \eqref{h_kperp}. We consider $\kappa>0$; the other cases can be deduced by symmetry of the essential spectrum. For each fixed $\kpar$, the mapping $\kperp\mapsto \pm |h(\kpar,\kperp)|$ sweeps out the essential spectrum of $\HTB_\sharp(\kpar)$.  
Introduce $\tilde{k}_\perp(\kpar)$ such that 
\[ \eta_+(\kpar)= \min_{\kperp} |h(\kpar,\kperp)|=+|h(\kpar,\tilde{k}_\perp(\kpar))|,\qquad \eta_-(\kpar)= -\min_{\kperp} |h(\kpar,\kperp)| = - |h(\kpar,\tilde{k}_\perp(\kpar))|.\]
{Note that $\hat{k}_\perp\equiv \tilde{k}_\perp(\hat{\kpar})$ and note that $\eta_\pm(\hat{k}_\perp)=0$. }
 To find $\tilde{k}_\perp(\kpar)$ (and then $\eta_\pm(\kpar)$) for $\kpar$ near $\hat{\kpar}$ 
 we first expand  $s(\kappa,\kappa_\perp) \equiv |h(\hat{k}+\kappa,\hat{k}_\perp+\kappa_\perp)|^2$  for $|(\kappa,\kappa_\perp)|$ small, where $(\hat{k},\hat{k}_\perp)$ is given by \eqref{h_kpar}- \eqref{h_kperp}. Note that 
  \begin{align*} 
 s(\kappa,\kappa_\perp) 
 &= \Big| 1 + \hat\omega_1e^{i(\beta_1\kappa+\gamma_1\kappa_\perp)}\  +\ \hat\omega_2e^{i(\beta_2\kappa+\gamma_2\kappa_\perp)}\ \Big|^2,
 \end{align*}
 where $\hat\omega_1=e^{i(\beta_1\hat{\kpar}+\gamma_1\hat{k}_\perp)}$  and 
  $ \hat\omega_2=  e^{i(\beta_2\hat{\kpar}+\gamma_2\hat{k}_\perp)}$ are distinct nontrivial cube
roots of unity. 
  The function $(\kappa,\kappa_\perp)\mapsto s(\kappa,\kappa_\perp) $ is smooth and its minimum value is achieved at $(\kappa,\kappa_\perp)=(0,0)$; $ s(0,0)=|1+\widehat{\omega_1}+ \widehat{\omega_2}|^2 = 0$.   Expanding for  $(\kappa,\kappa_\perp)$ small, we obtain:
\begin{multline*}
  s(\kappa,\kappa_\perp) 
  = -\sum_{j=1}^2 \frac12(\hat\omega_j+\overline{\hat\omega_j}) (\beta_j\kappa+\gamma_j\kappa_\perp)^2
  \\- \frac12\left(\widehat{\omega_1}\ \overline{\widehat{\omega_2}}\ +\ \overline{\widehat{\omega_1}}\ \widehat{\omega_2} \right)[\Delta\beta\kappa+\Delta\gamma\kappa_\perp)]^2\ +\ q_3(\kappa,\kappa_\perp), 
\end{multline*}
  where $q_3(\kappa,\kappa_\perp)$ is smooth and of cubic order for $|(\kappa,\kappa_\perp)|$ small. Since $\hat\omega_j$, $j=1,2$ are distinct nontrivial cube
roots of unity, 
 $ \Re\widehat{\omega_1}\ =\ \Re\widehat{\omega_2}\ =\ \Re(\widehat{\omega_1}\ \overline{\widehat{\omega_2}})\ =\ -1/2,$
  and therefore
  \begin{align}
 s(\kappa,\kappa_\perp) = 
 \frac12\left[ 
 \sum_{j=1}^2  (\beta_j\kappa+\gamma_j\kappa_\perp)^2
 + [\Delta\beta\kappa+\Delta\gamma\kappa_\perp)]^2\
  \right]  + q_3(\kappa,\kappa_\perp).
\label{s-expand}  \end{align}
 Fix $\kappa$ small. Then, from \eqref{s-expand} we find that the minimum of $\kappa_\perp\mapsto s(\kappa,\kappa_\perp)$ 
 is attained at 
 \begin{equation} \tilde\kappa_\perp(\kappa) \equiv -  \Big(\frac{\beta\cdot\gamma + \Delta\beta\Delta\gamma}{|\gamma|^2+(\Delta\gamma)^2}\Big) \kappa\ +\ \mathcal{O}(\kappa^2).\label{t-kappa}\end{equation}

 Substitution of \eqref{t-kappa} into \eqref{s-expand} yields
 \begin{equation}
 s\left(\kappa,\tilde\kappa_\perp(\kappa)\right)
  = \frac12\ \frac{\left(|\beta|^2+(\Delta\beta)^2\right)\ \left(|\gamma|^2+(\Delta\gamma)^2\right) 
   - \left(\beta\cdot\gamma+ \Delta\beta\Delta\gamma\right)^2}{|\gamma|^2+(\Delta\gamma)^2}\ \kappa^2 
    \ +\ \mathcal{O}(\kappa^3).
 \label{s-expand1}
 \end{equation}
 The leading term in \eqref{s-expand1} can be simplified using the identities
 \[
 \det[\beta\ \gamma])^2 = |\beta|^2\ |\gamma|^2 - (\beta\cdot\gamma)^2 \quad\text{and}\quad |\beta|^2(\Delta\gamma)^2+|\gamma|^2(\Delta\beta)^2-2\beta\cdot\gamma\Delta\beta\Delta\gamma=|\beta\Delta\gamma-\gamma\Delta\beta|^2
 \]
 We find, by \eqref{det-sig}
 \begin{equation}
 s\left(\kappa,\tilde\kappa_\perp(\kappa)\right) =  \frac{3}{2}\ \frac{1}{|\gamma|^2+(\Delta\gamma)^2}\ \kappa^2 + 
 \mathcal{O}(\kappa^3)\
 \label{s-expand2} \end{equation}
 We simplify the denominator using \eqref{bg-def} and \eqref{ttn-ttm}:
 \[
  |\gamma|^2+(\Delta\gamma)^2\ 
  = (n_2-n_1)^2+ (n_3-n_1)^2+(n_2-n_3)^2= 2\left( a_{11}^2 + a_{11}a_{12}+a_{12}^2\right).
  \]
  Finally, substitution into \eqref{s-expand2} yields \eqref{m-wedge1} and the proof of Proposition \ref{woe} is complete.
\end{proof}

 \section{ $p_+(\zeta,\kp)$ and $p_-(\zeta,\kp)$ are honeycomb edge polynomials}\label{sec:p_pm}

 A honeycomb edge polynomial  (see Section \ref{sec:HEPs}) is a polynomial of the form \eqref{hep-1}
 where  $\kp\in[0,2\pi]$ and  $\beta_1, \beta_2, \gamma_1, \gamma_2$ are integers 
 such that \eqref{gamma12} \eqref{det-sig} are satisfied.
In this appendix we verify that the polynomials $p_+$ and $p_-$ defined by  \eqref{B-poly},\eqref{A-poly}
 are honeycomb edge polynomials. 

 We first note using \eqref{ttn-ttm} that
  \begin{equation*}
  \det\begin{pmatrix} 1&\tm_1&\tn_1\\ 1&\tm_2&\tn_2\\ 1&\tm_3&\tn_3\end{pmatrix} 
   =  \det\begin{pmatrix} 1&k_1&k_2\\ 0&a_{21}&-a_{11}\\ 0&-a_{22}&a_{12}\end{pmatrix} = a_{21}a_{12}-a_{11}a_{22}=-1.
  \end{equation*}
 Now the matrix 
 \[ \begin{pmatrix} 1&m_1&n_1\\ 1&m_2&n_2\\ 1&m_3&n_3\end{pmatrix} \]
 is obtained from the above matrix by permutation of the rows (so that $n_1<n_2<n_3$). Hence, 
  \[ \sigma_{\rm det}\ :=\ \det\begin{pmatrix} 1&m_1&n_1\\ 1&m_2&n_2\\ 1&m_3&n_3\end{pmatrix} \in\{-1,+1\}.
  \]
 Furthermore, 
  \begin{align}
  \sigma_{\rm det}\ = \det\begin{pmatrix} 1 &m_1&n_1\\ 0&m_2-m_1&n_2-n_1\\ 0&m_3-m_1&n_3-n_1\end{pmatrix}
 = \det\begin{pmatrix}m_2-m_1&n_2-n_1\\ m_3-m_1&n_3-n_1\end{pmatrix}  \in\{-1,+1\}, \label{det1-app}
  \end{align} 
 and, on the other hand, 
 \begin{align}
  \sigma_{\rm det}\ = \det\begin{pmatrix} 0&m_1-m_3&n_1-n_3\\ 0&m_2-m_3&n_2-n_3\\ 1&m_3&n_3\end{pmatrix}
   =  \det\begin{pmatrix} m_1-m_3&n_1-n_3\\ m_2-m_3&n_2-n_3\end{pmatrix}
   \in\{-1,+1\}. \label{det2-app}\end{align}
 We can verify now that the polynomials $p_\pm(\zeta,\kp)$ are honeycomb edge polynomials.

  First consider $p_+(\zeta,\kp)$. We have $\beta_1=m_2-m_1$, $\gamma_1=n_2-n_1$, $\beta_2=m_3-m_1$
  and $\gamma_2=n_3-n_1$. Since $n_1<n_2<n_3$, the condition \eqref{gamma12} holds and furthermore
   by \eqref{det1-app} the condition \eqref{det-sig} holds. 
  
   We turn to $p_-(\zeta,\kp)$. Here, $\beta_1=m_3-m_2$, $\gamma_1=n_3-n_2$, $\beta_2=m_3-m_1$ and $\gamma_2=n_3-n_1$. Again since $n_1<n_2<n_3$, the condition \eqref{gamma12} holds and furthermore 
    by \eqref{det2-app} the condition \eqref{det-sig} holds.

  \section{Monotonicity Lemma \ref{mono-lem}}\label{app:mono-lem}
 
Let us recall that the functions $(\rho_1,\rho_2)\mapsto \alpha_j(\rho_1,\rho_2)$ for $j\in\{1,2\}$ defined in \eqref{alpha12} are  real-valued if $\rho_1+\rho_2\ge1$ and $|\rho_1-\rho_2|\le1$. Fix $\kappa\in(0,1)$ and let us consider the function $\rho\mapsto M(\rho)$ defined in \eqref{M-def}
 %
 %
For $\rho_c:=\rho_{\rm critical}\in(0,1)$ defined in \eqref{rhoc2}, we have that $\rho\mapsto \rho^\kappa+\rho$ is monotone: $\rho\ge\rho_c\ \implies \rho^\kappa + \rho \ge 1$ and hence $M(\rho)$ is real-valued on $[\rho_c,1)$. We now prove Lemma \ref{mono-lem}, i. e. that $M(\rho)$ is monotone decreasing over the interval $[\rho_c,1)$. 
 \bigskip

 The proof follows from the following three claims.

 \begin{enumerate}
	 \item \underline{Claim 1}: The function $r$ defined by
     \begin{equation}
      r(\rho) := 2(1+\rho^2)\rho^{2\kappa} - \rho^{4\kappa} - (1-\rho^2)^2\ .
      \label{r_of_rho}\end{equation}
	 is such that $r(\rho)\ge0$ for all $\rho\in[\rho_c,1)$.
     \item \underline{Claim 2}:  We have
     \[  \kappa \left[ (1-\kappa) \rho^{2\kappa} - 1\right] \ +\ (\kappa-1)\rho^2 < 0\]
     for any fixed $0<\kappa<1$, and all $\rho\in(0,1)$.
 \item \underline{Claim 3}:
 \begin{equation} M^\prime(\rho) = \frac{2}{\rho}\ \frac{1}{\sqrt{r(\rho)} }\ 
 \Big[\ 
  \kappa \left[ (1-\kappa) \rho^{2\kappa} - 1\right] \ +\ (\kappa-1)\rho^2 \ 
  \Big]\ , \label{Mprime}\end{equation}
  where $r$ is defined \eqref{r_of_rho}.
 \end{enumerate}

 \subsection{Proof of Claim 1}
 Let us first show that $r(\rho_c)=0$. Since $\rho_c^\kappa=1-\rho_c$,
 \[
 r(\rho_c) = 2(1+\rho_c^2)\cdot(1-\rho_c)^2-(1-\rho_c)^4-(1-\rho_c)^2(1+\rho_c)^2
=0.
 \]
Now we show that $r^\prime(\rho)>0$ for all $\rho\in(0,1)$. Direct computation yields
 \begin{align*}
 r^\prime(\rho) &= \frac{4\kappa}{\rho}
 \left(\ 
 \rho^{2\kappa}(1-\rho^{2\kappa}) + \rho^{2\kappa+2}\ \right) + \frac{4}{\rho}\rho^{2\kappa+2}+4\rho(1-\rho^2)
 \end{align*}
 For $\rho\in(0,1)$, we have $r^\prime(\rho) >0$.
 	 
 \subsection{Proof of Claim 2}
 Claim 2 is clear since $\kappa\in(0,1)$ and $\rho\in(0,1)$.

 \subsection{Proof of Claim 3}
We rewrite
 \begin{equation*}
 M(\rho) = \ac(X(\rho)) +\kappa \left( \ac(Y(\rho) \right),\label{Mdef2}
 \end{equation*}
 where
 \begin{equation*}
 X(\rho)= \frac{\rho^2-\rho^{2\kappa}-1}{2\rho^\kappa},\quad Y(\rho)= \frac{\rho^{2\kappa}-\rho^2-1}{2\rho}
 \label{XY-def}\end{equation*}
We deduce that \begin{align*}
 M^\prime(\rho) & = \ac^\prime(X(\rho))\ \D_\rho X(\rho) + \kappa\ \ac^\prime(Y(\rho))\ \D_\rho Y(\rho) 
 \end{align*} where we recall that
$\ac^\prime(x)=-(1-x^2)^{-1/2} .$
 \subsubsection{Calculation of $\ac^\prime(X(\rho))$ and $\ac^\prime(Y(\rho))$}
We have that 
 \begin{align}
 \ac^\prime(X(\rho)) 
  = - \frac{2\rho^\kappa}{\sqrt{r(\rho)}} \label{DacX1}\\
  \ac^\prime(Y(\rho)) 
  = -\frac{2\rho}{\sqrt{r(\rho)}}\label{DacY1}
  \end{align}
  where $r(\rho)$ is defined in \eqref{r_of_rho}, by using that
  \[
 (2\rho^\kappa)^2-(\rho^2-\rho^{2\kappa}-1)^2= 4\rho^{2\kappa}- ((\rho^2-1)-\rho^{2\kappa})^2 = r(\rho), 
  \]
and
   \[
   (2\rho)^2-(\rho^{2\kappa}-\rho^2-1)^2 
  = \underbrace{4\rho^2-(\rho^2+1)^2}_{-(1-\rho^2)^2} -\rho^{4\kappa}+2\rho^{2\kappa}(\rho^2+1) = r(\rho).
   \]

 \subsubsection{Calculation of $\D_\rho X(\rho)$ and $\D_\rho Y(\rho)$}
  
  \begin{align}
  \D_\rho X(\rho) & =  \frac{(2\rho^\kappa)\left(2\rho-2\kappa\rho^{2\kappa-1}\right)-\left(\rho^2-\rho^{2\kappa}-1\right)(2\kappa\rho^{\kappa-1})}{(2\rho^\kappa)^2} \label{DX}\\
   \D_\rho Y(\rho) & =  \frac{(2\rho)\left(2\kappa\rho^{2\kappa-1}-2\rho\right) - \left(\rho^{2\kappa}-\rho^2-1\right)(2)}{(2\rho)^2}
   \label{DY}
 \end{align}


 \subsubsection{Calculation of $M^\prime(\rho)$}

 Multiplying \eqref{DacX1} and \eqref{DX}, we deduce

 \begin{align}
 \ac^\prime(X(\rho))\D_\rho X(\rho) 
 &= \frac{1}{\rho}\frac{1}{\sqrt{r(\rho)}} \times \left( -2\rho^2+\kappa(\rho^{2\kappa}+\rho^2-1)\right),
 \label{prodX}\end{align}
and multiplying \eqref{DacY1} and \eqref{DY} yields
 \begin{align}
 \ac^\prime(Y(\rho))\D_\rho Y(\rho) 
 &=\frac{1}{\rho}\frac{1}{\sqrt{r(\rho)}} \times \left( (1-2\kappa)\rho^{2\kappa}+\rho^2-1\right)\label{prodY}
 \end{align}
Since $M^\prime(\rho)=$ \eqref{prodX}\ +\ $\kappa\ \times$\ \eqref{prodY}, we deduce \eqref{Mprime}.

\section{Independence of the choice of $(a_{21},a_{22})$}\label{app:v_2}

{As explained in Section \ref{sec:rat-edge}, while the integer vector $(a_{11},a_{12})$ is uniquely defined, the vector $(a_{21},a_{22})$ is uniquely specified up to translates by integer multiples of $(a_{11},a_{12})$. In other words, if $(a_{21},a_{22})$ is a possible choice then for all $j\in\Z$, $(a^{(j)}_{21}:=a_{21}+ja_{11},a^{(j)}_{22}:=a_{22}+ja_{12})$ is also a possible choice. We check in this section that our results are independent of this choice, i.e. of $j\in\Z$.}

{Let us add a superscript ${(j)}$ to all the objects used in the analysis. First, note that $\bv_2^{(j)}= \bv_2^{(0)}+j\bv_1$. Moreover, it is easy to see that 
\[
	a_{22}^{(j)}-a_{21}^{(j)}=3k_1^{(0)}-3jk_2+s_1^{(0)}-js_2\]
	where we have used the notation introduced in \eqref{k1k2s1s2} and where $k_2$ and $s_2$ are independent of $j$ since they depend only on $(a_{11},a_{12})$. We deduce that
\[
	s_1^{(j)}:=a_{22}^{(j)}-a_{21}^{(j)}\;\; \text{mod } 3=s_1^{(0)}-js_2\; \text{mod } 3.
	\]
	 If we denote by $k^{(j)}$ the integer so that $s_1^{(0)}-js_2=3k^{(j)}+s_1^{(j)}$ we have
	 \begin{equation}\label{kj}
		k_1^{(j)}=k_1^{(0)}-jk_2+k^{(j)}.
	 \end{equation}
	 This implies in particular that the point $\bv^{(j)}_B$ defined in \eqref{vAvB-1} as the only $B-$point in the parallelogram $\Gamma^{(j)}(0,0)$ is in the parallelogram $\Gamma^{(0)}(-k^{(j)},0)$ since
	\[
		\bv^{(j)}_B=\frac{1}3(s_1^{(j)}\bv_1+s_2(\bv_2^{(0)}+j\bv_1))=\bv^{(0)}_B-k^{(j)}\bv_1.
	\]
	This implies that for any $\omega\in\HH_\sharp$
	\[
		\begin{array}{rcl}
			\omega=\bv_A+m\bv_1+n\bv_2^{(j)}&\Rightarrow & \omega=\bv_A+(m+nj)\bv_1+n\bv_2^{(0)}\\
			\omega=\bv_B^{(j)}+m\bv_1+n\bv_2^{(j)}&\Rightarrow & \omega=\bv_B^{(0)}+(m+nj-k^{(j)})\bv_1+n\bv_2^{(0)}
		\end{array}
	\]
	Consequently, for any wave function $(\psi_\omega)_{\omega\in\HH_\sharp}$ which can be represented with respect to any basis $\{\bv_1,\bv_2^{(j)}\}$, we have
	\[
		\psi^{A,(j)}(m,n)= \psi^{A,(0)}(m+nj,n),\quad\text{and}\quad \psi^{B,(j)}(m,n)= \psi^{B,(0)}(m+nj-k^{(j)},n),\]
		where we have used the notation introduced in \eqref{psi-mn}.}

	Note finally that by \eqref{kj}, the integers introduced in \eqref{ttn-ttm} are such that
		\[
			\forall \ell\neq\ell',\quad \tilde{m}_\ell^{(j)}-\tilde{m}_{\ell'}^{(j)}= \tilde{m}_\ell^{(0)}-\tilde{m}_{\ell'}^{(0)}-j(\tilde{n}_\ell-\tilde{n}_{\ell'})\]
			and the re-ordering mentioned in \eqref{perm-n} is independent of $j$. This allows to show that 
			\begin{equation}\label{eq:roots}
				\zeta\text{ is a root of }p_\pm^{(j)}(\cdot,k)\quad\Leftrightarrow\quad e^{-\imath j k}\zeta\text{ is a root of }p_\pm^{(0)}(\cdot,k),\end{equation}
				where $p_\pm^{(j)}(\cdot,k)$ are the polynomials defined in \eqref{B-poly} and \eqref{A-poly}.

Let us now discuss the independence of our results with respect to $j$. First the condition of existence of a zero energy edge state (see Table \ref{tab:table} and Theorems \ref{th:zigzag} and \ref{th:armchair}) depends only on $(a_{11},a_{12})$ so it is of course independent of $j$. Concerning the formulas of the zero energy edge states, Theorem \ref{0en-reps} with \eqref{eq:k-QP} yields that the edge states are given by 
	 \[\psi^{(j)}(m,n)
	 =c e^{\imath k m}\sum_{\substack{l_1+\dots+l_r=n-\nbase+1 \\ l_1,\dots,l_r\ge1}}  (\zeta_1^{(j)})^{l_1-1}\cdots(\zeta_r^{(j)})^{l_r-1},
	 \quad {\rm for}\; n\geq \nbase \]
	where $c$ is a constant and $\zeta_1^{(j)},\ldots \zeta_r^{(j)}$ are the roots of $p_+^{(j)}(\cdot,k)$ if it is a $B-$edge state and $p_-^{(j)}(\cdot,k)$ if it is an $A-$edge state. By \eqref{eq:roots}, we recover (up to a change of the constant $c$) the edge states given by $j=0$.

Using similar arguments, we can show the independence of the result of Proposition \ref{es-cond} with respect to $j$.



 \ack 

 This research was initiated
 at a working group on "Irrational edges" at the American Institute of Mathematics (AIM) Workshop on the {\it Mathematics of Topological Insulators}, December 7-11, 2020, which was supported by the  American Institute of Mathematics, the US National Science Foundation, the Simons Foundation and Columbia University.
 C.L.F. was supported in part by National Science Foundation grant DMS-1700180.
M.I.W. was supported in part by National Science Foundation grants DMS-1620418 and DMS-1908657 as well as Simons Foundation Math + X Investigator Award \#376319. We warmly thank the participants of the AIM working group, as well as Pierre Delplace, David Gontier and Mikael Rechtsman for very stimulating discussions.


\bibliography{rational-ffw}
\bibliographystyle{cpam}
\end{document}